\renewcommand\part{\@startsection{part}{0}
  \z@{\linespacing\@plus\linespacing}{.5\linespacing}%
  {\Large\bfseries\raggedright}}
\definecolor{orange}{rgb}{0.898, 0.621, 0.0}
\definecolor{skyblue}{rgb}{0.336, 0.703, 0.910}
\definecolor{bluishgreen}{rgb}{0, 0.617, 0.449}
\definecolor{yellow}{rgb}{0.937, 0.890, 0.258}
\definecolor{blue}{rgb}{0, 0.445, 0.695}
\definecolor{red}{rgb}{0.832, 0.367, 0}
\definecolor{purple}{rgb}{0.797, 0.473, 0.652}
\definecolor{forestgreen}{rgb}{0.07,0.35,0.10}
\newtheorem{theorem}{Theorem}[section] 
\newtheorem{mainthm}{Theorem}
\newtheorem{proposition}[theorem]{Proposition} 
\newtheorem{lemma}[theorem]{Lemma}
\newtheorem{corollary}[theorem]{Corollary}
\newtheorem*{theorem*}{Theorem}%[section]
\newtheorem*{nrproblem*}{$N$-Representability Problem}
\newtheorem*{enrproblem*}{Ensemble $N$-Representability Problem}
\theoremstyle{definition}
\newtheorem{definition}[theorem]{Definition} 
\newtheorem{example}[theorem]{Example}
\newtheorem{convention}[theorem]{Convention} 
\newtheorem{remark}[theorem]{Remark}
\newtheorem{algorithmplain}[mainthm]{Algorithm}
\newtheorem{question}[theorem]{Question}
\newtheorem{conjecture}[theorem]{Conjecture}
\newtheorem*{conjecture*}{Conjecture}
\definecolor{darkblue}{rgb}{0,0,0.7} % darkblue color
\newcommand{\darkblue}{\color{darkblue}} % darkblue command
\newcommand{\defn}[1]{\emph{\darkblue #1}} % emphasis of a definition
\DeclareMathAlphabet{\mathdutchcal}{U}{dutchcal}{m}{n}
\newcommand\ba{\bm{a}}
\newcommand\bA{\mathbf{A}}
\newcommand\bb{\bm{b}}
\newcommand\bc{\bm{c}}
\newcommand\be{\bm{e}}
\newcommand\bef{\bm{f}}
\newcommand\bi{\bm{i}}
\newcommand\bj{\bm{j}}
\newcommand\bk{\bm{k}}
\newcommand\bo{\bm{o}}
\newcommand\bv{\bm{v}}
\newcommand\bV{\mathbf{V}}
\newcommand\bu{\bm{u}}
\newcommand\bw{\bm{w}}
\newcommand\bx{\bm{x}}
\newcommand\by{\bm{y}}
\newcommand\bz{\bm{z}}
\newcommand\bchi{\bm{\chi}}
\newcommand\cB{\mathcal{B}}
\newcommand\cH{\mathcal{H}}
\newcommand\J{\mathcal{J}}
\newcommand\cV{\mathcal{V}}
\newcommand\fO{\mathscr{O}}
\newcommand\Lineups{\mathscr{L}}
\newcommand\NN{\mathbb{N}}
\newcommand\RR{\mathbb{R}}
\newcommand\CC{\mathbb{C}}
\newcommand\multiset[2]{\ensuremath{\left(\kern-.3em\left(\genfrac{}{}{0pt}{}{#1}{#2}\right)\kern-.3em\right)}}
\newcommandx{\Sym}[1]{\mathfrak{S}_{#1}} % symmetric group
\newcommand{\np}[1]{\langle#1\rangle}  % For dot product
\DeclareMathOperator{\projec}{pr} % projection
\DeclareMathOperator{\Projec}{Pr} % projection
\newcommandx{\proj}[1]{\projec_{#1}}% Projection
\newcommandx{\Proj}[1]{\Projec_{#1}}% Projection
\newcommandx\lu[1][1=\bV]{\Lineups_r(#1)} %lineup
\newcommandx\non[4][1=f,2=r,3=N,4=d]{\mathbf{O}_#2^\mathrm{#1\downarrow}(#3,#4)} %non
\newcommandx\Rg[2][1=r,2=\bV]{\mathcal{R}_{#1}(#2)}
\newcommand\Thres[1]{\mathcal{T}(#1)}
\newcommandx{\Lineup}[2][1={r},2=\bV]{\pol[L]_{#1}(#2)}
\newcommandx{\FerPC}[2][1=N,2=d]{\textbf{Fer}(#1,#2)}
\newcommandx{\BosPC}[2][1=N,2=d]{\textbf{Bos}(#1,#2)}
\newcommandx{\FerPoset}[2][1=N,2=d]{\mathdutchcal{Fer}(#1,#2)}
\newcommandx{\BosPoset}[2][1=N,2=d]{\mathdutchcal{Bos}(#1,#2)}
\newcommandx{\FerFan}[3][1=r,2=N,3=d]{\fan[R]_{#1}^{\mathrm{f}}(#2,#3)}
\newcommandx{\BosFan}[3][1=r,2=N,3=d]{\fan[R]_{#1}^{\mathrm{b}}(#2,#3)}
\newcommandx\simplexconfig[1][1=m]{\mathbf{E}_{#1}} %configuration of basis vectors (standard simplex)
\newcommandx{\ksetpol}[2][1=k,2=\bV]{\pol[P]_{#1}(#2)}%k-set polytope of point configuration \bV
\newcommand\Unitary{\mathrm{U}}
\newcommand\Torus{\mathrm{T}}
\newcommand\Herm{\mathscr{B}}
\newcommand{\Density}{\mathscr{D}}
\newcommand{\OccVec}{\mathbf{o}}
\newcommandx{\pol}[1][1=P]{\mathsf{#1}} %polytope
\newcommandx{\fan}[1][1=F]{\mathcal{#1}} %fan
\newcommandx{\FerPoly}[2][1={},2={N,d}]{\mathsf{\Sigma}^{\mathrm{f}}_{#1}(#2)}
\newcommandx{\BosPoly}[2][1={},2={N,d}]{\mathsf{\Sigma}^{\mathrm{b}}_{#1}(#2)}
\newcommand{\klypo}{\mathsf{\Pi}(\bw,N,d)}
\newcommand{\Pauli}{\pol[\Delta]}  % Pauli simplex
\DeclareMathOperator{\Perm}{Perm} % Permutation convex hull
\DeclareMathOperator{\aff}{aff} % affine hull
\DeclareMathOperator{\spa}{span} % affine hull
\DeclareMathOperator{\conv}{conv} % convex hull
\DeclareMathOperator{\cone}{cone} % cone hull
\DeclareMathOperator{\ncone}{ncone} % normal cone
\DeclareMathOperator{\Trace}{Tr}  % Trace
\DeclareMathOperator{\spec}{spec}
\DeclareMathOperator{\diag}{diag}
\DeclareMathOperator{\SymP}{Sym}
\DeclareMathOperator{\Id}{Id} % Trivial Group
\title{An effective solution to convex $1$-body $N$-representability}
\thanks{This research was supported by the grant ANR-17-CE40-0018 of the French National Research Agency ANR (project CAPPS) (A.P. and E.P.) and by the German Research Foundation (Grant SCHI 1476/1-1) (J.L. and C.S.) and by the UK Engineering and Physical Sciences Research Council (Grant EP/P007155/1) (C.S.).}
\author[F.~Castillo]{Federico Castillo}
\address[F.~Castillo]{Max Planck Institute for Mathematics in the Sciences, Inselstraße 22, 04103 Leipzig, Germany}
\email{efecastillo.math@gmail.com}
\urladdr{\url{https://sites.google.com/view/fcastillo}}
\author[J.-P.~Labbé]{Jean-Philippe Labbé}
\address[J.-P. Labb\'e]{Électron libre}
\email{labbe@math.fu-berlin.de}
\urladdr{\url{http://page.mi.fu-berlin.de/labbe}}
\author[J.~Liebert]{Julia Liebert}
\address[J.~Liebert, C.~Schilling]{Department of Physics, Arnold Sommerfeld Center for Theoretical Physics, Ludwig-Maximilians-Universität München, Theresienstrasse 37, 80333 München, Germany\newline
Munich Center for Quantum Science and Technology, Schellingstrasse 4, 80799 München, Germany}
\email{julia.liebert@physik.uni-muenchen.de}
\email{c.schilling@physik.uni-muenchen.de}
\urladdr{\url{https://www.theorie.physik.uni-muenchen.de/lsschollwoeck/schilling_group}}
\author[A.~Padrol]{Arnau Padrol}
\address[A.~Padrol]
{Sorbonne Université and Université de Paris, CNRS, IMJ-PRG, Paris, France.}
\email{arnau.padrol@imj-prg.fr.}
\urladdr{\url{https://webusers.imj-prg.fr/~arnau.padrol/}}
\author[E.~Philippe]{Eva Philippe}
\address[E.~Philippe]{DMA \'Ecole Normale Sup\'erieure de Paris and Universit\'e de Paris%, Paris
, France}
\email{eva.philippe@ens.fr}
\author[C.~Schilling]{Christian Schilling}
\keywords{Reduced density matrices, $N$-representability, Pauli exclusion principle, convex geometry, Schur-Horn theorem, permutation-invariant polytopes}
\subjclass[2010]{Primary 52B12; Secondary 81V45, 81R05, 81V73, 81V74, 81C99, 47L07, 52A27, 90C25}
\begin{document}
\sloppy
\maketitle

\begin{abstract}
From a geometric point of view, Pauli’s exclusion principle defines a hypersimplex.
This convex polytope describes the compatibility of $1$-fermion and $N$-fermion density matrices, therefore it coincides with the convex hull of the pure $N$-representable $1$-fermion density matrices.
Consequently, the description of ground state physics through $1$-fermion density matrices may not necessitate the intricate pure state generalized Pauli constraints.
In this article, we study the generalization of the $1$-body $N$-representability problem to ensemble states with fixed spectrum~$\bw$, in order to describe finite-temperature states and distinctive mixtures of excited states. 
By employing ideas from convex analysis and combinatorics, we present a comprehensive solution to the corresponding convex relaxation, thus circumventing the complexity of generalized Pauli constraints.
In particular, we adapt and further develop tools such as symmetric polytopes, sweep polytopes, and Gale order.
For both fermions and bosons, generalized exclusion principles are discovered, which we determine for any number of particles and dimension of the $1$-particle Hilbert space.  
These exclusion principles are expressed as linear inequalities satisfying hierarchies determined by the non-zero entries of $\bw$.
The two families of polytopes resulting from these inequalities are part of the new class of so-called \emph{lineup polytopes}.
\end{abstract}

\section*{Introduction}
The notion of convexity has been part of quantum mechanics since its inception almost 100~years ago.
Without a doubt, the interpretation of Pauli's exclusion principle via reduced density matrices is among the most prominent manifestations of convexity in quantum mechanics.
It may be stated as follows: If $\rho$ is the density operator corresponding to the state of a system of $N$ fermions, then the eigenvalues of the 1-body reduced density matrix of $\rho$ (normalized to have trace $N$) should be contained in the real interval $[0,1]$.
The fact that the eigenvalues are bounded above by $1$ illustrates the fact that---according to Pauli's exclusion principle---\emph{no two fermions can occupy at the same time the same $1$-particle quantum state}.

Pauli formulated his principle in 1925 \cite{pauli_zusammenhang_1925}, density matrices were introduced by Dirac a few years later \cite{dirac_note_1930}, and Husimi introduced reduced density matrices some time after that \cite[Section~4]{husimi_formal_1940}.
Almost simultaneously, the above necessary spectral conditions appeared in an article of Watanabe, written in 1939 while he visited Heisenberg in Leipzig \cite{watanabe_anwendung_1939}.
Because the trace of the $1$-body reduced density matrix is normalized to $N$ and each eigenvalue lies between $0$ and $1$, the set of possible spectra is included in a convex polytope: the hypersimplex \cite{gabrielov_combinatorial_1975}\cite{gelfand_combinatorial_1987}\cite[Exercise~4.8.16, see also p.69a]{grunbaum_convex_2003}.
Around 20 years later, during an IBM--Princeton Research Project, Kuhn (that possibly heard of the potential sufficiency of these conditions from Watanabe who did work at IBM around that time) showed that these conditions on the spectra were sufficient to represent an $N$-fermion density operator \cite{kuhn_linear_1960}.
As a consequence, hypersimplices characterize the spectra of $1$-body reduced density matrices.
This intimate relationship between a well-known polytope and spectra of density operators seems to have escaped the attention of geometers since then.
We remark, however, that hypersimplices appear in the recent study of scattering amplitudes via the combinatorics of positive Grassmannians~\cite{arkani_grassmannian_2016,arkani_positive_2021}\cite{lukowski_positive_2020,parisi_amplituhedron_2021}.

\medskip
\textbf{$N$-representability problem.}
During the 1950's, convexity appeared in physics through the \emph{$N$-representability problem}, which asks for a computationally efficient method of characterizing the $M$-body reduced density matrices of an $N$-particle system satisfying the corresponding fermionic or bosonic particle exchange symmetry, for $1\leq M<N$.
Coleman recalls about hearing the problem for the first time: ``at that moment, for the first time as far as I am aware, the powerful modern mathematical theory of convexity appeared \emph{explicitly} in physics'' \cite[Introduction]{coleman_convex_1977}.
Since then, this problem has had a life of its own and motivated computational chemists, quantum physicists, and mathematicians alike to obtain a solution \cite{coleman_sympo_1987,coleman_reduced_2000,cioslowski_many_2000,coleman_reduced_2001,mazziotti_reduced_2007}.
The interest in this problem lies in the conceptual dream of eliminating the wave function of a system from all practical applications, replacing it with the $1$- and $2$-body reduced density matrices, referred to as \emph{Coulson's vision} \cite{coulson_present_1960,coleman_reduced_2001}.
Most significant are the cases $M=1$ and $M=2$, since the particles of common quantum systems interact by $1$- and $2$-particle forces only.
In their report \cite[Section~4, p.48 and p.59]{national_1995}, the American National Research Council described it as one of the most important mathematical challenges in theoretical and computational chemistry:
\begin{quote}
``Because the energy is a linear function of the one- and two-body distribution functions, the variational minimum will lie on the boundary determined by the N-representability conditions. Unfortunately, only an incomplete set of necessary conditions are known, but these are already so complex that further work in this area has been abandoned by chemists.'' (\cite[p.50]{national_1995})
\end{quote}

An $N$-particle density operator $\rho$ is called \emph{pure} if its spectrum is $\{1,0,\dots\}$ and \emph{mixed} otherwise.
A mixed density operator represents an ensemble of states.
The difficulty of the general $N$-representability problem is emphasized by the fact that the $N$-representability problem is QMA-complete while its restriction to the pure case is conceivably harder~\cite{liu_quantum_2007}.
The case $M=1$ was solved for the \emph{unspecified mixed} case by several authors around the 1960's, see \cite{chen_comment_2012} for a historical survey.
In the context of pure states, a few cases with small parameters $N$ and $d$ have been determined, and then the situation stagnated for decades \cite{coleman_structure_1963,borland_conditions_1972,ruskai_2007}.
However, at the turn of the century, the situation changed dramatically due to Klyachko's solution to Weyl's problem \cite{klyachko_stable_1998}. 
Subsequently, a series of outstanding papers finally led to a complete theoretical solution of the pure $1$-body representability problem \cite{berenstein_coadjoint_2000}\cite{klyachko_2006,altunbulak_pauli_2008}.
The solution involves sophisticated tools and results from symplectic geometry, representation theory, and cohomology of flag varieties.
For larger $M$'s, as noted by Coleman \cite{coleman_kummer_2002}, the $N$-representability problem has a complete solution in terms of the Kummer variety from \cite{kummer_repr_1967}, ``which unfortunately is more of theoretical than practical interest''.
Further fundamental work has been done to make the conditions more practical, see \cite{mazziotti_anti_2006,mazziotti_structure_2012,mazziotti_pure_2016} and the book \cite{mazziotti_reduced_2007}.

Besides, the $N$-representability problem is an instance of a much larger class of problem called Quantum Marginal Problems, see e.g. \cite{daftuar_quantum_2005}\cite{christandl_spectra_2006}\cite{walter_multipartite_2014}\cite{schilling_quantum_2015}\cite{maciazek_quantum_2007}.
Indeed, interpreting density matrices as probability density functions, the reduced density matrices are called marginals.
The $N$-representability problem then asks to characterize marginals from a system of indistinguishable particles. 
Quantum marginal problems are more general as they are usually phrased with distinguishable particles and the marginals may be multivariate or even overlapping. 

\medskip
\textbf{No-win scenario for Coulson's vision?}
Unfortunately, although Klyachko's solution to Weyl's problem and the description of pure $1$-body $N$-representability constraints constitute landmark mathematical achievements, they do not fulfill Coulson's vision for replacing the wave function.
For most applications, the knowledge of a few of the lowest eigenenergies of a Hamiltonian is sought \cite{coleman_kummer_2002}.
Furthermore, the American National Research Council also advertised the $N$-representability problem in classical equilibrium statistical  mechanics \cite[p.59]{national_1995} stating: ``For most cases of interest, one focuses on the infinite-system limit, where the container size and~$N$ diverge, while temperature, number density, and container shape are held constant'', see also \cite[Section~2]{ayres_variational_1958} and \cite[p.18]{coleman_sympo_1987}.
Most of the restrictions to the spectra in the solution given by Klyachko depend strongly on the container size (i.e. the dimension of the underlying $1$-particle Hilbert space) and the number of particles~$N$.
For instance, the application of these restrictions is possible for systems of at most five electrons and ten orbitals.
\emph{These facts combined with the sheer number of inequalities involved in the solution and the difficulty involved in obtaining them motivate a reformulation of the problem whose solution could lead to practical large scale applications.}

\medskip
\textbf{Convexity Ansatz.}
The present article draws its motivation from this practical consideration, and the notion of convexity plays a key role:
Convexity is essential in the foundation of reduced density matrix functional theory for excited states \cite{schilling_ensemble_2021}\cite{liebert_foundation_2021}.
Herein, we provide an effective solution to a \emph{convex relaxation} of the constrained 1-body $N$-representability problem on which reduced density matrix functional theory relies, see Figure~\ref{fig:single_flower}.
That is, given a non-increasing sequence of non-negative weights $\bw=(w_1,w_2,\dots)$ summing to $1$, we characterize the convex hull of the set of 1-body reduced density matrices $\overline{\Density}{}^1_N(\bw)$ that stem from $\bw$-ensemble of states (Theorem~\ref{thm:relaxed_gok}).
Furthermore, we give a characterization of the spectra of such matrices as points in a polytope, that we call \emph{fermionic} (or \emph{bosonic}) \emph{spectral polytope}.
First, we describe this polytope as the convex hull of finitely many points (Theorem~\ref{thm:spectral}).
Then, we describe in Algorithm~\ref{algo:lineup}, a procedure generating a minimal description of this polytope using linear inequalities, therefore allowing to solve the membership test efficiently.
The crucial step to handle arbitrarily large number of particles $N$ and dimension~$d$ (the dimension of the $1$-particle Hilbert space) is to restrict the number of non-zero entries in $\bw$ to a fixed value, i.e. the number of states involved in the ensemble.
Subsequently, by invoking a mixture of well-established and state-of-the-art results from convex geometry and combinatorics, we obtain necessary conditions for the $N$-representability problem of $1$-reduced density matrices that are tight and valid for arbitrarily large $N$ and $d$.
In turn, these inequalities contain the necessary essence to apply $1$-particle reduced density matrix functional theory also known as RDMFT.

\medskip
\textbf{Summary of results.}
Using a variational principle of Gross--Oliviera--Kohn \cite{gross_rayleigh_1988}, we compute the support function for $\overline{\Density}{}^1_N(\bw)$ in Theorem~\ref{thm:relaxed_gok}.
This leads to the spectral characterization in Theorem~\ref{thm:spectral}, where $\FerPoly[][\bw,N,d]:=\spec(\overline{\Density}{}^1_N(\bw))$ is shown to be a polytope, the fermionic spectral polytope.
In fact, the polytopality also follows from general theorems on moment maps (see Section~\ref{ssec:lie}).
The polytope $\FerPoly[][\bw,N,d]$ has the following important features.

\begin{compactenum}[(\arabic{enumi})]
\item It is a \textbf{spectral polytope} by definition.
In \cite{sanyal_spectral_2020} Sanyal and Saunderson studied polyhedra arising from spectra of a set of symmetric matrices. 
In contrast, the polytope $\FerPoly[][\bw,N,d]$ describes spectra of Hermitian matrices.
\item It is a \textbf{moment polytope} (see Proposition \ref{prop:moment}) for a moment map on the coadjoint orbit $\Density^N(\bw)$, that is, the set of density operators on the $N$-particle Hilbert space with spectrum $\bw$ (see \cite[Section 1]{kirillov_lectures_2004}).
This connection opens the door to an analysis of its facial structure, as done by Brion in \cite{brion_general_1999}.
\item It is a \textbf{symmetric polytope}.
The polytope $\FerPoly[][\bw,N,d]$ is stabilized by the action of the symmetric group which permutes coordinates. 
Symmetric polytopes whose vertices form a single orbit are objects of intense study; they are known as permutohedra.
In contrast, the polytope $\FerPoly[][\bw,N,d]$ is the convex hull of \emph{multiple} orbits.
We provide an original approach to determine the combinatorial structure of these types of polytopes, see Theorem~\ref{thm:dimension}.
\item It is a \textbf{sweep polytope}. In \cite{PadrolPhilippe2021} Padrol and Philippe introduce sweep polytopes.
Face lattices of sweep polytopes encode total orderings of point configurations coming from linear functionals. \label{res:swe}
As a consequence, the polytope $\FerPoly[][\bw,N,d]$ is also a fiber polytope.
\end{compactenum}

We then introduce an extra parameter $r$ specifying the number of non-zero entries of $\bw$.
This parameter allows to better understand the situation for a fixed number of eigenvalues.
The upshot is the possibility to let the dimension $d$ go to infinity, while $r$ remains fixed for any number of particles $N$.
These polytopes are denoted by $\FerPoly[r][\bw,N,d]$ and they satisfy properties (1),(2), and~(3).
Moreover they satisfy a refined version of property (4) extending the definition of sweep polytopes to a new class, that we name \emph{lineup polytopes}, see Theorem~\ref{thm:main}.
We provide a $V$-representation of lineup polytopes in Theorem~\ref{thm:vrepr_lineup}.
We prove the following additional properties, explaining the influence of the defining parameters.

\begin{compactenum}[(\arabic{enumi})]
\setcounter{enumi}{4}
\item The polytope $\FerPoly[r][\bw,N,d]$ is a \textbf{combinatorial polytope}.
The orbits of vertices of $\FerPoly[r][\bw,N,d]$ correspond to certain shellings of pure threshold complexes of dimension $N-1$ with $r$ facets on at most $d$ vertices (Theorem~\ref{thm:spectral} and Lemma~\ref{lem:Fthreshold}).
Threshold complexes are studied in \cite{klivans_threshold_2007}\cite{edelman_simplicial_2013}.
In \cite{heaton_dual_2020}, Heaton and Samper study these shellings in the more general context of matroid polytopes.
\item For any $r,N$, and $d$, if $\bw$ and $\bw'$ have pairwise distinct entries, $\FerPoly[r][\bw,N,d]$ and $\FerPoly[r][\bw',N,d]$ have the same normal fan (Corollary \ref{cor:dep_w}). \label{res:norm}
When $\bw$ has repeated entries some inequalities may become redundant (Proposition \ref{prop:construction3}).
\item For $r=1$, the polytope $\FerPoly[1][(1),N,d]$ is the $(d-1)$-dimensional hypersimplex $\pol[H](N,d)$. \label{res:hyper}
\item For each $r\geq 1$, we establish a \textbf{Stability Theorem} (Theorem~\ref{thm:stability} for fermions and \ref{thm:stability_bosonic} for bosons) stating that, up to symmetry, when $N>r-1$ no new inequality arises for fermions if $d>N+r-1$ and for bosons if $d\geq r$. \label{res:stab}
This allows the ``container size'' and number of particle~$N$ to diverge as asked by the $N$-representability problem in classical equilibrium statistical mechanics.
\item As $r$ increases, the inequalities for $r-1$ remain facet-defining and new inequalities are added, creating a hierarchy of restrictions (see Proposition \ref{prop:hierarchy}). \label{res:hier}
Consequently, $\FerPoly[r][\bw,N,d]\subseteq \pol[H](N,d)$.
This can be interpreted as a \textbf{hierarchy of generalized exclusion principles}:
They are linear inequalities on the spectra of $1$-reduced density matrices that are necessary for $N$-representability, which are valid as $d\to\infty$.
By property (8), they are essentially independent of the values of $N$ and $d$, extending the linear inequalities interpretation of Pauli's exclusion principle given by $\pol[H](N,d)$.
\item The ``$\mathrm{f}$'' on the notation stands for fermions. \label{res:bos}
There is a parallel story for bosons obtaining a polytope denoted $\BosPoly[r][\bw,N,d]$.
\end{compactenum}

To test membership, an $H$-representation of $\FerPoly[][\bw,N,d]$ of the form $\bA\bx\leq \bb$, where $\bA$ is a matrix and $\bb$ a vector, is essential.
We obtain one in two steps.
First, for a fixed $r$, there is a minimal choice of parameters $(N,d)=(r-1,2r-2)$ for which the $V$- and $H$-representations of $\FerPoly[r][\bw,N,d]$ should be obtained.
Second, using the Stability Theorem~\ref{thm:stability}, it is possible to describe the matrix $\bA$ for arbitrarily large values of $N$ and $d$ as long as $d-N>r-1$.
The analog result for bosons is given in Theorem~\ref{thm:stability_bosonic}.
While the result (6) shows that $\bA$ does not depend on $\bw$, as long as it contains distinct entries, the vector $\bb$ is determined from $\bw$ as explained in Proposition~\ref{prop:rhs}.
The minimal polytope in the above first step is obtained as follows.
In Theorem~\ref{thm:spectral}, the polytope $\FerPoly[][\bw,N,d]$ is represented as the convex hull of finitely many points and we seek a representation using finitely many linear inequalities.
This is a well-studied problem in discrete geometry and many general algorithms exists.
In the present case, we provide the original Algorithm~\ref{algo:lineup} and successfully implemented it in \texttt{SageMath} \cite{sagemath}, see Section~\ref{ssec:recursive}.
This lead to a complete solution for $r\leq 13$ without much computational efforts, and we provide the outcome for $r\leq 8$ in the Appendices.
Furthermore, without resorting to computational tools, we present the complete solution in the case when $r=4$, by carrying out Algorithm \ref{algo:lineup} by hand in Section \ref{ssec:ex4}.
In particular, we showcase a complete inequality descriptions of $\FerPoly[4][\bw,N,d]$ for all $(N,d)$ with $N\geq 3, d-N\geq 3$ and $\FerPoly[5][\bw,N,d]$ for all $(N,d)$ with $N\geq 4, {d-N\geq 4}$.
In the bosonic case, there is no analogue of Pauli's exclusion principle.
Despite that fact, through the result (8) and (10), we uncover such exclusion principles for mixed bosonic states.

%Strength of Approach and relation with Physics
The spectra of operators in $\overline{\Density}{}^1_N(\bw)$ retain a physical relevance to study $1$-particle reduced density matrix functional theory for excited states, see Section~\ref{ssec:relaxed} and \cite{schilling_ensemble_2021}\cite{liebert_foundation_2021}.
This description refines and solves a constrained $1$-body $N$-representability problem, where the weights of the ensemble can be specified, compare \cite[Theorem~3]{altunbulak_pauli_2008}.
Indeed, the results allow to progressively decide from which ensemble a certain $1$-body reduced density matrix may come from by adding more and more necessary conditions.
This is possible since the inequalities in the $H$-representation do not intrinsically depend on the dimension of the $1$-particle state space (the dimension of the basic Hilbert space at play) nor the number of particles $N$.
This allows for a novel understanding of the effects of the number of weights and their relative sizes on possible spectra. 
The inequalities obtained are valid necessary conditions for arbitrarily large~$N$ and dimension of the Hilbert spaces, see Theorems~\ref{thm:r4} and~\ref{thm:r5}.
The techniques underlying the procedure to obtain the necessary conditions rely on elementary combinatorial structures and minimally on convex representation translations.
Hence, these results may be used as the stepping stone to further investigate convex relaxation of $N$-representability problems.

In Section \ref{ssec:klyachko}, we compare the inequalities obtained by Klyachko to describe the polytope $\klypo$, the polytope of spectra (ordered decreasingly) of $\Density^1_N(\bw)$, with those obtained here.
The linear inequalities defining $\klypo$ are referred to as generalized Pauli constraints, whereas we refer to the inequalities of $\FerPoly$ as \emph{generalized exclusion principles}.
The present methods have the advantage of simplicity.
They allow to recover the description of the matrix describing the left-hand side of $\klypo$, and interpret it through sweep polytopes (and therefore leading to a combinatorial interpretation through threshold complexes) using less sophisticated machinery.
Furthermore, it seems possible to study further Klyachko's polytope through $\FerPoly[][\bw,N,d]$.
We adventure to conjecture the following.
\begin{conjecture*}
If the entries of $\bw$ are distinct, then the polytope $\klypo$ is a Minkowski summand of the polytope $\FerPoly[][\bw,N,d]$.
\end{conjecture*}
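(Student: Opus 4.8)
The plan is to recast the Minkowski-summand assertion through support functions and normal fans. Recall that a polytope $P$ is a Minkowski summand of a polytope $Q$ exactly when the difference of support functions $h_Q-h_P$ is again the support function of a polytope; since $h_Q$ and $h_P$ are both piecewise-linear and positively homogeneous of degree one, this is equivalent to requiring that $h_Q-h_P$ be \emph{convex}, in which case the summand $R$ with $Q=P+R$ is the polytope with $h_R=h_Q-h_P$. Thus the whole conjecture reduces to proving that $h_{\FerPoly[][\bw,N,d]}-h_{\klypo}$ is convex.

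The first step is to compare the two support functions. Applying von Neumann's trace inequality to $\Phi=\diag(\phi)$ yields the identity
\[
 h_{\FerPoly[][\bw,N,d]}(\phi)=h_{\klypo}(\phi^{\downarrow}),
\]
where $\phi^{\downarrow}$ is the decreasing rearrangement of $\phi$: indeed $h_{\FerPoly[][\bw,N,d]}(\phi)=\max_{\gamma\in\overline{\Density}{}^1_N(\bw)}\langle\spec^{\downarrow}(\gamma),\phi^{\downarrow}\rangle$ by the $\Sym{d}$-symmetry of $\FerPoly[][\bw,N,d]$ and the rearrangement inequality, the map $\gamma\mapsto\langle\spec^{\downarrow}(\gamma),\phi^{\downarrow}\rangle$ is convex so the maximum is already attained on $\Density^1_N(\bw)$, and this equals $h_{\klypo}(\phi^{\downarrow})$ by the definition of $\klypo$ together with Theorem~\ref{thm:relaxed_gok}. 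In particular, since $\phi=\phi^{\downarrow}$ on the closed dominant chamber $\{\phi_1\geq\cdots\geq\phi_d\}$, the two support functions \emph{coincide there}; as $\FerPoly[][\bw,N,d]$ is $\Sym{d}$-symmetric, this forces $\mathcal{N}(\FerPoly[][\bw,N,d])$ and $\mathcal{N}(\klypo)$ to agree in the interior of every Weyl chamber. Hence $h_{\FerPoly[][\bw,N,d]}-h_{\klypo}$ vanishes on the dominant chamber and is nonnegative everywhere, and any failure of convexity is concentrated on the walls $\{\phi_i=\phi_{i+1}\}$.

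Next I would work inside a common fan. Let $\Sigma$ be the common refinement of $\mathcal{N}(\FerPoly[][\bw,N,d])$ and $\mathcal{N}(\klypo)$, equivalently the fan of the hyperplane arrangement spanned by all their facet normals; both polytopes are then deformations of $\Sigma$, and $\klypo$ is a genuine summand of $\FerPoly[][\bw,N,d]$ if and only if the vector of support numbers of $h_{\FerPoly[][\bw,N,d]}-h_{\klypo}$ lies in the (polyhedral) deformation cone $\mathrm{Def}(\Sigma)$, i.e.\ satisfies every wall-crossing inequality of $\Sigma$. The hypothesis that $\bw$ has distinct entries enters here: by Corollary~\ref{cor:dep_w} it ensures that $\FerPoly[][\bw,N,d]$ attains its generic normal fan, so that $\Sigma$ is as coarse as possible and the wall-crossing inequalities take their simplest form, while the explicit right-hand sides of Proposition~\ref{prop:rhs} and the hierarchy of Proposition~\ref{prop:hierarchy} provide the numerical data against which to test them.

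The hard part will be verifying this deformation-cone membership. By the chamber agreement established above, each wall-crossing inequality reduces to controlling the sign of $h_{\klypo}(\phi^{\downarrow})-h_{\klypo}(\phi)$ as $\phi$ crosses a wall $\{\phi_i=\phi_{i+1}\}$ from one chamber into the next, and proving the correct sign uniformly over all walls seems to demand detailed information about the facets of $\klypo$ --- precisely the generalized Pauli data that the remainder of the paper is designed to avoid, so some genuinely new input is needed. A secondary obstacle is that $\klypo$ need not be full-dimensional (as the Borland--Dennis equalities show), so the argument must be carried out modulo the affine hull of $\klypo$, keeping track of the lineality of $\mathcal{N}(\klypo)$. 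To bypass the Klyachko facets entirely, a promising alternative is to exploit that $\FerPoly[][\bw,N,d]$ is a sweep, hence fiber, polytope (Theorem~\ref{thm:main}): fiber polytopes carry canonical Minkowski decompositions indexed by the cells of the base configuration, and one would try to identify $\klypo$ with the cell corresponding to the dominant chamber, thereby producing the summand $R$ constructively rather than certifying convexity after the fact.
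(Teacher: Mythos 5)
This statement is not proved in the paper at all: it is posed there as an open conjecture, and the text following it records only the weaker consequence that the normal fan of $\klypo$ would have to be a coarsening of that of $\FerPoly[][\bw,N,d]$. So there is no proof of record to compare yours against; the relevant question is whether your proposal settles the conjecture, and it does not --- you concede this yourself when you write that verifying the wall-crossing inequalities requires ``genuinely new input.'' What you do establish is correct but is essentially already contained in the paper: the identity $h_{\FerPoly[][\bw,N,d]}(\phi)=h_{\klypo}(\phi^{\downarrow})$ follows from Proposition~\ref{prop:rotation} (which exhibits $\FerPoly[][\bw,N,d]$ as the convex hull of the $\Sym{d}$-orbit of $\klypo$) together with the rearrangement inequality, and it yields that the difference of support functions vanishes on $\Phi_d$ and is nonnegative everywhere. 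That is the easy half of the problem; the entire content of the conjecture is the convexity of this difference, which your argument leaves untouched.

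There is moreover a concrete error in your intermediate step. From the coincidence of the two support functions on $\Phi_d$ you infer that $\mathcal{N}(\FerPoly[][\bw,N,d])$ and $\mathcal{N}(\klypo)$ ``agree in the interior of every Weyl chamber,'' and hence that convexity can only fail on the reflection walls $\{\phi_i=\phi_{i+1}\}$. This does not follow: $\klypo$ is not $\Sym{d}$-invariant, so agreement of the two fans on the interior of the dominant chamber does not propagate to the other chambers. Inside a chamber $\pi\Phi_d$ with $\pi\neq e$ the difference equals $h_{\klypo}(\pi^{-1}\phi)-h_{\klypo}(\phi)$, a difference of two convex piecewise-linear functions whose creases --- the walls of $\pi\bigl(\mathcal{N}(\klypo)\cap\Phi_d\bigr)$ on one side and the walls of $\mathcal{N}(\klypo)\cap\pi\Phi_d$ on the other --- need not match; convexity can therefore break along walls lying in the interiors of chambers as well, and those are precisely the places where the facet structure of $\klypo$ (the generalized Pauli constraints you hoped to bypass) enters. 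The closing suggestion via fiber polytopes is likewise only a heuristic, since no identification of $\klypo$ with a summand in any fiber-polytope decomposition is given. In short, your proposal is a sound reduction plus a plan of attack, but the conjecture remains open both in the paper and after your argument.
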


\noindent
Equivalently, we conjecture the existence of some polytope $Q$ such that $\klypo+Q=\FerPoly[][\bw,N,d]$.
This would imply that the normal fan of $\klypo$ is a coarsening of that of $\FerPoly[][\bw,N,d]$.

\medskip
\textbf{The tools.}
The relevant methods to study fermionic spectral polytopes have well-established theories that we gather and bring \emph{au goût du jour} in Part~\ref{part2}.

\emph{Tool 1: Symmetric polytopes.}
The study of symmetric polytopes is more than millennial.
Well-known instances include regular polytopes, uniform polytopes, and polytopes with reflection symmetries, see \cite{schulte_symmetry_1997} for a detailed survey.
It is common to describe symmetries of polytopes via the action of a finite group of suitable transformations of the ambient space.
A myriad of variations exist and many---if not all---seem to have some degree of \emph{transitivity} in common: faces or flags of faces should be contained in a \emph{single} orbit under the group action \cite{babai_symmetry_1977}\cite{sanyal_orbitopes_2011}\cite{matteo_combinatorially_2016}.
The convex hull of the orbit of a point under a finite group action is called an \emph{orbit polytope}.
The uniqueness of the orbit has a structuring effect on the geometry of orbit polytopes, facilitating their study.
These polytopes and their symmetries have been studied extensively, especially via their relations to the many generalizations of the permutohedron, see e.g. \cite{rado_inequality_1952}\cite{onn_geometry_1993}\cite{sanyal_orbitopes_2011}\cite{friese_affine_2016} and the references therein. 
Be that as it may, the extension of orbit polytopes to convex hulls of orbits of more than one points (in other words, convex hull of unions of orbit polytopes) seems to have received far less attention.
In Section~\ref{sec:perm_inv_poly}, we remedy this and initiate a study of the facial incidences of these polytopes.
In particular, we determine the faces of symmetric polytopes (see Theorem~\ref{thm:dimension}) thereby extending the known results for a single orbit.

\emph{Tool 2: Sweep polytopes.}
The polytopes $\FerPoly[][\bw,N,d]$ and $\BosPoly[][\bw,N,d]$ are defined via their normal fans.
Since a lineup of length $r$ is obtained from a sweep by taking its first $r$ entries, we adapt the machinery to construct sweep polytopes from \cite{PadrolPhilippe2021} to construct lineup polytopes.
One distinction between sweep polytopes and lineup polytopes is that the fans of lineup polytopes are not necessarily hyperplane arrangements.
Expanding on \cite{PadrolPhilippe2021}, we present four different constructions for lineup polytopes: as a convex hull of a finite set of points, as a projection of a partial permutahedron, as a Minkowski sum of $k$-set polytopes, and as a monotone path polytope of a truncated zonotope.
The construction as a Minkowski sum of $k$-set polytopes (Proposition \ref{prop:construction3}) gives a conceptual explanation for the role of $\bw$, even when it has repeated entries.
One important consequence of this section is that the facets of the polytope correspond to orderings that do not refine a non-trivial ordering.

\emph{Tool 3: Gale order.}
In addition to being related to sweep polytopes and being symmetric, the facial structure of fermionic and bosonic spectral polytopes is determined via the notion of Gale order, which is important in matroid theory~\cite{gale_optimal_1968}.
It is the combination of these three features that enables the proof of the Stability Theorem~\ref{thm:stability}.
In fact, the vertices of the spectral polytopes are related to certain simplicial complexes called \emph{threshold complexes}.
The typical case is related to fermion configurations, while boson configurations give rise to a different notion of threshold complexes related to multisubsets rather than subsets, see Sections~\ref{sec:gale} and~\ref{ssec:fer_vs_bos}.
These \emph{bosonic} threshold complexes seem original and hence could be further studied.

\medskip
\textbf{Structure of the article.}
Aside from providing a thorough description of the solution to the convex ensemble $1$-body $N$-representability problem, the overarching goal of the present article is to increase interactions between the convex geometry and quantum mechanics communities.

In this regard, Part~\ref{part1} is aimed to introduce the background theory from quantum physics in order to state appropriately the convex problem, which is aimed at readers with a more mathematical background.
In Section~\ref{sec:fund_concepts}, we review the fundamental concepts from quantum physics necessary to define and study the $N$-representability problem.
In Section~\ref{sec:convex_repr}, we state the classical $N$-representability problem and its convex relaxation for fermions and bosons.
Furthermore, we obtain a first description of the sought spectral polytopes using $V$-representations.
In Section~\ref{sec:spec_char}, we give a spectral characterization of the set of $\bw$-ensemble 1-reduced density matrices.
The spectra form a convex polytope that we then study thoroughly using combinatorial and geometrical tools.

Part~\ref{part2} is concerned with a self-contained introduction to the convex geometry tools along with the presentation of new results necessary to solve the problem.
In Section~\ref{sec:V_to_H}, we describe the technique privileged here to translate a $V$-representation to an $H$-representation.
In Section~\ref{sec:perm_inv_poly}, we provide the necessary notions relative to fundamental domains of symmetric polytopes and give a criterion mixing geometry and combinatorics to determine the dimension of their faces.
In this case, it is particularly important to determine when a linear functional determines a facet.
In Section~\ref{sec:lineups}, we adapt the recent study of sweep polytopes to the current situation.
In Section~\ref{sec:gale}, we study the combinatorial construction that dictates much of the symmetry and facial structure of spectral polytopes.
Namely, the Gale order is reviewed and extended to multisubsets to deal with bosonic systems.

Finally, Part~\ref{part3} presents a procedure to obtain the desired inequalities along with a description of several theoretical consequences stemming from the proofs.
In Section~\ref{sec:Hrepr_spectral}, we describe the algorithm, provide some analysis of the first output, and prove the stability properties of the produced inequalities, see Theorems~\ref{thm:stability} and~\ref{thm:stability_bosonic}.
In Section~\ref{sec:outro}, we compare our results to the solution of the pure $1$-body $N$-representability problem obtained in \cite{klyachko_2006,altunbulak_pauli_2008}, we exhibit two key properties of symmetric polytopes, and provide extremal examples that distinguish fermionic and bosonic systems.
Finally, in the appendices, we provide some output produced by the algorithm.

\medskip
\textbf{Acknowledgements.}
The authors are thankful to Pauline Gagnon, Fulvio Gesmundo, Allen Knutson, Fu Liu, Georgoudis Panagiotis, Nicholas Proudfoot, Raman Sanyal and Lauren Williams for valuable discussions.
The authors express their gratitude to Manfred Lehn and G\"unter M. Ziegler for sparking this fruitful collaboration.

\newpage
\tableofcontents

\newpage
\part{The Problem}
\label{part1}

The objective of this part is to introduce the problem motivating the present article.
In Section~\ref{sec:fund_concepts}, we review the fundamental concepts from quantum theory.
In Section~\ref{sec:convex_repr}, we present the various $N$-representability problems that lead to the convex version.
In Section~\ref{sec:spec_char}, we give a spectral characterization of the convex hull of $1$-body reduced density matrices.
Furthermore, we present how this characterization fits into a Lie theory perspective.
Finally, we formulate precisely the challenge within polyhedral combinatorics that must be solved in order to obtain a practical solution.
This challenge is addressed in Part~\ref{part3}, after developing the necessary tools in Part~\ref{part2}.

\medskip
\textbf{Notations and conventions.}
We adopt the following conventions: ${d\in\NN\setminus\!\{0\}}$, $[d]:=\{1,2,\dots,d\}$. 
Let $\RR^d$ be the $d$-dimensional Euclidean space with elementary basis $\{\be_i~:~i\in[d]\}$ and inner product given by $\np{\be_i,\be_j}=\delta_{i,j}$ for $i,j\in[d]$.
Vectors in $\RR^d$ and tuples of numbers are denoted using bold letters such as~$\mathbf{e,v,x}$, and scalars using normal script such as~$x$.
The symmetric group $\Sym{d}$ acts linearly on $\RR^d$ by $\pi\cdot \be_i=\be_{\pi(i)}$ for $i=1,\ldots,d$.
A table is provided on p.\pageref{notations} to gather most of the notations used throughout this article.

\section{Fundamental concepts}
\label{sec:fund_concepts}

Among the abundant literature on quantum theory, we refer here to the books \cite{hall_quantum_2013}\cite{landsman_foundations_2017} for a direct quantum mechanic approach, \cite[Chapter~2]{nielsen_quantum_2000} for an introduction from the quantum information point of view and the article \cite{cassam_hopf_2003} for a recent combinatorial formalism of many-particle spaces, and suggest them to readers who want to learn more about the mathematical aspects of quantum mechanics.

\subsection{Density operators on Hilbert spaces}
\label{ssec:hilbert}

Let $\cH$ be a  $D$-dimensional complex Hilbert space with a positive definite sesquilinear form (also called \emph{Hermitian inner product}) $\np{\cdot,\cdot}:\cH\times \cH \to \mathbb{C}$. % given by
We identify the vector space of linear endomorphisms on $\cH$, called \emph{operators}, with the tensor product $\cH\otimes \cH^*$; for a unit vector $\bv\in\cH$, the operator $\bv\otimes\bv^*$ is the orthogonal projection onto the subspace spanned by $\bv$.
An operator $H$ is \emph{Hermitian} if it satisfies $\np{H\bx, \by}=\np{\bx, H\by}$ for any $\bx, \by \in \cH$. 
The real vector space of Hermitian operators is denoted by $\Herm(\cH)$ and its inner product by $\np{A,B}=\Trace(AB)$.
Hermitian operators are diagonalizable with real eigenvalues, and may therefore be expressed as
\begin{equation}\label{eq:diagonalizable}
H = \sum_{i=1}^D \lambda_i \bv_i\otimes\bv_i^*,\quad \text{where } \lambda_i\in\RR, \text{ and }\lambda_1\geq \lambda_2\geq\dots\geq\lambda_D, 
\end{equation}
and $\{\bv_i\}_{i\in[D]}$ is an orthonormal basis of $\cH$, see \cite[Ch.~XV, Thm.~6.4]{serge_lang_2002}. 
Each real scalar $\lambda_i$ is the eigenvalue of $H$ with eigenvector $\bv_i$.
We define \defn{$\spec^{\downarrow}(H)$} as the vector $(\lambda_1,\dots,\lambda_D)\in\RR^D$ and $\spec(H)$ as the orbit $\Sym{D}\cdot\spec^{\downarrow}(H)\subset \RR^D$.
An Hermitian operator $H\in \Herm(\cH)$ is positive semidefinite if its eigenvalues are non-negative, equivalently if $\lambda_D\geq 0$.

\begin{definition}[Density operators, $\Density(\cH)$]\label{def:density}
A \defn{density operator} $\rho$ is a positive semidefinite Hermitian operator with $\Trace(\rho)=1$.
The set of density operators on $\cH$ is denoted $\Density(\cH)$.
\end{definition}

\begin{remark}[State vs Density operator]\label{rem:states}
In quantum physics, a \emph{state} on $\Herm(\cH)$ is a linear map $\omega:\Herm(\cH)\to\RR$, such that $\omega(H^2)\geq0$ for each $H\in\Herm(\cH)$ and $\omega(\Id)=1$, see \cite[Definition~2.5]{landsman_foundations_2017}.
When $\cH$ is finite-dimensional, the states $\omega$ on $\Herm(\cH)$ are in bijection with density operators $\rho$ on $\cH$ through the equality $\omega(H)=\np{\rho,H}=\Trace(\rho H)$ \cite[Theorem~2.7]{landsman_foundations_2017}.
The two normalizations $1=\omega(\Id)=\Trace(\rho)=1$ fit together, allowing a duality between states and density operators, through the inner product on $\Herm(\cH)$.
This permits to treat states and density operators interchangeably. 
\defn{Pure state} refers to a density operator $\bv\otimes\bv^*$ for a certain unit vector $\bv \in \cH$. 
The corresponding state is $\omega : H \in \Herm(\cH) \mapsto \np{H\bv, \bv}$. 
Equation~\eqref{eq:diagonalizable} shows that in general, density operators or states are given by a sum of pure states $\bv_i\otimes\bv_i^*$, each weighted with probability $\lambda_i$. 
Indeed, for density operators the eigenvalues are positive and sum to $1$. 
This is the setting of \defn{ensemble of states}.
\end{remark}

\subsection{Many-particle state spaces}\label{ssec:manyparticle}
We denote by $\cH_1$ the \emph{one-particle Hilbert space} of dimension~$d$.
The name refers to the interpretation of states on $\Herm(\cH_1)$ as the probability distributions over orbitals in which a single particle may be found, and the possible orbitals are spanned by $d$ basis orbitals.
The $1$-particle Hilbert space may have some additional substructure corresponding to different degrees of freedom such as orbitals and spin, see e.g.~\cite{cassam_hopf_2003}\cite{altunbulak_pauli_2008}, but we do not consider spin here.
By taking the $N$-th tensor product $\bigotimes_{i=1}^N\cH_1$ we may represent states of $N$-particle systems.
The subspace given by the $N$-th exterior power $\cH_N:=\bigwedge^N\cH_1$ represents states of $N$ fermions, or \emph{$N$-fermion states} making $\cH_N$ the \emph{$N$-fermion Hilbert space}.
Elements in these spaces have the antisymmetry and indistinguishability of fermions built-in.
Furthermore, the antisymmetry enforces Pauli's exclusion principle: two fermions can not occupy the same orbital.

Let $\FerPoset:= \{\bi=(i_1,\dots,i_N)\in [d]^N~:~1\leq i_1<\dots< i_N\leq d\}$ denote the set of all \defn{$N$-fermion configurations}.
If $\cB=\{\bb_i\}_{i=1}^d$ is an orthonormal basis for $\cH_1$, then $\cB^N:=\{\bb_{\bi}\}_{\bi\in\FerPoset }$ is an orthonormal basis for~$\cH_N$, where
\[
\bb_{\bi}:= \bb_{i_1}\wedge \bb_{i_2}\wedge\dots\wedge \bb_{i_N}.
\]
It follows that the dimension of $\cH_N$ is $D:=\binom{d}{N}$.
For the rest of this section and Section~\ref{sec:convex_repr}, we restrict ourselves to $N$-fermion Hilbert spaces.
In Section~\ref{ssec:repr_boson}, we describe how to adapt the discussion to the bosonic case.

\begin{definition}[{Many-particle density operators, $\Density^1$ and $\Density^N$}]
\label{def:many}
The set of density operators in $\Herm(\cH_1)$ and in $\Herm(\cH_N)$ are denoted by $\Density^1$ and $\Density^N$ respectively.
\end{definition}

\begin{remark}
On physical and mathematical grounds, it is also desirable to consider infinite-dimensional one-particle Hilbert spaces.
Kummer provided a rigorous framework for the infinite case in the article \cite{kummer_repr_1967}.
Nevertheless, in the case at hand, the space may be assumed to be finite without losing much physical or mathematical relevance, see e.g. \cite{loewdin_quantum_1955} and \cite{schilling_pinning_2013} and in particular Appendix~B of the latter.
For this reason we always assume that~$\cH_1$ is finite-dimensional, while seeking to minimize the dependence of the results on this dimension.
\end{remark}

\subsection{Variational principle for ensemble states}

The unitary group $\Unitary(\cH)$ acts on $\Herm(\cH)$ by conjugation: $U\cdot H:=U HU^{-1}\in\Herm(\cH)$.
Unitary conjugation is equivalent to a change of orthonormal basis so it does not change the spectrum.
Therefore the set $\Density(\cH)$ is partitioned into $\Unitary(\cH)$-orbits indexed by points in the \defn{Pauli simplex}
\begin{equation}\label{eq:pauli_simplex}
\Pauli_{D-1}:=\left\{\bw\in\RR^D:1\geq w_1\geq w_2\geq \cdots\geq w_D \geq 0,\quad \sum_{i=1}^D w_i=1\right\}.
\end{equation}
of possible spectra (with values ordered decreasingly) of density operators.
For each $\bw\in\Pauli_{D-1}$, let
\[
\Density(\bw):=\{\rho\in\Density(\cH)~:~\spec^\downarrow(\rho)=\bw\}.
\]
Thus, $\Density(\cH)=\bigsqcup_{\bw\in\Pauli_{D-1}}\Density(\bw)$.
We shall use the following variational principle.

\begin{theorem}[{Gross--Oliviera--Kohn \cite[Section II]{gross_rayleigh_1988}}]
\label{thm:gok}
Let $H$ be a Hermitian operator on a Hilbert space $\cH$ with eigenvalues $\lambda_1\geq \lambda_2\geq\dots\geq \lambda_D$ and $\bw \in \Pauli_{D-1}$.
We have
\[
\lambda_{\bw}:=\sum_{i=1}^D w_i \lambda_i=\max_{\rho\in \Density(\bw)} \Trace(\rho H).
\]
A maximizer $\rho_{\bw}$ on the right-hand side is given by $\sum_{i=1}^D w_i \bv_i\otimes \bv_i^*$, where $\bv_i$ is a unit eigenvector of $H$ for the eigenvalue $\lambda_i$.
\end{theorem}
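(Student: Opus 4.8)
The plan is to establish the stated identity by proving the two inequalities $\max_{\rho\in\Density(\bw)}\Trace(\rho H)\ge\lambda_{\bw}$ and $\max_{\rho\in\Density(\bw)}\Trace(\rho H)\le\lambda_{\bw}$ separately. I would first note that the maximum is genuinely attained: $\Density(\bw)$ is a single $\Unitary(\cH)$-orbit, hence compact, and $\rho\mapsto\Trace(\rho H)=\np{\rho,H}$ is linear, thus continuous. For the lower bound I would simply evaluate the candidate maximizer $\rho_{\bw}:=\sum_{i=1}^D w_i\,\bv_i\otimes\bv_i^*$. Since $\{\bv_i\}_{i\in[D]}$ is orthonormal and $w_1\ge\cdots\ge w_D\ge 0$ with $\sum_i w_i=1$, the operator $\rho_{\bw}$ is positive semidefinite, Hermitian, of trace $1$, with $\spec^{\downarrow}(\rho_{\bw})=\bw$, so $\rho_{\bw}\in\Density(\bw)$. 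Using $H\bv_i=\lambda_i\bv_i$ and $\Trace\!\big((\bv_i\otimes\bv_i^*)H\big)=\np{H\bv_i,\bv_i}=\lambda_i$, one computes $\Trace(\rho_{\bw}H)=\sum_i w_i\lambda_i=\lambda_{\bw}$, which yields the lower bound and simultaneously exhibits the claimed maximizer.

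The upper bound is the substantive part. I would take an arbitrary $\rho\in\Density(\bw)$ with orthonormal eigenbasis $\{\bu_j\}$ and $\rho\bu_j=w_j\bu_j$ (eigenvalues listed in the same non-increasing order as $\bw$), and expand each $\bu_j$ in the eigenbasis $\{\bv_i\}$ of $H$. Setting $P_{ji}:=|\np{\bu_j,\bv_i}|^2$, the orthonormality of both bases makes $P=(P_{ji})$ a doubly stochastic matrix, since all its row and column sums equal $1$ by Parseval's identity. A direct expansion then gives
\[
\Trace(\rho H)=\sum_{j=1}^D w_j\,\np{H\bu_j,\bu_j}=\sum_{i,j=1}^D w_j\lambda_i P_{ji}.
\]

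Now I would maximize the right-hand side, which is linear in $P$, over the compact convex Birkhoff polytope of doubly stochastic matrices. By the Birkhoff--von Neumann theorem its vertices are permutation matrices, so the maximum is attained at some permutation $\pi$, where the value equals $\sum_{j} w_j\lambda_{\pi(j)}$. Since $\bw$ and $(\lambda_i)$ are both sorted non-increasingly, the rearrangement inequality shows this expression is largest at $\pi=\mathrm{id}$, giving $\Trace(\rho H)\le\sum_i w_i\lambda_i=\lambda_{\bw}$ and closing the argument.

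The main obstacle is precisely this upper bound, i.e. the relevant instance of von Neumann's trace inequality $\Trace(\rho H)\le\sum_i w_i\lambda_i$; everything else is bookkeeping. An alternative to the Birkhoff step, more in the spirit of the Schur--Horn theorem emphasized in this paper, is to observe that the diagonal $\mathbf{d}$ of $\rho$ in the eigenbasis of $H$ equals $P^{\mathsf T}\bw$ and is therefore majorized by $\bw$; since the $\lambda_i$ are sorted, the linear functional $\mathbf{d}\mapsto\sum_i\lambda_i d_i$ is maximized over the permutohedron $\conv(\Sym{D}\cdot\bw)$ at $\bw$ itself, yielding the same bound. Either route completes the proof.
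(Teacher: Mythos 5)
Your proof is correct, but there is nothing in the paper to compare it against: the paper states this result as an imported theorem, citing Section~II of Gross--Oliveira--Kohn \cite{gross_rayleigh_1988}, and gives no proof of its own (only the remark that the original is phrased with minima rather than maxima). Taken on its own terms, your argument is sound and complete. The lower bound by direct evaluation of $\rho_{\bw}$ is immediate, and your upper bound is exactly the relevant instance of von Neumann's trace inequality, derived correctly: $P_{ji}=|\np{\bu_j,\bv_i}|^2$ is doubly stochastic because both bases are orthonormal, the objective $\sum_{i,j}w_j\lambda_i P_{ji}$ is linear in $P$, Birkhoff--von Neumann reduces the maximization to permutation matrices, and the rearrangement inequality (both sequences being sorted non-increasingly) selects the identity. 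One point worth making explicit: not every doubly stochastic matrix arises from a pair of orthonormal bases, but since you only need an upper bound, maximizing over the larger Birkhoff polytope is harmless. Your alternative route---the majorization $P^{\mathsf{T}}\bw\prec\bw$ followed by maximizing a sorted linear functional over $\Perm(\bw)$---is the one most consonant with the paper's own toolkit (Schur--Horn, permutohedra, and the rearrangement inequality \cite[Theorem~368]{hardy_inequalities_1988}, all invoked elsewhere in the paper), so it would splice into this framework most naturally. A small economy: the opening compactness remark is unnecessary, since exhibiting $\rho_{\bw}$ attaining the upper bound already shows the maximum is attained.
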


In \cite{gross_rayleigh_1988}, this is stated with minima but to keep our conventions we use maxima.
In quantum mechanics, $\Trace(\rho H)$ gives the expectation value of observable $H$ for a quantum state described by a density operator $\rho$, see e.g. \cite[Sections~19.3-4]{hall_quantum_2013}\cite[Section~IIIE]{cassam_hopf_2003}.
For example if $H$ is the Hamiltonian that describes the energy of the system, then the eigenvalues $\lambda_i$ can be thought of as energy levels.
The value of $\Trace(\rho H)$ gives the expectation value of observable~$H$ for the ensemble of states described by $\rho$.

\section{Convex formalism of the $N$-representability problem}
\label{sec:convex_repr}

For basic notions on convex analysis, we refer the reader to the books \cite{grunbaum_convex_2003}\cite{ziegler_lectures_1995}\cite{rockafellar_convex_1997} and to Section~\ref{sec:V_to_H}.
A subset $\pol[K]$ of $\RR^d$ is \emph{convex} if $\bx,\by\in\pol[K]$ implies $\alpha\bx+(1-\alpha)\by\in\pol[K]$ for any $0\leq\alpha\leq1$.
The \emph{convex hull} $\conv(\pol[K])$ of a set $\pol[K]$ is the intersection of all convex sets containing it.
A \emph{polytope} is the convex hull of a finite set of points.
 The \emph{support function} of a convex set $\pol[K]$ is defined as $\mathrm{supp}_{\pol[K]}(\by):=\max_{\bx\in\pol[K]}\np{\by,\bx}$.
 For any compact subset $\pol[K]$ of $\RR^d$, we have 
 $\conv(\pol[K])=\{\bx\in \RR^d~:~\np{\by,\bx}\leq \mathrm{supp}_{\pol[K]}(\by) \text{ for all } \by\in \RR^d\}$.
 Consequently, a convex set and its support function $\mathrm{supp}_K$ uniquely determine each other.
Using Equation \eqref{eq:diagonalizable} we see that the sets $\Density^1$ and $\Density^N$ of Definition \ref{def:many} are convex in $\Herm(\cH_1)$ and $\Herm(\cH_N)$, respectively.

\subsection{$N$-representability problem}

We may now proceed to phrase the $N$-represen\-ta\-bi\-li\-ty problem.
It is formulated via the important concept of \emph{reduced} density matrices which are defined using the partial trace operation.

\begin{definition}[Partial trace]\label{def:partial_trace}
Let $\cV=\{\bv_i\}_{i=1}^d$ be an orthonormal basis of $\cH_1$ and $M,N$ be positive integers such that $M<N$.
The \defn{partial trace} (sometimes called \emph{contraction}) of a Hermitian operator $H_N$ on $\cH_N$ is the operator $H_M=L^N_M(H_N)$ on $\cH_M$ uniquely defined by the equations
\begin{equation}\label{eq:partial_trace}
\np{\bv_{\bi},H_M(\bv_{\bj})}:=\sum_{\bk\in\FerPoset[N-M][d]} \left\langle\bv_{\bi}\wedge\bv_{\bk},H_N\left(\bv_{\bj}\wedge\bv_{\bk}\right)\right\rangle,
\end{equation}
for all $\bi,\bj\in \FerPoset[M]$.
The partial trace $L^N_M$ induces a real linear map between $\Herm(\cH_N)$ and $\Herm(\cH_M)$.
Equation \eqref{eq:partial_trace} provides the entries of a matrix representation of $H_M$ with respect to the basis $\cB^M$.
\end{definition}

\begin{remark}\label{rem:partial trace}
For a proof that the partial trace is a well-defined linear map see \cite[Theorem 3]{kummer_repr_1967}.
For a more general discussion of the partial trace in quantum information see \cite[Section~2.4.3]{nielsen_quantum_2000}. 
For a Lie theory interpretation see Section~\ref{ssec:lie} on page~\pageref{ssec:lie}.
For combinatorially inclined reader, it is possible to phrase the partial trace completely in Hopf algebraic terms, see e.g. \cite[Section~3]{cassam_hopf_2003}.
\end{remark}

A straightforward computation shows that partial traces preserve positive semi-definiteness and that $\Trace(H_M)=\binom{N}{M}\Trace(H_N)$.
So after a suitable normalization factor, the partial trace of a density operator in $\Density^N$ is again a density operator.

\begin{definition}[Reduced density matrices, $\Density^M_N$]
Let $M,N$ be positive integers such that ${M<N}$.
The operators in $L^N_M(\Density^N)$ are called ($M$-)\defn{reduced density matrices}.
The set of $M$-reduced density matrices on $\cH_M$ is denoted $\Density^M_N$.
\end{definition}

In the literature, an $M$-reduced density matrix is sometimes abbreviated as being an $M$-RDM. 
The set $\Density^1_N$ is convex and compact in the set of scaled density operators $N\Density^1$, see Figure~\ref{fig:partial_trace} for a schematic illustration.
Not every extreme points of $\Density^N$ is extreme in the partial trace image~$\Density^1_N$: the extreme points of $\Density^1_N$ are thickened in Figure~\ref{fig:partial_trace}.

\begin{figure}[!h]
\begin{center}
\begin{tikzpicture}[auto]

\node (step1) at (-4,0) {\begin{tikzpicture}
	[vertex/.style={inner sep=0.25pt,circle,draw=black,fill=black,thick},
	ebound/.style={blue,very thick,fill=blue!20!white},
	baseline=(center)]

\draw[ebound] (0,0) circle (2cm);
\node (center) at (0,0) {};

\node[blue] at (0,0) {\Large$\Density^N$};

\end{tikzpicture}};
\node (step2) at  (4,0) {\begin{tikzpicture}
	[scale=1,
	vertex/.style={inner sep=0.25pt,circle,draw=black,fill=black,thick},
	grassbound/.style={black,line width=2.5pt},
	ebound1/.style={blue,very thick},
	ebound/.style={red,very thick, dashed},
	pint/.style={blue!20!white},
	baseline=(center)]

\node (center) at (0,0) {};

\draw[ebound] (0,0) circle (2cm);
\node (center) at (0,0) {};

\node[red] at (1.8,1.8) {\Large$N\Density^1$};

\def\radius{1.4}
\def\run{1.3}
\def\rdeux{1.2}
\def\aun{17.5}
\def\adeux{22.5}
\def\rend{0.8}

\foreach \a in {0,45,...,360}{%

\coordinate (1) at (\a+10:\radius);
\coordinate (2) at (\a:\radius+0.05);
\coordinate (3) at (\a-10:\radius);
\coordinate (next) at (\a-35:\radius);
\coordinate (prev) at (\a+35:\radius);

\fill[pint] (0,0) -- ($(prev)!0.5!(1)$) -- (1) .. controls (2) .. (3) -- ($(next)!0.5!(3)$) -- (0,0);

\draw[ebound1] ($(prev)!0.5!(1)$) -- (1);
\draw[grassbound] (1) .. controls (2) .. (3);
\draw[ebound1] (3) -- ($(next)!0.5!(3)$);

\node[vertex] at (1) {};
\node[vertex] at (3) {};
\node[vertex] at (next) {};
}

\node[blue] at (center) {$\Density^1_N$};

\end{tikzpicture}};

\draw[->,thick,shorten <=0.25cm,shorten >=0.25cm] (step1) to node[out=180,in=80,swap,text width=3cm,align=center] {partial trace\\ $L^N_1$}  (step2);

\end{tikzpicture}
\end{center}
\caption{Schematic representation of the image of the partial trace map}
\label{fig:partial_trace}
\end{figure}
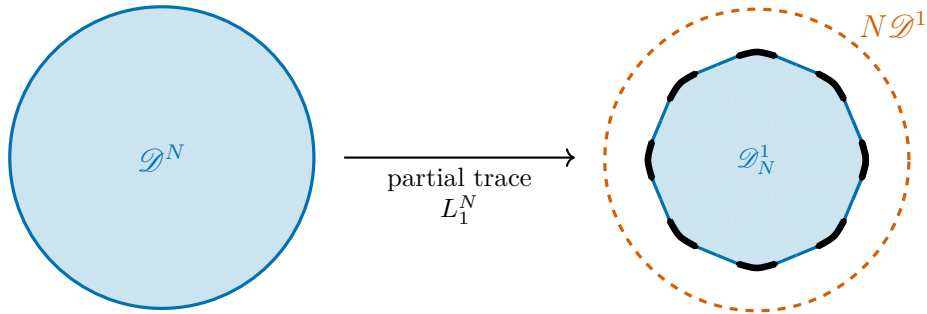

The extreme points of $\Density^1_N$ are indexed by the elements of a Grassmannian \cite[Corollary~9.1B]{coleman_structure_1963}.
Due to dimensional restrictions, Figure~\ref{fig:partial_trace} fails to represents an honest connected Grassmannian.
We may now formulate the representability problem formulated around the 1950's \cite{loewdin_quantum_1955}\cite[Introduction]{coleman_structure_1963}, see also \cite{coleman_necessary_1972} and the references therein.

\begin{nrproblem*}
Let $\rho\in \binom{N}{M}\Density^M$.
Give necessary and sufficient conditions for $\rho$ to be in $\Density^M_N$.
In order words, characterize when $\rho$ is the partial trace of an operator $\rho_N\in\Density^N$.
\end{nrproblem*}

\noindent
Already at the time of its formulation, some necessary conditions were known.
Moreover, it was also noted that the conditions derived from the antisymmetry had deeper consequences on the eigenvalues of reduced density matrices than Pauli's exclusion principle, see e.g. \cite{watanabe_anwendung_1939}\cite{loewdin_quantum_1955}\cite{kuhn_linear_1960}.
The cases $M\in\{1,2\}$ have attracted most attention due to the aspiration in quantum theory to replace the wave function by the $1$- and $2$-reduced density matrices: ``Charles Coulson made clear his expectation that all, or most, of the properties of matter with which chemistry and physics are concerned can be discussed using the 2-matrix as our sole tool with no explicit recourse to the $N$-particle wave function.'' (\cite[Conclusion, p.7]{coleman_kummer_2002}).
Coleman remarked that from the outset, the importance was attributed not to obtaining a solution, but to obtaining simple rules---that are computationally efficient---allowing the application of variational principles \cite{coleman_convex_1977,coleman_kummer_2002}. 
We refer the readers to \cite{mazziotti_reduced_2007,mazziotti_structure_2012,mazziotti_pure_2016}\cite{schilling_generalized_2018} and \cite{bach_orthogonalization_2019} to get more details about recent developments on the general $N$-representability problem.

\subsection{$N$-representability of $1$-reduced matrices}

From now on, we focus on the case $M=1$.
The following lemma describes how unitary orbits behave with respect to the partial trace.

\begin{lemma}[{\cite{coleman_structure_1963}}]\label{lem:unitary}
If $u\in\Unitary(\cH_1)$ and $U=\wedge^N u \in \Unitary(\cH_N)$, then ${L^{N}_1(U\rho U^{-1}) = uL^N_1(\rho)u^{-1}}$.
\end{lemma}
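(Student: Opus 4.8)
The plan is to prove the equivariance directly from the defining relation~\eqref{eq:partial_trace} of the partial trace, specialized to $M=1$, exploiting the fact---guaranteed by the well-definedness of $L^N_1$ (Remark~\ref{rem:partial trace})---that the contraction may be computed in \emph{any} orthonormal basis of $\cH_1$. Since both sides of the claimed identity are operators on $\cH_1$, it suffices to compare the sesquilinear forms $(\bx,\by)\mapsto\np{\bx,\,\cdot\,\by}$ for arbitrary $\bx,\by\in\cH_1$. Extending~\eqref{eq:partial_trace} by sesquilinearity from basis vectors, for \emph{every} orthonormal basis $\{\be_l\}_{l\in[d]}$ of $\cH_1$ one has
\[
\np{\bx, L^N_1(H)\by}=\sum_{\bk\in\FerPoset[N-1][d]}\np{\bx\wedge\be_{\bk},\,H(\by\wedge\be_{\bk})},\qquad \be_{\bk}:=\be_{k_1}\wedge\dots\wedge\be_{k_{N-1}};
\]
the right-hand side is independent of the chosen basis because, evaluated on basis vectors, it returns the matrix entries of the fixed operator $L^N_1(H)$, and two sesquilinear forms agreeing on a basis agree everywhere.

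With this in hand I would compute, for $U=\wedge^N u$ and arbitrary $\bx,\by\in\cH_1$, using $U^{*}=U^{-1}$,
\[
\np{\bx, L^N_1(U\rho U^{-1})\by}=\sum_{\bk}\np{\bx\wedge\bv_{\bk},\,U\rho U^{-1}(\by\wedge\bv_{\bk})}=\sum_{\bk}\np{U^{-1}(\bx\wedge\bv_{\bk}),\,\rho\,U^{-1}(\by\wedge\bv_{\bk})}.
\]
Since $U^{-1}=\wedge^N u^{-1}$ acts diagonally on wedges, $U^{-1}(\bx\wedge\bv_{\bk})=(u^{-1}\bx)\wedge\bv'_{\bk}$ with $\bv'_l:=u^{-1}\bv_l$ and $\bv'_{\bk}:=\bv'_{k_1}\wedge\dots\wedge\bv'_{k_{N-1}}$. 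The vectors $\{\bv'_l\}$ form an orthonormal basis because $u^{-1}$ is unitary, so applying the basis-independent formula above with $\be_l=\bv'_l$ gives
\[
\np{\bx, L^N_1(U\rho U^{-1})\by}=\sum_{\bk}\np{(u^{-1}\bx)\wedge\bv'_{\bk},\,\rho\,((u^{-1}\by)\wedge\bv'_{\bk})}=\np{u^{-1}\bx,\,L^N_1(\rho)(u^{-1}\by)}.
\]
Finally $\np{u^{-1}\bx, X u^{-1}\by}=\np{\bx, uXu^{-1}\by}$ since $(u^{-1})^{*}=u$, so $\np{\bx, L^N_1(U\rho U^{-1})\by}=\np{\bx, u L^N_1(\rho)u^{-1}\by}$ for all $\bx,\by$, which is the claim.

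The computation itself is routine; the one genuine point---and the step I would be most careful about---is the legitimacy of changing the inner contraction basis from $\{\bv_l\}$ to $\{u^{-1}\bv_l\}$. This is exactly where the basis-independence of the partial trace (Kummer, Remark~\ref{rem:partial trace}) enters, reformulated via sesquilinearity so that it applies to arbitrary outer arguments $\bx,\by$ and not only to basis vectors. Conceptually, the same fact can be packaged by observing that $L^N_1$ is, up to the normalization $\Trace(H_1)=N\,\Trace(H_N)$, the adjoint of the one-body lift $\Gamma(A)(\bv_1\wedge\dots\wedge\bv_N)=\sum_{t}\bv_1\wedge\dots\wedge A\bv_t\wedge\dots\wedge\bv_N$, together with the elementary identity $U\Gamma(A)U^{-1}=\Gamma(uAu^{-1})$; taking adjoints then yields the equivariance of $L^N_1$ at once. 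I would present the direct basis-change argument as the main proof and keep the adjoint viewpoint as the conceptual reason behind it.
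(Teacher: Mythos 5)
The paper offers no proof of this lemma at all: it is quoted directly from Coleman's 1963 article, so there is no in-paper argument to compare against, and your write-up would in fact fill a gap rather than duplicate one. Your proof is correct. The two-step structure is sound: first, the basis-independence of the partial trace (Remark~\ref{rem:partial trace}, i.e.\ Kummer's Theorem~3) combined with sesquilinearity of both sides legitimately upgrades the defining relation~\eqref{eq:partial_trace} to the formula $\np{\bx, L^N_1(H)\by}=\sum_{\bk}\np{\bx\wedge\be_{\bk},H(\by\wedge\be_{\bk})}$ for \emph{arbitrary} $\bx,\by\in\cH_1$ and \emph{any} orthonormal contraction basis $\{\be_l\}$; second, unitarity moves $U$ across the inner product, $U^{-1}=\wedge^N u^{-1}$ distributes over wedges, and the rotated vectors $u^{-1}\bv_l$ again form an orthonormal basis, so the upgraded formula applies and closes the computation. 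You also correctly identify the change of contraction basis as the only genuinely non-trivial ingredient. Two remarks. First, the adjoint argument you sketch at the end is the route most consonant with this paper's own toolkit: Lemma~\ref{lem:lift} states exactly that $L^N_1$ is adjoint to the expansion $\Gamma^1_N$, and combining it with the elementary identity $U\Gamma^1_N(h)U^{-1}=\Gamma^1_N(uhu^{-1})$ and cyclicity of the trace gives, for every $h\in\Herm(\cH_1)$, the chain $\np{L^N_1(U\rho U^{-1}),h}_1=\np{\rho,\Gamma^1_N(u^{-1}hu)}_N=\np{uL^N_1(\rho)u^{-1},h}_1$, whence the lemma; the only cost is that Lemma~\ref{lem:lift} appears later in the paper, so your direct basis-change proof is the right one to feature at this point in the text. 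Second, your parenthetical ``up to the normalization $\Trace(H_1)=N\Trace(H_N)$'' is unnecessary and slightly misleading: the adjointness $\np{L^N_1(G),h}_1=\np{G,\Gamma^1_N(h)}_N$ is exact as stated in Lemma~\ref{lem:lift}; the binomial factor only enters when comparing traces of $H_M$ and $H_N$, not in the duality itself.
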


This lemma implies that $\Density^1_N$ is unitary invariant, a property that does not hold for $\Density^M_N$ with $M>1$, see e.g. \cite[Section~5]{coleman_structure_1963} or \cite[Section 1.6]{coleman_reduced_2000}.
Being unitary invariant means that the description of $\Density^1_N$ depends only on the spectrum of the operators involved and the number of particles $N$.
The $N$-representability problem for $M=1$ has the following solution.

\begin{theorem}[$\approx1950/60$]
\label{thm:E1N}
The set $\Density^1_N$ consists of all operators $\rho\in N\Density^1$ that have trace equal to~$N$ and each eigenvalue $\lambda$ satisfies $0\leq \lambda\leq 1$.
\end{theorem}

The above theorem appeared in writing in \cite[Theorem~9.3]{coleman_structure_1963}, see also \cite[Theorem 2.18]{coleman_reduced_2000}.
The necessity of such a bound was already observed in relation with Pauli's exclusion principle or using second quantization, see e.g. \cite[Behauptung I, Equation~(16)]{watanabe_anwendung_1939}\cite[Equation~(75)]{loewdin_quantum_1955}.
Other proofs of this result have been obtained around the same period \cite{kuhn_linear_1960}\cite{yang_concept_1962}\cite{garrod_reduction_1964}, although the results in the latter have been rigorously established in \cite{kummer_repr_1967}.
We refer to the surveys \cite{coleman_convex_1977,coleman_reduced_2001} and the comment \cite{chen_comment_2012} for further details.
Let us highlight two important facts related to this theorem.
First, Theorem \ref{thm:E1N} expresses a bound on the eigenvalues of an $N$-representable density operator in $\Density^1_N$ independently of the dimension $d=\dim(\cH_1)$, which makes it scalable when $d\to\infty$.
Second, since the trace of an $N$-representable density operator is~$N$, then 
\begin{equation}\label{eq:def_hypersimplex}
\spec(\Density^1_N)=\left\{\bx\in\RR^d~:~\sum_{i=1}^d x_i = N,\quad 0\leq x_i\leq 1\text{ for all }i=1,\dots,d\right\}=:\pol[H](N,d),
\end{equation}
which is a polytope $\pol[H](N,d)$ called hypersimplex, see Figure~\ref{fig:ordered_spec} and Examples~\ref{ex/def:hypersimplex} and~\ref{ex:hypersimplex}.

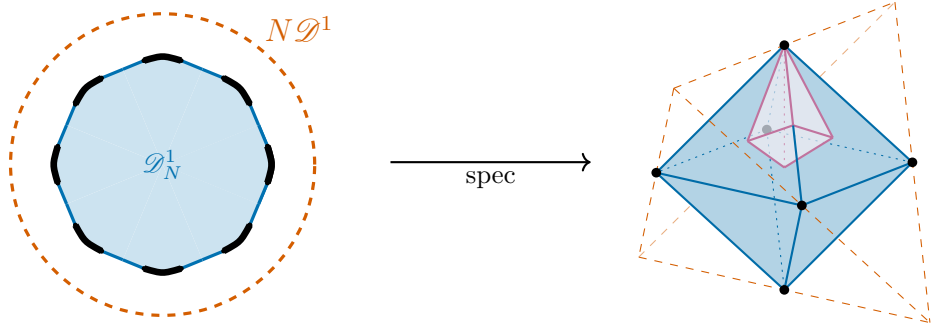
\begin{figure}[!h]
\begin{center}
\begin{tikzpicture}[auto]

\node (step1) at (-4,0) {\begin{tikzpicture}
	[scale=1,
	vertex/.style={inner sep=0.25pt,circle,draw=black,fill=black,thick},
	grassbound/.style={black,line width=2.5pt},
	ebound1/.style={blue,very thick},
	ebound/.style={red,very thick, dashed},
	pint/.style={blue!20!white},
	baseline=(center)]

\node (center) at (0,0) {};

\draw[ebound] (0,0) circle (2cm);
\node (center) at (0,0) {};

\node[red] at (1.8,1.8) {\Large$N\Density^1$};

\def\radius{1.4}
\def\run{1.3}
\def\rdeux{1.2}
\def\aun{17.5}
\def\adeux{22.5}
\def\rend{0.8}

\foreach \a in {0,45,...,360}{%

\coordinate (1) at (\a+10:\radius);
\coordinate (2) at (\a:\radius+0.05);
\coordinate (3) at (\a-10:\radius);
\coordinate (next) at (\a-35:\radius);
\coordinate (prev) at (\a+35:\radius);

\fill[pint] (0,0) -- ($(prev)!0.5!(1)$) -- (1) .. controls (2) .. (3) -- ($(next)!0.5!(3)$) -- (0,0);

\draw[ebound1] ($(prev)!0.5!(1)$) -- (1);
\draw[grassbound] (1) .. controls (2) .. (3);
\draw[ebound1] (3) -- ($(next)!0.5!(3)$);

\node[vertex] at (1) {};
\node[vertex] at (3) {};
\node[vertex] at (next) {};
}

\node[blue] at (center) {$\Density^1_N$};

\end{tikzpicture}};
\node (step2) at  (4,0) {\begin{tikzpicture}%
	[x={(-0.990780cm, -0.040379cm)},
	y={(0.135484cm, -0.295838cm)},
	z={(0.000078cm, 0.954384cm)},
	scale=1.70000,
	back/.style={dotted, thin},
	edge/.style={color=blue!95!black, thick},
	facet/.style={fill=blue!95!black,fill opacity=0.300000},
	edge2/.style={color=purple!95!black, thick},
	facet2/.style={fill=purple!20!white,fill opacity=0.300000},
	edge3/.style={color=red,dashed},
	vertex/.style={inner sep=1pt,circle,draw=black,fill=black,thick}]
%
%
%% This TikZ-picture was produce with Sagemath version 9.3.beta7
%% with the command: ._tikz_3d_in_3d and parameters:
%% view = [-0.0599000000000000, -0.583600000000000, -0.809800000000000]
%% angle = 170.500000000000
%% scale = 1
%% edge_color = blue!95!black
%% facet_color = blue!95!black
%% opacity = 0.8
%% vertex_color = green
%% axis = False

%% Coordinate of the vertices:
%%
\coordinate (-1.00000, 0.00000, 0.00000) at (-1.00000, 0.00000, 0.00000);
\coordinate (0.00000, -1.00000, 0.00000) at (0.00000, -1.00000, 0.00000);
\coordinate (0.00000, 0.00000, -1.00000) at (0.00000, 0.00000, -1.00000);
\coordinate (0.00000, 0.00000, 1.00000) at (0.00000, 0.00000, 1.00000);
\coordinate (0.00000, 1.00000, 0.00000) at (0.00000, 1.00000, 0.00000);
\coordinate (1.00000, 0.00000, 0.00000) at (1.00000, 0.00000, 0.00000);
%%
%%
%% Drawing edges in the back
%%
\draw[edge3,opacity=0.5] (1.00000, -1.00000, -1.00000) -- (-1.00000, -1.00000, 1.00000);
%%
%% Priming the facets
%%
\fill[white,opacity=1] (0.00000, 1.00000, 0.00000) -- (-1.00000, 0.00000, 0.00000) -- (0.00000, 0.00000, 1.00000) -- cycle {};
\fill[white,opacity=1] (0.00000, 1.00000, 0.00000) -- (-1.00000, 0.00000, 0.00000) -- (0.00000, 0.00000, -1.00000) -- cycle {};
\fill[white,opacity=1] (1.00000, 0.00000, 0.00000) -- (0.00000, 0.00000, -1.00000) -- (0.00000, 1.00000, 0.00000) -- cycle {};
\fill[white,opacity=1] (1.00000, 0.00000, 0.00000) -- (0.00000, 0.00000, 1.00000) -- (0.00000, 1.00000, 0.00000) -- cycle {};

%% Drawing edges in the back
%%
\draw[edge,back] (-1.00000, 0.00000, 0.00000) -- (0.00000, -1.00000, 0.00000);
\draw[edge,back] (0.00000, -1.00000, 0.00000) -- (0.00000, 0.00000, -1.00000);
\draw[edge,back] (0.00000, -1.00000, 0.00000) -- (0.00000, 0.00000, 1.00000);
\draw[edge,back] (0.00000, -1.00000, 0.00000) -- (1.00000, 0.00000, 0.00000);
%%
%% Drawing vertices in the back
%%
\node[vertex] at (0.00000, -1.00000, 0.00000)     {};
%%
%%
%% Drawing the facets
%%
\fill[facet] (0.00000, 1.00000, 0.00000) -- (-1.00000, 0.00000, 0.00000) -- (0.00000, 0.00000, 1.00000) -- cycle {};
\fill[facet] (0.00000, 1.00000, 0.00000) -- (-1.00000, 0.00000, 0.00000) -- (0.00000, 0.00000, -1.00000) -- cycle {};
\fill[facet] (1.00000, 0.00000, 0.00000) -- (0.00000, 0.00000, -1.00000) -- (0.00000, 1.00000, 0.00000) -- cycle {};
\fill[facet] (1.00000, 0.00000, 0.00000) -- (0.00000, 0.00000, 1.00000) -- (0.00000, 1.00000, 0.00000) -- cycle {};
%%
%% Priming the facets
%%
\fill[white,opacity=0.5] (0.33333, 0.33333, 0.33333) -- (0.00000, 0.00000, 0.00000) -- (-0.33333, 0.33333, 0.33333) -- (0.00000, 0.50000, 0.50000) -- cycle {};
\fill[white,opacity=0.5] (0.00000, 0.50000, 0.50000) -- (-0.33333, 0.33333, 0.33333) -- (0.00000, 0.00000, 1.00000) -- cycle {};
\fill[white,opacity=0.5] (0.33333, 0.33333, 0.33333) -- (0.00000, 0.00000, 1.00000) -- (0.00000, 0.50000, 0.50000) -- cycle {};
%%
%% Drawing edges in the front
%%
\draw[edge] (-1.00000, 0.00000, 0.00000) -- (0.00000, 0.00000, -1.00000);
\draw[edge] (-1.00000, 0.00000, 0.00000) -- (0.00000, 0.00000, 1.00000);
\draw[edge] (-1.00000, 0.00000, 0.00000) -- (0.00000, 1.00000, 0.00000);
\draw[edge] (0.00000, 0.00000, -1.00000) -- (0.00000, 1.00000, 0.00000);
\draw[edge] (0.00000, 0.00000, -1.00000) -- (1.00000, 0.00000, 0.00000);
\draw[edge] (0.00000, 0.00000, 1.00000) -- (1.00000, 0.00000, 0.00000);
\draw[edge] (0.00000, 1.00000, 0.00000) -- (1.00000, 0.00000, 0.00000);
%%
%%
%%
%%

%% Coordinate of the vertices:
%%
\coordinate (-0.33333, 0.33333, 0.33333) at (-0.33333, 0.33333, 0.33333);
\coordinate (0.00000, 0.00000, 0.00000) at (0.00000, 0.00000, 0.00000);
\coordinate (0.00000, 0.00000, 1.00000) at (0.00000, 0.00000, 1.00000);
\coordinate (0.00000, 0.50000, 0.50000) at (0.00000, 0.50000, 0.50000);
\coordinate (0.33333, 0.33333, 0.33333) at (0.33333, 0.33333, 0.33333);
%%
%%
%% Drawing edges in the back
%%
\draw[edge2,back] (0.00000, 0.00000, 0.00000) -- (0.00000, 0.00000, 1.00000);
%%
%%
%% Drawing vertices in the back
%%
%%
%%
%% Drawing the facets
%%
\fill[facet2] (0.33333, 0.33333, 0.33333) -- (0.00000, 0.00000, 0.00000) -- (-0.33333, 0.33333, 0.33333) -- (0.00000, 0.50000, 0.50000) -- cycle {};
\fill[facet2] (0.00000, 0.50000, 0.50000) -- (-0.33333, 0.33333, 0.33333) -- (0.00000, 0.00000, 1.00000) -- cycle {};
\fill[facet2] (0.33333, 0.33333, 0.33333) -- (0.00000, 0.00000, 1.00000) -- (0.00000, 0.50000, 0.50000) -- cycle {};
%%
%%
%% Drawing edges in the front
%%
\draw[edge2] (-0.33333, 0.33333, 0.33333) -- (0.00000, 0.00000, 0.00000);
\draw[edge2] (-0.33333, 0.33333, 0.33333) -- (0.00000, 0.00000, 1.00000);
\draw[edge2] (-0.33333, 0.33333, 0.33333) -- (0.00000, 0.50000, 0.50000);
\draw[edge2] (0.00000, 0.00000, 0.00000) -- (0.33333, 0.33333, 0.33333);
\draw[edge2] (0.00000, 0.00000, 1.00000) -- (0.33333, 0.33333, 0.33333);
\draw[edge2] (0.00000, 0.50000, 0.50000) -- (0.33333, 0.33333, 0.33333);
\draw[edge] (0.00000, 0.50000, 0.50000) -- (0.00000, 1.00000, 0.00000);
\draw[edge2] (0.00000, 0.00000, 1.00000) -- (0.00000, 0.50000, 0.50000);

%% Coordinate of the vertices:
%%
\coordinate (1.00000, 1.00000, 1.00000) at (1.00000, 1.00000, 1.00000);
\coordinate (1.00000, -1.00000, -1.00000) at (1.00000, -1.00000, -1.00000);
\coordinate (-1.00000, -1.00000, 1.00000) at (-1.00000, -1.00000, 1.00000);
\coordinate (-1.00000, 1.00000, -1.00000) at (-1.00000, 1.00000, -1.00000);
%%
%%
%%
%%
%% Drawing vertices in the back
%%
%%
%%
%% Drawing the facets
%%
% \fill[facet] (-1.00000, 1.00000, -1.00000) -- (1.00000, 1.00000, 1.00000) -- (1.00000, -1.00000, -1.00000) -- cycle {};
% \fill[facet] (-1.00000, 1.00000, -1.00000) -- (1.00000, 1.00000, 1.00000) -- (-1.00000, -1.00000, 1.00000) -- cycle {};
%%
%%
%% Drawing edges in the front
%%
\draw[edge3] (1.00000, 1.00000, 1.00000) -- (1.00000, -1.00000, -1.00000);
\draw[edge3] (1.00000, 1.00000, 1.00000) -- (-1.00000, -1.00000, 1.00000);
\draw[edge3] (1.00000, 1.00000, 1.00000) -- (-1.00000, 1.00000, -1.00000);
\draw[edge3] (1.00000, -1.00000, -1.00000) -- (-1.00000, 1.00000, -1.00000);
\draw[edge3] (-1.00000, -1.00000, 1.00000) -- (-1.00000, 1.00000, -1.00000);
%%
%%
%% Drawing the vertices in the front
%%
% \node[vertex] at (1.00000, 1.00000, 1.00000)     {};
% \node[vertex] at (1.00000, -1.00000, -1.00000)     {};
% \node[vertex] at (-1.00000, -1.00000, 1.00000)     {};
% \node[vertex] at (-1.00000, 1.00000, -1.00000)     {};

%% Drawing the vertices in the front
%%
\node[vertex] at (-1.00000, 0.00000, 0.00000)     {};
\node[vertex] at (0.00000, 0.00000, -1.00000)     {};
\node[vertex] at (0.00000, 0.00000, 1.00000)     {};
\node[vertex] at (0.00000, 1.00000, 0.00000)     {};
\node[vertex] at (1.00000, 0.00000, 0.00000)     {};

\end{tikzpicture}};

\draw[->,thick,shorten <=0.5cm,shorten >=0.5cm] (step1) to node[out=180,in=80,swap,text width=3cm,align=center] {$\spec$}  (step2);

\end{tikzpicture}
\end{center}

\caption{On the right, the dashed tetrahedron (a simplex in general) represents spectra of density operators in $\spec\left(N\Density^1\right)$. 
Only the spectra in the octahedron (a hypersimplex in general) may occur as a spectra of a $1$-reduced density matrix in $\spec\left(\Density^1_N\right)$.
The smaller polytope included in the octahedron represents the vectors in $\spec^{\downarrow}\left(\Density^1_N\right)$.
The dotted vertices of the hypersimplex represent all orderings of the spectra $(1,\dots,1,0,\dots)$ with $N$ occurrences of~$1$.}
\label{fig:ordered_spec}
\end{figure}

The hypersimplex has been used to describe these spectra in \cite{watanabe_anwendung_1939}\cite{loewdin_quantum_1955}\cite{kuhn_linear_1960}\cite{coleman_structure_1963}, and this result seems to have been already folklore for physicists when the term hypersimplex was coined in the articles \cite[Section~1.6]{gabrielov_combinatorial_1975} and later studied in \cite{gelfand_combinatorial_1987}, see e.g. \cite[Example~0.11]{ziegler_lectures_1995}.
Coleman proved that the extreme points of $\Density^1_N$ are indexed by the elements of the corresponding Grassmannian \cite[Corollary~9.1B]{coleman_structure_1963}.
Recently, Lukowski, Parisi, Sherman-Bennett and Williams found explicit---yet not completely understood connections---between scattering amplitudes (the \emph{amplituhedron}) and the hypersimplex, see \cite{lukowski_positive_2020,parisi_amplituhedron_2021}.

\subsection{Ensemble $N$-representability}
\label{ssec:ens_nrepr}

We seek a refined version of Theorem \ref{thm:E1N}, where we consider $\bw$-ensemble states, i.e density operators $\rho\in\Density^N$ with a prescribed spectrum $\bw\in\Pauli_{D-1}$. 
The vector $\bw=(w_1, \ldots, w_D)$ is called the \emph{weight vector} of the ensemble. 
When $\bw=(1,0,\dots)$, the ensemble is called \emph{pure}.
The set of \emph{$\bw$-ensemble $N$-representable 1-reduced density matrices} is 
\begin{equation}\label{eq:w-reduced}
\Density^1_N(\bw):=L^N_1(\Density^N(\bw)),
\end{equation}
which is also easily seen to be unitary invariant by Lemma \ref{lem:unitary}, see Figure~\ref{fig:reduced_ensemble}.

\begin{figure}[!h]
\begin{center}
\begin{tikzpicture}[auto]

\node (step1) at (-4,0) {\begin{tikzpicture}
	[vertex/.style={inner sep=0.25pt,circle,draw=black,fill=black,thick},
	ebound/.style={blue,very thick,fill=blue!20!white},
	pint/.style={blue!20!white},
	baseline=(center)]

\draw[ebound] (0,0) circle (2cm);
\node (center) at (0,0) {};

\node[blue] at (60:2.33) {$\Density^N$};

\draw[decorate, decoration={snake, segment length=2.4mm, amplitude=0.75mm},forestgreen, thick] (0,-0.5) circle (0.75cm);

\node[forestgreen] at (1.2,0.45) {$\Density^N(\bw)$};

\end{tikzpicture}};
\node (step2) at  (4,0) {\begin{tikzpicture}
	[scale=0.7,
	vertex/.style={inner sep=0.25pt,circle,draw=black,fill=black,thick},
	grassbound/.style={black,very thick},
	pint/.style={blue!20!white},
	wint/.style={white},
	wbound/.style={forestgreen,thick},
	ebound1/.style={blue,very thick},
	baseline=(center)]

\node (center) at (0,0) {};

\node[blue] at (2.6,2.6) {\Large$\Density^1_N$};

\def\radiusout{2.75}
\def\radius{2}
\def\run{1.75}
\def\rdeux{1.5}
\def\aun{17.5}
\def\adeux{22.5}
\def\rend{0.8}

\foreach \a in {0,45,...,360}{%

\coordinate (1out) at (\a+10:\radiusout);
\coordinate (2out) at (\a:\radiusout+0.05);
\coordinate (3out) at (\a-10:\radiusout);
\coordinate (nextout) at (\a-35:\radiusout);
\coordinate (prevout) at (\a+35:\radiusout);

\fill[pint] (0,0) -- ($(prevout)!0.5!(1out)$) -- (1out) .. controls (2out) .. (3out) -- ($(nextout)!0.5!(3out)$) -- (0,0);

\draw[ebound1] ($(prevout)!0.5!(1out)$) -- (1out);
\draw[ebound1] (1out) .. controls (2out) .. (3out);
\draw[ebound1] (3out) -- ($(nextout)!0.5!(3out)$);

% \node[vertex] at (1out) {};
% \node[vertex] at (3out) {};
% \node[vertex] at (nextout) {};

\coordinate (1) at (\a+10:\radius);
\coordinate (2) at (\a:\radius+0.05);
\coordinate (3) at (\a-10:\radius);
\coordinate (next) at (\a-35:\radius);
\coordinate (prev) at (\a+35:\radius);

\coordinate (c1) at (\a+\aun:\run);
\coordinate (c2) at (\a+\adeux:\rdeux);
\coordinate (d1) at (\a-\aun:\run);
\coordinate (d2) at (\a-\adeux:\rdeux);
\coordinate (endc) at (\a+22.5:\rend);
\coordinate (endd) at (\a-22.5:\rend);

% \fill[wint] (0,0) -- (endc) -- (1) -- (3) -- (endd) -- (0,0);
% \fill[wint] (1) .. controls (2) .. (3);
% \fill[wint] (1) .. controls (c1) and (c2).. (endc);
% \fill[wint] (3) .. controls (d1) and (d2).. (endd);

\draw[wbound] (1) .. controls (c1) and (c2).. (endc);
\draw[wbound] (3) .. controls (d1) and (d2).. (endd);
\draw[wbound] (1) .. controls (2) .. (3);

% \node[vertex] at (1) {};
% \node[vertex] at (3) {};
% \node[vertex] at (next) {};
}

\node[forestgreen] at (60:2.25) {$\Density^1_N(\bw)$};

\end{tikzpicture}};

\draw[->,thick,shorten <=0.25cm,shorten >=0.25cm] (step1) to node[out=180,in=80,swap,text width=3cm,align=center] {partial trace\\ $L^N_1$}  (step2);

\end{tikzpicture}
\end{center}
\caption{Schematic representation of the non-convex set of density operators $\Density^N(\bw)$ (represented as a snake curve) with spectrum $\bw$.
It reduces to a non-convex set (a rosette curve) $\Density^1_N(\bw)$ of $1$-reduced density matrices.}
\label{fig:reduced_ensemble}
\end{figure}
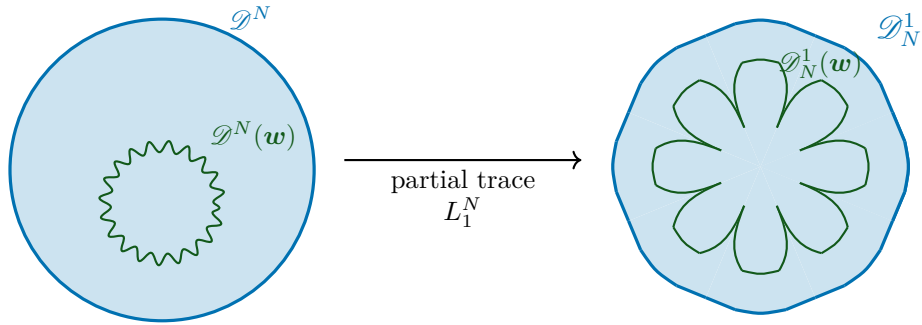

\begin{enrproblem*}
Let $\bw\in\Pauli_{D-1}$ and $\rho\in \Density^1$.
Give necessary and sufficient conditions for $\rho$ to be in $\Density^1_N(\bw)$.
In order words, characterize when $\rho$ is the partial trace of an operator $\rho_N\in\Density^N(\bw)$.
\end{enrproblem*}

In a series of ground-breaking articles \cite{klyachko_stable_1998,klyachko_2006,altunbulak_pauli_2008,klyachko_pauli_2009}, Klyachko et al. gave a spectral characterization of $\Density^1_N(\bw)$.
Let
\begin{equation}
\Lambda(\bw,N,d):=\spec(\Density^1_N(\bw)) \label{eq:rosette_klypo} \hspace{1cm}\text{ and }\hspace{1cm} \klypo:=\spec^{\downarrow}(\Density^1_N(\bw)).
\end{equation}
Klyachko gave a description of $\klypo$ as a convex set defined by \emph{finitely many} inequalities, i.e. a polytope. 
We briefly discuss how his approach compares to ours in Section \ref{ssec:lie} and Section~\ref{ssec:klyachko}.
The set $\Lambda(\bw,N,d)$ is by definition the symmetrization of $\klypo$ and it is not convex (see Figure~\ref{fig:single_flower} on page~\pageref{fig:single_flower}), it is sometimes referred to as a \emph{rosette}, see Figure~\ref{fig:reduced_ensemble}.
Furthermore, in \cite{klyachko_2006} they provided a procedure using representation theory to obtain all inequalities for $\bw=(1,0,\dots)$.
The general inequalities are described in \cite[Theorem~4.16]{klyachko_2006} and \cite[Theorem~2]{altunbulak_pauli_2008}.
In Equation~\eqref{eq:w-reduced}, the parameter $d$ does not appear:
Recall that in Theorem~\ref{thm:E1N}, the parameter $d$ is not directly involved in the restrictions on the eigenvalues.
However, in the solution given by Klyachko, there is a strong dependence on the value of~$d$, see for instance the complete inequalities for the pure case when $N=3$ and $d=6,7,8$ \cite[pages~83-84]{klyachko_2006}.
Unfortunately, if $d>10$ these inequalities are out of reach for practical applications in quantum chemistry.
In Section~\ref{ssec:klyachko}, we compare these inequalities to those obtained in this article.

\subsection{Relaxed $N$-representability}
\label{ssec:relaxed}

The set $\spec(\Density^1_N(\bw))$ is generally not convex, see Figure~\ref{fig:reduced_ensemble}.
We refer the reader to Section~\ref{ssec:lie} where this set is described from a Lie theory perspective.
Characterizing $\spec^{\downarrow}(\Density^1_N(\bw))$ calls for a description of a \emph{petal} whose orbit forms the rosette.
Facing the practical difficulties involved in the solution of the $N$-representability of $1$-reduced density matrices for ensemble states, we now consider a convex relaxation of the ensemble $1$-body $N$-representability problem.

We briefly would like to emphasize the great physical significance of this convex relaxation, particularly also relative to its non-relaxed version. 
For this, we first observe that Coulson’s vision to replace the $N$-particle wave function by the $2$-particle reduced density matrix can even be taken to another level: Since each subfield of physics restricts typically to a \emph{fixed} pair interaction $W$ (e.g., Coulomb interaction in quantum chemistry and contact-interaction in the field of ultracold gases), the class of relevant Hamiltonians is merely parameterized by the one-particle Hamiltonian~$h$, $H_W(h)=h+W$.  
Exploiting this in the context of the Rayleigh--Ritz variational principle applied to $H_W(h)$ leads directly to a universal $1$-reduced density matrix functional $\mathcal{F}_W$ \cite{Levy79,Lieb83}.
Minimizing then $\mathcal{F}_W+\Trace[h \cdot]$ over the non-convex space $\Density^1_N((1,0,\dots))$ of pure $N$-representable $1$-reduced density matrices would yield the energy and $1$-reduced density matrix of the ground state of $H_W(h)$, \emph{for any} $h$. 
Unfortunately, the task of describing the set $\Density^1_N((1,0,\dots))$ for realistic system sizes is a hopeless endeavor. 
The corresponding $1$-reduced density matrix functional theory (RDMFT) became a feasible method only after Valone \cite{V80} understood that these fundamental obstacles can be avoided if the Rayleigh--Ritz variational principle is applied in the context of ensemble rather than pure states.
This relaxes the functional $\mathcal{F}_W$ to its lower convex envelop, and more importantly its domain from $\Density^1_N((1,0,\dots))$ to its easy-to-describe convex hull $\overline{\Density}{}^1_N((1,0,\dots))$. 
By resorting to Gross--Oliviera--Kohn variational principle in Theorem~\ref{thm:gok}, we have recently generalized RDMFT to excited states where $\bw$ is used to fix the weights of carefully chosen low-lying excited states~\cite{schilling_ensemble_2021}\cite{liebert_foundation_2021}. 
Application of exact convex relaxation turns $\bw$-ensemble RDMFT into a practically feasible method, provided a compact halfspace-representation of $\overline{\Density}{}^1_N(\bw)$ is found, which is nothing else than the ambition of the present work.

\begin{framed}\label{pb:relaxedNrepr}
\noindent
\textbf{Convex $1$-body Ensemble $N$-representability Problem}\hfill\\
Let $\bw\in\Pauli_{D-1}$ and $\rho\in N\Density^1$. 
Give necessary and sufficient conditions for $\rho$ to belong to the convex set
\[
\overline{\Density}{}^1_N(d,\bw):=\conv\left\{L^N_1(\tau)~:~\tau\in\Density^N(\bw)\right\},
\]
where the dimension of the one-particle Hilbert space is $\dim(\cH_1)=d$.
\end{framed}

The convex hull $\overline{\Density}{}^1_N(d,\bw)$ is taken inside $N\Density^1$, see Figure~\ref{fig:relaxed_ensemble} for an illustration.
We added a ``$d$'' in the notation here to emphasize the \emph{a priori} dependence on $d$.

\begin{figure}[!h]
\begin{center}
\begin{tikzpicture}[auto]

\node (step1) at (-4,0) {\begin{tikzpicture}
	[scale=0.7,
	vertex/.style={inner sep=0.25pt,circle,draw=black,fill=black,thick},
	grassbound/.style={black,very thick},
	convbound/.style={red,thick},
	pint/.style={white},
	wint/.style={red!30!white},
	convint/.style={red!30!white},
	wbound/.style={forestgreen,very thick},
	ebound1/.style={blue,very thick,dashed},
	baseline=(center)]

\node (center) at (0,0) {};

\node[blue] at (2.6,2.6) {\Large$\Density^1_N$};

\def\radiusout{3}
\def\radius{2}
\def\run{1.75}
\def\rdeux{1.5}
\def\aun{17.5}
\def\adeux{22.5}
\def\rend{0.8}

\foreach \a in {0,45,...,360}{%

\coordinate (1out) at (\a+10:\radiusout);
\coordinate (2out) at (\a:\radiusout+0.05);
\coordinate (3out) at (\a-10:\radiusout);
\coordinate (nextout) at (\a-35:\radiusout);
\coordinate (prevout) at (\a+35:\radiusout);

\fill[pint] (0,0) -- ($(prevout)!0.5!(1out)$) -- (1out) .. controls (2out) .. (3out) -- ($(nextout)!0.5!(3out)$) -- (0,0);

\draw[ebound1] ($(prevout)!0.5!(1out)$) -- (1out);
\draw[ebound1] (1out) .. controls (2out) .. (3out);
\draw[ebound1] (3out) -- ($(nextout)!0.5!(3out)$);

% \node[vertex] at (1out) {};
% \node[vertex] at (3out) {};
% \node[vertex] at (nextout) {};

\coordinate (1) at (\a+10:\radius);
\coordinate (2) at (\a:\radius+0.05);
\coordinate (3) at (\a-10:\radius);
\coordinate (next) at (\a-35:\radius);
\coordinate (prev) at (\a+35:\radius);

\coordinate (c1) at (\a+\aun:\run);
\coordinate (c2) at (\a+\adeux:\rdeux);
\coordinate (d1) at (\a-\aun:\run);
\coordinate (d2) at (\a-\adeux:\rdeux);
\coordinate (endc) at (\a+22.5:\rend);
\coordinate (endd) at (\a-22.5:\rend);

\filldraw[wint] (0,0) -- (endc) -- (1) -- (3) -- (endd) -- (0,0);
\filldraw[wint] (1) .. controls (2) .. (3);
\filldraw[wint] (1) .. controls (c1) and (c2).. (endc);
\filldraw[wint] (3) .. controls (d1) and (d2).. (endd);
\filldraw[convint] (3) .. controls (d1) and (d2).. (endd) -- ($(next)!0.5!(3)$) -- cycle;
\filldraw[convint] (1) .. controls (c1) and (c2).. (endc) -- ($(prev)!0.5!(1)$) -- cycle;

\draw[wbound] (1) .. controls (c1) and (c2).. (endc);
\draw[wbound] (3) .. controls (d1) and (d2).. (endd);
\draw[convbound] (1) .. controls (2) .. (3);
\draw[convbound] (1) -- (prev);
\draw[convbound] (3) -- (next);

% \node[vertex] at (1) {};
% \node[vertex] at (3) {};
% \node[vertex] at (next) {};
% \node[vertex] at (prev) {};
}

\node[red] at (90:\radius+0.5) {$\overline{\Density}{}^1_N(\bw)$};

\end{tikzpicture}};
\node (step2) at  (4,0) {\input{ordered_spec423}};

\draw[->,thick,shorten <=0.25cm,shorten >=0.25cm] (step1) to node[out=180,in=80,swap,text width=3cm,align=center] {$\spec$}  (step2);

\end{tikzpicture}
\end{center}
\caption{Schematic representation of the set $\spec\left(\overline{\Density}{}^1_N(\bw)\right)$ (equivalently the convex hull of $\spec(\Density^1_N(\bw))$, see Proposition~\ref{prop:rotation}).
It is a symmetric polytope contained in the convex hull of $\spec(\Density^1_N)=\pol[H](N,d)$ (an octahedron $\pol[H](2,4)$ here).
On the right, the illustration of $\spec\left(\overline{\Density}{}^1_2(\frac{1}{2},\frac{1}{3},\frac{1}{6},0,0,0)\right)$ with $d=4$.
The polytope whose facets are shaded is $\spec^{\downarrow}\left(\Density^1_2(\frac{1}{2},\frac{1}{3},\frac{1}{6},0,0,0)\right)$, obtained from \cite[Section~4.2.3]{klyachko_2006}.}
\label{fig:relaxed_ensemble}
\end{figure}

By the linearity of the partial trace, the properties of convex hull and the conjugation action of $\Unitary$, the set $\overline{\Density}{}^1_N(d,\bw)$ is unitary invariant.
Given two weight vectors $\bw,\bw'\in\Pauli_{D-1}$, we say that $\bw$ \emph{majorizes} $\bw'$ \emph{weakly} if and only if, for all $k=1,2,\dots,D$
\[
\sum_{i=1}^kw'_i \leq \sum_{i=1}^kw_i,
\]
and write $\bw'\prec\bw$.
The following structural result describing the orbits in $\overline{\Density}{}^1_N(d,\bw)$ allows the emergence of a geometric and combinatorial approach to the convex ensemble $N$-representability problem.

\begin{theorem}[{see \cite[Theorem~12]{liebert_foundation_2021}}]
Let $d\geq N\geq 1$ and $\bw\in\Pauli_{D-1}$.
The convex set $\overline{\Density}{}^1_N(d,\bw)$ is the union of majorized orbits:
\[
\overline{\Density}{}^1_N(d,\bw)=\bigcup_{\bw'\prec \bw}\Density^1_N(d,\bw').
\]
\end{theorem}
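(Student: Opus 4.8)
The plan is to reduce the statement to the analogous description of $\conv\Density^N(\bw)$ at the level of the full $N$-particle Hilbert space, and then transport it through the partial trace. Since $L^N_1$ is a real linear map, it commutes both with taking convex hulls and with unions, so the very definition of the left-hand side gives
\[
\overline{\Density}{}^1_N(d,\bw)=\conv\{L^N_1(\tau):\tau\in\Density^N(\bw)\}=L^N_1\!\left(\conv\Density^N(\bw)\right).
\]
Thus everything follows once I identify $\conv\Density^N(\bw)$. I would therefore first prove the intermediate identity
\[
\conv\Density^N(\bw)=\bigcup_{\bw'\prec\bw}\Density^N(\bw'),
\]
and then push it forward under $L^N_1$.

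This intermediate identity is the convexity form of the Schur--Horn theorem. The set $\Density^N(\bw)$ is a single unitary orbit, since every density operator with spectrum $\bw$ is conjugate to $\diag(\bw)$, and the convex hull of such an orbit is exactly the set of Hermitian operators whose spectrum is majorized by $\bw$. I would establish this by comparing support functions: both sides are compact and convex, and for any test operator $h\in\Herm(\cH_N)$ the Gross--Oliviera--Kohn principle (Theorem~\ref{thm:gok}) gives
\[
\max_{\rho\in\Density^N(\bw)}\Trace(\rho h)=\sum_{i}w_i\,\lambda_i(h),
\]
which is precisely the support function of the majorization region $\{\sigma:\spec^\downarrow(\sigma)\prec\bw\}$, the maximum of $\langle\,\cdot\,,\spec^\downarrow(h)\rangle$ over the permutohedron of $\bw$. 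One must also verify that the union runs over legitimate spectra: if $\bw'\prec\bw$ with $\bw\in\Pauli_{D-1}$, then equality of total sums forces the smallest entry to satisfy $w'_D\geq w_D\geq 0$, so $\bw'\in\Pauli_{D-1}$ and $\Density^N(\bw')$ is well defined and consists of genuine density operators.

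Combining the two steps, linearity of $L^N_1$ yields
\[
\overline{\Density}{}^1_N(d,\bw)=L^N_1\!\left(\bigcup_{\bw'\prec\bw}\Density^N(\bw')\right)=\bigcup_{\bw'\prec\bw}L^N_1(\Density^N(\bw'))=\bigcup_{\bw'\prec\bw}\Density^1_N(d,\bw'),
\]
which is the claim. The genuinely substantive point is the orbit identity $\conv\Density^N(\bw)=\bigcup_{\bw'\prec\bw}\Density^N(\bw')$; everything after it is formal, because the partial trace is linear and hence interchanges with convex hull and union. I expect the main obstacle to be stating and justifying this Schur--Horn convexity identity cleanly, in particular matching the support function $\sum_i w_i\lambda_i(h)$ with the known support function of a majorization region, whereas checking that $L^N_1$ preserves the two set operations, and that majorization keeps $\bw'$ inside the Pauli simplex, are routine.
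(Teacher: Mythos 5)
Your proposal is correct, and it is the natural route: the paper itself gives no argument for this statement (it is imported verbatim from \cite[Theorem~12]{liebert_foundation_2021}), and the strategy there is the same two-step reduction you describe --- identify $\conv\Density^N(\bw)$ with the majorization region at the $N$-particle level, then transport it through the linear map $L^N_1$, which commutes with both $\conv$ and unions. Two points are worth making airtight. First, your support-function comparison only closes the argument once you know the right-hand side $\bigcup_{\bw'\prec\bw}\Density^N(\bw')$ is itself convex; otherwise equal support functions prove nothing. This is classical but should be said: the Ky Fan maximum principle shows $\sigma\mapsto\sum_{i\leq k}\lambda_i(\sigma)$ is convex on $\Herm(\cH_N)$ (it is a maximum of linear functionals $\Trace(P\,\cdot\,)$ over rank-$k$ projections $P$), so the majorization region is an intersection of convex sublevel sets with the trace hyperplane, hence compact and convex; alternatively one can prove the inclusion $\bigcup_{\bw'\prec\bw}\Density^N(\bw')\subseteq\conv\Density^N(\bw)$ directly from Rado's theorem ($\{\bx:\bx\prec\bw\}=\conv(\Sym{D}\cdot\bw)$) applied in an eigenbasis, which avoids any appeal to convexity of the union. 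Second, a small bookkeeping point: the paper's relation $\prec$ is \emph{weak} majorization, so the union formally ranges over vectors $\bw'$ whose total sum may be strictly less than $1$; for such $\bw'$ the set $\Density^N(\bw')$ is empty (no density operator has trace below $1$), so these terms contribute nothing, and your reduction to ordinary majorization with equal sums is harmless --- but this should be stated rather than assumed. With these two remarks added, your proof is complete.
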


The theorem indicates that the smaller the weight vector within the majorization order, the smaller the spectral convex body should be.

\begin{example}
Let $N=2$, $d=4$, $\bw_1=(\frac{1}{2},\frac{1}{3},\frac{1}{6},0,0,0)$ and $\bw_2=(\frac{2}{5},\frac{3}{10},\frac{1}{5},\frac{1}{10},0,0)$.
Therefore $\bw_2\prec\bw_1$.
The theorem implies that $\Density{}^1_N(d,\bw_2)\subseteq\overline{\Density}{}^1_N(d,\bw_1)$, see Figure~\ref{fig:majorization}.
\end{example}

\begin{figure}[!h]
\begin{center}
\begin{tikzpicture}[auto]

\node (step1) at (-4,0) {\begin{tikzpicture}
	[scale=0.75,
	vertex/.style={inner sep=0.25pt,circle,draw=black,fill=black,thick},
	grassbound/.style={black,very thick},
	pint/.style={fill=white},
	wint/.style={fill=red!20!white},
	wbound/.style={black,thick},
	wbound2/.style={red,thick},
	ebound1/.style={blue,very thick,dashed},
	baseline=(center)]

\node (center) at (0,0) {};

\node[blue] at (2.6,2.6) {\Large$\Density^1_N$};

\def\radiusout{2.75}
\def\radius{2}
\def\radiusdeux{1.5}
\def\run{1.75}
\def\rdeux{1.5}
\def\rtrois{1.2}
\def\rquatre{0.8}
\def\aun{17.5}
\def\adeux{22.5}
\def\rend{0.8}
\def\rendtwo{0.4}

\foreach \a in {0,45,...,360}{%

\coordinate (1out) at (\a+10:\radiusout);
\coordinate (2out) at (\a:\radiusout+0.05);
\coordinate (3out) at (\a-10:\radiusout);
\coordinate (nextout) at (\a-35:\radiusout);
\coordinate (prevout) at (\a+35:\radiusout);

\fill[pint] (0,0) -- ($(prevout)!0.5!(1out)$) -- (1out) .. controls (2out) .. (3out) -- ($(nextout)!0.5!(3out)$) -- (0,0);

\draw[ebound1] ($(prevout)!0.5!(1out)$) -- (1out);
\draw[ebound1] (1out) .. controls (2out) .. (3out);
\draw[ebound1] (3out) -- ($(nextout)!0.5!(3out)$);

\coordinate (1) at (\a+10:\radius);
\coordinate (2) at (\a:\radius+0.05);
\coordinate (3) at (\a-10:\radius);
\coordinate (next) at (\a-35:\radius);
\coordinate (prev) at (\a+35:\radius);

\coordinate (c1) at (\a+\aun:\run);
\coordinate (c2) at (\a+\adeux:\rdeux);
\coordinate (d1) at (\a-\aun:\run);
\coordinate (d2) at (\a-\adeux:\rdeux);
\coordinate (endc) at (\a+22.5:\rend);
\coordinate (endd) at (\a-22.5:\rend);

\coordinate (1p) at (\a+10:\radiusdeux);
\coordinate (2p) at (\a:\radiusdeux+0.05);
\coordinate (3p) at (\a-10:\radiusdeux);
\coordinate (nextp) at (\a-35:\radiusdeux);
\coordinate (prevp) at (\a+35:\radiusdeux);

\coordinate (c1p) at (\a+\aun:\rtrois);
\coordinate (c2p) at (\a+\adeux:\rquatre);
\coordinate (d1p) at (\a-\aun:\rtrois);
\coordinate (d2p) at (\a-\adeux:\rquatre);
\coordinate (endcp) at (\a+22.5:\rendtwo);
\coordinate (enddp) at (\a-22.5:\rendtwo);

% \fill[wint] (0,0) -- (endc) -- (1) -- (3) -- (endd) -- (0,0);
% \fill[wint] (1) .. controls (2) .. (3);
% \fill[wint] (1) .. controls (c1) and (c2).. (endc);
% \fill[wint] (3) .. controls (d1) and (d2).. (endd);

\draw[wbound] (1) .. controls (c1) and (c2).. (endc);
\draw[wbound] (3) .. controls (d1) and (d2).. (endd);
\draw[wbound] (1) .. controls (2) .. (3);

% \fill[wint] (0,0) -- (endcp) -- (1p) -- (3p) -- (enddp) -- (0,0);
% \fill[wint] (1p) .. controls (2p) .. (3p);
% \fill[wint] (1p) .. controls (c1p) and (c2p).. (endcp);
% \fill[wint] (3p) .. controls (d1p) and (d2p).. (enddp);

\draw[wbound2] (1p) .. controls (c1p) and (c2p).. (endcp);
\draw[wbound2] (3p) .. controls (d1p) and (d2p).. (enddp);
\draw[wbound2] (1p) .. controls (2p) .. (3p);

}

\node[red] at (0,0) {\footnotesize$\Density^1_N(\bw_2)$};
\node[black] at (90:2.35) {$\Density^1_N(\bw_1)$};

\end{tikzpicture}};
\node (step2) at  (4,0) {\input{majorize}};

\draw[->,thick,shorten <=0.25cm,shorten >=0.25cm] (step1) to node[out=180,in=80,swap,text width=3cm,align=center] {$\spec$}  (step2);

\end{tikzpicture}
\end{center}
\caption{On the left, the two curves represent $\Density^1_N(\bw_1)$ and $\Density^1_N(\bw_2)$ such that $\bw_2\prec\bw_1$, and therefore $\overline{\Density}{}^1_N(\bw_1)$ contains $\Density^1_N(\bw_2)$. 
On the right, the images $\spec\left(\overline{\Density}{}^1_2(\frac{1}{2},\frac{1}{3},\frac{1}{6},0,0,0)\right)$ (the solid skeleton) and $\spec\left(\overline{\Density}{}^1_N(\frac{2}{5},\frac{3}{10},\frac{1}{5},\frac{1}{10},0,0)\right)$ (the polytope whose facets are shaded) for $d=4$.
Both of them are contained in the hypersimplex $\spec(\Density^1_N)=\spec(\Density^1_N(\bw_3))$, with $\bw_3=(1,0,0,0,0,0)$.}
\label{fig:majorization}
\end{figure}

Thus this suggests the following approach: start with the largest weight vector (i.e. $(1,0,\dots)$ and $r_1=1$) and proceed to solve the convex ensemble $N$-representability problem iteratively increasing the length of the weight vector to $r_2=r_1+1$.
This indicates a \emph{hierarchy} between weight vectors to be revealed later.
Remarkably, the role of $d$ in this approach becomes secondary.
Furthermore, as long as the weights are kept distinct, their specific values become less relevant while majorization becomes more important.

The convex ensemble $N$-representability problem is an example of the classical \emph{membership problem} in convex geometry:
Given a convex set $\pol[K]$, determine whether an element belongs to~$\pol[K]$.
This question can be answered very efficiently when $\pol[K]$ is described using linear inequalities, say using a non-redundant list of supporting hyperplanes, this shall be done in Part~\ref{part3}.
In the next section, we give a spectral characterization of $\overline{\Density}{}^1_N(d,\bw)$ as the convex hull of finitely many points.

\section{Spectral characterization}
\label{sec:spec_char}

In this section, as a first step to solve the convex ensemble $N$-representability problem we obtain an internal representation of $\overline{\Density}{}^1_N(\bw)$.
The idea behind this approach is based on lifting the objects from $\cH_1$ to $\cH_N$ and then using Gross--Oliviera--Kohn's Theorem~\ref{thm:gok}.

\subsection{Expansion and convex relaxation}
\label{ssec:expansion}

\begin{definition}[Expansion of a Hermitian operator]
\label{def:lift}
Let $h\in\Herm(\cH_1)$ be a Hermitian operator, $\{\lambda_i\}_{i\in[d]}$ its eigenvalues, and $\{\bv_i\}_{i\in[d]}$ the corresponding orthonormal eigenbasis of $\cH_1$.
We can lift ${h=\sum_{i=1}^d \lambda_i \bv_i\otimes\bv_i^*}$ to a Hermitian operator $\Gamma^1_N(h)\in\Herm(\cH_N)$ that acts diagonally on the basis $\{\bv_{\bi}\}_{\bi\in\FerPoset}$ as follows
\begin{equation}\label{eq:lift}
\Gamma^1_N(h):=\sum_{\bi\in\FerPoset}\left(\sum_{i\in\bi}\lambda_i\right)\bv_{\bi}\otimes\bv_{\bi}^*.
\end{equation}
The operator $\Gamma^1_N(h)$ is called the \defn{expansion} of $h$.
The expansion induces a real linear map between $\Herm(\cH_1)$ and $\Herm(\cH_N)$.
\end{definition}

Alternatively, this expansion can be constructed using the second quantization and the annihilator/creator operators \cite[Section 1.4]{coleman_reduced_2000}.
The linearity of the expansion operator can be seen as a consequence of the next lemma which shows that the expansion operation may be defined as the adjoint (see e.g. \cite[Section~IIB]{groetsch_functional_2003} for a definition of adjoint) of the partial trace.

\begin{lemma}[{\cite[Theorem~4]{kummer_repr_1967}}]
\label{lem:lift}
Let $N\geq 1$, $h\in\Herm(\cH_1)$ and $G\in\Herm(\cH_N)$.
The expansion operator is the adjoint of the partial trace, i.e.
\[
\np{L^N_1(G),h}_1=\np{G,\Gamma^1_N(h)}_N,
\]
where $\np{A,B}_1$ and $\np{A,B}_N$ are the inner products on Hermitian operators defined as $\Trace(AB)$.
\end{lemma}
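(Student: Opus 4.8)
The plan is to verify the adjoint identity by computing both trace pairings in a single, well-chosen orthonormal basis. Since the pairing $\np{A,B}=\Trace(AB)$ is independent of any choice of basis, and since the partial trace $L^N_1$ is well-defined independently of the basis $\cV$ used in Definition~\ref{def:partial_trace} (Remark~\ref{rem:partial trace}), I am free to fix the orthonormal basis $\{\bv_i\}_{i\in[d]}$ of $\cH_1$ to be the eigenbasis of $h$, so that $h=\sum_{i\in[d]}\lambda_i\,\bv_i\otimes\bv_i^*$. With this choice $\{\bv_{\bi}\}_{\bi\in\FerPoset}$ is exactly the orthonormal basis of $\cH_N$ appearing in Definition~\ref{def:lift}, and both $h$ and $\Gamma^1_N(h)$ are diagonal in these bases, which is what makes the two traces directly comparable.

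The right-hand side is then immediate. Since $\Gamma^1_N(h)=\sum_{\bi\in\FerPoset}\bigl(\sum_{i\in\bi}\lambda_i\bigr)\,\bv_{\bi}\otimes\bv_{\bi}^*$ is diagonal, pairing it against $G$ gives
\[
\np{G,\Gamma^1_N(h)}_N=\sum_{\bi\in\FerPoset}\Bigl(\sum_{i\in\bi}\lambda_i\Bigr)\,\np{\bv_{\bi},G(\bv_{\bi})}.
\]
For the left-hand side, the diagonality of $h$ yields $\np{L^N_1(G),h}_1=\sum_{i\in[d]}\lambda_i\,\np{\bv_i,L^N_1(G)(\bv_i)}$, and I would expand each diagonal entry using the defining formula \eqref{eq:partial_trace} in the case $M=1$. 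This produces a double sum over $i\in[d]$ and $\bk\in\FerPoset[N-1][d]$ of terms $\lambda_i\,\np{\bv_i\wedge\bv_{\bk},G(\bv_i\wedge\bv_{\bk})}$.

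The heart of the argument is a reindexing of this double sum. Whenever $i$ already occurs in $\bk$ the wedge $\bv_i\wedge\bv_{\bk}$ vanishes and the term drops out; otherwise $\bv_i\wedge\bv_{\bk}=\varepsilon\,\bv_{\bi}$, where $\bi\in\FerPoset$ is the increasing reordering of $\{i\}\cup\bk$ and $\varepsilon=\pm1$ is the sign of the sorting permutation. Because this sign occurs in both arguments of the sesquilinear form it cancels, so $\np{\bv_i\wedge\bv_{\bk},G(\bv_i\wedge\bv_{\bk})}=\np{\bv_{\bi},G(\bv_{\bi})}$. The assignment $(i,\bk)\mapsto(\bi,i)$ is a bijection between the pairs with $i\notin\bk$ and the pairs $(\bi,i)$ with $\bi\in\FerPoset$ and $i\in\bi$, its inverse being $(\bi,i)\mapsto(i,\bi\setminus\{i\})$. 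Regrouping the double sum along this bijection turns $\sum_i\sum_{\bk}$ into $\sum_{\bi}\sum_{i\in\bi}$, producing exactly the expression found for $\np{G,\Gamma^1_N(h)}_N$ above, which closes the proof.

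I expect the only delicate point to be the bookkeeping of the Koszul signs $\varepsilon$ together with the verification that $(i,\bk)\mapsto(\bi,i)$ is a genuine bijection; both are elementary but should be stated explicitly rather than glossed over. I would also add a closing remark that, once the identity holds for all $h$ and $G$, it identifies $\Gamma^1_N$ with the adjoint $(L^N_1)^*$, and since $L^N_1$ is real-linear its adjoint is as well, thereby re-proving the real-linearity of $\Gamma^1_N$ announced just before the statement.
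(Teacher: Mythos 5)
Your proposal is correct and is exactly the argument the paper has in mind: the paper's proof states only that the identity ``follows from the definitions of partial trace and expansion and computing them with respect to an eigenbasis of $h$,'' and your computation—diagonalizing $h$, expanding both pairings in the induced bases, and reindexing the double sum via the bijection $(i,\bk)\mapsto(\bi,i)$ with the Koszul signs cancelling—is precisely that calculation carried out in detail. No gap; you have simply made explicit what the paper leaves to the reader.
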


\begin{proof}
This follows from the definitions of partial trace and expansion and computing them with respect to an eigenbasis of $h$.
\end{proof}

To use Theorem \ref{thm:gok}, we introduce certain vectors that turn out to be key in the following sections.
Let $\bw\in\Pauli_{D-1}$.
Furthermore, let $h\in\Herm(\cH_1)$, $\cV=\{\bv_i\}_{i=1}^d$ be an eigenbasis, $\bm{\lambda}:=(\lambda_1,\lambda_2,\dots,\lambda_d)$ be the corresponding vector of eigenvalues, and $H$ be the corresponding expansion to $\cH_N$. 
For $\bi\in\FerPoset$, let $\bchi(\bi):=\sum_{j\in\bi} \be_j \in \RR^d$.
The spectrum of $H$ is determined by using $\{\lambda_i\}_{i=1}^d$ and Equation \eqref{eq:lift}: for each index $\bi\in\FerPoset$ we have the eigenvalue $\lambda(\bi):=\sum_{i\in \bi} \lambda_i$.
For a given $\bm{\lambda}\in\RR^d$ we say that a linear ordering of $\FerPoset$ is \emph{compatible with $\bm{\lambda}$} if  $\lambda(\bi_1)\geq \lambda(\bi_2)\geq \cdots \geq \lambda(\bi_D)$.
Therefore, the expansion operation and a choice of a compatible linear order for $\bm{\lambda}$ lead to a function $\ell:\RR^d\to\Sym{\bchi(\FerPoset)}$, such that $\ell(\bm{\lambda})=(\bchi(\bi_j))_{j=1}^D$ is an ordered point configuration.
This allows to associate an \defn{occupation vector} to each $\ell(\bm{\lambda})$\label{eq:NON}:
\[
\OccVec_{\bw}(\ell(\bm{\lambda})):=\sum_{j=1}^D w_j \bchi(\bi_j)\in\RR^d.
\]
Let $\Lineups(\bw):= \{\OccVec_{\bw}(\ell(\bm{\lambda}))~:~\bm{\lambda}\in\RR^d\}$ be the set of occupation vectors for all $\bm{\lambda}\in\RR^d$ and all choices of compatible orders.
Observe that this set is finite for any choice of $\bw\in\Pauli_{D-1}$.

\begin{mainthm}\label{thm:relaxed_gok}
Let $N>1$, $\bw\in\Pauli_{D-1}$, and $\fO(\cH_1)$ be the set of all orthonormal bases of $\cH_1$.
The convex set $\overline{\Density}{}^1_N(\bw)$ satisfies 
\[
\overline{\Density}{}^1_N(\bw)=
\conv\left(\sum_{i=1}^d o_i\bv_i\otimes \bv_i^*~:~\OccVec\in\Lineups(\bw),~ \{\bv_i\}_{i=1}^d \in \fO(\cH_1)\right)\subset\Herm(\cH_1).
\]
\end{mainthm}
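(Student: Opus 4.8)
The plan is to prove equality of the two compact convex sets by showing that they have the same support function. Write $\pol[Q]$ for the convex hull on the right-hand side, and recall that the inner product on $\Herm(\cH_1)$ is $\np{A,B}=\Trace(AB)$. Both $\overline{\Density}{}^1_N(\bw)$ and $\pol[Q]$ are compact: the former is the convex hull of the continuous image under $L^N_1$ of the compact spectral orbit $\Density^N(\bw)$, and the latter is the convex hull of the continuous image of the compact set $\Lineups(\bw)\times\fO(\cH_1)$ under $(\OccVec,\{\bv_i\})\mapsto\sum_i o_i\bv_i\otimes\bv_i^*$, using that $\Lineups(\bw)$ is finite and that $\fO(\cH_1)$ is compact (identified with $\Unitary(\cH_1)$ after fixing a reference basis). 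Since a compact convex set is recovered from its support function, it suffices to prove $\mathrm{supp}_{\overline{\Density}{}^1_N(\bw)}(h)=\mathrm{supp}_{\pol[Q]}(h)$ for every $h\in\Herm(\cH_1)$.

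First I would establish the inclusion $\pol[Q]\subseteq\overline{\Density}{}^1_N(\bw)$ directly, which yields $\mathrm{supp}_{\pol[Q]}\le\mathrm{supp}_{\overline{\Density}{}^1_N(\bw)}$. The key computation is that the partial trace of a Slater projection is diagonal: unwinding Definition~\ref{def:partial_trace}, one gets $L^N_1(\bv_{\bi}\otimes\bv_{\bi}^*)=\sum_{i\in\bi}\bv_i\otimes\bv_i^*$, because $\np{\bv_p\wedge\bv_{\bk},\bv_{\bi}}\neq0$ forces $p\in\bi$ and $\bk=\bi\setminus\{p\}$, and the two occurring signs cancel. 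Now take any generator of $\pol[Q]$, attached to an occupation vector $\OccVec=\OccVec_{\bw}(\ell(\bm{\lambda}))$ with compatible order $\bi_1,\dots,\bi_D$ and an orthonormal basis $\{\bv_i\}\in\fO(\cH_1)$. The operator $\rho:=\sum_{j=1}^D w_j\,\bv_{\bi_j}\otimes\bv_{\bi_j}^*$ lies in $\Density^N(\bw)$ since $\{\bv_{\bi_j}\}$ is an orthonormal basis of $\cH_N$ and $\bw$ is non-increasing, so $\spec^{\downarrow}(\rho)=\bw$. By linearity of $L^N_1$ and the Slater computation, $L^N_1(\rho)=\sum_i\big(\sum_j w_j(\bchi(\bi_j))_i\big)\bv_i\otimes\bv_i^*=\sum_i o_i\bv_i\otimes\bv_i^*$, which is exactly the chosen generator; hence every generator lies in $\overline{\Density}{}^1_N(\bw)$.

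Next I would compute $\mathrm{supp}_{\overline{\Density}{}^1_N(\bw)}(h)$ and match it against a single generator of $\pol[Q]$, giving the reverse inequality. Since the support function of a convex hull equals that of its generating set, Lemma~\ref{lem:lift} gives
\[
\mathrm{supp}_{\overline{\Density}{}^1_N(\bw)}(h)=\max_{\tau\in\Density^N(\bw)}\np{L^N_1(\tau),h}=\max_{\tau\in\Density^N(\bw)}\Trace\!\big(\tau\,\Gamma^1_N(h)\big).
\]
Applying the Gross--Oliviera--Kohn principle (Theorem~\ref{thm:gok}) to $H=\Gamma^1_N(h)$, whose eigenvalues are $\{\lambda(\bi)\}_{\bi\in\FerPoset}$, this maximum equals $\sum_{j=1}^D w_j\lambda(\bi_j)$, where $\bi_1,\dots,\bi_D$ is a compatible order for the eigenvalue vector $\bm{\lambda}$ of $h$. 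Choosing in $\pol[Q]$ the eigenbasis $\{\bv_i\}$ of $h$ together with $\OccVec=\OccVec_{\bw}(\ell(\bm{\lambda}))$, the corresponding generator pairs with $h$ to give $\sum_i o_i\lambda_i=\np{\bm{\lambda},\OccVec_{\bw}(\ell(\bm{\lambda}))}=\sum_j w_j\lambda(\bi_j)$, exactly the value above. Thus $\mathrm{supp}_{\pol[Q]}(h)\ge\mathrm{supp}_{\overline{\Density}{}^1_N(\bw)}(h)$, and combined with the previous paragraph the two support functions agree, proving the theorem.

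I expect the main obstacle to be the partial-trace identity $L^N_1(\bv_{\bi}\otimes\bv_{\bi}^*)=\sum_{i\in\bi}\bv_i\otimes\bv_i^*$: it is elementary but requires careful bookkeeping of the wedge-basis inner products and the cancellation of the Koszul signs, and it is precisely what forces the occupation-vector shape $\bchi(\bi)$ onto the reduced operator. The convex-geometric steps, by contrast, are routine once the support function of $\overline{\Density}{}^1_N(\bw)$ is identified through the adjunction of Lemma~\ref{lem:lift} and the variational principle of Theorem~\ref{thm:gok}; the only remaining point needing care is that the maximizer $\rho$ has spectrum exactly $\bw$, which holds because $\bw$ is ordered non-increasingly.
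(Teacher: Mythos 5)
Your proposal is correct and takes essentially the same approach as the paper's proof: both compare support functions, compute $\mathrm{supp}_{\overline{\Density}{}^1_N(\bw)}(h)$ through the adjunction of Lemma~\ref{lem:lift} and the variational principle of Theorem~\ref{thm:gok}, identify the maximizer $\sum_{j} w_j\,\bv_{\bi_j}\otimes\bv_{\bi_j}^*\in\Density^N(\bw)$, and observe that its partial trace is exactly the occupation-vector generator $\sum_i o_i\bv_i\otimes\bv_i^*$. The only difference is one of detail: you make explicit the Slater-projection identity $L^N_1(\bv_{\bi}\otimes\bv_{\bi}^*)=\sum_{i\in\bi}\bv_i\otimes\bv_i^*$ and the compactness of both sets, which the paper leaves implicit by citing Equation~\eqref{eq:partial_trace}.
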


\begin{proof}
We show that both convex sets have the same support function.
Let $h\in\Herm(\cH_1)$, $\cV=\{\bv_i\}_{i=1}^d$ be an eigenbasis, $\bm{\lambda}:=(\lambda_1,\lambda_2,\dots,\lambda_d)$ be the corresponding vector of eigenvalues, and $H$ be its expansion to $\cH_N$. 
By Theorem~\ref{thm:gok} and Lemma~\ref{lem:lift}, we have
\[
\textrm{supp}_{\overline{\Density}{}^1_N(\bw)}(h)=
\max_{\rho\in\overline{\Density}{}^1_N(\bw)} \np{h,\rho}_1=\max_{\rho\in\Density^1_N(\bw)} \np{h,\rho}_1=\max_{\rho\in\Density^N(\bw)} \np{H,\rho}_N= \sum_{j\in[D]}w_j\lambda(\bi_j),
\]
where $\bi_1,\bi_2,\dots,\bi_D$ is the permutation of the $N$-fermion configurations chosen for $\bm{\lambda}$ and the subscripts on the inner products indicate that we take trace in $\cH_1$ and $\cH_N$ respectively.
By the second part of Theorem \ref{thm:gok}, a maximizer in $\cH_N$ is 
\begin{equation}\label{eq:maximizer}
\sum_{j\in[D]} w_{j} \bv_{\bi_j}\otimes \bv_{\bi_j}^*\in\Density^N(\bw).
\end{equation}
By Equation~\eqref{eq:partial_trace}, using the basis~$\cV$, this operator reduces to
\begin{equation}\label{eq:maximizer_reduced}
\sum_{i=1}^d (\OccVec_{\bw}(\ell(\bm{\lambda})))_i\bv_i\otimes \bv_i^* \in \overline{\Density}{}^1_N(\bw).
\end{equation}
It follows that the sets $\left\{\sum_{i=1}^d o_i\bv_i\otimes \bv_i^*~:~\OccVec\in\Lineups(\bw),~ \{\bv_i\}_{i=1}^d \in \fO(\cH_1)\right\}\subseteq\overline{\Density}{}^1_N(\bw)$ have the same support function hence the sought equality follows.
\end{proof}

\begin{remark}
Let $\rho\in L^N_1(\Density^N)$ be a $1$-reduced density matrix.
The eigenvectors $\{\bv_i\}_{i=1}^d$ of $\rho$ are referred to as the \emph{natural orbitals} of $\rho$.
Similarly, its eigenvalues $\{n_i\}_{i=1}^d$ such that $n_1\geq n_2\geq \cdots \geq n_d$ and $\sum_{i=1}^dn_i=N$ are its \emph{occupation numbers}.
The term occupation number refers to the fact that they represent the average number of particles in each one of the natural orbitals.
Recall from Theorem \ref{thm:E1N} that $0\leq n_i\leq 1$ for $1\leq i\leq d$.
For more details on natural orbitals and occupation numbers, we refer the reader to \cite{schilling_pinning_2013} and the original article \cite[Section~4]{loewdin_quantum_1955}.
\end{remark}

\subsection{Fermionic spectral polytope}

We can interpret Theorem \ref{thm:relaxed_gok} as taking a single polytope and moving it according to the conjugation action of the unitary group and taking the convex hull of the orbit.
To be more precise, we fix an orthonormal basis $\cV=\{\bv_i\}_{i=1}^d$ of $\cH_1$ and define
\[
\pol[P](\cV):=\conv\left(\sum_{i=1}^d o_i\bv_i\otimes \bv_i^*~:~\text{where }\OccVec\in\Lineups(\bw) \right).
\]
It is a polytope in $\Herm(\cH_1)$ since the set $\Lineups(\bw)$ is finite.

\begin{lemma}\label{lem:aux_origin}
Let $\lambda\Id$ be a scalar multiple of the identity operator in $\cH_1$ and assume $\lambda\Id\in \overline{\Density}{}^1_N(\bw)$.
Then $\lambda=d^{-1}$ and $d^{-1}\Id\in \pol[P](\cV)$ for every orthonormal basis $\cV$.
\end{lemma}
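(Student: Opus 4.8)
The plan is to separate the two assertions: the numerical identity $\lambda = d^{-1}$, which I read off from a trace constraint, and the membership $d^{-1}\Id\in\pol[P](\cV)$, which I obtain by a symmetrization (barycenter) argument. First I would record that $\overline{\Density}{}^1_N(\bw)$ lies in a single affine hyperplane of fixed trace. Indeed, each generating operator furnished by Theorem~\ref{thm:relaxed_gok} has the form $\sum_{i=1}^d o_i\,\bv_i\otimes\bv_i^*$ with $\OccVec\in\Lineups(\bw)$, so its trace equals the coordinate sum $\sum_{i=1}^d o_i$; this value is independent of the choice of $\OccVec\in\Lineups(\bw)$, since every indicator vector $\bchi(\bi)$ has the same coordinate sum and $\sum_j w_j=1$, and it is inherited by the whole convex hull. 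Writing $t$ for this common trace and using $\Trace(\lambda\Id)=\lambda d$, the hypothesis $\lambda\Id\in\overline{\Density}{}^1_N(\bw)$ forces $\lambda d = t$, that is $\lambda = t/d = d^{-1}$.

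For the membership I would exploit the coordinate-permutation action of $\Sym{d}$. The set $\Lineups(\bw)$ is $\Sym{d}$-invariant, because permuting the entries of the eigenvalue vector $\bm{\lambda}$ permutes the entries of the associated occupation vector $\OccVec_{\bw}(\ell(\bm{\lambda}))$ accordingly. Transporting this to operators, the basis permutation $\bv_i\mapsto\bv_{\pi(i)}$ conjugates the generating operator indexed by $\OccVec$ to the one indexed by $\pi\cdot\OccVec$, so it permutes the vertex set of $\pol[P](\cV)$; hence $\pol[P](\cV)$ is $\Sym{d}$-invariant. The uniform average of its generating operators is then a $\Sym{d}$-fixed Hermitian operator, diagonal in $\cV$, and it is a convex combination of vertices, so it belongs to $\pol[P](\cV)$. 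A diagonal operator fixed by every coordinate permutation must have equal diagonal entries, i.e. it is a scalar multiple $\mu\Id$. Since $\pol[P](\cV)\subseteq\overline{\Density}{}^1_N(\bw)$ by Theorem~\ref{thm:relaxed_gok}, the trace argument of the first paragraph applies verbatim to $\mu\Id$, giving $\mu = d^{-1}=\lambda$; therefore $\lambda\Id=\mu\Id\in\pol[P](\cV)$. As the barycenter is constructed inside $\pol[P](\cV)$ for an arbitrary fixed basis $\cV$, the conclusion $d^{-1}\Id\in\pol[P](\cV)$ holds for every orthonormal basis.

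The step I expect to be the crux is the operator-level $\Sym{d}$-invariance of $\pol[P](\cV)$: one must verify carefully that conjugation by the permutation of basis vectors sends the generator attached to $\OccVec$ exactly to the generator attached to $\pi\cdot\OccVec$, so that it is precisely the invariance of $\Lineups(\bw)$ (and not any finer feature of the configuration) that forces the barycenter to be a scalar operator. Once this is in place, the remaining points—that a $\Sym{d}$-fixed diagonal operator is scalar, that the barycenter is a genuine convex combination of vertices, and that the fixed-trace hyperplane pins the scalar to $d^{-1}$—are routine.
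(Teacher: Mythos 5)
Your proof is correct and follows essentially the same route as the paper: the trace constraint pins down the scalar, and the $\Sym{d}$-invariance of $\Lineups(\bw)$ makes the barycenter of the generators of $\pol[P](\cV)$ an invariant, hence scalar, operator belonging to $\pol[P](\cV)$. The only cosmetic difference is that the paper evaluates the average of the generators directly coordinate-wise, whereas you identify the barycenter as scalar by symmetry and then fix its value via the trace; note that both your write-up and the paper use the trace-one shorthand, although the generators $\sum_i o_i\,\bv_i\otimes\bv_i^*$ actually have trace $N$ (each $\bchi(\bi)$ has coordinate sum $N$), so strictly the common scalar is $N/d$ unless one renormalizes---a slip that affects neither argument, since only the existence and uniqueness of the scalar multiple of $\Id$ in $\pol[P](\cV)$ is used later.
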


\begin{proof}
Since density operators have trace equal to 1 by definition, we must have $\lambda=d^{-1}$. 
For the second part, since the set $\Lineups(\bw)$ is symmetric-invariant, the average of its elements must be too.
Therefore, for every orthonormal basis $\cV=\{\bv_i\}_{i\in[d]}$ of $\cH_1$, we have
\[
\pol(\cV)\ni\frac{1}{|\Lineups(\bw)|}\sum_{\OccVec \in\Lineups(\bw)} \left( \sum_{i=1}^d o_i\bv_i\otimes \bv_i^* \right)  = d^{-1}\sum_{i=1}^d \bv_i\otimes \bv_i^* = d^{-1} \Id.\qedhere
\]
\end{proof}

\begin{proposition}\label{prop:cover}
The convex set $\overline{\Density}{}^1_N(\bw)$ is equal to the union of all polytopes in the orbit of~$\pol[P](\cV)$ under the conjugacy action of the unitary group. 
In other words,
\[
\overline{\Density}{}^1_N(\bw) = \Unitary(\cH_1)\cdot\pol[P](\cV) = \bigcup_{u\in \Unitary(\cH_1)} u\pol[P](\cV)u^{-1}=\bigcup_{\cV\in\fO(\cH_1)}\pol[P](\cV).
\]
\end{proposition}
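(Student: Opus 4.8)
The plan is to deduce the proposition from Theorem~\ref{thm:relaxed_gok} by a support-function comparison. Two of the three equalities are essentially formal, so I would dispatch them first. The middle equality is merely the definition of the orbit of the set $\pol[P](\cV)$ under the conjugation action of $\Unitary(\cH_1)$ on $\Herm(\cH_1)$. For the last equality I would use that conjugation by a fixed $u\in\Unitary(\cH_1)$ is linear and sends $\bv_i\otimes\bv_i^*$ to $(u\bv_i)\otimes(u\bv_i)^*$, so that
\[
u\,\pol[P](\cV)\,u^{-1}=\conv\Bigl(\sum_{i=1}^d o_i\,(u\bv_i)\otimes(u\bv_i)^*~:~\OccVec\in\Lineups(\bw)\Bigr)=\pol[P](u\cV),
\]
where $u\cV:=\{u\bv_i\}_{i=1}^d$. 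Because $\Unitary(\cH_1)$ acts transitively on $\fO(\cH_1)$, letting $u$ range over the group makes $u\cV$ range over all of $\fO(\cH_1)$, giving $\bigcup_{u}u\,\pol[P](\cV)\,u^{-1}=\bigcup_{\cV'\in\fO(\cH_1)}\pol[P](\cV')$.

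It then remains to prove $\overline{\Density}{}^1_N(\bw)=\bigcup_{\cV\in\fO(\cH_1)}\pol[P](\cV)$. The inclusion $\supseteq$ is immediate: for each fixed $\cV$ the generating points of $\pol[P](\cV)$ are among those listed in Theorem~\ref{thm:relaxed_gok}, whence $\pol[P](\cV)\subseteq\overline{\Density}{}^1_N(\bw)$. For the reverse inclusion I would argue pointwise using diagonalizability. Given $\rho\in\overline{\Density}{}^1_N(\bw)\subset\Herm(\cH_1)$, fix an orthonormal eigenbasis $\cV=\{\bv_i\}_{i=1}^d$ and write $\rho=\sum_{i=1}^d o_i\,\bv_i\otimes\bv_i^*$ with $\OccVec:=(o_1,\dots,o_d)\in\RR^d$; then $\rho\in\pol[P](\cV)$ is equivalent to $\OccVec\in\conv(\Lineups(\bw))$, which I would verify via the support-function criterion, i.e. by checking $\np{\bm\lambda,\OccVec}\leq\max_{\OccVec'\in\Lineups(\bw)}\np{\bm\lambda,\OccVec'}$ for all $\bm\lambda\in\RR^d$.

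The observation that makes this work is that test operators diagonal in the eigenbasis $\cV$ cannot distinguish $\overline{\Density}{}^1_N(\bw)$ from the single polytope $\pol[P](\cV)$. Indeed, for $\bm\lambda\in\RR^d$ put $h_{\bm\lambda}:=\sum_i\lambda_i\,\bv_i\otimes\bv_i^*$. On one hand $\np{h_{\bm\lambda},\rho}_1=\np{\bm\lambda,\OccVec}$ and, since $\rho\in\overline{\Density}{}^1_N(\bw)$, this is at most $\mathrm{supp}_{\overline{\Density}{}^1_N(\bw)}(h_{\bm\lambda})$. On the other hand, the computation in the proof of Theorem~\ref{thm:relaxed_gok} evaluates the latter support value to $\sum_{j}w_j\lambda(\bi_j)$ with $\lambda(\bi_1)\geq\dots\geq\lambda(\bi_D)$; since $w_1\geq\dots\geq w_D\geq 0$, the rearrangement inequality identifies this with $\max_{\OccVec'\in\Lineups(\bw)}\np{\bm\lambda,\OccVec'}$, the maximizing ($\bm\lambda$-sorted) order being itself realizable and hence lying in $\Lineups(\bw)$. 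Chaining these bounds gives $\np{\bm\lambda,\OccVec}\leq\max_{\OccVec'\in\Lineups(\bw)}\np{\bm\lambda,\OccVec'}$ for every $\bm\lambda$, hence $\OccVec\in\conv(\Lineups(\bw))$ and $\rho\in\pol[P](\cV)$.

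The main obstacle is conceptual rather than computational: a union of convex sets is generally not convex, so Theorem~\ref{thm:relaxed_gok}, which only yields $\overline{\Density}{}^1_N(\bw)=\conv\bigl(\bigcup_\cV\pol[P](\cV)\bigr)$, does not by itself give the claim. What rescues the argument is that every $\rho$ is diagonalizable, which lets us select for each point the ``correct'' eigenbasis $\cV$ in which $\rho$ already lies in $\pol[P](\cV)$; equivalently, the Gross--Oliviera--Kohn maximizer of Theorem~\ref{thm:gok} can always be taken diagonal in that basis. In writing this up I would take care to record the permutation-invariance of $\Lineups(\bw)$, so that $\conv(\Lineups(\bw))$ is genuinely cut out by its support function on all of $\RR^d$, and to justify that the $\bm\lambda$-sorted ordering is realizable, so that the rearrangement bound is attained inside $\Lineups(\bw)$ rather than merely over all relabelings.
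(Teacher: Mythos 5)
Your proof is correct, but it takes a genuinely different route from the paper's. The paper argues by contradiction with an averaging trick: assuming some $\rho\in\overline{\Density}{}^1_N(\bw)$ lies outside every conjugate $u\pol[P](\cV)u^{-1}$, it considers the Haar average $\rho'=\int_{\Unitary(\cH_1)}u\rho u^{-1}\,\mathrm{d}u$, which is unitary invariant and hence equals $d^{-1}\Id$; by Lemma~\ref{lem:aux_origin} this lies in $\pol[P](\cV)$, contradicting the assertion that the average stays outside the union. Your argument is instead direct and pointwise: given $\rho$, you fix an eigenbasis $\cV$, test only against operators diagonal in that basis, and combine the support-function computation from the proof of Theorem~\ref{thm:relaxed_gok} with the rearrangement inequality (and the realizability of the $\bm\lambda$-sorted order, witnessed by $\bm\mu=\bm\lambda$) to conclude that the eigenvalue vector of $\rho$ lies in $\conv(\Lineups(\bw))$, i.e.\ $\rho\in\pol[P](\cV)$ for that very $\cV$. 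What your route buys: it is more elementary (no integration over the unitary group, no appeal to Lemma~\ref{lem:aux_origin}), it is constructive in that it exhibits which conjugate of $\pol[P](\cV)$ contains $\rho$, and it sidesteps the one delicate step of the paper's proof---the assertion that the average of the orbit remains outside $\bigcup_u u\pol[P](\cV)u^{-1}$, which is not automatic since the complement of a union of convex sets need not be convex. What the paper's route buys is brevity and a clean conceptual use of symmetry. Two small remarks on your write-up: the permutation-invariance of $\Lineups(\bw)$ is not needed for the support-function criterion itself (any compact convex set is cut out by its support function), only to make membership of the occupation vector independent of how you order the eigenvalues; and your identification $u\pol[P](\cV)u^{-1}=\pol[P](u\cV)$ together with transitivity of $\Unitary(\cH_1)$ on $\fO(\cH_1)$ is exactly the right justification for the final equality, which the paper leaves implicit.
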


\begin{proof}
We argue by contradiction.
Suppose that $\rho\in \overline{\Density}{}^1_N(\bw)$ and $\rho\notin u\pol[P](\cV)u^{-1}$ for every $u\in\Unitary(\cH_1)$.
Then the same is true for its orbit $\Unitary(\cH_1)\cdot \rho=\{u\rho u^{-1}~:~u\in\Unitary(\cH_1)\}$.
Even more, the same is true for the average of the orbit:
\[
\rho'=\int_{\Unitary(\cH_1)} u\rho u^{-1} \mathrm{d}u \in \conv(\Unitary(\cH_1)\cdot \rho)\subset \left(\overline{\Density}{}^1_N(\bw)\mathbin{\Big\backslash}  \bigcup_{u\in \Unitary(\cH_1)} u\pol[P](\cV)u^{-1}\right),
\]
where the integral is taken with the unitary invariant Haar measure \cite[Section 4]{serre_1977} on $\Unitary(\cH_1)$.
Since $\rho'$ is the average, it is unitary invariant.
So $\rho'$ must be a scalar multiple of the identity operator.
This contradicts Lemma~\ref{lem:aux_origin}, thus $\rho$ has to belong to $u\pol(\cV)u^{-1}$, for some $u\in\Unitary(\cH_1)$.
\end{proof}

Finally, we define the central geometric object of our investigation.

\begin{definition}[Fermionic spectral polytope]
\label{def:ferm_specpo}
Let $\bw\in\Pauli_{D-1}$.
The polytope
\[
\FerPoly[][\bw,N,d]:= \conv(\Lineups(\bw))
\]
is the \defn{fermionic spectral polytope} of the $N$-fermion Hilbert space on $d$ orbitals with weight~$\bw$. 
\end{definition}

The following theorem provides a spectral characterization of $\overline{\Density}{}^1_N(\bw)$.
\begin{mainthm}\label{thm:spectral}
	Let $d\geq N\geq 1$, $\bw\in\Pauli_{D-1}$.
	A Hermitian operator $\rho\in\Herm(\cH_1)$ is in $\overline{\Density}{}^1_N(\bw)$ if and only if $\spec(\rho)\subset\FerPoly[][\bw,N,d]$.
	Equivalently, we have $\FerPoly[][\bw,N,d]=\spec\left(\overline{\Density}{}^1_N(\bw)\right)$.
\end{mainthm}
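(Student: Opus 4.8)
The plan is to deduce the theorem directly from the two structural results already available: the variational identity of Theorem~\ref{thm:relaxed_gok} and the covering decomposition of Proposition~\ref{prop:cover}. The conceptual device I would set up first is the affine injection $\Phi_\cV\colon\RR^d\to\Herm(\cH_1)$, $\Phi_\cV(\bx):=\sum_{i=1}^d x_i\,\bv_i\otimes\bv_i^*$, attached to a fixed orthonormal basis $\cV=\{\bv_i\}_{i=1}^d$; its image is the space of operators diagonal in $\cV$. Since $\pol[P](\cV)=\conv\{\Phi_\cV(\OccVec):\OccVec\in\Lineups(\bw)\}$ and $\Phi_\cV$ is affine, I would record the identity $\pol[P](\cV)=\Phi_\cV(\conv(\Lineups(\bw)))=\Phi_\cV(\FerPoly[][\bw,N,d])$. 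Two elementary facts then do all the work: the eigenvalues of $\Phi_\cV(\bx)$ are exactly the coordinates of $\bx$ (with multiplicity), so $\bx\in\spec(\Phi_\cV(\bx))$; and $\Lineups(\bw)$, hence $\FerPoly[][\bw,N,d]=\conv(\Lineups(\bw))$, is $\Sym{d}$-invariant, as already noted in Lemma~\ref{lem:aux_origin}. The invariance lets me treat $\spec(\rho)\subset\FerPoly[][\bw,N,d]$ and $\spec^{\downarrow}(\rho)\in\FerPoly[][\bw,N,d]$ as interchangeable conditions.

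For the forward inclusion I would take $\rho\in\overline{\Density}{}^1_N(\bw)$ and invoke Proposition~\ref{prop:cover} to obtain an orthonormal basis $\cV$ with $\rho\in\pol[P](\cV)=\Phi_\cV(\FerPoly[][\bw,N,d])$, so that $\rho=\Phi_\cV(\bc)$ for some $\bc\in\FerPoly[][\bw,N,d]$. Reading off eigenvalues gives $\spec(\rho)=\Sym{d}\cdot\bc$, and the $\Sym{d}$-invariance together with $\bc\in\FerPoly[][\bw,N,d]$ yields $\spec(\rho)\subseteq\FerPoly[][\bw,N,d]$.

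For the reverse inclusion I would assume $\spec(\rho)\subset\FerPoly[][\bw,N,d]$, fix an orthonormal eigenbasis $\cV=\{\bv_i\}$ realizing the spectral decomposition $\rho=\sum_{i=1}^d\lambda_i\,\bv_i\otimes\bv_i^*$, and note that the eigenvalue vector $\bm{\lambda}$ is a permutation of $\spec^{\downarrow}(\rho)$, hence $\bm{\lambda}\in\spec(\rho)\subseteq\FerPoly[][\bw,N,d]$. Thus $\rho=\Phi_\cV(\bm{\lambda})\in\pol[P](\cV)\subseteq\overline{\Density}{}^1_N(\bw)$ by Proposition~\ref{prop:cover}, completing the equivalence. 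The reformulation $\FerPoly[][\bw,N,d]=\spec(\overline{\Density}{}^1_N(\bw))$ then follows at once: the forward direction gives $\spec(\overline{\Density}{}^1_N(\bw))\subseteq\FerPoly[][\bw,N,d]$, while every $\bx\in\FerPoly[][\bw,N,d]$ appears as $\bx\in\spec(\Phi_\cV(\bx))$ with $\Phi_\cV(\bx)\in\pol[P](\cV)\subseteq\overline{\Density}{}^1_N(\bw)$.

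I do not expect a genuine obstacle here, since the heavy analytic content is already packaged in Theorem~\ref{thm:relaxed_gok} and Proposition~\ref{prop:cover}. The only steps requiring care are the identification $\pol[P](\cV)=\Phi_\cV(\FerPoly[][\bw,N,d])$ and, above all, the systematic use of $\Sym{d}$-invariance: it is precisely this symmetry that allows me to pass freely between the decreasing spectrum $\spec^{\downarrow}(\rho)$, the full orbit $\spec(\rho)$, and the eigenvalue vector $\bm{\lambda}$ read off from an arbitrary eigenbasis---identifications that would fail without it. The most routine verification, that a diagonal operator's eigenvalues are literally its diagonal entries even in the presence of repeated values and degenerate eigenspaces, I would dispatch in a line.
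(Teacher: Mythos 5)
Your proof is correct and follows essentially the same route as the paper's: Proposition~\ref{prop:cover} to place $\rho$ inside some $\pol[P](\cV)$, the identification of $\pol[P](\cV)$ as the image of $\FerPoly[][\bw,N,d]$ under the diagonal embedding attached to $\cV$, and the $\Sym{d}$-invariance of $\Lineups(\bw)$ to pass freely between the spectral orbit and any single ordering of the eigenvalues. The only difference is completeness: the paper's three-sentence proof spells out just the forward implication and leaves the converse and the symmetry bookkeeping implicit, while you make both directions explicit.
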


\begin{proof}
	By Proposition~\ref{prop:cover}, $\rho$ must be in $\pol[P](\cV)$ for some orthonormal basis $\cV$.
	Then $\rho$ is a diagonal matrix on the basis $\cV$ and its vector $(x_1,\dots,x_d)\in\RR^d$ of diagonal entries must be in $\FerPoly[][\bw,N,d]$.
	Since the elements in its diagonal correspond to its spectrum in some order, we are done.
\end{proof}

Using a basis, a finite Hilbert space $\cH$ may be identified with $\CC^D$, and operators with square matrices.
In this case, the set $\Density(\bw)$ is the set of Hermitian \emph{matrices} with spectrum $\bw$ (as a multiset).
We have a linear map $\diag:\textrm{Mat}_{D\times D}(\CC)\to\CC^D$ which takes any matrix $[m_{ij}]_{i,j=1}^D$ to its vector of diagonal entries $[m_{ii}]_{i=1}^D$.
Since Hermitian matrices have real entries in their diagonal, restricting the diagonal map to Hermitian matrices leads to a linear map from $\Herm(\cH)$ to $\RR^D$.
The Schur--Horn theorem \cite[Chapter 9B]{marshall_inequalities_2011} states that $\diag(\Density(\bw))=\Perm(\bw) :=\conv\{\pi\cdot\bw~:~\pi\in\Sym{D}\}\subset \RR^D$.
Using this theorem, we can reinterpret the polytope $\FerPoly[][\bw,N,d]$.

\begin{proposition}\label{prop:rotation}
Let $\cV$ be a basis of $\cH_1$, and consider operators in $\cH_1$ as matrices written in that basis.
The fermionic spectral polytope is the convex hull of the rosette of $\Lambda(\bw,N,d)$ (defined in Equation \eqref{eq:rosette_klypo}).
Furthermore
\[
\FerPoly[][\bw,N,d]=\diag\left(\overline{\Density}{}^1_N(\bw)\right)=\diag(\Density^1_N(\bw)).
\]
\end{proposition}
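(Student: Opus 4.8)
The plan is to prove the string of equalities by combining Theorem~\ref{thm:spectral} with the Schur--Horn theorem, handling the convex-hull-of-the-rosette identity and the two diagonal identities one at a time. The first three coincidences are soft consequences of what is already available; the genuinely delicate point is the very last equality $\FerPoly[][\bw,N,d]=\diag(\Density^1_N(\bw))$, where one must show that the diagonal image of the \emph{non-convex} set $\Density^1_N(\bw)$ is already convex.

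For $\FerPoly[][\bw,N,d]=\conv(\Lambda(\bw,N,d))$ I would argue by mutual inclusion. Since $\Density^1_N(\bw)\subseteq\overline{\Density}{}^1_N(\bw)$, Theorem~\ref{thm:spectral} gives $\Lambda(\bw,N,d)=\spec(\Density^1_N(\bw))\subseteq\spec(\overline{\Density}{}^1_N(\bw))=\FerPoly[][\bw,N,d]$, and hence $\conv(\Lambda(\bw,N,d))\subseteq\FerPoly[][\bw,N,d]$ by convexity. Conversely, the proof of Theorem~\ref{thm:relaxed_gok} exhibits each occupation vector $\OccVec\in\Lineups(\bw)$ as the $\cV$-diagonal of the reduced maximizer~\eqref{eq:maximizer_reduced}, a $\cV$-diagonal operator in $\Density^1_N(\bw)$ whose eigenvalues are exactly the entries of $\OccVec$; thus $\OccVec\in\spec(\Density^1_N(\bw))=\Lambda(\bw,N,d)$, so $\FerPoly[][\bw,N,d]=\conv(\Lineups(\bw))\subseteq\conv(\Lambda(\bw,N,d))$.

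For $\FerPoly[][\bw,N,d]=\diag(\overline{\Density}{}^1_N(\bw))$ the Schur--Horn theorem does the work in both directions. If $\rho\in\overline{\Density}{}^1_N(\bw)$ then $\diag(\rho)\in\Perm(\spec^\downarrow(\rho))=\conv(\spec(\rho))\subseteq\FerPoly[][\bw,N,d]$, the final inclusion coming from Theorem~\ref{thm:spectral} and convexity. Conversely, given $\bx\in\FerPoly[][\bw,N,d]$, the operator that is diagonal in $\cV$ with diagonal $\bx$ has spectrum $\Sym{d}\cdot\bx\subseteq\FerPoly[][\bw,N,d]$, because $\FerPoly[][\bw,N,d]=\conv(\Lineups(\bw))$ is $\Sym{d}$-invariant (as $\Lineups(\bw)$ is, cf.\ Lemma~\ref{lem:aux_origin}); by Theorem~\ref{thm:spectral} this operator lies in $\overline{\Density}{}^1_N(\bw)$ and realizes $\bx$ as a diagonal.

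The main obstacle is the last identity $\FerPoly[][\bw,N,d]=\diag(\Density^1_N(\bw))$, since $\diag$ commutes with $\conv$ and so a priori only yields $\conv(\diag(\Density^1_N(\bw)))=\FerPoly[][\bw,N,d]$. Using the unitary invariance of $\Density^1_N(\bw)$ (Lemma~\ref{lem:unitary}), Schur--Horn identifies $\diag(\Density^1_N(\bw))$ with $\bigcup_{\bm{\mu}\in\klypo}\Perm(\bm{\mu})$, so convexity reduces to the claim that every $\bx\in\FerPoly[][\bw,N,d]$ is majorized by some $\bm{\mu}\in\klypo$. To establish this I would invoke the structural theorem of \cite{liebert_foundation_2021}, namely $\overline{\Density}{}^1_N(\bw)=\bigcup_{\bw'\prec\bw}\Density^1_N(\bw')$, which together with Theorem~\ref{thm:spectral} gives $\bx^\downarrow=\spec^\downarrow(\rho')$ for some $\rho'\in\Density^1_N(\bw')$ with $\bw'\prec\bw$. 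Lifting $\rho'$ to a realizing $\tau'\in\Density^N(\bw')$ and writing $\bw'=\sum_k c_k\,\pi_k\!\cdot\!\bw$ by Birkhoff--von~Neumann decomposes $\tau'=\sum_k c_k\tau_k$ with each $\tau_k\in\Density^N(\bw)$ (same eigenbasis, permuted weights); applying the linear partial trace termwise gives $\rho'=\sum_k c_k\rho_k$ with $\rho_k\in\Density^1_N(\bw)$. Ky Fan's majorization inequality \cite{marshall_inequalities_2011} then yields $\bx^\downarrow=\spec^\downarrow(\rho')\prec\sum_k c_k\spec^\downarrow(\rho_k)=:\bm{\mu}$, and $\bm{\mu}\in\klypo$ because $\klypo$ is convex. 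A final application of Schur--Horn produces a Hermitian operator with spectrum $\bm{\mu}\in\klypo$ and $\cV$-diagonal $\bx$; unitary invariance places it in $\Density^1_N(\bw)$, giving $\FerPoly[][\bw,N,d]\subseteq\diag(\Density^1_N(\bw))$ and completing the proof.
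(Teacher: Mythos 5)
Your proof is correct, and on the first two identities it follows the paper's own route: the maximizers from the proof of Theorem~\ref{thm:relaxed_gok} realize each occupation vector as a diagonal operator inside $\Density^1_N(\bw)$, giving $\FerPoly[][\bw,N,d]=\conv(\Lambda(\bw,N,d))$, and Schur--Horn together with Theorem~\ref{thm:spectral} gives $\FerPoly[][\bw,N,d]=\diag\left(\overline{\Density}{}^1_N(\bw)\right)$. The genuine divergence is at the last equality, and there your treatment is more careful than the paper's. The paper settles $\diag(\Density^1_N(\bw))=\conv(\spec(\Density^1_N(\bw)))$ with a bare appeal to Schur--Horn plus the first part; since $\Density^1_N(\bw)$ is not convex, orbit-by-orbit Schur--Horn only yields $\diag(\Density^1_N(\bw))=\bigcup_{\bm{\mu}\in\klypo}\Perm(\bm{\mu})$, and a union of permutohedra need not be convex, so the paper is silently using a nontrivial convexity fact. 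You isolate exactly this point and close it with a majorization certificate: Birkhoff--von Neumann plus the structural theorem of \cite{liebert_foundation_2021} to write $\rho'=\sum_k c_k\rho_k$ with $\rho_k\in\Density^1_N(\bw)$, Ky Fan to get $\bx^\downarrow\prec\bm{\mu}:=\sum_k c_k\spec^\downarrow(\rho_k)$, and the convexity of $\klypo$ to conclude $\bm{\mu}\in\klypo$; all steps check out. Two remarks on the trade-offs. First, your detour through $\bigcup_{\bw'\prec\bw}\Density^1_N(\bw')$ is avoidable: $\overline{\Density}{}^1_N(\bw)=\conv\left(\Density^1_N(\bw)\right)$ by definition, so any $\rho$ with $\diag(\rho)=\bx$ already decomposes as $\sum_k c_k\rho_k$ with $\rho_k\in\Density^1_N(\bw)$, and Schur's inequality $\diag(\rho)\prec\spec^\downarrow(\rho)$ plus Ky Fan then finishes without \cite{liebert_foundation_2021} or Birkhoff. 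Second, your route imports the convexity of Klyachko's polytope $\klypo$ (Kirwan convexity), a deep external theorem, which can be avoided entirely by applying Schur--Horn upstairs instead: in the basis $\cV$ one has $\diag\circ L^N_1=A\circ\diag_{\cH_N}$, where $A\colon\RR^D\to\RR^d$ is the linear map whose columns are the vectors $\bchi(S)$ for $S\in\binom{[d]}{N}$, so Schur--Horn applied in $\cH_N$ to the \emph{single} orbit $\Density^N(\bw)$ gives $\diag(\Density^1_N(\bw))=A\left(\Perm(\bw)\right)$, a linear image of a permutohedron and hence convex. That softer argument is presumably what the paper's one-liner is gesturing at, and it stays within the tools the proposition already has in hand.
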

\begin{proof}
The first part follows from the proof of Theorem \ref{thm:relaxed_gok}:
the maximizers in Equation \eqref{eq:maximizer} are clearly in $\Density^N(\bw)$ so their partial traces given by Equation~\eqref{eq:maximizer_reduced} are in $\Density^1_N(\bw)$.
It follows that $\FerPoly[][\bw,N,d]\subseteq\conv(\Lambda(\bw,N,d))$.
The reverse inclusion follows from the fact that $\overline{\Density}^1_N(\bw)\supset \Density^1_N(\bw)$ and $\Lambda(\bw,N,d)=\spec(\Density^1_N(\bw))$.
By the Schur--Horn theorem, we have that
\[
\begin{split}
\diag\left(\overline{\Density}{}^1_N(\bw)\right)&=\conv(\spec(\overline{\Density}{}^1_N(\bw))=\FerPoly[][\bw,N,d], \text{ and}\\
\diag\left(\Density{}^1_N(\bw)\right)&=\conv(\spec(\Density{}^1_N(\bw))=\conv(\Lambda(\bw,N,d))=\FerPoly[][\bw,N,d].
\end{split}
\]
The first row follows by Theorem \ref{thm:spectral}.
The second row follows by the first part of the proposition.
\end{proof}

By symmetrizing, we obtain an object which is less complex than $\klypo$, a petal of the rosette. 
In Figure \ref{fig:single_flower}, we illustrate the difference between $\klypo$ and $\FerPoly[][\bw,N,d]$.
In the figure, we also highlight the polytope $\pol[\Sigma]^{\downarrow}(\bw,N,d):=\FerPoly[][\bw,N,d]\cap \Pauli_{D-1}$ which is a closer approximation of $\klypo$.

\begin{figure}[ht]
\begin{tikzpicture}
	[scale=0.8, rotate=300,
	vertex/.style={inner sep=0.25pt,circle,draw=black,fill=black,thick},
	pbound/.style={yellow,thick},
	ebound/.style={red,thick},
	pint/.style={yellow!20!white},
	eint/.style={red!20!white},
	kly/.style={draw=forestgreen,thick,fill=forestgreen!30!white},
	ourint/.style={blue!50!white},
	ourbound/.style={blue,thick},
	baseline=(center)]

\node (center) at (0,0) {};

\def\radius{3}
\def\run{2.75}
\def\rdeux{2.5}
\def\aun{17.5}
\def\adeux{22.5}
\def\rend{1}

\foreach \a in {0,60,...,360}{%

\coordinate (1) at (\a+10:\radius);
\coordinate (2) at (\a:\radius);
\coordinate (3) at (\a-10:\radius);
\coordinate (next) at (\a-50:\radius);
\coordinate (prev) at (\a+50:\radius);

\coordinate (c1) at (\a+\aun:\run);
\coordinate (c2) at (\a+\adeux:\rdeux);
\coordinate (d1) at (\a-\aun:\run);
\coordinate (d2) at (\a-\adeux:\rdeux);
\coordinate (endc) at (\a+30:\rend);
\coordinate (endd) at (\a-30:\rend);

\fill[pint] (0,0) -- (endc) -- (1) -- (2) -- (3) -- (endd) -- (0,0);
\fill[pint] (1) .. controls (c1) and (c2).. (endc);
\fill[pint] (3) .. controls (d1) and (d2).. (endd);
\fill[eint] (1) .. controls (c1) and (c2).. (endc) -- ($(prev)!0.5!(1)$) -- (1);
\fill[eint] (3) .. controls (d1) and (d2).. (endd) -- ($(next)!0.5!(3)$) -- (3);

\draw[pbound] (1) .. controls (c1) and (c2).. (endc);
\draw[pbound] (3) .. controls (d1) and (d2).. (endd);
\draw[ebound] ($(prev)!0.5!(1)$) -- (1) -- (2) -- (3) -- ($(next)!0.5!(3)$);

\node[vertex] at (1) {};
\node[vertex] at (2) {};
\node[vertex] at (3) {};
\node[vertex] at (next) {};
}

\coordinate (v-3) at (90:\rend);
\coordinate (v-1) at (70:3);
\coordinate (v0) at (110:3);
\coordinate (v-2) at ($(v0)!0.5!(v-1)$);
\coordinate (v1) at (120:3);
\coordinate (v2) at (130:3);
\coordinate (v3) at (170:3);
\coordinate (v4) at ($(v2)!0.5!(v3)$);
\coordinate (v5) at (150:\rend);

\coordinate (k1) at (120+\aun:\run);
\coordinate (k2) at (120+\adeux:\rdeux);
\coordinate (k3) at (120-\aun:\run);
\coordinate (k4) at (120-\adeux:\rdeux);

\fill[ourint] (0,0) -- (v4) -- (v2) -- (v1) -- (v0) -- (v-2) -- (0,0);
\fill[white] (0,0) -- (v-3) .. controls (k4) and (k3) .. (v0) -- (v1) -- (v2) .. controls (k1) and (k2) .. (v5) -- (0,0);
\fill[kly] (0,0) -- (v-3) .. controls (k4) and (k3) .. (v0) -- (v1) -- (v2) .. controls (k1) and (k2) .. (v5) -- (0,0);
\draw[ourbound] (0,0) -- (v-3) -- (v-2)-- (v0) -- (v1) -- (v2) -- (v4) -- (0,0);

% \foreach \p in {0,0.1,...,1}{%
% \draw[white] (v-3) .. controls (k4) and (k3) .. (v0) node[inner sep=0pt,pos=\p,vertex] (e\p) {};
% }
% \foreach \p in {0,0.1,...,1}{%
% \draw[kly] (e\p) -- ($e\p+0.1$);
% }

\node[vertex] at (v1) {};
\node[vertex] at (v2) {};
\node[vertex] at (v0) {};

\node[red] at (30:\radius+1) {$\FerPoly[][\bw,N,d]$};
\node[forestgreen,rotate=60] at (120:1.75) {$\klypo$};
\node[blue] at (120:\radius+0.6) {$\pol[\Sigma]^{\downarrow}(\bw,N,d)$};

\end{tikzpicture}
\caption{Schematic comparison between $\klypo$, $\FerPoly[][\bw,N,d]$ and $\pol[\Sigma]^{\downarrow}(\bw,N,d)$, to be compared with \cite[Figure 3]{berenstein_coadjoint_2000}.}
\label{fig:single_flower}
\end{figure}

The Schur--Horn theorem can be seen as an example of a more general convexity result involving moment maps.
Proposition \ref{prop:rotation} allows us to think of $\FerPoly[][\bw,N,d]$ as a moment polytope, as explained in Section~\ref{ssec:lie}.

\subsection{Lie theory perspective}
\label{ssec:lie}
In this section, we outline the Lie theory point of view that Klyachko \cite{klyachko_pauli_2009,klyachko_2006} and Berenstein--Sjaamar \cite{berenstein_coadjoint_2000} employ.
For any finite Hilbert space~$\cH$, the unitary group $\Unitary(\cH)$ is a compact Lie group with Lie algebra $\mathfrak{u}(\cH)$ consisting of skew-Hermitian operators.
We can identify the set $\Herm(\cH)$ of Hermitian operators with its dual Lie algebra $\mathfrak{u}^*(\cH)$ via the pairing $H\mapsto \Trace(iH\cdot)$.
Therefore the conjugation action of $\Unitary(\cH)$ in $\Herm(\cH)$ is the coadjoint action.
The set $\Density(\bw)\subset \mathfrak{u}^*(\cH)$ of density operators with spectrum given by $\bw$ is then a coadjoint orbit.

We refer the reader to \cite{knutson_symplectic_2000} for a nice survey providing a definition of moment map.
Here we only use the basic ingredients.
Let $M$ be a connected symplectic manifold together with an action of a compact Lie group $\textrm{G}$.
A moment map is a $\textrm{G}$ equivariant map $\Phi:M\to\mathfrak{g}^*$ with one extra property involving its relation to the symplectic structure of $M$.
Here are two important convexity results about moment maps.
\begin{compactitem}
\item When $\mathrm{G}$ is a torus, the convexity theorems of Atiyah \cite{atiyah_convexity_1982} and Guillemin--Sternberg \cite{guillemin_convexity_1982} state that the image $\Phi(M)$ is a polytope.
The case relevant to us, when $M$ is a coadjoint orbit, was proved earlier by Kostant \cite{kostant_convexity_1973}.
\item When $\mathrm{G}$ is a general compact group, there is a different convexity result of Kirwan \cite{kirwan_convexity_1984}, resolving a conjecture of Guillemin--Sternberg from \cite{guillemin_convexity_1982}, stating that the image restricted to the positive Weyl chamber (for a choice of maximal torus) is a polytope.
\end{compactitem}
For our purposes we interpret Kirwan's theorem as follows. Let $\mathrm{G}=\Unitary(\cH)$ be a unitary group then the image of the map $\Phi:M\to \mathfrak{u}^*(\cH)$ composed with the map $\spec^{\downarrow}:\mathfrak{u}^*(\cH)\to\RR^d$ is a polytope.

\begin{example}[Schur--Horn's theorem]
\label{ex:schur-horn}
There is a canonical symplectic structure on $\Density(\bw)$ that makes the inclusion into $\mathfrak{u}^*(\cH)$ a moment map  for the action of $\Unitary(\cH)$, see \cite[Chapter 1.2]{kirillov_lectures_2004} for details.
Let $\Torus(\cH)\subset\Unitary(\cH)$ be some maximal torus which consist of diagonal matrices in a basis $\cV$.
The basis $\cV$ allows us to consider all operators as matrices.
Composing the inclusion $i:\Density(\bw)\hookrightarrow\mathfrak{u}^*(\cH)$ with the projection to the diagonal, we get a moment map $\diag:\Density(\bw)\to\RR^D$ for the $\Torus(\cH)$-action on $\Density(\bw)$.
In this context, the first convexity theorem implies Schur--Horn's theorem.
\end{example}

Let $g:\Unitary(\cH_1)\to \Unitary(\cH_N)$ be the group homomorphism defined by $g(u)=u\wedge u\wedge\dots\wedge u$ for every unitary $u\in\Unitary(\cH_1)$.
The map $g$ allows to identify $\Unitary(\cH_1)$ with a connected subgroup of~$\Unitary(\cH_N)$.
The induced map on cotangent spaces $g^*:\mathfrak{u}^*(\cH_N)\to\mathfrak{u}^*(\cH_1)$ is the partial trace of Equation~\eqref{eq:partial_trace} and the restriction $g^*:\Density^N(\bw)\to\Density^1_N(\bw)\subset \mathfrak{u}^*(\cH_1)$ is a moment map for the action of $\Unitary(\cH_1)$ on $\Density^N(\bw)$.
Kirwan's theorem guarantees a priori that the set $\spec^{\downarrow}(\Density^1_N(\bw))$ is a polytope.
Berenstein and Sjaamar \cite{berenstein_coadjoint_2000} give an inequality representation for such polytopes in the general setup where~$g$ is any inclusion of compact Lie groups and $\Density(\bw)$ is any coadjoint orbit.
In \cite{ressayre_geometric_2010}, Ressayre describes a minimal subset of the inequalities from \cite{berenstein_coadjoint_2000} which suffices to describe the polytope.
In \cite{klyachko_2006}, Klyachko builds upon the results in \cite{berenstein_coadjoint_2000} (which themselves build upon previous work of Klyachko \cite{klyachko_stable_1998} on Weyl's problem) to describe inequalities in the concrete case of $\spec^{\downarrow}(\Density^1_N(\bw))$.
Now we contextualize the fermionic spectral polytope within this framework.

\begin{proposition}
\label{prop:moment}
Fix a basis of $\cH_1$ and let $\Torus$ be the maximal torus of diagonal matrices.
The composition $\diag\circ g^*:\Density^N(\bw)\to \FerPoly[][\bw,N,d]$ is a moment map for the action of $\Torus$.
\begin{center}
\begin{tikzcd}[column sep=small]
& \mathfrak{u}^*(\cH_1) \arrow[dr, "\diag"] & \\
\Density^N(\bw) \arrow[ur,"g^*"] \arrow[rr,"\Phi"] & & \FerPoly[][\bw,N,d]
\end{tikzcd}
\end{center}
\end{proposition}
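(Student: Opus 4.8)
The plan is to realize $\Phi=\diag\circ g^*$ as the moment map obtained by restricting the canonical $\Unitary(\cH_N)$-action on the coadjoint orbit $\Density^N(\bw)$ along the chain of subgroups $\Torus\subset\Unitary(\cH_1)\subset\Unitary(\cH_N)$, the last inclusion being $g$, and then to identify its image via Proposition~\ref{prop:rotation}. The single general principle I would invoke is standard: if $(M,\omega)$ carries a Hamiltonian action of a compact group $\mathrm{G}$ with moment map $\Phi_{\mathrm{G}}\colon M\to\mathfrak g^*$, and $\mathrm{H}\subseteq\mathrm{G}$ is a closed subgroup with Lie-algebra inclusion $d\iota\colon\mathfrak h\hookrightarrow\mathfrak g$, then the restricted $\mathrm{H}$-action is again Hamiltonian with moment map $(d\iota)^*\circ\Phi_{\mathrm{G}}$. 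This is immediate from the defining identity $d\langle\Phi_{\mathrm{G}},d\iota(\xi)\rangle=\iota_{(d\iota(\xi))_M}\omega$ together with the equality of fundamental vector fields $\xi_M=(d\iota(\xi))_M$ for $\xi\in\mathfrak h$.

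I would apply this principle twice. The starting point is Example~\ref{ex:schur-horn}: equipped with its Kirillov--Kostant--Souriau symplectic form, the inclusion $i\colon\Density^N(\bw)\hookrightarrow\mathfrak u^*(\cH_N)$ is a moment map for $\Unitary(\cH_N)$. Restricting along $g$ produces, as the moment map for the $\Unitary(\cH_1)$-action, the composite $(dg)^*\circ i$; since $(dg)^*$ is precisely the partial trace $g^*=L^N_1$ (as recalled just before the proposition, and consistent with the adjunction of Lemma~\ref{lem:lift}), this recovers the moment map $g^*\colon\Density^N(\bw)\to\Density^1_N(\bw)\subset\mathfrak u^*(\cH_1)$ already noted in the text. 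Restricting once more to the maximal torus $\Torus\subset\Unitary(\cH_1)$, I compose with the projection $\mathfrak u^*(\cH_1)\to\mathfrak t^*$ dual to $\mathfrak t\hookrightarrow\mathfrak u(\cH_1)$. Under the identifications $\mathfrak u^*(\cH_1)\cong\Herm(\cH_1)$ and $\mathfrak t^*\cong\RR^d$, this projection sends a Hermitian operator to its vector of diagonal entries, i.e.\ it is exactly $\diag$, as already used in Example~\ref{ex:schur-horn}. Hence the $\Torus$-moment map on $\Density^N(\bw)$ is $\diag\circ g^*=\Phi$, and the displayed diagram commutes.

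It remains to identify the image of $\Phi$, which I would simply read off from Proposition~\ref{prop:rotation}: since $g^*(\Density^N(\bw))=L^N_1(\Density^N(\bw))=\Density^1_N(\bw)$, we obtain $\Phi(\Density^N(\bw))=\diag(\Density^1_N(\bw))=\FerPoly[][\bw,N,d]$, so $\Phi$ indeed surjects onto the fermionic spectral polytope. The conceptual content is entirely formal, being the two-step subgroup restriction of moment maps, so no genuine obstacle arises; the only points demanding care are the two identifications, namely that the dual Lie-algebra map $(dg)^*$ really is the partial trace and that the dual of $\mathfrak t\hookrightarrow\mathfrak u(\cH_1)$ really is $\diag$. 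Both are exactly the facts established before the proposition and in Example~\ref{ex:schur-horn}, and the concrete computation of the image is offloaded to Proposition~\ref{prop:rotation}. As a consistency check independent of this symplectic machinery, the convexity theorem of Atiyah and Guillemin--Sternberg then re-proves that the image is a polytope, which we already know from Theorem~\ref{thm:spectral}.
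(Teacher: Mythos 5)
Your proposal is correct and follows essentially the same route as the paper: the paper also takes $g^*$ as the moment map for the $\Unitary(\cH_1)$-action (asserted in the surrounding text, which you re-derive by restricting the $\Unitary(\cH_N)$-moment map along $g$), invokes the standard subgroup-restriction property of moment maps (citing Knutson) to compose with $\diag$ for the torus action, and identifies the image via Proposition~\ref{prop:rotation}. Your write-up is merely more explicit about the two-step restriction and the verification that the dual Lie-algebra maps are the partial trace and $\diag$.
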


\begin{proof}
Through the map $g$, $\Torus$ acts on the coadjoint orbit $\Density^N(\bw)$.
For the chosen basis of $\cH_1$, let $\diag: \mathfrak{u}^*(\cH_1)\to \mathfrak{t}^*(\cH_1)$ be the projection of the matrix to its diagonal.
The map $g^*$ is a moment map for the action of $\Unitary(\cH_1)$ on $\Density^N(\bw)$. 
By basic properties of the moment map \cite[Section~2.2]{knutson_symplectic_2000}, a moment map $\Phi$ for this torus action on $\Density^N(\bw)$ is given by composing $g^*:\Density^N(\bw)\to\mathfrak{u}^*(\cH_1)$ with $\diag$.
The image $g^*(\Density^N(\bw))$ is $\Density^1_N(\bw)$ and by Proposition \ref{prop:rotation} the image of the composition is $\FerPoly[][\bw,N,d]$.
\end{proof}

The Kostant convexity theorem not only states that the image of the moment map is a polytope, it also states that the image is the convex hull of the images of $\Torus$-fixed points.
Using this idea combined with Proposition \ref{prop:moment} leads to an alternative derivation of the convex hull representation of $\FerPoly[][\bw,N,d]$.

\begin{proposition}\label{prop:kostant}
The polytope $\FerPoly[][\bw,N,d]$ is the convex hull 
\begin{equation}\label{eq:kostant}
\FerPoly[][\bw,N,d]=\conv\left(\left\{\sum_{j=1}^D w_j \bchi(\bi_j)~:~\text{for any total ordering }\bi_1,\dots,\bi_D\text{ of } \FerPoset\right\}\right).
\end{equation}
\end{proposition}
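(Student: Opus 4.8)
The plan is to derive Equation~\eqref{eq:kostant} from the moment-map description in Proposition~\ref{prop:moment} together with the stronger form of the Kostant convexity theorem, which identifies the image of a moment map with the convex hull of the images of the torus-fixed points. Since Proposition~\ref{prop:moment} already establishes that $\FerPoly[][\bw,N,d]$ is precisely the image of the moment map $\Phi=\diag\circ g^*$ for the action of $\Torus$ on the coadjoint orbit $\Density^N(\bw)$ (which is connected, being an orbit of the connected group $\Unitary(\cH_N)$), it suffices to determine the $\Torus$-fixed points of $\Density^N(\bw)$ and to compute their images under $\Phi$; the right-hand side of Equation~\eqref{eq:kostant} should then emerge exactly.

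First I would identify the fixed points. Fix the basis $\cV=\{\bv_i\}_{i\in[d]}$ defining $\Torus$, so that $\Torus$ acts on $\Density^N(\bw)$ through $g$, that is, by conjugation by $\wedge^N u$ for diagonal unitaries $u=\diag(t_1,\dots,t_d)$. In the basis $\{\bv_{\bi}\}_{\bi\in\FerPoset}$ the operator $\wedge^N u$ is diagonal, acting on $\bv_{\bi}$ by the character $\prod_{i\in\bi}t_i$. The key observation is that the map $\bi\mapsto\bchi(\bi)$ is injective, so these $D$ characters are pairwise distinct; hence for a generic $u\in\Torus$ the eigenvalues $\prod_{i\in\bi}t_i$ are all different, and the commutant of $g(\Torus)$ consists exactly of the operators diagonal in $\{\bv_{\bi}\}$. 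A density operator in $\Density^N(\bw)$ diagonal in this basis is necessarily of the form $\rho_{\sigma}=\sum_{j=1}^D w_j\,\bv_{\bi_j}\otimes\bv_{\bi_j}^*$ for some total ordering $\bi_1,\dots,\bi_D$ of $\FerPoset$. Thus the $\Torus$-fixed locus is the finite set $\{\rho_{\sigma}\}$ indexed by total orderings of $\FerPoset$.

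Next I would compute $\Phi(\rho_{\sigma})$, reusing the partial-trace computation already carried out in the proof of Theorem~\ref{thm:relaxed_gok}. Applying $g^*=L^N_1$ to $\rho_{\sigma}$ gives, by Equation~\eqref{eq:maximizer_reduced}, the operator $\sum_{i=1}^d\bigl(\sum_{j=1}^D w_j\bchi(\bi_j)\bigr)_i\,\bv_i\otimes\bv_i^*$, which is diagonal in $\cV$ with diagonal vector $\sum_{j=1}^D w_j\bchi(\bi_j)$. Composing with $\diag$ then returns precisely the vector $\sum_{j=1}^D w_j\bchi(\bi_j)\in\RR^d$. Therefore the images of the fixed points are exactly the points on the right-hand side of Equation~\eqref{eq:kostant}, ranging over all total orderings, and the convexity theorem yields $\FerPoly[][\bw,N,d]=\conv\{\Phi(\rho_\sigma)\}$, which is the claimed identity.

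I expect the only delicate point to be the identification of the fixed-point set: one must verify that the distinctness of the characters $\prod_{i\in\bi}t_i$ (guaranteed by injectivity of $\bchi$) forces a $\Torus$-fixed density operator to be diagonal in $\{\bv_{\bi}\}$, so that fixed points correspond to total orderings. Everything else is bookkeeping. As a consistency check, note that the inclusion $\conv(\Lineups(\bw))\subseteq$ (right-hand side) is immediate from the definitions, since every ordering compatible with some $\bm{\lambda}$ is in particular a total ordering; the substance of the proposition is the reverse inclusion, namely that the orderings compatible with some $\bm{\lambda}$ already generate the full convex hull.
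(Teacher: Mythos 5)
Your proof is correct and follows essentially the same route as the paper's: both invoke Proposition~\ref{prop:moment} together with Kostant's convexity theorem, identify the $\Torus$-fixed points of $\Density^N(\bw)$ as the diagonal operators $\sum_{j} w_j\, \bv_{\bi_j}\otimes\bv_{\bi_j}^*$ indexed by total orderings (the key point in both arguments being that injectivity of $\bchi$ on $\FerPoset$ makes the torus characters pairwise distinct, so fixed points must be diagonal), and then compute their partial traces to recover the vectors $\sum_j w_j\bchi(\bi_j)$. The only cosmetic difference is that you argue via a generic torus element with distinct eigenvalues, whereas the paper compares characters coefficient-by-coefficient in the expansion of $\rho$.
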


\begin{proof}
Fix a basis $\cV$ of $\cH_1$ and let $\Torus$ be the maximal torus consisting of diagonal matrices.
By Proposition~\ref{prop:moment} and Kostant's theorem \cite{kostant_convexity_1973}, we know that $\FerPoly[][\bw,N,d]$ is the convex hull of the image of the $\Torus$-fixed points of  $\Density^N(\bw)$.
We first determine the $\Torus$-fixed points.
We use the basis induced by $\cV$, if $\rho$ is $\Torus$-fixed then
\[
\begin{split}
t\rho t^{-1} &= \rho,\\
t\left(\sum_{\bi,\bj} c_{\bi,\bj}\bv_{\bi}\otimes \bv_{\bj}^*\right)t^{-1} &= \sum_{\bi,\bj} c_{\bi,\bj}\bv_{\bi}\otimes \bv_{\bj}^*,\\
\sum_{\bi,\bj} c_{\bi,\bj}\left(\prod_{i\in\bi} t_i\right)\left(\prod_{j\in\bj} t_j^{-1}\right) \bv_{\bi}\otimes \bv_{\bj}^* &= \sum_{\bi,\bj} c_{\bi,\bj}\bv_{\bi}\otimes \bv_{\bj}^*,
\end{split}
\]
for any $t\in\Torus$, since by definition $t v_i = t_i$.
It is well-known that the $\Torus(\cH_N)$ fixed points are diagonal matrices, however since $\Torus$ is a subgroup coming from the injective map $g:\Torus(\cH_1)\to\Torus(\cH_N)$, we make sure that the set of fixed points is not larger.
By comparing coefficients, we see that if $\bi\neq \bj$ we must have $c_{\bi,\bj}=0$, hence $\rho$ is indeed diagonal in the basis induced by $\bV$.
This implies that the $\Torus$-fixed points of  $\Density^N(\bw)$ are of the form
\[
\sum_{j\in[D]} w_{j} \bv_{\bi_j}\otimes \bv_{\bi_j}^*\in\Density^N(\bw),
\]
for any total order of the indices.
The partial traces of these operators are diagonal in $\cV$ and their diagonals are the elements of Equation \eqref{eq:kostant}.
\end{proof}

The $V$-representation presented in Proposition~\ref{prop:kostant} is non-minimal because some of the $\Torus$-fixed points do not induce vertices.

\begin{example}
We have considered four compact Lie groups acting on $\Density(\bw)$: $\Torus(\cH_1),\Torus(\cH_N),\Unitary(\cH_1)$ and $\Unitary(\cH_N)$, see the top of Figure \ref{fig:diagram}.

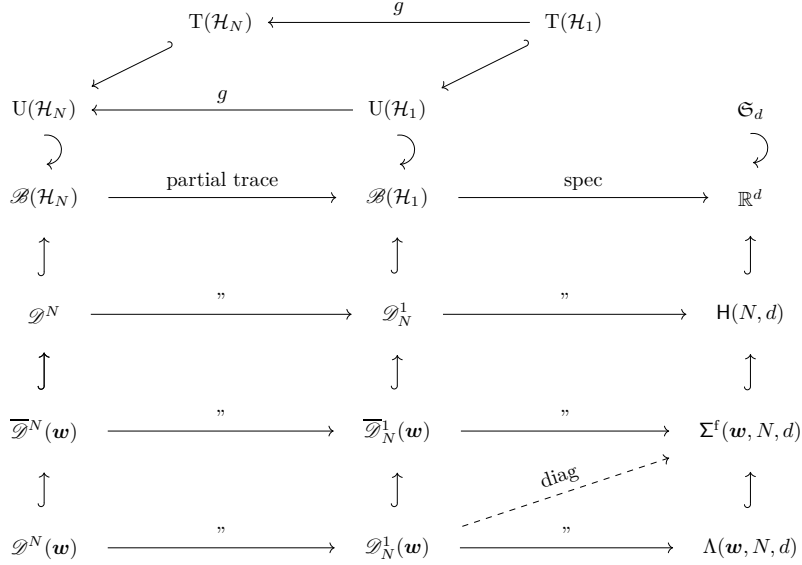
\begin{figure}[!ht]
\resizebox{0.75\textwidth}{!}{$
\begin{tikzpicture}[scale=1]
\def\x{6}
\def\y{2}

% Row 1
\node[inner sep=0.5cm] (DNw) at (0,0) {$\Density^N(\bw)$};
\node[inner sep=0.5cm] (D1Nw) at (\x,0) {$\Density^1_N(\bw)$}  edge [<-] node[above] {''} (DNw) ;
\node[inner sep=0.5cm] (LwND) at (2*\x,0) {$\Lambda(\bw,N,d)$} edge [<-] node[above] {''}       (D1Nw);

% Row 2
\node[inner sep=0.5cm] (DbarNw) at (0,\y) {$\overline{\Density}{}^N(\bw)$}     edge [<-right hook] (DNw) ;
\node[inner sep=0.5cm] (D1Nbarw) at (\x,\y) {$\overline{\Density}{}^1_N(\bw)$} edge [<-] node[above] {''} (DbarNw)
                                                                               edge [<-right hook] (D1Nw);
\node[inner sep=0.5cm] (SwND) at (2*\x,\y) {$\FerPoly[][\bw,N,d]$}                        edge [<-] node[above] {''}       (D1Nbarw)
                                                                               edge [<-right hook] (LwND)
									       edge [<-,dashed] node[above,sloped] {$\diag$} (D1Nw);

% Row 3
\node[inner sep=0.5cm] (Dw) at (0,2*\y) {$\Density^N$} edge [<-right hook] (DbarNw) edge [<-right hook] (DbarNw);
\node[inner sep=0.5cm] (D1N) at (\x,2*\y) {$\Density^1_N$}   edge [<-] node[above] {''} (Dw) 
                                                             edge [<-right hook] (D1Nbarw);
\node[inner sep=0.5cm] (HND) at (2*\x,2*\y) {$\pol[H](N,d)$} edge [<-] node[above] {''}       (D1N)
                                                             edge [<-right hook] (SwND);

% Row 4
\node[inner sep=0.5cm] (BHN) at (0,3*\y) {$\Herm(\cH_N)$} edge [<-right hook] (Dw);
\node[inner sep=0.5cm] (BH1) at (\x,3*\y) {$\Herm(\cH_1)$} edge [<-] node[above] {partial trace} (BHN)
                                                           edge [<-right hook] (D1N);
\node[inner sep=0.5cm] (RD) at (2*\x,3*\y) {$\RR^d$}       edge [<-] node[above] {$\spec$}       (BH1)
                                                           edge [<-right hook] (HND);

% Row 5
\node[inner sep=0.25cm] (UHN) at (0,3.75*\y) {$\Unitary(\cH_N)$};
\node[inner sep=0.25cm] (UH1) at (\x,3.75*\y) {$\Unitary(\cH_1)$} edge [->] node[above] {$g$} (UHN);
\node[inner sep=0.25cm] (SD) at (2*\x,3.75*\y) {$\Sym{d}$};

% \draw[->] (UHN.south) to [out=0,in=360,looseness=3] (BHN.north);
% \draw[->] (UH1.south) to [out=0,in=360,looseness=3] (BH1.north);
% \draw[->] (SD.south) to [out=0,in=360,looseness=3] (RD.north);
\draw[->] (UHN.south) arc (100:-90:0.25);
\draw[->] (UH1.south) arc (100:-90:0.25);
\draw[->] (SD.south) arc (100:-90:0.25);

% Row 6
\node[inner sep=0.25cm] (THN) at (0.5*\x,4.5*\y) {$\Torus(\cH_N)$} edge [left hook->] (UHN);
\node[inner sep=0.25cm] (TH1) at (1.5*\x,4.5*\y) {$\Torus(\cH_1)$} edge [left hook->] (UH1)
                                                                    edge [->] node[above] {$g$} (THN);

\end{tikzpicture}$}
\caption{The various torus actions on density matrices.}
\label{fig:diagram}
\end{figure}

\noindent
Each action leads to a moment map and therefore a moment polytope associated to it:
\begin{center}
\begin{tabular}{c@{\hspace{1.5cm}}c}
\begin{tabular}{c|c}
Lie Group           & Moment Polytope \\\hline
$\Unitary(\cH_N)$   & $\{\bw\}$ \\
$\Torus(\cH_N) $    & $\Perm(\bw)$ \\
\end{tabular} & 
\begin{tabular}{c|c}
Lie Group           & Moment Polytope \\\hline
$\Unitary(\cH_1)$   & $\klypo$ \\
$\Torus(\cH_1)$     & $\FerPoly[][\bw,N,d]$ \\
\end{tabular}
\end{tabular}
\end{center}
In Figure~\ref{fig:diagram}, each unitary group acts on the elements of its respective column, whereas the symmetric group acts on the last column. 
On the right of the figure, we have the hypersimplex $\pol[H](N,d)$ defined in Equation \eqref{eq:def_hypersimplex}, the polytope $\FerPoly[][\bw,N,d]$ defined in Definition \ref{def:ferm_specpo}, and the rosette $\Lambda(\bw,N,d)=\Sym{d}(\klypo)$ defined in Equation~\eqref{eq:rosette_klypo}.
\end{example}

\subsection{Bosonic spectral polytope}
\label{ssec:repr_boson}

Here, we describe how to adapt the situation to the $N$-boson Hilbert spaces.
The subspace given by the $N$-th symmetric power $\SymP^N\cH_1$ consists of \emph{$N$-boson states}:
\[
\bv_1\vee\cdots\vee\bv_N\in\SymP^N\cH_1.
\]
Let $\BosPoset:=\{\bi=(i_1,\dots,i_N)\in [d]^N~:~1\leq i_1\leq\dots\leq i_N\leq d\}$ denote the set of all \defn{$N$-boson configurations}.
If $\cB=\{\bb_i\}_{i=1}^d$ is an orthonormal basis for $\cH_1$, then $\cB^N:=\{\bb_{\bi}\}_{\bi\in\BosPoset }$ is an orthonormal basis for~$\cH_N$, where $\bb_{\bi}:= \bb_{i_1}\vee\dots\vee \bb_{i_N}$.
It follows that $D:=\dim(\SymP^N\cH_1)=\binom{d+N-1}{N}$.
An element $\bi\in\BosPoset$ can be considered as a multisubset $S$ of $[d]$ of cardinality $|S|=N$.
The multisubset $S$ may be represented as a function $S:[d]\to\NN$, where $S(i)=m_i$ is the number of times $i$ appears in $\bi$, i.e. the multiplicity of $i$ in $\bi$.
Given a vector $\bm{\lambda}=(\lambda_1,\dots,\lambda_d)\in\RR^d$, we define $\lambda(\bi)=\sum_{j=1}^N m_j\lambda_{j}$.
This induces a (not necessarily unique) compatible total order on $N$-boson configurations
$\bi_1,\bi_2,\dots,\bi_D$ where $\lambda(\bi_1)\geq \lambda(\bi_2)\geq \cdots \geq \lambda(\bi_D)$.
Fixing $\bw\in \Delta_{D-1}$ and $\bm{\lambda}\in\RR^d$, we associate the \emph{occupation vector}
\[
\OccVec_{\bw}(\bm{\lambda}):=\sum_{j=1}^D w_j \bchi(\bi_j)\in\RR^d, \text{ where }\bchi(\bi):=\sum_{j=1}^d m_j\be_j \in \RR^d,
\]
and $\bi_1,\dots,\bi_D$ is a compatible total order induced by $\bm{\lambda}$.
Let $\Lineups(\bw):= \{\OccVec_{\bw}(\bm{\lambda})~:~\bm{\lambda}\in\RR^d\}$ be the set of occupation vectors.

\begin{definition}[Bosonic spectral polytope]
	\label{def:bos_specpo}
	Let $\bw\in\Pauli_{D-1}$.
	The polytope
	\[
	\BosPoly[][\bw,N,d]:= \conv(\Lineups(\bw))
	\]
	is the \defn{bosonic spectral polytope} of the $N$-boson Hilbert space on $d$ orbitals with weight~$\bw$.
\end{definition}

\subsection{The Challenge}
\label{ssec:challenge}

Theorem \ref{thm:relaxed_gok} gives a spectral characterization of the set of convex ensemble $N$-representable density operators that should be converted into a set of linear inequalities to solve the membership problem effectively.
Furthermore, as explained earlier, it is also meaningful to restrict our attention to vectors $\bw\in\Delta_{D-1}$ with a few non-zero entries.
We thus introduce an extra parameter~$r$ that specifies the number of non-zero entries of $\bw$.
This leads to a problem within polyhedral combinatorics, namely, provide a non-redundant $H$-representation of the polytopes $\FerPoly[r][\bw,N,d]$ and $\BosPoly[r][\bw,N,d]$.
Two examples of $3$-dimensional spectral polytopes are illustrated in Figure~\ref{fig:spectral_poly}.

\begin{figure*}[!h]
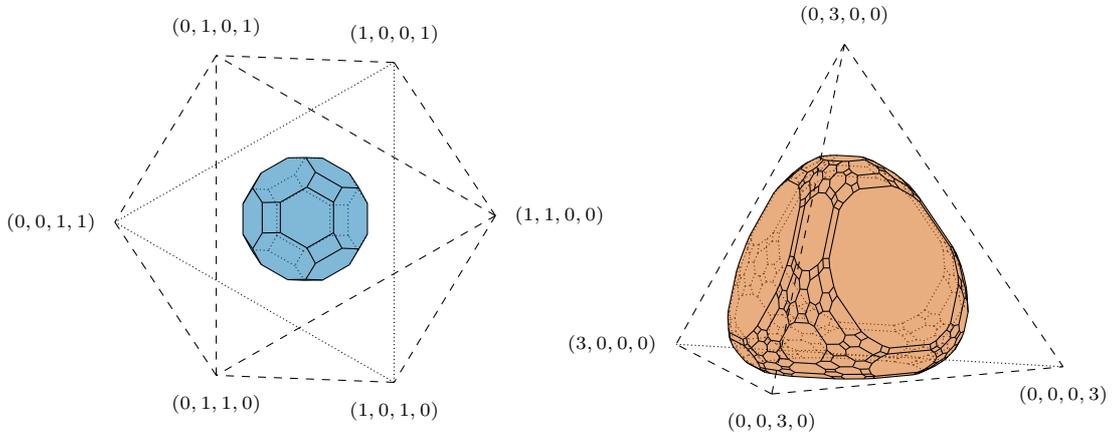

\begin{center}
\begin{tikzpicture}

\node  at (-3.5,0) {\input{ordered_spec426}};
\node  at (3.5,0) {\input{boson8}};

\end{tikzpicture}
\caption{On the left, the fermionic spectral polytope $\FerPoly[][\frac{1}{21}(6,5,4,3,2,1),2,4]$.
On the right, the bosonic spectral polytope $\BosPoly[][\bw,3,4]$, with $\bw=\frac{1}{36}(8,7,6,5,4,3,2,1,0,0,0,0,0,0,0,0,0,0,0,0)$.}
\label{fig:spectral_poly}
\end{center}
\end{figure*}

\noindent
Contrary to what the image on the left may suggest, fermionic spectral polytopes are not \emph{that} simple. 
The simplicity of the dimension $3$ case is due to the Pauli exclusion principle and to the coincidence that the hypersimplex $\pol[H](2,4)$ is a cross-polytope.

\part{Convex Geometry Tools}
\label{part2}

Before expressing the polytopes $\FerPoly[r][\bw,N,d]$ and $\BosPoly[r][\bw,N,d]$ as the intersection of finitely many halfspaces, we review necessary discrete geometry concepts, expand upon recent developments presented in \cite{PadrolPhilippe2021}, and whet the tools to be used in Part~\ref{part3} to provide an effective solution to the convex $1$-body $N$-representability problem.
To simplify the notation, in Part~\ref{part2} and~\ref{part3}, we omit the mention to $\bw$ in $\FerPoly[r][\bw,N,d]$ and denote it $\FerPoly[r]$, when $\bw$ is some fixed vector that is clear from the context and similarly for $\BosPoly[r]$.

\section{$V$- to $H$-representation translation}
\label{sec:V_to_H}

The problem presented in Section~\ref{ssec:challenge} relies on a classical procedure in discrete geometry which may be done in several ways.
Several algorithms exist and are also implemented in various computer algebra systems dealing with geometric computations, see e.g. \cite{sagemath}\cite{bruns_power_2016}\cite{polymake}\cite{avis_pivoting_1992,fukuda_exact_2008}\cite{bagnara_parma_2008}.
In this section, we describe a method using normal fans which is well suited to deal with symmetric polytopes.
We refer the reader to the reference books \cite{schrijver_theory_1986}\cite{ziegler_lectures_1995}\cite{grunbaum_convex_2003} for further background on polyhedral objects which are not described below.

\subsection{Polyhedra, cones, and polytopes}

A \emph{polyhedron} is the intersection of finitely many closed halfspaces: 
\begin{equation}\label{eq:H-rep}
	\pol[Q] := \left\{ \bx\in\RR^d: \bA\bx \leq \bb \right\},
\end{equation}
where $\bA$ is a matrix and $\bb$ is a vector.
The expression in \eqref{eq:H-rep} is a \defn{$H$-representation} of $\pol[Q]$.
A row of $\bA$ and its corresponding entry in $\bb$ gives a \emph{defining inequality} of $\pol[Q]$.
Given a row~$\ba_i$ of $\bA$, if the equation $\np{\ba_i,\bx}\leq b_i$ is a positive linear combination of other equations, it is not necessary to define $\pol[Q]$, and this $H$-representation is called \emph{redundant}.

A \emph{point configuration} in $\RR^d$ is an ordered set $\bV:=\{\bv_1,\dots,\bv_m\}$ of vectors in $\RR^d$.
We denote by $\aff(\bV), \cone(\bV)$, and $\conv(\bV)$ the affine, conical, and convex hull of $\bV$, respectively. 
The latter object is called a \emph{polytope}.
Furthermore, the elements in these sets are called \emph{affine}, \emph{conical} and \emph{convex combinations} of~$\bV$, respectively.
A cone is \emph{pointed} if it contains no lines. 
We refer to the elements of minimal generating sets of affine, conical and convex hulls as \emph{line generators}, \emph{ray generators}, and \emph{vertices}.
By the Minkowski--Weyl theorem, every polyhedron $\pol[Q]$ can be decomposed uniquely as the sum of an affine hull, a conical hull and a convex hull:
\begin{equation}\label{eq:structure}
\pol[Q]=\pol[L]+\pol[K]+\pol,
\end{equation}
where $\pol[L]$ is a linear subspace (called the \emph{lineality space} of $\pol[Q]$),~$\pol[K]$ is a pointed cone (called the \emph{recession cone} of $\pol[Q]$), $\pol$ is a polytope, and both $\pol[K]$ and $\pol[P]$ lie in the orthogonal complement of~$\pol[L]$.
The expression in Equation~\eqref{eq:structure} is a \defn{$V$-representation} of~$\pol[Q]$.
Thus, polytopes and cones are polyhedra: polytopes are \emph{bounded} polyhedra and cones are \emph{homogeneous} polyhedra that is, $\bb=(0,\dots,0)$ in Equation~\eqref{eq:H-rep}. 

Let $\binom{J}{k}$ denote the collection of $k$-elements subsets of the set $J$, and $\multiset{J}{k}$ denote the collection of $k$-elements multisubsets.
The cardinality of a set $J$ is denoted by~$|J|$.
Multisubsets and subsets of $[d]$ are regarded as functions $S:[d]\rightarrow\NN$, $S\in\NN^d$, where subsets are those $S$'s whose image is contained in $\{0,1\}$.
Let ${\bchi:\NN^{d}\rightarrow\RR^d}$ be the \defn{multiplicity function} sending a multiset $S\in\NN^d$ with support included in $[d]$ to $\bchi(S)=\sum_{j\in [d]}S(j)\be_j$.
In order to write multisubsets and subsets compactly, we write $113$ and $123$ to mean $\{1,1,3\}$ and $\{1,2,3\}$ respectively.
There are two point configurations that play a central role in the present article.

\begin{definition}[Fermionic and bosonic point configurations]
\label{def:FBPC}
The \defn{fermionic} and \defn{bosonic point configurations} are defined as 
\[
\FerPC :=\left\{\bchi(S)~:~S\in \binom{[d]}{N} \right\}\subset\RR^d,\quad\text{and}\quad
\BosPC :=\left\{\bchi(S)~:~S\in \multiset{[d]}{N} \right\}\subset\RR^d.
\]
\end{definition}

\noindent
These configurations and their convex hulls are well-known geometric objects.

\begin{example}[Hypersimplices]\label{ex/def:hypersimplex}
The convex hull $\conv(\FerPC)=\pol[H](N,d)$ is the hypersimplex.
This showcases the Minkowski--Weyl theorem: the elements of $\FerPC$ correspond to the vertices of the polyhedron defined by the linear inequalities in Equation~\eqref{eq:def_hypersimplex}, see e.g. \cite{kuhn_linear_1960}\cite{coleman_structure_1963}.
\end{example}

\begin{example}[Inflated simplex]\label{ex/def:inflated_simplex}
The convex hull $\conv(\BosPC)=N\cdot\pol[H](1,d)$ is the $N$-th dilation of the hypersimplex $\pol[H](1,d)$, i.e. the \emph{standard simplex}.
In this case, not every point in the configuration $\BosPC$ corresponds to a vertex of the dilated standard simplex, only those supported in a single coordinate.
In fact, $\BosPC$ is the set of integer points of $N\cdot\pol[H](1,d)$.
\end{example}

\subsection{Normal fans}\label{sec:nfans}
A \emph{fan} is a family $\fan=\{\pol[K]_1,\pol[K]_2,\dots,\pol[K]_m\}$ of non-empty cones such that every non-empty face of a cone in $\fan$ is also a cone in $\fan$, and the intersection of any two cones in~$\fan$ is a face of both, see e.g. \cite[Section~7.1]{ziegler_lectures_1995}.
The \emph{support} of $\fan$ is $\bigcup_{i\in[m]} \pol[K]_i$.
The $1$-dimensional cones of a fan are called \emph{rays}.
Let $\pol=\conv(\bV)$ be a polytope. 
A linear inequality satisfied by every point $\bx\in\pol$ is called \emph{valid}.
Recall that the \emph{support function} $\mathrm{supp}_{\pol}:\RR^d\to \RR\cup\{\infty\}$ of $\pol$ is defined as $\mathrm{supp}_{\pol}(\by):=\max_{\bx\in \pol} \np{\by,\bx}$.
Every vector $\by\in\RR^d$ induces a unique valid inequality on a polytope~$\pol$, according to $\np{\by,\bx}\leq \mathrm{supp}_{\pol}(\by)$.
The polytope $\pol^{\by} = \{\bx\in \pol~:~\np{\by,\bx}=\mathrm{supp}_{\pol}(\by) \}$ is referred to as a \emph{face} of~$\pol$.
Vertices of $\pol$ are $0$-dimensional faces and \emph{facets} of $\pol$ are codimension-$1$ faces.
Whereas for polytopes, $\mathrm{supp}_{\pol}$ is indeed finite, extending the definition of support function to general polyhedra $\pol[Q]$ requires $\infty$.
Vertices, faces and facets of polyhedra are defined similarly with the added $\infty$ value.
Given a face $\pol[F]$ of $\pol$, we define its \emph{relatively open} and \emph{closed normal cones}:
\begin{align}
\ncone_{\pol}(\pol[F])^\circ &:= \{\by\in\RR^d~:~\pol^{\by}=\pol[F]\} \text{ and}\nonumber \\
\ncone_{\pol}(\pol[F])       &:= \{\by\in\RR^d~:~\pol^{\by}\supseteq \pol[F]\}. \label{eq:ncone_closed}
\end{align}
The family $\fan[N](\pol):=\{\ncone_{\pol}(\pol[F])~:~\pol[F]\text{ a face of }\pol\}$ is the \emph{normal fan} of $\pol$.
The normal fan of a polytope is entirely recovered from the normal cones of the vertices, since all other normal cones are faces of them.
Using Equation~\eqref{eq:structure}, the normal cone of a face decomposes as $\ncone_{\pol}(\pol[F])=\pol[L]_{\pol}(\pol[F])+\pol[K]_{\pol}(\pol[F])$.
The lineality space $\pol[L]_{\pol}(\pol[F])$ is the orthogonal complement of $\aff(\pol)$, hence it does not depend on $\pol[F]$.
The pointed cone $\pol[K]_{\pol}(\pol[F])$ is called the \emph{essential cone} of $\pol[F]$ with respect to~$\pol$.
The collection $\widehat{\fan[N]}(\pol):=\{\pol[K]_{\pol}(\pol[F])~:~\pol[F]\text{ a face of }\pol\}$ is the \emph{essential fan} of $\pol$. 
Elements of $\widehat{\fan[N]}(\pol)_1 = \{\pol[K]_{\pol}(\pol[F])~:~\pol[F]\text{ a facet of }\pol\}$ are called \emph{essential rays}.
Choosing one generator for each essential ray together with a basis of $\pol[L]_{\pol}=\aff(\pol)^\top$ leads to a minimal $H$-representation of $\pol$ as in Equation~\eqref{eq:H-rep}.

By Equation~\eqref{eq:ncone_closed}, the normal cone of a face~$\pol[F]$ consists of all vectors $\by\in\RR^d$ whose linear functional is maximized on $\pol[F]$. 
Whence, for each vertex~$\bv$ of $\pol$ its normal cone has the following $H$-representation:
\begin{equation}\label{eq:ncone_vertex}
\ncone_{\pol}(\bv) = \{\by\in\RR^y : \np{\by,\bv-\bv'}\geq0, \text{ for }\bv'\in \bV\}.
\end{equation}
By translating the $H$-representation of $\ncone_{\pol}(\bv)$ into a $V$-representation and disregarding its lineality space, we obtain a set of ray generators for the pointed cones $\pol[K]_{\pol}(\bv)$.
By evaluating the support function at the ray generators for each $\pol[K]_{\pol}(\bv)$, we obtain a non-redundant $H$-representation of~$\pol$.

\section{Permutation invariant polytopes}
\label{sec:perm_inv_poly}

The fermionic and bosonic spectral polytopes $\FerPoly[r]$ and $\BosPoly[r]$ presented in Section~\ref{ssec:challenge} are $\Sym{d}$-invariant polytopes.
In this section, we give a general condition for a linear functional to determine a facet-defining inequality of $\Sym{d}$-invariant polytopes.
The combinatorial and geometric nature of this condition makes it very practical and opens the study of a larger family of $\Sym{d}$-invariant polytopes, where several orbits are involved.
In particular, the presented technique exploits a restriction to representatives of equivalence classes of the symmetric group action.

\subsection{Fundamental representatives}
A polytope $\pol$ in $\RR^d$ is \emph{$\Sym{d}$-invariant} if it is stabilized by the standard action of $\Sym{d}$ on $\RR^d$: $\pi\cdot (v_1,\ldots,v_d):= (v_{\pi(1)},\ldots,v_{\pi(d)})$.
Classical examples of $\Sym{d}$-invariant polytopes are the usual \emph{permutohedra}: 
They are defined as the convex hull of a single $\Sym{d}$-orbit, i.e. $\Perm(\bv):=\conv\{\pi\cdot \bv:\pi\in\Sym{d}\}$ for some $\bv\in\RR^d$, see \cite{postnikov_permutahedra_2009}. 
These polytopes are the subject of the Schur--Horn theorem as explained in Section~\ref{ssec:lie}.
For all $\bv\in\RR^d$ whose coordinates are pairwise distinct, the combinatorial type of~$\Perm(\bv)$ coincides, but for non-generic points one obtains many different combinatorial types. 
The fundamental basis of $\RR^d$ is given by the vectors $\bef_k := \sum_{1\leq i\leq k} \be_i$ for $k=1,\ldots,d$.
In particular, in view of Theorem~\ref{thm:E1N}, the particular case of hypersimplices is particularly relevant.

\begin{example}[Example~\ref{ex/def:hypersimplex} continued]
\label{ex:hypersimplex}
Let $d\geq N\geq 0$. 
The hypersimplex $\pol[H](N,d)$ is the $\Sym{d}$-invariant polytope $\Perm(\bef_N)$ whose vertices are the permutation of the vector $\bef_N$ composed of $N$ coordinates equal to $1$ followed by $d-N$ coordinates equal to $0$.
When $N\notin\{0,d\}$, the hypersimplex $\pol[H](N,d)$ has dimension $d-1$, otherwise it is a point. It is a simplex if $N\in\{1,d-1\}$, and has  $2d$ facets if $2\leq N\leq d-2$.
\end{example}

\noindent
General $\Sym{d}$-invariant polytopes are convex hulls of finitely many $\Sym{d}$-orbits. 

\begin{definition}[$\Sym{d}$-invariant polytope $\Perm(\bV)$]
\label{def:invariant_pol}
Let $\bV=\{\bv_1,\dots,\bv_m\}\subset\RR^d$.
The $\Sym{d}$-\defn{invariant polytope} $\Perm(\bV)$ is
\[
\Perm(\bV):=\conv\{\pi\cdot\bv_i~:~\pi\in\Sym{d}, \quad \bv_i\in\bV \}.
\]
The vectors $\bv_1,\dots,\bv_m$ are the \emph{generators} of $\Perm(\bV)$.
\end{definition}

As usual with symmetric objects, they can be described by restricting to the fundamental domain of the action of $\Sym{d}$ in $\RR^d$.

\begin{definition}[Fundamental chamber]
\label{def:fund_cham}
The  \defn{fundamental chamber} is the polyhedron
\begin{equation}
\Phi_d:=\{\by\in\RR^d~:~y_{1}\geq y_{2}\geq\cdots\geq y_{d}\}.
\end{equation}
This is a polyhedral cone that can be also expressed as $\cone\{\bef_1,\ldots,\bef_{d-1},\bef_d, -\bef_d\}$, where $\bef_1,\dots,\bef_d$ are the elements of the fundamental basis.
\end{definition}

In general, we use the adjective \defn{fundamental} preceding any object that is directly related to~$\Phi_d$, for instance a fundamental vector is any vector with decreasing coordinates.
\begin{definition}[Fundamental representative, $\bx^{\downarrow}$]
\label{def:orb_repr}
For a vector $\bx\in\RR^d$, we write
\[
\bx^\downarrow:=\left(x^\downarrow_1,x^\downarrow_2,\dots,x^\downarrow_d\right)\in\Phi_d
\]
for the vector obtained from $\bx$ by ordering its coordinates in decreasing order.
Equivalently, $\bx^\downarrow$ is the unique representative of the orbit $\Sym{d}\cdot\bx$ in the fundamental chamber $\Phi_d$.
The scalar $x_i^{\downarrow}$ is the $i$-th largest (with possible ties) coordinate of $\bx$.
\end{definition}

\begin{example}[Example~\ref{ex:hypersimplex} continued]
The $(d-1)$-dimensional hypersimplex $\pol[H](N,d)$ has the following $H$-representation using fundamental representatives.
\[
\pol[H](N,d) =
\left\{
\bx\in\RR^d~:~
\begin{array}{lcc} 
x_1^\downarrow                                                        & \leq & 1\\ 
x_1^\downarrow+x_2^\downarrow+\dots+x_{d-1}^\downarrow                & \leq & N\\ 
x_1^\downarrow+x_2^\downarrow+\dots+x_{d-1}^\downarrow+x_d^\downarrow & =    & N   
\end{array}\right\}.
\]
The second inequality together with the linear equality are equivalent to $x^\downarrow_d \geq 0$, so this representation is indeed equivalent to that of Equation \eqref{eq:def_hypersimplex}.
\end{example}

From here on, when writing a $\Sym{d}$-invariant polytope in terms of generators $\bv_1,\dots,\bv_m$, we always assume them to be fundamental representatives $\bv_1^{\downarrow},\dots,\bv_m^{\downarrow}$ in $\Phi_d$.

\begin{definition}[Fundamental fan]
Let $\bV=\{\bv_1,\dots,\bv_m\}\subset\RR^d$ and $\pol=\Perm(\bV)$.
The intersection of fans $\fan[N](\pol)\cap \Phi_d$ is called the \defn{fundamental fan} of $\pol$.
Its cones are called \defn{fundamental cones} and in particular its rays are \defn{fundamental rays}.
\end{definition}

A $\Sym{d}$-invariant polytope is called \defn{homogeneous} if the sums of the coordinates of its generators are all equal.

\begin{convention}\label{convention}
The polytopes $\FerPoly[r]$ and $\BosPoly[r]$ are homogeneous polytopes in $\RR^d$ of dimension $d-1$, and have the same lineality space: the $1$-dimensional subspace spanned by $\bef_d$.
This implies that neither their normal cones in $\RR^d$ nor their fundamental cones in $\Phi_d$ are pointed.
Rather that repeating ad nauseam the word \emph{essential}, we consider $\fan[N](\pol)$ and $\fan(\pol)$ as pointed fans by applying the projection on $\RR^d$ that fixes $\bef_i$ for $i\in[d-1]$ and sends $\bef_d$ to the origin. 
\end{convention}

\begin{remark}\label{rem:fundamental_normal_rays}
By symmetry, the $\Sym{d}$-orbit of each normal ray has a unique representative in~$\Phi_d$, which is a fundamental ray.
This is the motivation for focusing on the determination of the fundamental fan. 
However, it is not true that all rays of $\fan(\pol)$ are rays of $\fan[N](\pol)$ as we may create new rays when we restrict to the fundamental chamber.
For example, the vector $\bef_i$ spans a fundamental ray of the hypersimplex $\pol[H](N,d)$ for every $i\in[d-1]$, but it spans a normal ray (see Convention \ref{convention}) only if $i\in\{1,d-1\}$.
Proposition~\ref{prop:basic_cleanup_2} on page~\pageref{prop:basic_cleanup_2} illustrates this phenomenon on another polytope.
\end{remark}

The crucial tool left to discuss is a combinatorial criterion to certify that a valid inequality induced from an arbitrary ray in $\fan(\pol)$ is facet-defining on $\pol$.
For this, we characterize the dimension of faces of symmetric polytopes in the next section.

\subsection{Faces of symmetric polytopes}
\label{ssec:faces_sym}

The goal of this section is to determine the dimension of faces of $\Sym{d}$-invariant polytopes $\pol=\Perm(\bV)$.
By the same token, it makes it possible to determine the combinatorial type of $\Sym{d}$-invariant polytopes.
In view of the previous section, it is practical to decompose $\RR^d$ as $\spa\{\bef_d\}\oplus \pol[R]^{d-1}$, where $\pol[R]^{d-1}$ is the $(d-1)$-dimensional subspace of $\RR^d$ consisting of vectors whose coordinates sum is $0$.
A $\Sym{d}$-invariant polytope is homogeneous exactly when it is contained in a parallel copy of $\pol[R]^{d-1}$, i.e. the sum of the coordinates of its points is constant.
A vector in $\RR^d$ is \defn{inert} if it is fixed by the action of $\Sym{d}$, i.e. it is an element of $\spa\{\bef_d\}$.
By extension, a $\Sym{d}$-invariant polytope is inert if all its generators are inert.
The following lemma gives the dimension of $\Sym{d}$-invariant polytopes and its proof is self-evident from this decomposition.
\begin{lemma}[Dimension of $\Sym{d}$-invariant polytopes]\label{lem:homogeneous}
Let $\bV=\{\bv_1,\dots,\bv_m\}\subset\RR^d$.
\begin{compactenum}[i)]
\item Assume $\pol=\Perm(\bV)$ to be inert.
\subitem If $\bv_1=\cdots=\bv_m$ (i.e.\ $\pol$ is homogeneous), then $\dim(\pol)=0$.
\subitem Otherwise, $\dim(\pol)=1$.
\item Assume that $\pol=\Perm(\bv_1,\dots,\bv_m)$ is not inert. 
\subitem The dimension of~$\pol$ is $d-1$ if and only if $\pol$ is homogeneous.
\subitem Otherwise, $\dim(\pol)=d$.
\end{compactenum}
\end{lemma}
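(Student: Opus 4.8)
The plan is to exploit the orthogonal decomposition $\RR^d=\spa\{\bef_d\}\oplus\pol[R]^{d-1}$ set up just before the statement, together with the fact that $\Sym{d}$ acts trivially on the first summand and stabilizes the second. Writing each generator as $\bv_i=p_i+\bz_i$ with $p_i=\tfrac{\sigma_i}{d}\bef_d\in\spa\{\bef_d\}$, where $\sigma_i=\np{\bef_d,\bv_i}$ is its coordinate sum, and $\bz_i=\bv_i-p_i\in\pol[R]^{d-1}$, every orbit point becomes $\pi\cdot\bv_i=p_i+\pi\cdot\bz_i$. Thus $\pol=\Perm(\bV)$ splits into a ``product-like'' spread: a contribution along the line $\spa\{\bef_d\}$ coming from the $p_i$, and a contribution inside $\pol[R]^{d-1}$ coming from the orbits of the $\bz_i$. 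The entire argument amounts to measuring each contribution and adding the dimensions.

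First I would dispose of the inert case (i). Here every $\bz_i=0$, so each orbit $\Sym{d}\cdot\bv_i=\{p_i\}$ is a single point on the line $\spa\{\bef_d\}$ and $\pol=\conv\{p_1,\dots,p_m\}$ is a (possibly degenerate) segment in that line. Since the coordinate sum of $p_i=\tfrac{\sigma_i}{d}\bef_d$ is $\sigma_i$, homogeneity ($\sigma_1=\cdots=\sigma_m$) is here equivalent to $\bv_1=\cdots=\bv_m$; in that case $\pol$ is a point with $\dim(\pol)=0$, and otherwise it is a genuine segment with $\dim(\pol)=1$. This settles both subitems of (i).

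For the non-inert case (ii) the key lemma is that the orbit of any nonzero $\bz\in\pol[R]^{d-1}$ affinely spans $\pol[R]^{d-1}$. I would prove this by noting that the barycenter $\tfrac{1}{d!}\sum_\pi\pi\cdot\bz$ vanishes (each coordinate equals $\tfrac1d\np{\bef_d,\bz}=0$), so $\aff(\Sym{d}\cdot\bz)=\spa(\Sym{d}\cdot\bz)$ is a $\Sym{d}$-invariant subspace of $\pol[R]^{d-1}$; choosing indices $a\neq b$ with $z_a\neq z_b$ (possible since $\bz\neq0$ has coordinate sum $0$) and the transposition $\tau=(a\,b)$ gives $\bz-\tau\cdot\bz=(z_a-z_b)(\be_a-\be_b)$, so the span contains $\be_a-\be_b$ and hence, by invariance, all $\be_i-\be_j$, which generate $\pol[R]^{d-1}$. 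Now pick a generator with $\bz_{i_0}\neq0$. Since differences of points of $\pol$ lie in its direction space, the vectors $\pi\cdot\bz_{i_0}-\bz_{i_0}$ show that this direction space contains all of $\pol[R]^{d-1}$, so $\dim(\pol)\geq d-1$. If $\pol$ is homogeneous then all $p_i$ coincide and $\pol$ is a translate of $\conv\{\pi\cdot\bz_i\}\subset\pol[R]^{d-1}$, forcing $\dim(\pol)=d-1$; conversely $\dim(\pol)=d-1$ confines $\pol$ to a single affine copy of $\pol[R]^{d-1}$, i.e. constant coordinate sum, which is homogeneity. If instead $\pol$ is not homogeneous there are $i,j$ with $p_i\neq p_j$, so the difference $(p_i+\bz_i)-(p_j+\bz_j)$ has a nonzero $\bef_d$-component; subtracting its $\pol[R]^{d-1}$-part $\bz_i-\bz_j$ (already in the direction space) shows $\spa\{\bef_d\}$ lies in the direction space as well, whence $\dim(\pol)=d$.

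The only genuinely non-formal ingredient, and hence the main obstacle, is the orbit-spanning lemma, equivalently the irreducibility of the standard representation $\pol[R]^{d-1}$ of $\Sym{d}$; the elementary transposition argument above is what makes it concrete and uniform in $d$. The remaining effort is bookkeeping in the non-homogeneous subcase, where the full $(d-1)$-dimensional spread inside $\pol[R]^{d-1}$ must be combined with the extra direction $\bef_d$ without double counting, which the direction-space formulation handles cleanly. Finally I would confirm consistency at small parameters: for $d=1$ the non-inert case is vacuous, and for $d=2$ the lemma still holds since $\pol[R]^{1}=\spa\{\be_1-\be_2\}$.
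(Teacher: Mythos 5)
Your proof is correct and follows exactly the route the paper intends: the paper declares the lemma ``self-evident'' from the decomposition $\RR^d=\spa\{\bef_d\}\oplus\pol[R]^{d-1}$, and your argument is precisely that decomposition carried out in full, with the only substantive ingredient (that the $\Sym{d}$-orbit of a nonzero vector in $\pol[R]^{d-1}$ spans $\pol[R]^{d-1}$, i.e.\ irreducibility of the standard representation) proved cleanly via the transposition trick. Nothing to correct; you have simply supplied the details the paper omits.
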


\begin{example}
The fermionic and bosonic spectral polytopes $\FerPoly[r]$ and $\BosPoly[r]$ are non-inert homogeneous polytopes.
Therefore they have codimension-$1$ in $\RR^d$ and following Convention~\ref{convention} we consider the normal and fundamental fan within the space $\pol[R]^{d-1}=\textrm{span}\{\bef_1,\dots,\bef_{d-1}\}$ to make them pointed.
\end{example}

\begin{definition}[{Labeling of faces of $\Phi_d$, see \cite[Section~1.15]{humphreys_reflection_1990}}]
Let $\pol[F]$ be a face of $\Phi_d$.
Furthermore, let $\Sym{\bc}=\prod_{i=1}^{k}\Sym{c_i}$ be the largest standard parabolic subgroup (i.e. Young subgroup) of $\Sym{d}$ that fixes $\pol[F]$, where $\bc=(c_1,\dots,c_k)\vDash d$ is a composition of $d$.
The face $\pol[F]$ is \emph{labeled} by the composition $\bc$ that describes $\Sym{\bc}$.
\end{definition}

Let $\by\in\Phi_d$.
In order to generalize Lemma~\ref{lem:homogeneous} to determine the dimension of the face $\pol^{\by}$ of~$\pol$, we first determine the largest subgroup of $\Sym{d}$ that stabilizes it.
Let $\bc_{\by}=(c_1,\dots,c_k)$ be the composition labeling the inclusion-minimal face $\pol[F]_{\by}$ of $\Phi_d$ that contains $\by$.
For example, the vector $\by=(3,2,2,2,1,1,0,0)$ leads to the composition $\bc_{\by}=(1,3,2,2)$ of $8$.
The Young subgroup $\Sym{\bc_{\by}}$ is the inclusion-maximal Young subgroup stabilizing the face $\pol^{\by}$ for any $\Sym{d}$-invariant polytope~$\pol$.
The following theorem provides the dimension of $\pol^{\by}$ by extracting the combinatorial information from a linear functional $\by$.
It reduces significantly the dimension of the convex hull computation, since it restricts all computations to the vectors in the fundamental chamber.

\begin{mainthm}
\label{thm:dimension}
Let $\bV=\{\bv_1,\dots,\bv_m\}\subset\Phi_d$, $\pol=\Perm(\bV)$ and $\by\in\Phi_d$.
The dimension of the face~$\pol^{\by}$ is
\begin{equation}
	\label{eq:dimension}
\dim \pol^{\by} =  \dim\left(\conv\{\Proj{\by}(\bv):\bv\in\bV_{\by}\}\right) + \sum_{i\in[k]\setminus\operatorname{Fix}(\by)} (c_i-1),
\end{equation}
where
\begin{itemize}[-]
\item $\bc_{\by}=(c_1,\dots,c_k)$ is the label of the inclusion-minimal face of $\Phi_d$ containing $\by$,
\item $\Proj{\by}$ is the projection 
$\RR^d\rightarrow \RR^k$ defined by
\[
\Proj{\by}(\bx):=\left(\sum_{i=1}^{c_1} x_i, \sum_{i=c_1+1}^{c_1+c_2} x_i, \dots, \sum_{i=c_{k-1}+1}^{d} x_i \right),
\]
\item $\bV_{\by}$ is the set of generators of $\pol$ maximized by the linear functional $\by$, and 
\item $\operatorname{Fix}(\by)$ is the set of indices $j\in[k]$ such that the subgroup $\Sym{c_j}$ of $\Sym{\bc_{\by}}$ fixes $\pol^{\by}$ pointwise.
\end{itemize}
\end{mainthm}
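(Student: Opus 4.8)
The plan is to first describe the vertex set of the face $\pol^{\by}$ explicitly, then to split its dimension into a ``between-blocks'' part, read off by $\Proj{\by}$, and a ``within-blocks'' part governed by $\operatorname{Fix}(\by)$. Write $B_1,\dots,B_k$ for the consecutive blocks of coordinates on which $\by$ is constant, so that $|B_i|=c_i$ and $\Sym{\bc_{\by}}=\prod_i\Sym{c_i}$ permutes coordinates within each $B_i$. For a fixed generator $\bv\in\bV$, the rearrangement inequality \cite{marshall_inequalities_2011} gives $\max_{\pi\in\Sym{d}}\np{\by,\pi\cdot\bv}=\np{\by,\bv}$, since $\by,\bv\in\Phi_d$; hence $\mathrm{supp}_{\pol}(\by)=\max_{\bv\in\bV}\np{\by,\bv}$ is attained exactly on $\bV_{\by}$. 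Analysing the equality case (using that the values of $\by$ are \emph{strictly} decreasing from one block to the next) then shows that $\pi\cdot\bv$, for $\bv\in\bV_{\by}$, maximises $\by$ if and only if it lies in the smaller orbit $\Sym{\bc_{\by}}\cdot\bv$. I would obtain this by Abel summation over the prefix block-sums of $\pi\cdot\bv$: strict positivity of the successive gaps of $\by$ forces each prefix to collect the top coordinates of $\bv$, which pins $\pi\cdot\bv$ down to a within-block permutation. Consequently
\[
\pol^{\by}=\conv\Bigl(\,\bigcup_{\bv\in\bV_{\by}}\Sym{\bc_{\by}}\cdot\bv\,\Bigr),
\]
a $\Sym{\bc_{\by}}$-invariant polytope.

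Next I would apply rank--nullity to the restriction of $\Proj{\by}$ to the linear space $L$ parallel to $\aff(\pol^{\by})$, yielding $\dim\pol^{\by}=\dim\Proj{\by}(\pol^{\by})+\dim(\ker\Proj{\by}\cap L)$. Since permuting coordinates inside a block preserves the block sum, $\Proj{\by}$ is constant on $\Sym{\bc_{\by}}$-orbits, so $\Proj{\by}(\pol^{\by})=\conv\{\Proj{\by}(\bv):\bv\in\bV_{\by}\}$; this identifies the first summand with the first term of the claimed formula.

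It remains to show $\dim(\ker\Proj{\by}\cap L)=\sum_{i\in[k]\setminus\operatorname{Fix}(\by)}(c_i-1)$. The kernel decomposes as $\ker\Proj{\by}=\bigoplus_{i}W_i$, where $W_i$ is the $(c_i-1)$-dimensional space of vectors supported on $B_i$ with zero coordinate sum. I claim $\ker\Proj{\by}\cap L=\bigoplus_{i\notin\operatorname{Fix}(\by)}W_i$. For $\subseteq$: if $i\in\operatorname{Fix}(\by)$ then $\Sym{c_i}$ fixes $\pol^{\by}$ pointwise, so every point of $\pol^{\by}$, and hence every vector of $L$, is constant on $B_i$; a vector of $\ker\Proj{\by}\cap L$ thus has its $B_i$-block both constant and of zero sum, so that block vanishes and the $W_i$-component is killed. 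For $\supseteq$: if $i\notin\operatorname{Fix}(\by)$ there is a point $\bx\in\pol^{\by}$ not constant on $B_i$, and for a transposition $\tau=(p\,q)$ inside $B_i$ the difference $\tau\cdot\bx-\bx=(x_q-x_p)(\be_p-\be_q)\in L\cap W_i$ produces the root $\be_p-\be_q$ whenever $p,q$ lie in distinct level sets of $\bx|_{B_i}$. These roots are the edges of a complete multipartite graph on $B_i$ with at least two parts, which is connected; its edge vectors therefore span $W_i$, so $W_i\subseteq L$. Summing over $i$ gives the second term and completes the argument.

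I expect the main obstacle to be the $\supseteq$ inclusion, i.e.\ the dichotomy that a block contributes either nothing or its full dimension $c_i-1$: a single non-constant point already forces the entire within-block direction $W_i$ into $L$, and it is the graph-connectivity spanning argument that makes this precise (equivalently, the irreducibility of the standard $\Sym{c_i}$-representation). The other delicate point, which I would treat carefully, is the equality case of the rearrangement inequality in the presence of repeated entries in $\bv$, needed to pin the maximising orbit elements down to $\Sym{\bc_{\by}}\cdot\bv$.
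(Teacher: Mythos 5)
Your proof is correct and follows essentially the same route as the paper's: both first identify $\pol^{\by}$ as the $\Sym{\bc_{\by}}$-orbit hull of $\bV_{\by}$ via the equality case of the rearrangement inequality, then split $\dim\pol^{\by}$ into the dimension of the image under $\Proj{\by}$ plus the dimension of the part of $\ker\Proj{\by}$ parallel to the face, and finally establish the same block-by-block dichotomy governed by $\operatorname{Fix}(\by)$. The only difference is local: where the paper appeals to Lemma~\ref{lem:homogeneous} (each non-fixed block contributes a homogeneous $\Sym{c_i}$-permutohedron factor of dimension $c_i-1$), you re-derive that fact inline through transposition differences and connectivity of the complete multipartite graph on the level sets, which is a self-contained but equivalent justification of the same step.
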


\begin{proof}
The face $\pol^{\by}$ can be expressed as
\[
\pol^{\by} = \conv\left\{ \pi\cdot \bv~:~\pi\in \Sym{\bc}, \bv\in \bV_{\by} \right\}.
\]
Indeed, for fundamental vectors $\by\in\Phi_d$, $\bv\in \bV_{\by}$, $\bz\in \bV$, and $\pi \in \Sym{d}$, the rearrangement inequality \cite[Theorem~368]{hardy_inequalities_1988} implies that in order to have
\[ 
\mathrm{supp}_{\pol}(\by)=\np{\by,\bv}=\np{\by,\pi\cdot \bz},
\]
one necessarily has on the one hand that $\np{\by,\bv}=\np{\by,\bz}$ (and therefore that $\bz\in \bV_{\by}$), and on the other hand that $y_{\pi(i)}= y_{i}$ whenever $w_{\pi(i)}\neq w_{i}$ (and therefore that $\pi\cdot \bz = \pi'\cdot \bz$ for some $\pi'\in \Sym{\bc}$).

Let $\pol[W]_{\by}$ be the $(d-k)$-dimensional kernel of $\Proj{\by}$. 
For example, if $\bc_{\by}=(d)$, then the image of $\Proj{\by}$ is $\spa\{\bef_d\}$  and $\pol[W]_{\by}=\pol[R]^{d-1}$. 
In general, we have the decomposition $\pol[W]_{\by}=\pol[W]^1_{\by}\oplus\cdots \oplus \pol[W]^k_{\by}$, where $\pol[W]^j_{\by}$ is the $(c_j-1)$-dimensional subspace spanned by the vectors $\be_i-\be_{i'}$ with $c_{j-1}< i,i'\leq c_j$.
We can decompose the computation of the dimension of $\pol^{\by}$ as the sum of the dimension of its image under $\Proj{\by}$ and the dimension of the largest linear subspace $\pol[W]_{\by}^{\pol}$ of $\pol[W]_{\by}$ that can be translated into the affine hull of~$\pol^{\by}$.
The first summand is $\dim\left(\conv\{\Proj{\by}(\bv):\bv\in\bV_{\by}\}\right)$, and therefore it remains to show that $\dim \pol[W]_{\by}^{\pol} = \sum_{i\in[k]\setminus\operatorname{Fix}(\by)} (c_i-1)$.

For each ${j\in[k]}$, let $\proj{j}:\RR^d\to \RR^{c_j}$ be projection onto the coordinates $\{c_{j-1}+1,\dots,c_j\}$ (setting $c_0=0$).
For a fixed $\bv$, the polytope $\conv\left\{ \pi\cdot \bv~:~\pi\in \Sym{\bc} \right\}$ is naturally isomorphic to $\Perm(\proj{1}(\bv))\times\dots\times\Perm(\proj{k}(\bv))$, which we call a $\bc$-Permutohedron. 
Hence, $\pol^{\by}$ is the convex hull of a union of $\bc$-Permutohedra, each lying in some translation of $\pol[W]_{\by}$. 
On the one hand, if $j\in\operatorname{Fix}(\by)$ and $\bv\in\bV_{\by}$, then $\Perm(\proj{j}(\bv))$ is a point and its dimension is~$0$.
In particular, $\pol^{\by}$ is entirely contained in the orthogonal complement of the linear subspace~$\pol[W]^j_{\by}$. 
Therefore $\pol[W]_{\by}^{\pol}\subseteq \bigoplus_{i\in[k]\setminus\operatorname{Fix}(\by)}  \pol[W]^i_{\by}$.
On the other hand, if $j\not\in\operatorname{Fix}(\by)$, then a certain vertex $\bv\in\bV_{\by}$ is not inert with respect to~$\Sym{c_j}$.
As we consider a translated copy of the kernel $\pol[W]_{\by}$, $\Perm(\proj{j}(\bv))$ is homogeneous and by Lemma~\ref{lem:homogeneous}, it has dimension $c_j-1$,  which is the dimension of $\pol[W]^j_{\by}$. This shows that  $\pol[W]^j_{\by}\subseteq \pol[W]_{\by}^{\pol}$. We conclude that
\[
\pol[W]_{\by}^{\pol}= \bigoplus_{i\in[k]\setminus\operatorname{Fix}(\by)}  \pol[W]^i_{\by},
\]
providing the sought dimension.
\end{proof}

\begin{remark}[Determination of combinatorial type]
Let $\pol=\Perm(\bv_1,\dots,\bv_m)$ be an homogeneous $\Sym{d}$-invariant polytope, and $\pol^{\by}$, for some $\by\in\Phi_d$, be a facet.
By evaluating ${b_{\by}=\max_{i\in[m]}\np{\by,\bv_i}}$, and keeping track of which generators achieve the maximum $b_{\by}$, we obtain that the generators of~$\pol$ incident to the facet $\pol^{\by}$ are exactly those in $\bV_{\by}$.
To obtain all other vertices of $\pol$ incident to~$\pol^{\by}$, it suffices to permute each vertex in $\bV_{\by}$ according to $\Sym{\bc}$.
Every other facet-vertex incidence is obtained by acting uniformly on the facet and vertices according to the permutation action of $\Sym{d}$.
\end{remark}

\section{Lineup polytopes}
\label{sec:lineups}

The vertices of the fermionic and bosonic spectral polytopes $\FerPoly[r]$ and $\BosPoly[r]$ presented in Section~\ref{ssec:challenge} are occupation vectors.
These occupation vectors are weighted sums of points in the point configurations $\FerPC$ and $\BosPC$ of Definition~\ref{def:FBPC}.
In this section, we determine the normal cones of the occupation vectors and clarify the role of $\bw$ in $\FerPoly[r][\bw,N,d]$ and $\BosPoly[r][\bw,N,d]$, see Corollary~\ref{cor:dep_w} and Proposition~\ref{prop:construction3}.
The starting point lies in the fact that $\FerPoly[][\bw,N,d]$ and $\BosPoly[][\bw,N,d]$ are sweep polytopes in the sense of \cite{PadrolPhilippe2021}.
Here we gather useful facts from \cite{PadrolPhilippe2021} adapted to the present setting, which amounts to consider prefixes of sweeps, that we call \emph{lineups}.
In this section, we let~$\bV=\{\bv_1,\dots,\bv_{m}\}\subset \RR^d$ be an ordered configuration of $m$ distinct points and $r\in[m]$. 

\subsection{Lineups and their normal cones}
Let $\by\in \RR^d$. 
If the linear functional $\np{\by,\cdot}$ is injective on $\bV$, we call the vector~$\by$ \defn{generic} with respect to $\bV$.
The linear functional $\np{\by,\cdot}$ lines up the points in $\bV$ on the line spanned by $\by$ through its projection; that is, the linear functional $\np{\by,\cdot}$ totally orders the elements of $\bV$ according to the values given by $\np{\by,\cdot}$, say from the maximum to minimum. 

\begin{definition}[Lineups of a point configuration, $\lu$]
Let $\by$ be generic with respect to $\bV$.
The $r$-tuple $\ell_{\bV,r}(\by)\in\bV^r$ given by the largest $r$ elements in decreasing order with respect to the values given by $\np{\by,\cdot}$ is a \defn{lineup of length~$r$} or \emph{$r$-lineup} of $\bV$.
The set of $r$-lineups of $\bV$ is denoted $\lu$.  
\end{definition}

When $\bV$ and~$r$ are clear from context, we simply write $\ell(\by)$.
Equivalently, an $r$-tuple $\ell=(\bv_{i_1},\dots,\bv_{i_r})\in\bV^r$ is an $r$-lineup if there exists a linear functional $\by\in\RR^d$ such that 
\begin{equation}\label{def:lineup}
	\np{\by,\bv_{i_1}}>\np{\by,\bv_{i_2}}>\cdots>\np{\by,\bv_{i_r}}>\np{\by,\bb},\text{ for all }\mathbf{b}\in\bV\setminus\{\bv_{i_1},\dots,\bv_{i_r}\}.
\end{equation}
When~$\by$ is not generic, ties may occur and instead of an $r$-tuple we obtain an ordered collection $(J_1, \ldots, J_k)$ of non-empty disjoints subsets of $\bV$ with $\sum_{i=1}^{k-1} |J_i| < r \leq \sum_{i=1}^{k} |J_i|$ and such that for all $1\leq i \leq k$ we have:
\begin{equation}\label{def:ranking}
\begin{cases}
\np{\by,\bv}=\np{\by,\bw}& \text{ for all  $\bv,\bw\in J_i$,}\\
\np{\by,\bv}>\np{\by,\bw}& \text{ for all $\bv\in J_i$ and $\bw\in \bV\setminus\bigcup_{k=1}^i J_k$.}
\end{cases}
\end{equation}
We call such an ordered collection of subsets an \defn{$r$-ranking} of $\bV$. 
We use the same notation $\ell_{\bV,r}(\by)$ to denote the $r$-ranking of~$\bV$ associated to~$\by$. 
When every part of the ranking is a singleton, which occurs if $\by$ is generic, we recover an $r$-lineup.

\begin{example}\label{ex:hypersimplex36}
Consider the hypersimplex $\pol[H](3,6)$ and its vertices $\bV=\FerPC[3][6]$, which are identified with $3$-subsets in $\FerPoset[3][6]$.
The vector $\by=(3,1,1,1,0,0)$ induces the following ranking on $\bV$. 
If $r\in\{18,19,20\}$, it is also an $r$-ranking.
\begin{center}
\begin{tabular}{c|c}
Subsets	& $\np{(3,1,1,1,0,0),\cdot}$ \\\hline
123, 124, 134                & 5 \\
125, 126, 135, 136, 145, 146 & 4 \\
156, 234                     & 3 \\
235, 236, 245, 246, 345, 346 & 2 \\
256, 356, 456                & 1 
\end{tabular}
\end{center}

\end{example}

Two different vectors can define the same $r$-ranking, so for each $r$-ranking $\ell$ we define the following relatively-open polyhedral cone
\[
\pol[K]_\bV^\circ(\ell):=\{\by\in\RR^n: \ell(\by)=\ell\}.
\]
Equation~\eqref{def:ranking} gives an $H$-representation defining this cone. We denote its closure by $\pol[K]_{\bV}(\ell)$, which is obtained by replacing the strict inequalities by the corresponding non-strict inequalities. 

\begin{remark}
For $\{i,j\}\in\binom{[m]}{2}$, we define the hyperplanes $\mathsf{H}_{ij}:=\{\bz\in\RR^d~:~\np{\bz,\bv_i}=\np{\bz,\bv_j}\}$. 
A vector $\by$ is generic with respect to $\bV$ if and only if it is contained in $\RR^d\setminus \bigcup \mathsf{H}_{ij}$, where the union is over all pairs in~$\binom{[m]}{2}$.
The union of $\pol[K]_\bV^\circ(\ell)$ for   all lineups contains the set of generic vectors with respect to~$\bV$. 
The complement of the relatively-open cones of $r$-lineups is contained in the union of the hyperplanes:
\begin{equation}\label{eq:complement}
\RR^d~\mathbin{\Big\backslash}\bigcup_{\ell\in\lu} \pol[K]_\bV^\circ(\ell) \subset \bigcup_{\{i,j\}\in\binom{[m]}{2}} \mathsf{H}_{ij}.
\end{equation}
\end{remark}

Let $\Rg:= \{\pol[K]_\bV(\ell)\ : \ \ell \text{ is an $r$-ranking of }\bV\}$ be the collection of cones given by all the $r$-rankings. 
Its maximal cones are given by the $r$-lineups, and its rays are given by the coarsest non-trivial $r$-rankings. 
By non-trivial we mean different from $([m])$, and coarsest here refers to the refinement order: we say that $(I_1, \ldots , I_k)$ coarsens $(J_1, \ldots , J_l)$ if each $I_i$ is the union of some consecutive $J_j$'s for $i=1,\dots,k-1$ and the last set in the partition $I_k$ contains the remaining blocks of $J$, but it can be larger.

\begin{example}[{Example~\ref{ex:hypersimplex36} continued}]
The $r$-ranking induced by $\by=(3,1,1,1,0,0)$ cannot be non-trivially coarsened for $r > 10$ but for $r\leq 9$ it is coarsened by the ranking induced by $\bef_1=(1,0,0,0,0,0)$.
For example, for $r=9$ we get
\[
\begin{split}
\ell_{9}(3,1,1,1,0,0)&=(\{123,124,134\},\{125,126,135,136,145,146\}),\\
\ell_{9}(1,0,0,0,0,0)&=(\{123,124,134,125,126,135,136,145,146,156\}).
\end{split}
\]
According to the definition, the ranking $\ell_{9}(1,0,0,0,0,0)$ coarsens the ranking $\ell_{9}(3,1,1,1,0,0)$.
\end{example}

\begin{mainthm}\label{thm:main}
The collection of cones $\Rg$ is the normal fan of a polytope.
\end{mainthm}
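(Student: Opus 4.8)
The plan is to prove the statement by exhibiting an explicit polytope whose normal fan equals $\Rg$, namely the \emph{lineup polytope} $\Lineup[r][\bV]$ obtained as a convex hull of occupation vectors. Fix once and for all a strictly decreasing positive weight vector $\bw=(w_1,\dots,w_r)$ with $w_1>w_2>\cdots>w_r>0$, and for an $r$-lineup $\ell=(\bv_{i_1},\dots,\bv_{i_r})\in\lu$ set $\OccVec_{\bw}(\ell):=\sum_{j=1}^{r}w_j\bv_{i_j}\in\RR^d$. Define
\[
\Lineup[r][\bV]:=\conv\left\{\OccVec_{\bw}(\ell)~:~\ell\in\lu\right\}.
\]
This is a polytope since $\lu$ is finite. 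The whole proof then amounts to showing $\fan[N](\Lineup[r][\bV])=\Rg$; since the left-hand side is the normal fan of a polytope by construction, this is exactly what is required, and it establishes a posteriori that the collection $\Rg$ is a genuine fan.

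The engine of the argument is the rearrangement inequality \cite[Theorem~368]{hardy_inequalities_1988}. Let $\by\in\RR^d$ satisfy \eqref{def:lineup} for some $\ell=(\bv_{i_1},\dots,\bv_{i_r})$, and write $x_i:=\np{\by,\bv_i}$. Maximizing $\np{\by,\OccVec_{\bw}(\ell')}=\sum_{j=1}^{r}w_j\,\np{\by,\bv_{i'_j}}$ over all ordered $r$-tuples of distinct points of $\bV$ decouples into two optimizations: because the weights are positive, the optimal underlying $r$-subset is the one collecting the $r$ largest values $x_i$; because the weights are strictly decreasing, the optimal order within that subset pairs the largest weight with the largest value, and so on down the line. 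The unique optimizer is therefore $\OccVec_{\bw}(\ell_{\bV,r}(\by))$, matching precisely the $r$-lineup of \eqref{def:lineup}. Consequently each $\OccVec_{\bw}(\ell)$ with $\ell\in\lu$ is a vertex of $\Lineup[r][\bV]$, the map $\ell\mapsto\OccVec_{\bw}(\ell)$ is a bijection from $\lu$ onto the vertex set (injective because distinct lineups are maximized on disjoint open cones, surjective because every generic $\by$ yields some lineup), and the relatively open normal cone of the vertex $\OccVec_{\bw}(\ell)$ is exactly $\pol[K]_\bV^\circ(\ell)=\{\by:\ell_{\bV,r}(\by)=\ell\}$.

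To obtain the full normal fan I would pass to arbitrary (possibly non-generic) functionals $\by$. For such $\by$ the same weighted rearrangement argument, now allowing ties, shows that the face $\Lineup[r][\bV]^{\by}$ is the convex hull of the occupation vectors $\OccVec_{\bw}(\ell)$ of exactly those $r$-lineups $\ell$ that refine, within each tied block, the $r$-ranking $\ell_{\bV,r}(\by)$ of \eqref{def:ranking}. Hence the faces of $\Lineup[r][\bV]$ are in inclusion-reversing bijection with the $r$-rankings of $\bV$, the face associated with a ranking $\ell$ having closed normal cone $\{\bz:\ell_{\bV,r}(\bz)\text{ coarsens }\ell\}$. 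The $H$-representation of this latter cone, given by the non-strict version of \eqref{def:ranking}, is by definition $\pol[K]_\bV(\ell)$, so $\fan[N](\Lineup[r][\bV])=\{\pol[K]_\bV(\ell):\ell\text{ an $r$-ranking of }\bV\}=\Rg$, as desired.

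The main obstacle is the careful bookkeeping of the tie case in the last step: one must verify that the closed normal cone of a vertex equals the closure $\pol[K]_\bV(\ell)$ of the open cone $\pol[K]_\bV^\circ(\ell)$ (which uses that both are full-dimensional, so the closed cone is the closure of its interior), and then match the proper faces of these vertex cones with the ranking cones, checking that turning a defining inequality of \eqref{def:ranking} into an equality corresponds exactly to the merging of consecutive blocks. The positivity and strict monotonicity of $\bw$ are both essential: positivity forces the optimal $r$-subset to be the top one, while strict decrease forces uniqueness of the optimal internal order. Any such $\bw$ yields the same fan $\Rg$, which is the combinatorial reason behind the independence from $\bw$ recorded in Corollary~\ref{cor:dep_w}. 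As an alternative to this explicit construction one could invoke the sweep-polytope machinery of \cite{PadrolPhilippe2021}: the case $r=m$ realizes $\Rg[m][\bV]$ as the normal fan of a sweep polytope and $\Rg[r][\bV]$ as its forgetful coarsening; but the occupation-vector polytope has the advantage of proving Theorem~\ref{thm:main} and foreshadowing the $V$-representation of Theorem~\ref{thm:vrepr_lineup} in one stroke.
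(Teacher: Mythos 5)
Your proposal is correct and takes essentially the same route as the paper: Theorem~\ref{thm:main} is proved there exactly by the weighted-vector construction, i.e., by defining the lineup polytope $\Lineup[r,\bw]$ as the convex hull of the occupation vectors $\OccVec_{\bw}(\ell)$ for a strictly decreasing positive weight $\bw$ and showing (Theorem~\ref{thm:vrepr_lineup}) that its vertices are precisely these points with normal cones $\pol[K]_{\bV}(\ell)$, so that its normal fan is $\Rg$. The only difference is how the non-generic case is finished: you analyze the faces $\Lineup[r,\bw]^{\by}$ for arbitrary $\by$ directly via the rearrangement inequality and a face--ranking bijection, whereas the paper deduces equality of the closed vertex cones with the closed lineup cones from a covering claim (two families of full-dimensional cones with disjoint interiors covering $\RR^d$, one refining the other, must coincide); both finishes are valid.
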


\noindent 
In the next section, we present four explicit constructions proving the above theorem:
\begin{compactenum}
\item via weighted vectors,
\item via projections of partial permutahedra,
\item via Minkowski sums of $k$-set polytopes, and
\item via fiber polytopes.
\end{compactenum}

\subsection{Constructions}
Recall from Equation~\eqref{eq:pauli_simplex} that the Pauli simplex $\Pauli_{r-1}$ is the set of points in~$\RR^r$ with non-negative decreasing coordinates summing to~$1$.
Throughout this section, we assume that $\bw$ is a weight vector in $\Pauli_{r-1}^{\circ}$ with \emph{strictly} decreasing entries.

\subsubsection{Weighted vectors}

\begin{definition}[Weighted vector of a subconfiguration, $\OccVec_{\bw}(\ell)$]
Let $\bw\in\Pauli_{r-1}$, and $\ell=(\bv_{i_1},\dots,\bv_{i_r})\subset\bV$ be an ordered subconfiguration of $\bV$ of cardinality $r$.
We associate a vector $\OccVec_{\bw}(\ell)$ to $\ell$ with respect to~$\bw$ as follows
\[
\OccVec_{\bw}(\ell) := \sum_{k=1}^r w_k\bv_{i_k}.
\]
\end{definition}

\begin{remark}
Occupation vectors defined in Section~\ref{ssec:expansion} are of the form $\OccVec_{\bw}(\ell)$ with $\bV=\FerPC$, $r=\binom{d}{N}$ and the ordering of $\ell$ results from the expansion of $h$ to $H$.
\end{remark}

\begin{definition}[Lineup polytope of a point configuration $\bV$]\label{def:lineup_pol}
The \defn{r-lineup polytope} $\Lineup[r,\bw]$ of~$\bV$ with respect to $\bw$ is the convex hull of the weighted vectors
\[
\Lineup[r,\bw]:=\conv \left(\{\OccVec_{\bw}(\ell) ~:~\ell\text{ an ordered subconfiguration of $\bV$ of cardinality $r$}\}\right).
\]
\end{definition}

As the weight vector $\bw$ is often understood from the context, we omit it and write $\Lineup$ for the lineup polytope of $\bV$.
This omission in the notation and the name lineup polytope is motivated by the following theorem.

\begin{mainthm}[$V$-representation of lineup polytopes]
\label{thm:vrepr_lineup}
Let $\bV=\{\bv_1,\dots,\bv_{m}\}\subset\RR^d$, $r\in[m]$, and~$\bw$ be a strictly decreasing sequence of $r$ positive real numbers summing to one.
\begin{enumerate}
\item The set of vertices of $\Lineup[r,\bw]$ is $\{\OccVec_{\bw}(\ell)~:~\ell\in\lu\}$. 
\item The normal cone of the vertex $\OccVec_{\bw}(\ell)$ is $\pol[K]_{\bV}(\ell)$.
\item The normal fan of~$\Lineup[r,\bw]$ is $\Rg$ (which is independent of $\bw$).
\end{enumerate}
\end{mainthm}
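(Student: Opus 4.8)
The plan is to identify $\Lineup[r,\bw]$ through its support function and then read off all three assertions. Fix $\by\in\RR^d$ generic with respect to $\bV$, so that the values $\np{\by,\bv_1},\dots,\np{\by,\bv_m}$ are pairwise distinct. First I would establish the following computation: among all ordered $r$-subconfigurations $\ell'=(\bv_{j_1},\dots,\bv_{j_r})$, the quantity $\np{\by,\OccVec_{\bw}(\ell')}=\sum_{k=1}^r w_k\np{\by,\bv_{j_k}}$ is maximized exactly when $\ell'$ is the lineup $\ell_{\bV,r}(\by)$. This is where the hypotheses on $\bw$ do all the work: since every $w_k>0$, the optimal $r$-subset must consist of the $r$ points with the largest values of $\np{\by,\cdot}$ (replacing any chosen point by an unchosen one with a larger value dominates the sorted value vector componentwise, and so increases the positively-weighted sum); and since $w_1>\dots>w_r$, the rearrangement inequality forces the chosen points to be ordered so that the largest value is paired with $w_1$, the next with $w_2$, and so on. Genericity makes this maximizer unique. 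By continuity of the support function, on the closed cone $\pol[K]_{\bV}(\ell)$ the function $\mathrm{supp}_{\Lineup[r,\bw]}$ agrees with $\by\mapsto\np{\by,\OccVec_{\bw}(\ell)}$.

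For assertion (1), each $\OccVec_{\bw}(\ell)$ with $\ell\in\lu$ is a vertex: the relatively open cone $\pol[K]_{\bV}^\circ(\ell)$ is nonempty (this is the definition of $\ell$ being a lineup), and for any $\by$ in it the computation above exhibits $\OccVec_{\bw}(\ell)$ as the unique maximizer over the finite generating set, hence as a vertex with $\pol[K]_{\bV}^\circ(\ell)\subseteq\ncone_{\Lineup[r,\bw]}(\OccVec_{\bw}(\ell))^\circ$. Conversely, any vertex $v$ is the unique maximizer of some linear functional; since the arrangement $\bigcup\mathsf{H}_{ij}$ has measure zero and the open normal cone of $v$ is nonempty and open, I may perturb this functional into a generic $\by$ still maximized uniquely at $v$, and then the computation gives $v=\OccVec_{\bw}(\ell_{\bV,r}(\by))$ with $\ell_{\bV,r}(\by)\in\lu$. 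Thus the vertex set is exactly $\{\OccVec_{\bw}(\ell):\ell\in\lu\}$.

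Assertion (2) is the heart of the argument. The inclusion $\pol[K]_{\bV}(\ell)\subseteq\ncone_{\Lineup[r,\bw]}(\OccVec_{\bw}(\ell))$ follows by taking closures in the previous paragraph. For the reverse inclusion I would argue directly, with no connectivity considerations: let $\by\in\ncone_{\Lineup[r,\bw]}(\OccVec_{\bw}(\ell))$, write $\ell=(\bv_{i_1},\dots,\bv_{i_r})$, and test the inequality $\np{\by,\OccVec_{\bw}(\ell)}\geq\np{\by,\OccVec_{\bw}(\ell')}$ against two families of competitors. Taking $\ell'$ to be $\ell$ with the entries in positions $k$ and $k+1$ transposed yields $(w_k-w_{k+1})(\np{\by,\bv_{i_k}}-\np{\by,\bv_{i_{k+1}}})\geq0$, hence $\np{\by,\bv_{i_k}}\geq\np{\by,\bv_{i_{k+1}}}$ because $w_k>w_{k+1}$; and taking $\ell'$ to be $\ell$ with $\bv_{i_r}$ replaced by any $\bb\in\bV\setminus\{\bv_{i_1},\dots,\bv_{i_r}\}$ yields $w_r(\np{\by,\bv_{i_r}}-\np{\by,\bb})\geq0$, hence $\np{\by,\bv_{i_r}}\geq\np{\by,\bb}$ because $w_r>0$. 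These are precisely the non-strict versions of the inequalities of Equation~\eqref{def:lineup} that cut out $\pol[K]_{\bV}(\ell)$, so $\by\in\pol[K]_{\bV}(\ell)$. This gives $\ncone_{\Lineup[r,\bw]}(\OccVec_{\bw}(\ell))=\pol[K]_{\bV}(\ell)$, and in passing it proves that $\ell\mapsto\OccVec_{\bw}(\ell)$ is injective on $\lu$, since two lineups with a common occupation vector would force $\pol[K]_{\bV}(\ell)=\pol[K]_{\bV}(\ell')$ and hence $\ell=\ell'$.

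Finally, assertion (3) follows formally: the normal fan of a polytope is the collection of the normal cones of its vertices together with all their faces. By (1) and (2) the maximal cones are exactly the $\pol[K]_{\bV}(\ell)$ for $\ell\in\lu$, and turning subsets of the defining strict inequalities above into equalities merges positions into tie-blocks and produces exactly the cones $\pol[K]_{\bV}(\ell')$ attached to the coarser $r$-rankings $\ell'$, which are the faces; this is the face correspondence described around Equation~\eqref{def:ranking}. Hence $\fan[N](\Lineup[r,\bw])=\Rg$, and the right-hand side visibly makes no reference to $\bw$. I expect the only genuinely delicate point to be the reverse inclusion in (2): one must check that the two elementary competitor families (adjacent transpositions and single swaps at position $r$) already reconstruct the entire $H$-representation of $\pol[K]_{\bV}(\ell)$, and that the strictness $w_1>\cdots>w_r$ together with the positivity $w_r>0$ is precisely what converts each tested inequality into one of the defining order relations; relaxing either hypothesis collapses cones and breaks the identification of the normal fan with $\Rg$.
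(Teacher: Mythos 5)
Your proposal is correct. The first half---the support-function computation for generic $\by$, showing that positivity of the $w_k$ forces the maximizing $r$-subset to be the top-$r$ set while strict monotonicity plus the rearrangement inequality forces the decreasing order, and the perturbation argument showing every vertex arises from a generic functional---is essentially the same as the paper's, which likewise proves $\pol[K]_{\bV}^\circ(\ell)\subseteq\ncone^\circ_{\Lineup[r,\bw]}(\OccVec_{\bw}(\ell))$ and then uses the fact that the complement of the open lineup cones lies in the hyperplane arrangement $\bigcup\mathsf{H}_{ij}$ to rule out further vertices. Where you genuinely diverge is in part (2): the paper closes the gap between $\pol[K]_{\bV}(\ell)$ and the normal cone \emph{indirectly}, by invoking a separate claim that two collections of full-dimensional cones with pairwise disjoint interiors, both covering $\RR^d$, one contained componentwise in the other, must coincide; you instead prove the reverse inclusion \emph{directly}, testing the normal-cone inequality against adjacent transpositions (which, using $w_k>w_{k+1}$, recover the chain $\np{\by,\bv_{i_1}}\geq\cdots\geq\np{\by,\bv_{i_r}}$) and last-position replacements (which, using $w_r>0$, recover $\np{\by,\bv_{i_r}}\geq\np{\by,\bb}$ for unchosen $\bb$), thereby reconstructing the exact $H$-description of the closed cone from Equation~\eqref{def:ranking}. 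Your route buys three things: it dispenses with the global covering claim (and its reliance on Equation~\eqref{eq:complement}) for part (2), it makes completely transparent which hypothesis on $\bw$ is responsible for which defining inequality of the cone, and it sidesteps a rough spot in the paper's argument, namely that the intermediate inequality $\np{\by,\ba_i}\geq\np{\by,\bb_i}$ used there is only valid once the competitor tuple is sorted decreasingly, a reduction you handle explicitly by separating subset selection from ordering. The paper's approach, in turn, is shorter once the covering claim is granted and passes to the fan statement (3) with no extra work; your competitor-testing argument has the side benefit of foreshadowing the runner-up comparisons used later in Algorithm~\ref{algo:lineup}.
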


\begin{proof}
Let $\by$ be a generic vector with respect to $\bV$ and let $\ell_1=\ell(\by)=(\ba_i)_{i=1}^r$ be the $r$-lineup induced by $\by$.
Pick a vertex $\bv$ of the face $\pol^{\by}$.
By construction, there exists an $r$-tuple $\ell_2=(\bb_i)_{i=1}^r$ such that $\bv=\OccVec_{\bw}(\ell_2)$.
We claim that $\ell_2=\ell_1$.
Since $\OccVec_{\bw}(\ell_2)\in\pol^{\by}$,  we have
\[
\langle \by,\OccVec_{\bw}(\ell_2)\rangle= \sum_{i=1}^r w_i\langle \by,\bb_i\rangle \geq  \sum_{i=1}^r w_i\langle \by,\ba_i\rangle=\langle \by,\OccVec_{\bw}(\ell_1)\rangle.
\]
Equivalently, 
\begin{equation}\label{eq:silly}
 \sum_{i=1}^r w_i\left( \np{\by,\bb_i\rangle- \langle \by,\ba_i}\right)\geq 0.
\end{equation} 
The definition of $\ell_1$ via Equation \eqref{def:lineup} implies
\begin{equation}\label{eq:silly2}
\langle \by,\ba_i\rangle \geq \langle \by,\bb_i\rangle, \text{ for }i\in[r].
\end{equation} 
Since $\bw$ is positive,  in order to satisfy Equations \eqref{eq:silly} and \eqref{eq:silly2}, we must have $\np{\by,\ba_i}=\np{\by,\bb_i}$ for all $i\in[r]$.
Since $\by$ is generic, we conclude that $\ell_1=\ell_2$ as claimed.
This implies that for any generic~$\by$, the face $\pol^{\by}$ is equal to the vertex $\OccVec_{\bw}(\ell)$, where $\ell$ is the $r$-lineup induced by $\by$. 
This shows that $\{\OccVec_{\bw}(\ell)~:~\ell\in\lu\}\subseteq \{\text{vertices of }	\Lineup[r,\bw]\}$. 

For the reverse inclusion, observe that the above shows that, for all $\ell\in\lu$,
\[
\pol[K]_{\bV}^\circ(\ell) \subseteq \ncone_{\pol}^\circ\left(\OccVec_{\bw}(\ell)\right).
\]
By Equation \eqref{eq:complement}, $\RR^d\setminus \bigcup_{\ell\in\lu}\pol[K]^\circ_{\bV}(\ell)$ is contained in a union of hyperplanes, so we have 
\[
\{\text{vertices of }\Lineup[r,\bw]\} = \{\OccVec_{\bw}(\ell)~:~\ell\in\lu\},
\]
as there is no more room for an open normal cone of another vertex.
We now use the following claim whose proof is left to the reader.

\noindent
\textbf{Claim.} Let $\{\pol[K]_1,\dots,\pol[K]_m\}$ and $\{\pol[K]'_1,\dots,\pol[K]'_m\}$ be two sets of $n$-dimensional cones in $\RR^n$ such that 
\begin{enumerate}
\item $\pol[K]^\circ_i\cap\pol[K]^\circ_j=\pol[K]'^\circ_i\cap\pol[K]'^\circ_j=\varnothing$, for $1\leq i<j\leq m$.
\item $\bigcup \pol[K]_i = \bigcup \pol[K]'_i = \RR^n$.
\item $\pol[K]_i\subseteq \pol[K]'_i$ for all $i$.
\end{enumerate}
Then both collections are equal.

Finally, we apply this claim on the collections $\{\pol[K]_\bV(\ell)~:~\ell\in\lu \}$ and ${\{\ncone_{\pol}(\OccVec_{\bw}(\ell)):\ell\in\lu\}}$ to conclude that $\pol[K]_{\bV}(\ell) = \ncone_{\pol}(\OccVec_{\bw}(\ell))$ for every lineup $\ell\in\lu$. 
Since the normal cones of all faces of $\Lineup$ can be recovered from the normal cones of the vertices, we have shown that $\fan[N]({\Lineup})=\Rg$.
\end{proof}

\begin{corollary}
\label{cor:dep_w}
For two distinct vectors $\bw,\bw'\in\Pauli^\circ_{r-1}$, the polytopes $\Lineup[r,\bw]$ and $\Lineup[r,\bw']$ have the same normal fan.
As a consequence, the combinatorial type of $\Lineup[r,\bw]$ does not depend on $\bw$.
\end{corollary}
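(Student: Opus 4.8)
The plan is to read the statement off directly from the $V$-representation of lineup polytopes established in Theorem~\ref{thm:vrepr_lineup}, so the corollary is essentially a bookkeeping consequence rather than a new argument. First I would check that both weight vectors satisfy the hypotheses of that theorem: by assumption $\bw,\bw'\in\Pauli^\circ_{r-1}$, and $\Pauli^\circ_{r-1}$ consists precisely of the strictly decreasing sequences of $r$ positive reals summing to one. This is exactly the standing assumption under which Theorem~\ref{thm:vrepr_lineup} is stated, so the theorem applies verbatim to each of $\bw$ and $\bw'$.

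Next I would invoke part~(3) of Theorem~\ref{thm:vrepr_lineup}, which identifies the normal fan of $\Lineup[r,\bw]$ with the fan $\Rg$ of $r$-rankings of $\bV$. The crucial point—already flagged in the parenthetical ``(which is independent of $\bw$)'' of that statement—is that $\Rg$ is defined purely in terms of the point configuration $\bV$ and the integer $r$, through the combinatorial data of the $r$-rankings and the $H$-representations recorded in Equation~\eqref{def:ranking}; no weight vector enters its definition. Applying the theorem separately to $\bw$ and to $\bw'$ therefore gives $\fan[N](\Lineup[r,\bw])=\Rg=\fan[N](\Lineup[r,\bw'])$, so the two polytopes share the very same normal fan.

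Finally, to pass from ``same normal fan'' to ``same combinatorial type'', I would appeal to the standard fact that the normal fan of a polytope determines its face lattice: the poset of faces of a polytope is anti-isomorphic to the poset of cones of its normal fan under inclusion (this is the content behind the correspondence $\pol[F]\mapsto\ncone_{\pol}(\pol[F])$ used throughout Section~\ref{sec:nfans}). Since $\Lineup[r,\bw]$ and $\Lineup[r,\bw']$ have identical normal fans, their face lattices are isomorphic, and hence they are combinatorially equivalent.

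I do not expect a genuine obstacle here, since Theorem~\ref{thm:vrepr_lineup} carries all the weight. The only two points requiring a moment of care are to make \emph{explicit} that $\Rg$ carries no dependence on the weights—which is immediate from its definition via $r$-rankings of $\bV$—and to either cite or briefly recall the bijection between the cones of a normal fan and the faces of the polytope, so that equality of normal fans genuinely yields combinatorial equivalence.
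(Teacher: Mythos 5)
Your proposal is correct and matches the paper's (implicit) argument exactly: the corollary is stated in the paper without a separate proof precisely because it follows immediately from part~(3) of Theorem~\ref{thm:vrepr_lineup}, whose identification of the normal fan with $\Rg$ carries no dependence on $\bw$. Spelling out the hypothesis check and the standard fact that equal normal fans yield isomorphic face lattices is exactly the intended reading, so there is nothing to add or correct.
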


\begin{example}\label{ex:redefinition}
The fermionic and bosonic spectral polytopes in Definitions~\ref{def:ferm_specpo} and \ref{def:bos_specpo} are lineup polytopes.
Indeed 
\[
\FerPoly[r][\bw,N,d] = \Lineup[r,\bw][\FerPC],\qquad \BosPoly[r][\bw,N,d] = \Lineup[r,\bw][\BosPC].
\]
\end{example}

\begin{example}[Sweep polytopes]
If we consider lineups whose length coincides with the number of points in the configuration, we recover all its linear orderings, which are known under the name of \emph{sweeps}, see~\cite{PadrolPhilippe2021}. 
In this case, the fan $\Rg[m][\bV]$ is induced by a hyperplane arrangement, studied under the name of \emph{valid order arrangement} or \emph{sweep hyperplane arrangement}~\cite{Edelman2000}\cite{Stan15}\cite{PadrolPhilippe2021}. The associated polytopes are called \emph{sweep polytopes} in~\cite{PadrolPhilippe2021}.
\end{example}

\begin{example}
Consider the sweep polytope of the hypersimplex $\pol[H](N,d)$.
Its normal fan is the hyperplane arrangement given by the hyperplanes
\[
\sum_{i\in S_1} \bx_i = \sum_{i\in S_2} \bx_i ,
\]
for any two subsets $S_1,S_2\in\binom{[d]}{N}$.
The (finitely many) rays (see Convention \ref{convention}) of this arrangement generate the finite list of inequalities defining the polytope $\spec^{\downarrow}(\Density^1_N(\bw)))$, see  \cite[Theorem 2]{klyachko_2006}.
\end{example}

\subsubsection{Projection of partial permutohedra}\label{ssec:proj_permutohedra}
Let $\simplexconfig :=\{\be_1, \ldots , \be_m\} \subset \RR^m$ be the elementary basis of $\RR^m$.
If $r=m$, the possible $m$-lineups consist of all the tuples $\ell_{\sigma}:=(\be_{\sigma(1)} , \ldots , \be_{\sigma(m)})$ for $\sigma \in \Sym{m}$. 
Indeed, by taking $\by \in \RR^m$ such that $y_i = m+1-\sigma^{-1}(i)$ for $i=1, \ldots, m$ we obtain $\ell_{\sigma}=\ell(\by)$. 
The rankings are the ordered set partitions of $[m]$. 
The fan $\fan[R]_m(\simplexconfig)$ consists of the cells of the braid arrangement $\{\bz\in \RR^m~:~\np{\bz, \be_i}=\np{\bz, \be_j} \}$ for $1\leq i < j \leq m$. 
The lineup polytope is a usual permutohedron: $\Lineup[m,\bw][\simplexconfig]=\Perm(\bw)$, and if we choose $\bw = \tfrac{2}{m(m+1)} (m, m-1, \ldots, 2, 1)$, we recover a dilation of the standard permutohedron.
Similarly, for $r\leq m$, we call \emph{partial permutohedron} the lineup polytope of the $(m-1)$-simplex of length $r$ with weight $\bw$: $\Lineup[r,\bw][\simplexconfig]= \Perm(\sum_{i=1}^r w_i\be_i)$.

The $r$-lineups are the tuples $(\be_{\sigma(1)}, \ldots, \be_{\sigma(r)})$ for all injective functions $\sigma : [r] \rightarrow [m]$. 
The rankings are the ordered collections $(J_1, \ldots, J_k)$ of non-empty disjoints subsets of $[m]$ such that $\sum_{i=1}^{k-1} |J_i| < r \leq \sum_{i=1}^{k} |J_i|$. 
A face of the partial permutohedron corresponding to such a ranking has dimension $\sum_{i=1}^k |J_i| -k$ and is combinatorially isomorphic to the product of permutohedra and partial permutohedron 
$\Lineup[|J_1|][{\simplexconfig[|J_1|]}]\times \ldots \times\Lineup[|J_{k-1}|][{\simplexconfig[|J_{k-1}|]}]\times \Lineup[r'][{\simplexconfig[|J_k|]}],$ 
where $r':= r - \sum_{i=1}^{k-1} |J_{i}|$.

\begin{proposition}
Let $\bV=\{\bv_1,\dots,\bv_{m}\}\subset\RR^d$, and let $M_\bV$ be the matrix whose columns are given by $\bV$.
The lineup polytope $\Lineup$ is a projection of a partial permutahedron, that is
\[
\Lineup = M_\bV\cdot\Lineup[r,\bw][\simplexconfig].
\]
Conversely, every affine image of a partial permutohedron is a lineup polytope, up to translation.
\end{proposition}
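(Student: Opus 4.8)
The plan is to reduce everything to the single fact that a linear map commutes with the formation of convex hulls, combined with the defining $V$-representation of lineup polytopes in Definition~\ref{def:lineup_pol}. First I would view $M_\bV$ as the linear map $\RR^m\to\RR^d$ determined by $\be_i\mapsto\bv_i$, so that $M_\bV\be_i=\bv_i$ for $i\in[m]$. By Definition~\ref{def:lineup_pol} applied to the configuration $\simplexconfig=\{\be_1,\dots,\be_m\}$, the partial permutohedron $\Lineup[r,\bw][\simplexconfig]$ is the convex hull of the weighted vectors $\OccVec_{\bw}(\ell')$ as $\ell'=(\be_{i_1},\dots,\be_{i_r})$ ranges over the ordered $r$-element subconfigurations of $\simplexconfig$.

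The forward direction is then a one-line computation once linearity is invoked. For such an $\ell'$ one has $M_\bV\,\OccVec_{\bw}(\ell')=\sum_{k=1}^r w_k\,M_\bV\be_{i_k}=\sum_{k=1}^r w_k\,\bv_{i_k}=\OccVec_{\bw}\big((\bv_{i_1},\dots,\bv_{i_r})\big)$. Because the points of $\bV$ are distinct, the assignment $\ell'\mapsto(\bv_{i_1},\dots,\bv_{i_r})$ is a bijection from the ordered $r$-subconfigurations of $\simplexconfig$ onto those of $\bV$; hence $M_\bV$ carries the generating set of $\Lineup[r,\bw][\simplexconfig]$ onto the generating set of $\Lineup$ from Definition~\ref{def:lineup_pol}. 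Since $M_\bV(\conv S)=\conv(M_\bV S)$ for any finite set $S$, I conclude $M_\bV\cdot\Lineup[r,\bw][\simplexconfig]=\Lineup$, which is the first assertion.

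For the converse I would write an arbitrary affine map as $\bx\mapsto M\bx+\bc$ with $M$ a $d\times m$ matrix and $\bc\in\RR^d$, and let $\bV$ be the ordered configuration formed by the columns of $M$, so that $M=M_\bV$. The forward direction already gives $M\cdot\Lineup[r,\bw][\simplexconfig]=\Lineup[r,\bw][\bV]$, whence the affine image is $\Lineup[r,\bw][\bV]+\bc$, a translate of a lineup polytope. To upgrade this to an honest lineup polytope I would use that the weights sum to one: for every ordered subconfiguration, $\OccVec_{\bw}$ of the $\bc$-translated tuple equals $\OccVec_{\bw}$ of the original plus $(\sum_k w_k)\bc=\bc$, so that $\Lineup[r,\bw][\bV]+\bc=\Lineup[r,\bw][\bV+\bc]$. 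Thus the affine image is exactly the lineup polytope of the configuration $\bV+\bc$, and in particular a lineup polytope up to translation.

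The computations being routine, the only place requiring genuine care --- the step I would flag as the main obstacle --- is the bookkeeping in the converse. One must ensure that the configuration read off from the columns of $M$ is a legitimate point configuration of \emph{distinct} points; if $M$ has repeated columns the image is the lineup polytope of a multiset rather than of a set, which forces either a reduction to the distinct columns or an appeal to the ``up to translation'' phrasing in order to stay within the hypotheses of Definition~\ref{def:lineup_pol}. The complementary subtlety, namely that the sum-to-one condition on $\bw$ is precisely what lets the translation $\bc$ be absorbed into the configuration, is what makes the ``up to translation'' clause removable in the case of distinct columns.
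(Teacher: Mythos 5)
Your proof is correct and takes essentially the same route as the paper, whose entire proof reads ``This follows from Theorem~\ref{thm:vrepr_lineup}, and the fact that linear transformations commute with convex hulls''; you invoke Definition~\ref{def:lineup_pol} instead of the theorem, but since that definition already presents the generating set as the weighted vectors of \emph{all} ordered $r$-subconfigurations, your computation is the same argument in expanded form. Your two refinements---absorbing the translation $\bc$ into the configuration via $\sum_{k} w_k = 1$, and flagging that a matrix with repeated columns does not produce a configuration of distinct points (a degenerate case the paper's one-line proof silently ignores, and where the image can indeed fail to be the lineup polytope of the de-duplicated configuration with the same $r$ and $\bw$)---are both correct and, if anything, more careful than the original.
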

\begin{proof}
This follows from Theorem~\ref{thm:vrepr_lineup}, and the fact that linear transformations commute with convex hulls.
\end{proof}

\subsubsection{Minkowski sum of $k$-set polytopes}
Given $k\in[m]$, the \defn{$k$-set polytope} $\ksetpol$ of the point configuration $\bV$ is the convex hull of the $\sum_{i\in I} \bv_i$ for all subsets $I\subseteq[m]$ with~$k$ elements, see~\cite{EdelsbrunnerValtrWelzl1997}\cite{AndrzejakWelzl2003}\cite{MartinezSandovalPadrol2020}. 
The vertices of the $k$-set polytope correspond to the \defn{$k$-sets} of $\bV$: the subsets of cardinality $k$ of $\bV$ that can be separated from the $n-k$ other elements by an affine hyperplane. 
It is a classical result that the $(n-1)$-dimensional standard permutohedron can be written as the Minkowski sum of the hypersimplices~$\pol[H](k,n)$ with $k\in[n]$, see e.g.~\cite{postnikov_permutahedra_2009}. 
Similarly, the $r$-lineup polytope of $\bV$ with respect to $\bw$ can be described as a weighted sum of the $k$-set polytopes of $\bV$ with $k\in[r]$, see Figure \ref{fig:ksetpolytopes}.

\begin{figure}[!ht]
\input{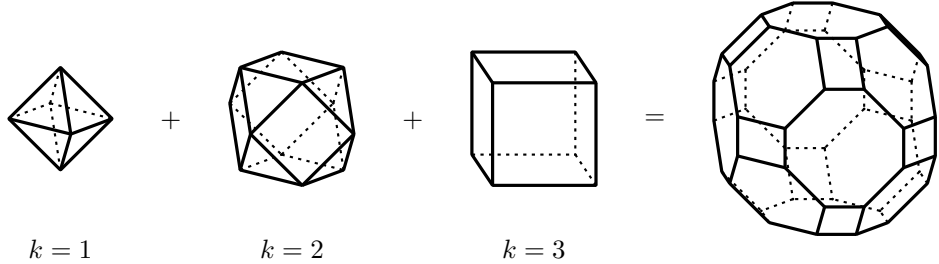}
\caption{The lineup polytope of the vertices of the octahedron for $r=3$, expressed as a Minkowski sum of a $1$-set, $2$-set and $3$-set polytope.}
\label{fig:ksetpolytopes}
\end{figure}

\begin{proposition}\label{prop:construction3}
Lineup polytopes are Minkowski sums of $k$-set polytopes given by
\[
\Lineup = \sum_{k=1}^r (w_k-w_{k+1}) \ksetpol,
\]
with the convention that $w_{r+1}=0$. 
\end{proposition}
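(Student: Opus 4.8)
The plan is to prove the identity by comparing support functions, exploiting that the support function of a Minkowski sum is the sum of the support functions, and that it suffices to check the equality on the dense set of generic directions. Writing $\pol[Q]:=\sum_{k=1}^r (w_k-w_{k+1})\ksetpol$, I would show that $\mathrm{supp}_{\Lineup}(\by)=\mathrm{supp}_{\pol[Q]}(\by)$ for every $\by$ generic with respect to $\bV$. Since the generic vectors form the complement of the finite union of hyperplanes appearing in Equation~\eqref{eq:complement}, and support functions are continuous, this equality on a dense set forces $\Lineup=\pol[Q]$. Note first that, as $\bw$ is strictly decreasing and $w_{r+1}=0$, each coefficient $w_k-w_{k+1}$ is strictly positive, so the right-hand side is a genuine Minkowski sum of dilated $k$-set polytopes.

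First I would fix a generic $\by$ and let $a_1>a_2>\cdots>a_r$ be the $r$ largest values among $\{\np{\by,\bv}:\bv\in\bV\}$, attained at the points forming the $r$-lineup $\ell(\by)=(\bv_{i_1},\dots,\bv_{i_r})$, so that $a_k=\np{\by,\bv_{i_k}}$. By the $V$-representation in Theorem~\ref{thm:vrepr_lineup}, the face of $\Lineup$ maximized by $\by$ is the single vertex $\OccVec_{\bw}(\ell(\by))$, whence
\[
\mathrm{supp}_{\Lineup}(\by)=\Big\langle \by,\sum_{k=1}^r w_k\bv_{i_k}\Big\rangle=\sum_{k=1}^r w_k a_k.
\]
On the other side, additivity of support functions gives $\mathrm{supp}_{\pol[Q]}(\by)=\sum_{k=1}^r (w_k-w_{k+1})\,\mathrm{supp}_{\ksetpol}(\by)$. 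Since $\ksetpol$ is the convex hull of the vectors $\sum_{i\in I}\bv_i$ over $I\in\binom{[m]}{k}$, its support function is $\mathrm{supp}_{\ksetpol}(\by)=\max_{I\in\binom{[m]}{k}}\sum_{i\in I}\np{\by,\bv_i}$, and for $k\le r$ this maximum is attained at the set of the $k$ points of largest $\np{\by,\cdot}$-value, giving $\mathrm{supp}_{\ksetpol}(\by)=\sum_{j=1}^k a_j$.

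The computation then finishes by telescoping: exchanging the order of summation,
\[
\mathrm{supp}_{\pol[Q]}(\by)=\sum_{k=1}^r(w_k-w_{k+1})\sum_{j=1}^k a_j=\sum_{j=1}^r a_j\sum_{k=j}^r(w_k-w_{k+1})=\sum_{j=1}^r a_j\,(w_j-w_{r+1})=\sum_{j=1}^r w_j a_j,
\]
using $w_{r+1}=0$, which matches $\mathrm{supp}_{\Lineup}(\by)$ and completes the comparison on generic directions.

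The computation is routine once this framework is in place, so I expect the only genuinely delicate point to be the structural observation underlying the telescoping: on a generic $\by$ the optimal $k$-subsets of $\bV$ are exactly the $k$ top points and form a \emph{nested} chain consistent with the lineup order $\ell(\by)$, so that all $r$ support functions $\mathrm{supp}_{\ksetpol}$ are expressed through the same ordered quantities $a_1,\dots,a_r$. This relies on $\by$ separating the values $\np{\by,\bv}$. The passage from generic directions to all of $\RR^d$ by continuity (equivalently, density of generic vectors via Equation~\eqref{eq:complement}) is the other point meriting a word of justification, but it is standard.
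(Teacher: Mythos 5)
Your proof is correct, and it takes a genuinely different (dual) route from the paper's. The paper's proof is a one-line primal computation: the Abel-summation identity $\OccVec_{\bw}(\ell)=\sum_{t=1}^r w_t \bv_{i_t}=\sum_{k=1}^{r}(w_k-w_{k+1})\sum_{t=1}^{k}\bv_{i_t}$, applied to every generator of $\Lineup$, exhibits each weighted vector as a sum of points, one in each dilated $k$-set polytope; this gives $\Lineup\subseteq\sum_{k}(w_k-w_{k+1})\ksetpol$ at once, and the reverse inclusion is implicit in the standard fact that every vertex of a Minkowski sum is the sum of vertices of the summands maximizing a common generic functional, under which the optimal $k$-subsets are the nested top-$k$ sets. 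You instead compare support functions on generic directions and conclude by density and continuity; the same telescoping algebra appears, but in the evaluated form $\sum_{k=1}^r(w_k-w_{k+1})\sum_{j\le k}a_j=\sum_{j=1}^r w_j a_j$ rather than as a vector identity. What your approach buys: it handles both inclusions simultaneously, makes explicit the nestedness argument that the paper leaves unstated, and isolates exactly where genericity (via Equation~\eqref{eq:complement}) enters. What the paper's approach buys: it does not invoke Theorem~\ref{thm:vrepr_lineup} at all (you need it, legitimately and without circularity, to evaluate $\mathrm{supp}_{\Lineup}$), and since the pointwise identity holds for arbitrary tuples $\ell\in\bV^r$ it transfers verbatim to weight vectors with repeated entries, which is what makes the remark following the proposition and Corollary~\ref{cor:degenerate} immediate, whereas your argument as written is tied to $\bw\in\Pauli_{r-1}^{\circ}$ strictly decreasing.
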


\begin{proof}
This follows from the fact that for any $\ell=(\bv_{i_1}, \ldots, \bv_{i_r})\in \bV^r$, we have $\OccVec_{\bw}(\ell)=\sum_{t=1}^r w_t \bv_{i_t} = \sum_{k=1}^r (w_k-w_{k+1})\sum_{t=1}^k \bv_{i_t}$.
\end{proof}

\begin{corollary}[{see e.g.\cite[Proposition~7.12]{ziegler_lectures_1995}}]
	\label{cor:degenerate}
Let $1\leq r<r'$, the polytope $\Lineup$ is a weak Minkowski summand of $\Lineup[r']$, and in particular $\Rg[r']$ refines $\Rg$.
\end{corollary}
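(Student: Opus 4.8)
The plan is to reduce everything to the Minkowski decomposition of Proposition~\ref{prop:construction3}, which writes each lineup polytope as a positive combination of the $k$-set polytopes $\ksetpol$. The key observation is that increasing the length from $r$ to $r'>r$ merely \emph{appends} summands: the polytopes $\Lineup$ and $\Lineup[r']$ are assembled from the same building blocks $\ksetpol[1],\ksetpol[2],\dots$, and the index set $\{1,\dots,r\}$ used for the former is contained in the index set $\{1,\dots,r'\}$ used for the latter. The whole statement will then follow from the elementary behaviour of normal fans under Minkowski sums.

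First I would fix notation for the coefficients. Writing $a_k:=w_k-w_{k+1}$ for the weights defining $\Lineup$ (with $w_{r+1}=0$) and $a_k':=w_k'-w_{k+1}'$ for those defining $\Lineup[r']$, Proposition~\ref{prop:construction3} yields
\[
\Lineup=\sum_{k=1}^{r}a_k\,\ksetpol\qquad\text{and}\qquad \Lineup[r']=\sum_{k=1}^{r'}a_k'\,\ksetpol,
\]
with all $a_k,a_k'>0$ since the weight vectors have strictly decreasing positive entries. To exhibit the weak Minkowski summand relation I would set $\lambda:=\max_{1\le k\le r}a_k/a_k'>0$ and define $b_k:=\lambda a_k'-a_k$ for $1\le k\le r$ and $b_k:=\lambda a_k'$ for $r<k\le r'$. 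The choice of $\lambda$ guarantees $b_k\ge 0$ for every $k$, so $R:=\sum_{k=1}^{r'}b_k\,\ksetpol$ is a genuine polytope (a Minkowski sum of nonnegatively scaled polytopes). Comparing coefficients coordinate by coordinate then gives
\[
\lambda\,\Lineup[r']=\sum_{k=1}^{r}a_k\,\ksetpol+\sum_{k=1}^{r'}b_k\,\ksetpol=\Lineup+R,
\]
so $\Lineup$ is a Minkowski summand of the dilate $\lambda\,\Lineup[r']$, that is, a weak Minkowski summand of $\Lineup[r']$.

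Finally I would deduce the claim on normal fans. Dilation leaves the normal fan unchanged, and the normal fan of a Minkowski sum refines the normal fan of each summand (see e.g.\ \cite[Proposition~7.12]{ziegler_lectures_1995}); applied to $\lambda\,\Lineup[r']=\Lineup+R$ this shows that $\fan[N](\Lineup[r'])$ refines $\fan[N](\Lineup)$. By Theorem~\ref{thm:vrepr_lineup}(3) these two normal fans are precisely $\Rg[r']$ and $\Rg$, so $\Rg[r']$ refines $\Rg$. I do not anticipate a genuine obstacle: once Proposition~\ref{prop:construction3} is available the argument is purely formal, and the only point demanding care is that $\Lineup$ and $\Lineup[r']$ are a priori defined by \emph{different} weight vectors, which is exactly the discrepancy the scaling factor $\lambda$ absorbs (consistently with the weight-independence of the combinatorial type established in Corollary~\ref{cor:dep_w}).
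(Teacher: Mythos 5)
Your proof is correct and follows exactly the route the paper intends: Corollary~\ref{cor:degenerate} is stated as an immediate consequence of the Minkowski decomposition in Proposition~\ref{prop:construction3} together with the standard fact that the normal fan of a Minkowski sum refines that of each summand (\cite[Proposition~7.12]{ziegler_lectures_1995}), which is precisely what you spelled out. Your explicit scaling factor $\lambda$ to reconcile the two different weight vectors is the only detail the paper leaves implicit, and you handled it correctly.
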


With this representation, we see that if $\bw\in \Pauli_{r-1}$ is non-strictly decreasing, the convex hull of weighted vectors gives a polytope that is a Minkowski summand (up to dilation) of a lineup polytope obtained with a strictly decreasing weight in $\Pauli_{r-1}^{\circ}$. 
The normal fan of this polytope coarsens $\Rg$.

\subsubsection{Monotone path polytopes}
Since permutohedra and sweep polytopes are monotone path polytopes, we can describe lineup polytopes as monotone path polytopes.
Let $\pol[P]$ be a $d$-polytope, and $\phi:\pol[P] \rightarrow \RR$ a linear functional given by $\phi(\bx)=\np{\bu, \bx}$ for a certain vector~$\bu\in \RR^d$. 
We denote $a_{\min}:=\min_{\bx\in \pol[P]}\phi(\bx)$ and $a_{\max}:=\max_{\bx\in \pol[P]}\phi(\bx)$. 
A $\phi$-\emph{monotone path} is a sequence $(\bx_1, \ldots, \bx_s)$ of vertices of $\pol[P]$ such that $a_{\min}=\phi(\bx_1)<\phi(\bx_2)<\ldots <\phi(\bx_s)=a_{\max}$.  Similarly, a $\phi$-\emph{cellular string} is a sequence of faces $\pol[F]_1,\dots,\pol[F]_k$ of~$\pol$ of dimension at least~$1$ such that $a_{\min}\in \phi(\pol[F]_1)$, $a_{\max}\in \phi(\pol[F]_k)$, and every pair of adjacent faces $\pol[F]_i,\pol[F]_{i+1}$ meet at a vertex $\bv$ such that $\phi(\bx)\leq \phi(\bv)\leq \phi(\by)$ for each $\bx\in \pol[F]_i$ and $\by\in \pol[F]_{i+1}$. Cellular strings are ordered by refinement, and the finest cellular strings are the edges of monotone paths.
One way to obtain $\phi$-cellular strings is to consider some vector $\bw$ orthogonal to~$\bu$ and consider the sequence of faces of $\pol[P]$ that are extreme in the direction $\bw+\alpha\bu$ as $\alpha$ ranges from $-\infty$ to~$\infty$. 
If $\bw$ is generic, this gives rise to a $\phi$-monotone path. 
The $\phi$-cellular strings that can be obtained this way are called \emph{coherent}.

The \defn{monotone path polytope} $\mathrm{Fiber}(\pol[P], \phi)$ is a polytope of dimension $d-1$ in $\RR^d$ whose faces are in bijection with coherent $\phi$-cellular strings of $\pol[P]$. 
In particular, its vertices are indexed by the $\phi$-monotone paths. 
As the notation indicates, it is indeed a fiber polytope.
Fiber polytopes are defined for any projection between two polytopes and their face posets are isomorphic to the posets of coherent subdivisions induced by the projection, see \cite{BS92}\cite[Section 9]{ziegler_lectures_1995}.
In the monotone path case $\mathrm{Fiber}(\pol[P],\phi)$ can be described as the Minkowski sum:
\begin{equation}\label{eq:monotonepathpolytope}
\mathrm{Fiber}(\pol[P], \phi)=\frac{1}{a_{max}-a_{min}} \sum_{j=1}^t \frac{a_j-a_{j-1}}{2}\left(\phi^{-1}(a_{j-1})+\phi^{-1}(a_j)\right),
\end{equation}
where $a_{min}=a_0 < a_1 < \dots < a_t=a_{max}$ are the ordered values taken by $\phi$ on the vertices of $\pol[P]$.

To make the connexion between monotone path polytopes and lineup polytopes we denote by $\bar{\bV}=\{\bar{\bv}_1,\dots,\bar{\bv}_m\}\subset\RR^{d+1}$ the \emph{homogenization} of $\bV$, consisting of the vectors $\bar{\bv}_i=(\bv_i,1)$, and define $\pol[Z]_r(\bar{\bV})$ the truncated zonotope associated to $\bV$ as the truncated Minkowski sum of segments:
\[\pol[Z]_r(\bar{\bV}) = \left( \sum_{i=1}^m [\mathbf{0}, \bar{\bv}_i] \right) \cap \left\{ \bx\in \RR^{d+1}: x_{d+1}\leq r\right\}.\]
Let $h:\pol[Z]_r(\bV)\to\RR$ denote the projection onto the last coordinate.

\begin{proposition}
Let $\bw \in \Pauli_r^{\circ}$.
The lineup polytope $\Lineup$ and the monotone path polytope $\mathrm{Fiber}(\pol[Z]_r(\bar{\bV}),h)$ have the same normal fan. 
In particular, if $\bw=\tfrac{2}{r(r+1)} (r, r-1, \ldots, 1)$, then
\[\Lineup \times \{1\} = \tfrac{2}{r+1} \mathrm{Fiber}(\pol[Z]_r(\bar{\bV}),h) + \tfrac{1}{r(r+1)} \ksetpol[r][\bV]\times \left\{\tfrac{1}{r+1}\right\}.\]
\end{proposition}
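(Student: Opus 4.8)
The plan is to treat the two claims separately: the equality of normal fans I would obtain conceptually, by matching coherent $h$-monotone paths on the truncated zonotope with lineups, while the explicit Minkowski identity I would obtain by feeding the fibres of $h$ into formula~\eqref{eq:monotonepathpolytope}.

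For the normal fans, the point is that the homogenizing coordinate converts \emph{climbing in $h$} into \emph{selecting vectors of $\bV$}. Since each $\bar{\bv}_i=(\bv_i,1)$ has last coordinate $1$, every edge of the zonotope $\sum_i[\mathbf{0},\bar{\bv}_i]$ is a translate of some $[\mathbf{0},\bar{\bv}_i]$ and raises $h$ by exactly $1$; the truncation $\{x_{d+1}\le r\}$ therefore cuts every $h$-monotone path after precisely $r$ steps. A coherent $h$-monotone path is induced by a direction $\bw_0+\alpha\be_{d+1}$ with $\bw_0\perp\be_{d+1}$, and its edges are traversed in the decreasing order of $\np{\bw_0,\bar{\bv}_i}=\np{\proj{[d]}(\bw_0),\bv_i}$; the first $r$ of these are the $r$ largest values, that is, the $r$-lineup of $\bV$ induced by $\proj{[d]}(\bw_0)$ in the sense of Equation~\eqref{def:lineup}. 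This yields a bijection between coherent $h$-monotone paths of $\pol[Z]_r(\bar{\bV})$ and $r$-lineups of $\bV$ under which the set of inducing directions is exactly the cone $\pol[K]_{\bV}(\ell)$, and coherent cellular strings correspond to $r$-rankings. Moreover, every $r$-lineup is by definition realized by some direction, so no cone of $\Rg$ is missed. Hence $\fan[N](\mathrm{Fiber}(\pol[Z]_r(\bar{\bV}),h))=\Rg$, which by Theorem~\ref{thm:vrepr_lineup} equals $\fan[N](\Lineup)$; by Corollary~\ref{cor:dep_w} this holds for every $\bw\in\Pauli_{r-1}^{\circ}$.

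For the explicit identity I would first identify the fibres of $h$. Writing $M_{\bV}$ for the matrix with columns $\bv_1,\dots,\bv_m$, the cross-section of the zonotope at height $j$ consists of the points $\sum_i t_i\bv_i$ with $t_i\in[0,1]$ and $\sum_i t_i=j$, which is $M_{\bV}(\pol[H](j,m))$; since the vertices of $\pol[H](j,m)$ are the indicator vectors of the $j$-subsets, this image is $\ksetpol[j][\bV]$. As truncation does not affect slices of height $j\le r$, we get $h^{-1}(j)=\ksetpol[j][\bV]\times\{j\}$, and in particular the vertex-values of $h$ on $\pol[Z]_r(\bar{\bV})$ are $0<1<\dots<r$. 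Substituting $a_j=j$ into~\eqref{eq:monotonepathpolytope} makes every coefficient equal to $\tfrac{1}{2r}$, and telescoping (each $\ksetpol[k][\bV]$ with $k<r$ occurring twice and $\ksetpol[r][\bV]$ once) gives
\[
\proj{[d]}\!\left(\mathrm{Fiber}(\pol[Z]_r(\bar{\bV}),h)\right)=\frac{1}{r}\sum_{k=1}^{r-1}\ksetpol[k][\bV]+\frac{1}{2r}\ksetpol[r][\bV],
\]
with constant last coordinate $\tfrac{1}{2r}\sum_{j=1}^r(2j-1)=\tfrac{r}{2}$. On the other hand, for the arithmetic weight $\bw=\tfrac{2}{r(r+1)}(r,r-1,\dots,1)$ all differences $w_k-w_{k+1}$ equal $\tfrac{2}{r(r+1)}$ (with $w_{r+1}=0$), so Proposition~\ref{prop:construction3} gives $\Lineup=\tfrac{2}{r(r+1)}\sum_{k=1}^r\ksetpol[k][\bV]$. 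Plugging the fibre identity into the right-hand side of the claim and adding $\tfrac{1}{r(r+1)}\ksetpol[r][\bV]$ recovers this Minkowski sum on the first $d$ coordinates, while the last coordinates give $\tfrac{2}{r+1}\cdot\tfrac{r}{2}+\tfrac{1}{r+1}=1$, matching $\Lineup\times\{1\}$. (Equality of normal fans alone also follows from this, since both $\Lineup$ and $\proj{[d]}(\mathrm{Fiber})$ are positive Minkowski combinations of the same family $\ksetpol[1][\bV],\dots,\ksetpol[r][\bV]$.)

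The main obstacle, I expect, is the careful handling of the truncation together with the ambient dimension $\RR^{d+1}$: one must check that the vertex-heights of $\pol[Z]_r(\bar{\bV})$ are exactly $\{0,1,\dots,r\}$, that the slice at the truncation level $r$ is still the full $r$-set polytope $\ksetpol[r][\bV]$, and that the constant last coordinate $\tfrac{r}{2}$ is tracked correctly through the dilations. The Minkowski multiplicity bookkeeping in the telescoping sum, and the identification of the normal fan of $\mathrm{Fiber}$ (which carries the lineality direction $\be_{d+1}$) with that of $\Lineup$ after projecting out the homogenizing coordinate, are the places where coefficient errors are most likely to arise.
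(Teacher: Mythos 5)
Your proposal is correct, and its second half (the explicit Minkowski identity) coincides with the paper's own proof: the paper likewise identifies the fibers $h^{-1}(k)=\ksetpol[k][\bV]\times\{k\}$, substitutes them into Equation~\eqref{eq:monotonepathpolytope}, telescopes, and compares with Proposition~\ref{prop:construction3} for the arithmetic weight $\bw=\tfrac{2}{r(r+1)}(r,\dots,1)$; your bookkeeping of coefficients and of the constant last coordinate $\tfrac{r}{2}$ matches the paper's computation. Where you genuinely diverge is the normal-fan claim, and the order of the two halves. The paper proves the Minkowski identity first and then reads off the fan equality from it: the displayed computation exhibits $\mathrm{Fiber}(\pol[Z]_r(\bar{\bV}),h)$ as a positive Minkowski combination of $\ksetpol[1][\bV],\dots,\ksetpol[r][\bV]$, and since $\Lineup$ is another such positive combination, both normal fans are the common refinement of the fans of these $k$-set polytopes (this is exactly your parenthetical remark; Corollary~\ref{cor:dep_w} then removes the dependence on $\bw$). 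You instead prove the fan equality directly: homogenization forces every zonotope edge to be a translate of a single generator $\bar{\bv}_i$ raising $h$ by $1$ (this uses that the points of $\bV$ are distinct), the truncation cuts coherent $h$-monotone paths after exactly $r$ steps, and a generic direction $\bw_0$ traverses generators in decreasing order of $\np{\bw_0,\bv_i}$, so coherent paths biject with $r$-lineups and the inducing directions of the path for $\ell$ form precisely $\pol[K]_{\bV}(\ell)$. This costs an appeal to the Billera--Sturmfels fact that the normal cones of a monotone path polytope are the closures of the sets of inducing directions (modulo the lineality $\be_{d+1}$), but it buys two things the paper's route does not give directly: the fan equality is obtained uniformly in $\bw$ without passing through the arithmetic weight, and you actually prove the combinatorial dictionary --- $r$-lineups and $r$-rankings are exactly the coherent monotone paths and coherent cellular strings of $\pol[Z]_r(\bar{\bV})$ --- which the paper only asserts in the sentence following the proposition.
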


\begin{proof}
The description in Equation~\eqref{eq:monotonepathpolytope} of a monotone path polytope as a Minkowski sum of fibers gives
\[
\resizebox{\textwidth}{!}{$
\begin{array}{rl}
\mathrm{Fiber}(\pol[Z]_r(\bar{\bV}),h) &= \tfrac{1}{r} \sum_{k=1}^r \tfrac{1}{2}\left(h^{-1}(k-1) + h^{-1}(k)\right)= \tfrac{1}{r} \sum_{k=1}^r \tfrac{1}{2}\left(\ksetpol[k-1] \times \{k-1\} + \ksetpol \times \{k\} \right)\\[0.5em]
&= \tfrac{1}{r} \Big( \sum_{k=1}^{r-1} \ksetpol\times \{k\} + \tfrac{1}{2} \ksetpol[r]\times\{\tfrac{r}{2}\}\Big)= \tfrac{1}{r} \Big( \sum_{k=1}^r \ksetpol \Big)\times\{\tfrac{r+1}{2}\} - \tfrac{1}{2r}\ksetpol[r]\times\{\tfrac{1}{2}\}.
\end{array}$}
\]

On the other hand, for the weight $\bw=\tfrac{2}{r(r+1)} (r, r-1, \ldots, 1)$, the description of the lineup polytope as a sum of $k$-set polytopes gives
\[\Lineup=\sum_{k=1}^r \tfrac{2}{r(r+1)} \ksetpol.\]
Hence, $\Lineup \times \{1\} = \tfrac{2}{r+1} \mathrm{Fiber}(\pol[Z]_r(\bar{\bV}),h) + \tfrac{1}{r(r+1)} \ksetpol[r][\bV]\times \{\tfrac{1}{r+1}\}$. 
Moreover, as $\ksetpol[r]$ is a Minkowski summand of $\Lineup$, this equality implies that $\Lineup$ and $\mathrm{Fiber}(\pol[Z]_r(\bar{\bV}),h)$ have the same normal fan. 
\end{proof}

In particular, this construction gives the interpretation of $r$-lineups and $r$-rankings of $\bV$ as coherent monotone paths and coherent cellular strings of the truncated zonotope $\pol[Z]_r(\bar{\bV})$, respectively.
This polytope may also be interpreted using non-coherent monotone paths, as done in~\cite{PadrolPhilippe2021} for the case of sweeps ($r=m$), under the name of pseudo-sweeps. 
A \defn{$r$-pseudo-lineup} of~$\bV$ is an $r$-tuple $\ell=(\bv_{i_1},\dots,\bv_{i_r})\in\bV^r$ such that $\{\bv_{i_1},\dots,\bv_{i_k}\}$ is a $k$-set of $\bV$ for all $1\leq k\leq r$. 

\begin{remark}
\label{rem:pseudosweep}
Every $r$-lineup is an $r$-pseudo-lineup, but the converse is not true in general.
In Example \ref{ex:pseudosweep} we show an ordering that is not a lineup, but it is a pseudo-lineup.
We invite the reader to examine Figure~\ref{fig:bosonic_non_coherent} on page \pageref{fig:bosonic_non_coherent} to see how can one change the sweeping line slightly to achieve the desired order.
Pseudo-lineups are called \emph{broken line shellings} in \cite{heaton_dual_2020}.
\end{remark}

\subsection{Back to the challenge}

In Example \ref{ex:redefinition} it was shown that $\FerPoly[r]$ and $\BosPoly[r]$ are lineup polytopes, hence by Theorem~\ref{thm:vrepr_lineup} we can identify their normal fans as ranking fans 
\begin{equation*}
\fan[N](\FerPoly[r])= \FerFan, \text{ and}\quad \fan[N](\BosPoly[r])= \BosFan.
\end{equation*}

Since increasing $r$ further refines the fan we have the following hierarchy, where $\succeq$ denotes refinement of fans (each cone of the first is the union of a collection of cones of the second).

\begin{proposition}\label{prop:hierarchy}
	For fixed parameters $N,d$ we have
	\[
	\FerFan[1]\succeq \FerFan[2]\succeq \FerFan[3]\succeq \cdots \succeq\FerFan[D],
	\]
	and similarly for the bosonic case.
\end{proposition}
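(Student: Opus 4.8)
The plan is to deduce the proposition directly from the Minkowski-summand structure established in Corollary~\ref{cor:degenerate}, after translating the statement about the fans $\FerFan[r]$ into a statement about the ranking fans $\Rg[r][\FerPC]$.

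First I would recall, via Example~\ref{ex:redefinition} and Theorem~\ref{thm:vrepr_lineup}, the identifications $\FerFan[r]=\fan[N](\FerPoly[r])=\Rg[r][\FerPC]$ and $\BosFan[r]=\Rg[r][\BosPC]$, which are exactly the ones recorded in the ``Back to the challenge'' paragraph just preceding the statement. Since by Corollary~\ref{cor:dep_w} the fan $\Rg[r][\bV]$ depends only on the configuration $\bV$ and on the integer $r$, and not on the chosen weight, I am free to pick for each $r$ a strictly decreasing weight $\bw\in\Pauli_{r-1}^{\circ}$, so that the hypotheses of Theorem~\ref{thm:vrepr_lineup} and Corollary~\ref{cor:degenerate} are met.

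The core step is then a single invocation of Corollary~\ref{cor:degenerate}: applied to the configuration $\bV=\FerPC$ with $r$ and $r'=r+1$, it asserts that $\Lineup[r][\FerPC]$ is a weak Minkowski summand of $\Lineup[r+1][\FerPC]$, and in particular that $\Rg[r+1][\FerPC]$ refines $\Rg[r][\FerPC]$. In the notation of the proposition this is precisely $\FerFan[r]\succeq\FerFan[r+1]$. Letting $r$ run from $1$ to $D-1$ (where $D=|\FerPC|=\binom{d}{N}$ is the length of a full lineup) and using transitivity of the refinement order yields the asserted chain up to $\FerFan[D]$. The bosonic chain is obtained verbatim by taking $\bV=\BosPC$, of cardinality $D=\binom{d+N-1}{N}$, instead, since Corollary~\ref{cor:degenerate} holds for an arbitrary point configuration.

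There is essentially no obstacle once Corollary~\ref{cor:degenerate} is in hand; the only bookkeeping is to check that the indices match, namely that $r$ legitimately ranges over $[D]$ for each of the two configurations. Should one prefer to avoid Corollary~\ref{cor:degenerate} and argue combinatorially, the natural route is to show that truncating an $(r+1)$-ranking $\ell_{\FerPC,r+1}(\by)$ to its first $r$ positions recovers $\ell_{\FerPC,r}(\by)$, so that each relatively-open cone $\pol[K]_{\FerPC}^\circ(\ell_{r+1})$ lies inside the corresponding closed $r$-ranking cone $\pol[K]_{\FerPC}(\ell_r)$; the refinement of fans then follows because both fans cover $\RR^d$. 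The only delicate point in that alternative is the bookkeeping for the ``straddling'' tie-class of a non-generic functional---the last block $J_k$ whose partial sums bracket the cutoff---which may split differently at level $r$ and at level $r+1$; this is exactly the technical content that Corollary~\ref{cor:degenerate} packages away.
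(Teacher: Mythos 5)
Your proposal is correct and is essentially the paper's own argument: the paper derives the hierarchy immediately from the identification $\fan[N](\FerPoly[r])=\FerFan$, $\fan[N](\BosPoly[r])=\BosFan$ (via Example~\ref{ex:redefinition} and Theorem~\ref{thm:vrepr_lineup}) together with Corollary~\ref{cor:degenerate}, which states that $\Rg[r'][\bV]$ refines $\Rg[r][\bV]$ for $r<r'$, exactly as you do. Your closing remark about the alternative truncation-of-rankings argument is a fair description of what Corollary~\ref{cor:degenerate} (through the Minkowski-sum decomposition of Proposition~\ref{prop:construction3}) packages away, but it is not needed for the proof.
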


In particular, this means that as we increase $r$, we gain new inequalities while keeping the (normals of the) old ones.
We can be explicit about the inequality induced by each ray generator.

\begin{proposition}\label{prop:rhs}
	Let $\bw\in\Delta^{\circ}_{r-1}$. The facet defining inequality induced on $\FerPoly[r]$ by a  ray generator $\by\in\RR^d$ of $\FerFan$ is
	\[
	\np{\by,\bx}\leq \np{s_r(\by),\bw}\quad \text{for all } \bx\in\RR^d,
	\]
	where $s_r(\by)$ is the vector consisting of the $r$ largest $N$-sums of entries of $\by$ ordered decreasingly and the inner product takes place in $\RR^{r}$.
\end{proposition}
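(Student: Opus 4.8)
The plan is to reduce the proposition to a single computation of the support function of $\FerPoly[r]$ at the ray generator $\by$, since the assertion that the resulting inequality is facet-defining is already supplied by the machinery of Section~\ref{sec:V_to_H}. Indeed, by Theorem~\ref{thm:vrepr_lineup} the normal fan of $\FerPoly[r]=\Lineup[r,\bw][\FerPC]$ is the ranking fan $\FerFan=\Rg[r][\FerPC]$, and (following Convention~\ref{convention}) an essential ray of a polytope's normal fan is exactly the normal cone of a facet. Hence for a generator $\by$ of such a ray the valid inequality $\np{\by,\bx}\le\mathrm{supp}_{\FerPoly[r]}(\by)$ is facet-defining by construction, and it remains only to identify its right-hand side $\mathrm{supp}_{\FerPoly[r]}(\by)$ with $\np{s_r(\by),\bw}$.

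First I would observe that the $N$-sums of the entries of $\by$ are precisely the values $\{\np{\by,\bv}:\bv\in\FerPC\}$: each $\bv\in\FerPC$ equals $\bchi(S)$ for some $S\in\binom{[d]}{N}$, so $\np{\by,\bv}=\sum_{i\in S}y_i$. Thus $s_r(\by)$ records the $r$ largest values of the linear functional $\np{\by,\cdot}$ on the configuration $\FerPC$, listed in decreasing order. Since $\FerPoly[r]$ is by Definition~\ref{def:lineup_pol} the convex hull of the occupation vectors $\OccVec_{\bw}(\ell)$ over ordered $r$-subconfigurations $\ell=(\bb_1,\dots,\bb_r)$ of $\FerPC$, its support function in direction $\by$ is the maximum over such $\ell$ of
\[
\np{\by,\OccVec_{\bw}(\ell)}=\sum_{k=1}^r w_k\np{\by,\bb_k}.
\]

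The core of the argument is the maximization of $\sum_{k=1}^r w_k\np{\by,\bb_k}$ over all ordered $r$-subconfigurations. Because $\bw\in\Pauli_{r-1}^{\circ}$ is positive with strictly decreasing entries, the rearrangement inequality \cite[Theorem~368]{hardy_inequalities_1988} forces an optimal $\ell$ to be built from the $r$ points of $\FerPC$ on which $\np{\by,\cdot}$ is largest, arranged in decreasing order of their $\by$-value; neither a different choice of the underlying $r$ points nor a different ordering can increase the weighted sum. For such an $\ell$ the values $\np{\by,\bb_1}\ge\cdots\ge\np{\by,\bb_r}$ are exactly the entries of $s_r(\by)$, so the maximum equals $\np{s_r(\by),\bw}$, as claimed.

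I do not expect a serious obstacle here; the one point requiring care is that a ray generator $\by$ is necessarily non-generic with respect to $\FerPC$, so ties arise among the $N$-sums and the maximizing face is a facet rather than a vertex. This causes no difficulty, since the rearrangement argument yields the optimal \emph{value} irrespective of how equal $\by$-values are ordered, and $\np{s_r(\by),\bw}$ is well-defined as a function of the multiset of $r$ largest $N$-sums. The facet-defining conclusion itself requires no additional work beyond the identification of $\FerFan$ as the normal fan and the essential-ray/facet correspondence of Section~\ref{sec:nfans}.
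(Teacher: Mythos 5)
Your proposal is correct and follows essentially the same route as the paper's own proof: identify the facet inequality induced by a normal ray as $\np{\by,\bx}\leq \mathrm{supp}_{\FerPoly[r]}(\by)$ via the machinery of Section~\ref{sec:V_to_H} and Theorem~\ref{thm:vrepr_lineup}, then compute the support function by the rearrangement inequality \cite[Theorem~368]{hardy_inequalities_1988} to obtain $\np{s_r(\by),\bw}$. Your explicit remark about non-genericity of $\by$ and ties among the $N$-sums is a point the paper leaves implicit, but it does not change the argument.
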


\begin{proof}
	Recall from Section \ref{sec:nfans} that the facet inequality on $\pol$ induced from an normal ray is given by $\np{\by,\bx}\leq \mathrm{supp}_{\pol[P]}(\by)$.
	We are in the case $\bV=\FerPC$ and $\pol[P]=\FerPoly[r]$,
	\[
	\mathrm{supp}_{\pol[P]}(\by)=\max_{\ell\in\lu}\np{\by,\OccVec_{\bw}(\ell)}=\max_{\substack{S_1,\dots,S_r\subset[d]\\|S_i|=N}}\sum_{i=1}^r w_i\np{\by,\bchi(S_i)},
	\]
	but $\np{\by,\bchi(S)}$ is the sum of the entries of $\by$ in the coordinates indexed by~$S$.
	By the rearrangement inequality~\cite[Theorem~368]{hardy_inequalities_1988}, since the entries of $\bw$ are ordered decreasingly, the maximum on the right-hand side is attained when the partial sums are also ordered decreasingly.
\end{proof}

We apply Proposition~\ref{prop:rhs} to obtain an explicit $H$-representation in Equation \eqref{eq:rhs_example} on page~\pageref{eq:rhs_example}.
Since the point configurations $\FerPC$ and $\BosPC$ are $\Sym{d}$-invariant, so are the polytopes $\FerPoly[r]$ and $\BosPoly[r]$.
We call a \defn{fundamental lineup} a lineup induced by a fundamental linear functional, and denote
\[
\begin{split}
\non     &:=\{\OccVec_{\bw}(\ell)~:~\ell\in\lu[\FerPC]\text{ and } \ell=\ell(\by)\text{ for some } \by\in\Phi_d\},\\
\non[b]  &:=\{\OccVec_{\bw}(\ell)~:~\ell\in\lu[\BosPC]\text{ and } \ell=\ell(\by)\text{ for some } \by\in\Phi_d\},
\end{split}
\]
the set of fundamental occupation vectors.
In Proposition~\ref{prop:non_fundamental} we prove that vectors in $\non$ and $\non$ are indeed fundamental.

\begin{example}\label{ex:4-10}
The set 
$\non[f][3][4][10]$ consists of only two vectors corresponding to the two possible lineups:
\[
\resizebox{\textwidth}{!}{$
\begin{array}{r@{\hspace{1pt}}l@{\hspace{0.25cm}}r@{\hspace{1pt}}l}
\ell_1&=(1,2,3,4),(1,2,3,5),(1,2,3,6),& \OccVec_{(w_1,w_2,w_3)}(\ell_1)&=(1,1,1,w_1,w_2,w_3,0,0,0,0),\\
\ell_2&=(1,2,3,4),(1,2,3,5),(1,2,4,5),& \OccVec_{(w_1,w_2,w_3)}(\ell_2)&=(1,1,w_1+w_2,w_1+w_3,w_2+w_3,0,0,0,0,0).
\end{array}$}
\]
\end{example}	

We end by describing some fundamental normal rays.
By definition, the fundamental fan is a subdivision of $\Phi_d$, so it always contains the rays spanned by $\bef_1,\dots,\bef_{d-1}$ (recall Convention~\ref{convention}).
We now determine which elements in the fundamental basis are rays in  $\FerFan$ and $\BosFan$, for some fixed parameters $N,d$.
In the fermionic case, these rays are called \emph{Grassmannian inequalities} in \cite{altunbulak_pauli_2008}.

\begin{proposition}\label{prop:basic_cleanup_2}
	Let $r\geq1$, $N\geq r-1$, $d\geq r+N-1$.
	Among the rays spanned by $\bef_1,\dots,\bef_{d-1}$ only the rays spanned by $\bef_1,\bef_{N},\bef_{d-1}$ are in $\FerFan$, and only the ray spanned by $\bef_{d-1}$ is in $\BosFan$.
\end{proposition}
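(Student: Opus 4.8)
The plan is to test, for each $k\in[d-1]$, whether the face $\FerPoly[r]^{\bef_k}$ is a facet. Indeed, by Theorem~\ref{thm:vrepr_lineup} the fan $\FerFan$ is the (essential) normal fan of $\FerPoly[r]$, and since $k\le d-1$ the vector $\bef_k$ is not proportional to $\bef_d$, so it lies in the relative interior of the normal cone of $\FerPoly[r]^{\bef_k}$; hence $\bef_k$ spans a ray of $\FerFan$ if and only if $\dim\FerPoly[r]^{\bef_k}=d-2$. I would compute this dimension with Theorem~\ref{thm:dimension}, writing $\FerPoly[r]=\Perm(\non)$ (the generators are fundamental by Proposition~\ref{prop:non_fundamental}) and taking $\by=\bef_k$. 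The composition labelling the minimal face of $\Phi_d$ containing $\bef_k$ is $\bc=(k,d-k)$, giving two blocks $B_1=\{1,\dots,k\}$ and $B_2=\{k+1,\dots,d\}$. The crucial simplification is that the first summand in Equation~\eqref{eq:dimension} vanishes: every occupation vector has coordinate-sum $N$, and the first coordinate of $\Proj{\bef_k}$ is exactly $\np{\bef_k,\cdot}$, which is constant ($=\operatorname{supp}_{\FerPoly[r]}(\bef_k)$) on the maximizing set $\bV_{\bef_k}$; hence $\Proj{\bef_k}(\bV_{\bef_k})$ is a single point. Therefore $\dim\FerPoly[r]^{\bef_k}=\sum_{i\in\{1,2\}\setminus\operatorname{Fix}(\bef_k)}(c_i-1)$, and since $d-2=(k-1)+(d-k-1)$, the face is a facet if and only if every block of size $\ge 2$ is non-fixed.

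The next step is to translate non-fixedness into combinatorics. The maximizers $\bV_{\bef_k}$ are the occupation vectors of the lineups refining the ranking induced by $\bef_k$, whose level sets are indexed by $|S\cap[k]|$. A block is fixed exactly when every maximizing occupation vector is constant on it: concretely $B_1$ is fixed iff every subset appearing in a maximizing lineup contains $[k]$ (so each coordinate $\le k$ equals $\sum_i w_i=1$), and $B_2$ is fixed iff every such subset is contained in $[k]$ (so each coordinate $>k$ is $0$). Whether this occurs is governed by comparing $r$ with the size of the top level set: if the top level already has at least $r$ elements, the whole lineup stays inside it and the complementary block is forced constant.

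The case analysis then runs as follows. For $1\le k<N$ the top level is $L_k=\{S:[k]\subseteq S\}$, of size $\binom{d-k}{N-k}\ge r$ (using $d\ge N+r-1$ and $N-k\ge1$), so $B_1$ is fixed and for $k\ge2$ the face is not a facet; the endpoint $k=1$ is a facet because $B_1$ is a singleton and $B_2$ is non-fixed. For $N<k\le d-1$ the symmetric count $\binom{k}{N}\ge r$ (using $N\ge r-1$) forces every maximizing subset into $[k]$, so $B_2$ is fixed and only the endpoint $k=d-1$ is a facet. The exceptional value is $k=N$: here the top level $L_N=\{[N]\}$ is a \emph{singleton}, so the maximizing lineups must descend to $L_{N-1}$, whose members neither all contain $[N]$ nor are all contained in $[N]$; both blocks are then non-fixed and $\dim\FerPoly[r]^{\bef_N}=(N-1)+(d-N-1)=d-2$. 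This yields exactly $\{\bef_1,\bef_N,\bef_{d-1}\}$.

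For the bosonic fan the same reduction and dimension computation apply verbatim, with $\FerPC$ replaced by $\BosPC$ and the top level set at $\bef_k$ now being $\multiset{[k]}{N}$, of size $\binom{k+N-1}{N}$. For $2\le k\le d-2$ this size is at least $N+1\ge r$ (using $N\ge r-1$), so every maximizing multiset is supported on $[k]$, the block $B_2$ is fixed, and $\bef_k$ is not a facet, while the non-fixed block $B_1$ makes $\bef_{d-1}$ a facet. I expect the main obstacle to be precisely this non-fixedness bookkeeping — certifying that one block is forced constant while the other genuinely varies — which requires the inequalities $N\ge r-1$ and $d\ge N+r-1$ to guarantee that the relevant level sets are large enough, together with careful treatment of the degenerate endpoint $k=1$ (where the bosonic top level is the singleton $\{N\bef_1\}$, the boson analogue of the fermionic $\{[N]\}$) and the small-$r$ corner cases, which is exactly the step that separates the bosonic conclusion $\{\bef_{d-1}\}$ from the fermionic one.
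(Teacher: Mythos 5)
Your fermionic analysis is correct and is essentially the paper's own argument in a more uniform packaging. The paper also reduces everything to Theorem~\ref{thm:dimension}: for $k<N$ and $k>N$ it identifies the face $\FerPoly[r]^{\bef_k}$ with a smaller spectral polytope ($\FerPoly[r][N-k,d-k]$, resp.\ $\FerPoly[r][N,k]$) and reads off the dimension from Lemma~\ref{lem:homogeneous}, while at $k=N$ it exhibits the breakaway and peloton occupation vectors to show both factors of $\Sym{N}\times\Sym{d-N}$ act non-trivially --- exactly your ``both blocks non-fixed'' step. Your observation that the first summand of Equation~\eqref{eq:dimension} always vanishes (both coordinates of $\Proj{\bef_k}$ are constant on $\bV_{\bef_k}$, by maximality and homogeneity) is a clean way to organize all three cases at once.

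The bosonic half, however, contains a genuine gap, and it is not one that more careful bookkeeping can close. You defer the case $k=1$, where the top level $\multiset{[1]}{N}=\{1^N\}$ is a singleton, expecting this to be ``exactly the step that separates the bosonic conclusion from the fermionic one.'' But run your own machinery there: since the top level has a single element, a maximizing $r$-lineup must descend, and the fundamental one is $(1^N,\,1^{N-1}2,\,\dots,\,1^{N-1}r)$ (in the paper's multiset shorthand), whose occupation vector is $(N-1+w_1,\,w_2,\dots,w_r,0,\dots,0)$. This vector is \emph{not} constant on the block $B_2=\{2,\dots,d\}$ as soon as $r\geq 2$ and $d\geq 3$, so $B_2$ is non-fixed, $B_1$ is a singleton, and your dimension formula gives $\dim \BosPoly[r]^{\bef_1}=d-2$: the face is a facet, i.e.\ $\bef_1$ \emph{does} span a ray of $\BosFan$. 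This is in fact consistent with the rest of the paper --- Appendix~\ref{app:bos} lists the bosonic facet $x_1^{\downarrow}\leq N-1+w_1$ with normal $\bef_1$ at level $r=2$, and in Section~\ref{ssec:ex4} the vector $\bef_1$ of Equation~\eqref{eq:rays_boson_4} generates a normal ray --- but it contradicts the statement you are trying to prove, already for $r=2$, $N=2$, $d=3$ (where $\BosPoly[2]$ is a hexagon with both $\bef_1$ and $\bef_2$ as facet normals, and the hypotheses $N\geq r-1$, $d\geq r+N-1$ hold).

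For what it is worth, the paper's published proof of the bosonic case suffers from the same defect you would have hit: it asserts that the maximum of $\np{\bef_i,\cdot}$ on $\non[b]$ equals $N$ and is attained by lineups lying inside $\multiset{[i]}{N}$, which is impossible for $i=1$ and $r\geq 2$ because no $r$-lineup fits inside a singleton; the correct maximum is $N-1+w_1$, attained as above. So the step you flagged as remaining bookkeeping is precisely where the argument fails, and no treatment of the endpoint $k=1$ can yield the stated bosonic conclusion that only $\bef_{d-1}$ spans a ray; the correct bosonic list (for $r\geq 2$) is $\{\bef_1,\bef_{d-1}\}$, with $\bef_1$ playing for bosons the role that $\bef_N$ plays for fermions.
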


Before embarking into the proof we remark that the conditions on $N,d$ are there to guarantee the existence of the two special lineups used in the proof.
If we ignore the conditions and let $r$ be as large as possible, then all elements of fundamental basis span a normal ray of $\FerFan$.
The restriction on $N$ and $d$ relative ro $r$ are also important in Section \ref{ssec:stability}.

\begin{proof}[Proof of Proposition \ref{prop:basic_cleanup_2}]
	We do first the fermionic case. Let $\pol=\FerPoly[r]$. 
	By Lemma~\ref{lem:homogeneous}, $\dim \pol=d-1$, so its facets have dimension $d-2$.
	We need to consider the subset $\bV_{\bef_i}$ of $\non$ that maximizes $\np{\bef_i,\cdot}$.
	Then we can use Theorem~\ref{thm:dimension} to determine the dimension of the face $\pol^{\bef_i}$.
	There are three cases.
	
	\emph{Case $i<N$}. 
Using Proposition~\ref{prop:rhs}, we see that the maximum value of $\np{\bef_i, \cdot}$ on $\non$ is~$i$ and it is achieved for occupation vectors such that all elements of the corresponding lineup are of the form $\bchi([i]\cup \{i+j\, ,\, j\in S\})$ with $S\in \binom{[d-i]}{N-i}$. 
	Hence the set $\bV_{\bef_i}$ is an affine embedding of $\non[f][r][N-i][d-i]$ and so the face $\pol^{\bef_i}$ is isomorphic to $\FerPoly[r][N-i,d-i]$.
	Since $\dim \FerPoly[r][N-i,d-i]=d-i-1$, the vector $\bef_i$ defines a facet only when $i=1$.
	
	\emph{Case $i=N$}.
	The maximum value of $\np{\bef_N, \cdot}$ on $\non$ is $Nw_1 + (N-1)\sum_{j=2}^r w_j$ and it is achieved for occupation vectors such that the corresponding lineup is of the form $(\bchi([N]), \bchi(S_2), \ldots, \bchi(S_r))$ with $|S_j\cap [N] |=N-1$ for all $2\leq j \leq r$. 
We use Theorem~\ref{thm:dimension} to compute the dimension of $\pol^{\bef_N}$. The Young subgroup corresponding to $\bef_N$ is $\Sym{N}\times\Sym{d-N}$. We see that $\Sym{N}$ acts non-trivially on the
occupation vector of $\bV_{\bef_N}$ corresponding to the lineup $(\bchi([N-1]\cup\{N+i-1\}))_{i=1}^{r}$,
and $\Sym{d-N}$ acts non-trivially on the occupation vector of $\bV_{\bef_N}$ corresponding to the lineup $(\bchi([N+1]\setminus\{N+2-i\}))_{i=1}^{r}$. 
Hence $\pol^{\bef_N}$ has dimension $d-2$ and it is a facet.
	
	\emph{Case $i>N$}.
	The linear functional is maximized by lineups all whose elements are contained in $[i]$.
	Therefore the set $\bV_{\bef_i}$ consists of the occupation vectors whose last $d-i$ coordinates are equal to 0.
	Similarly to the case $i<N$, $\bV_{\bef_i}$ is identified with $\non[f][r][N][d-(d-i)]$ and the face $\pol^{\bef_i}$ is isomorphic to the polytope $\FerPoly[r][N,i]$.
	Since $\dim \FerPoly[r][N,i]=i-1$, the vector $\bef_i$, defines a facet only when $i=d-1$.
	
	The bosonic case is similar and there is no need to distinguish the values of $i$. Let $\pol=\BosPoly[r]$. 
	 The maximum value of $\np{\bef_i, \cdot}$ on $\non[b]$ is $N$ and it is achieved for occupation vectors such that all elements of the corresponding lineup are multisubsets in $\multiset{[i]}{N}$. 
	Hence the set $\bV_{\bef_i}$ is an affine embedding of $\non[b][r][N][i]$ and the face $\pol^{\bef_i}$ is isomorphic to $\BosPoly[r][N,i]$.
	Since $\dim \BosPoly[r][N,i]=i-1$, the vector $\bef_i$ defines a facet only when $i=d-1$.
\end{proof}

\section{Gale orders}
\label{sec:gale}

The point configurations $\FerPC$ and $\BosPC$ have some extra combinatorial structure that we now exploit to describe the vertices of the spectral polytopes $\FerPoly[r]$ and $\BosPoly[r]$.
In this section we review the Gale order, a partial order on these configurations, and use it to study fundamental occupation vectors.

\subsection{Gale order on subsets}

To obtain all vectors in $\non$, we use the \defn{Gale order} on~$\binom{[d]}{N}$ \cite[Section~1.3]{borovik_coxeter_2003}.
Given two $N$-subsets $S=\{s_1,\dots,s_N\}$ and $T=\{t_1,\dots,t_N\}$, ordered from smallest to largest, we say that $ S\leq T$ if and only if  $s_k\leq t_k,$ for all $1\leq k\leq N.$
This order defines the \emph{Gale poset} $\FerPoset$ on $\binom{[d]}{N}$.
We refer the reader to \cite[Chapter~3]{stanley_enumerative_2012} for background on posets.
An \defn{order ideal} of a poset~$P$ is a subposet ${Q \subseteq P}$ that satisfies $x \in Q$ and $y \leq x  \Rightarrow y \in Q$.
The order ideals of a poset~$P$ can be ordered by containment to get the distributive \emph{lattice of order ideals} $\J(P)$.
A \emph{saturated chain} in $\J(P)$ is a sequence $(Q_0,Q_1,\dots,Q_r)$ of ideals of $\J(P)$ such that $|Q_i|=i$, for each $0\leq i\leq r$, and $Q_{i-1}\subset Q_{i}$, for $i\in[r]$.

\begin{lemma}
\label{lem:Fthreshold}
Let $r\geq 1$ and $S_i\in\binom{[d]}{N}$, for $i\in[r]$.
If $(\bchi(S_1),\dots,\bchi(S_r))$ is a fundamental lineup of length $r$ of $\FerPC$, then 
\begin{enumerate}[(i)]
\item for each $k\in[r]$, the set $\{S_1,\dots,S_k\}$ is an order ideal of~$\FerPoset$, and
\item $\{S_1\}\subset \{S_1,S_2\}\subset \cdots \subset\{S_1,\dots,S_r\}$ is a saturated chain of ideals of $\J(\FerPoset)$.
\end{enumerate}
\end{lemma}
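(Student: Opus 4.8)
The plan is to establish a single monotonicity principle relating the Gale order to the values of fundamental linear functionals, and then read off both statements from the defining property of lineups.

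First I would record the key compatibility: for $\by\in\Phi_d$ and two $N$-subsets $S=\{s_1<\dots<s_N\}$ and $T=\{t_1<\dots<t_N\}$ with $S\leq T$ in the Gale order, one has $\np{\by,\bchi(S)}\geq\np{\by,\bchi(T)}$. This follows because $s_k\leq t_k$ for every $k$ and $\by$ has weakly decreasing coordinates, so $y_{s_k}\geq y_{t_k}$; summing over $k$ yields $\sum_{j\in S}y_j\geq\sum_{j\in T}y_j$. Since the lineup is induced by a functional $\by$ that is generic with respect to $\FerPC$, the values $\np{\by,\bchi(\cdot)}$ are pairwise distinct on $N$-subsets, so the inequality above is strict whenever $S\neq T$. (In fact genericity forces $y_1>\dots>y_d$, but the weak form above already suffices.)

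Next, recall that by definition the fundamental lineup $(\bchi(S_1),\dots,\bchi(S_r))$ lists the $r$ subsets of largest $\by$-value, in strictly decreasing order $\np{\by,\bchi(S_1)}>\dots>\np{\by,\bchi(S_r)}$, and these $r$ values exceed $\np{\by,\bchi(S)}$ for every other $N$-subset $S$. To prove (i), I would fix $k\leq r$ and suppose $T\leq S_i$ in the Gale order for some $i\leq k$. If $T=S_i$ there is nothing to show; otherwise the monotonicity principle gives $\np{\by,\bchi(T)}>\np{\by,\bchi(S_i)}$, so $T$ has $\by$-value strictly larger than the $i$-th largest value among all $N$-subsets. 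Hence $T$ is one of the $i-1$ subsets strictly above $S_i$, that is $T\in\{S_1,\dots,S_{i-1}\}\subseteq\{S_1,\dots,S_k\}$. This shows $\{S_1,\dots,S_k\}$ is downward closed, hence an order ideal of $\FerPoset$.

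Finally, statement (ii) is immediate: genericity forces the $S_i$ to be pairwise distinct, so $|\{S_1,\dots,S_k\}|=k$ and each inclusion $\{S_1,\dots,S_{k-1}\}\subset\{S_1,\dots,S_k\}$ adds exactly one element, while part (i) guarantees that every $\{S_1,\dots,S_k\}$ is an order ideal. Together these exhibit the chain as a saturated chain in the lattice $\J(\FerPoset)$. I expect no serious obstacle; the only point requiring genuine care is the interplay in the first step between weak monotonicity (from $\by\in\Phi_d$) and strictness (from genericity), since it is precisely the strict inequality for $S\neq T$ that lets ``position $i$ in the lineup'' translate into membership in $\{S_1,\dots,S_{i-1}\}$.
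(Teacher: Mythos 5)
Your proof is correct, and it is essentially the argument the paper intends: the paper states Lemma~\ref{lem:Fthreshold} without proof, treating the monotonicity of fundamental functionals along the Gale order (a rearrangement-type inequality, the same mechanism invoked in the proofs of Theorem~\ref{thm:dimension} and Proposition~\ref{prop:rhs}) together with the defining top-$r$ property of lineups as immediate.

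One small point deserves care. A fundamental lineup is, by definition, a lineup induced by \emph{some} fundamental functional, and that witness $\by\in\Phi_d$ need not be generic with respect to $\FerPC$ (it only needs to satisfy the strict inequalities of Equation~\eqref{def:lineup}); so your appeal to pairwise distinct values on \emph{all} $N$-subsets is not automatic. This is harmless, for either of two reasons. First, you can perturb the witness inside $\Phi_d$ (add $\epsilon(d,d-1,\dots,1)$ to make the coordinates strictly decreasing, then perturb generically inside the open region cut out by the inequalities of Equation~\eqref{def:lineup}) to obtain a generic fundamental functional inducing the same lineup. Second, weak monotonicity already suffices: if $T\leq S_i$ with $T\neq S_i$, then either $T\notin\{S_1,\dots,S_r\}$, in which case the lineup property gives $\np{\by,\bchi(T)}<\np{\by,\bchi(S_r)}\leq\np{\by,\bchi(S_i)}$, contradicting $\np{\by,\bchi(T)}\geq\np{\by,\bchi(S_i)}$; or $T=S_j$ for some $j$, and the strictly decreasing values along the lineup force $j<i$. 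Either repair makes your argument complete as written.
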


The vectors in $\non$ are indeed fundamental, conveniently fitting the convention of expressing a $\Sym{d}$-invariant polytope using generators in the fundamental chamber.

\begin{proposition}\label{prop:non_fundamental}
Let $d\geq N\geq 1$ and $r\in[\binom{d}{N}]$.
The occupation vectors in $\non$ are contained in $\Phi_d$.
\end{proposition}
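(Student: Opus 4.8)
The plan is to show directly that for any fundamental lineup $\ell=(\bchi(S_1),\dots,\bchi(S_r))$ of $\FerPC$ — that is, $\ell=\ell(\by)$ for some $\by\in\Phi_d$ — the occupation vector $\OccVec_{\bw}(\ell)=\sum_{k=1}^r w_k\bchi(S_k)$ has non-increasing coordinates, which is exactly membership in $\Phi_d$. Writing $o_j$ for its $j$-th coordinate, one has $o_j=\sum_{k:\,j\in S_k}w_k$, so fixing $j<j'$ it suffices to prove $o_j\ge o_{j'}$. Cancelling the indices $k$ for which $S_k$ contains both or neither of $j,j'$, this reduces to the single inequality $\sum_{k\in A}w_k\ge\sum_{k\in B}w_k$, where $A:=\{k:\,j\in S_k,\ j'\notin S_k\}$ and $B:=\{k:\,j'\in S_k,\ j\notin S_k\}$.

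The engine is an exchange argument on the Gale order. First I would record the elementary fact that, for $k\in B$, replacing the index $j'$ by the smaller index $j$ yields $T_k:=(S_k\setminus\{j'\})\cup\{j\}$ with $T_k< S_k$ in the Gale order: a short coordinate check on the sorted sets shows that moving one element down can only decrease (and never increase) each sorted coordinate. By Lemma~\ref{lem:Fthreshold}, the collection $\{S_1,\dots,S_r\}$ is an order ideal of $\FerPoset$, so since $T_k$ lies below $S_k$ in the Gale order it belongs to this ideal; hence $T_k=S_{\phi(k)}$ for a unique index $\phi(k)\in[r]$, and $\phi(k)\in A$ because $T_k$ contains $j$ but not $j'$. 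The map $\phi:B\to A$ is injective, since $S_k=(T_k\setminus\{j\})\cup\{j'\}$ recovers $S_k$ from $T_k$.

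The second ingredient controls the \emph{position} of $\phi(k)$ in the lineup. For $\by\in\Phi_d$ the Gale order is compatible with the ranking: if $S\le T$ in Gale order then $\np{\by,\bchi(S)}\ge\np{\by,\bchi(T)}$, because $\by$ has non-increasing coordinates. Applying this to $T_k\le S_k$ gives $\np{\by,\bchi(S_{\phi(k)})}\ge\np{\by,\bchi(S_k)}$; since the lineup values $\np{\by,\bchi(S_1)}>\dots>\np{\by,\bchi(S_r)}$ are strictly decreasing and $T_k\ne S_k$, this forces $\phi(k)<k$. As $\bw$ is non-increasing and non-negative, $w_{\phi(k)}\ge w_k$, and then injectivity of $\phi$ together with $\phi(B)\subseteq A$ yields $\sum_{k\in B}w_k\le\sum_{k\in B}w_{\phi(k)}=\sum_{k\in\phi(B)}w_k\le\sum_{k\in A}w_k$, the required inequality. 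I expect the main obstacle to be organizing these three facts — the Gale exchange property, the order-ideal structure from Lemma~\ref{lem:Fthreshold}, and the compatibility of the Gale order with fundamental functionals — so that the exchange map provably lands at strictly earlier positions; once that is secured, the weight comparison is a one-line rearrangement estimate.
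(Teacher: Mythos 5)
Your proof is correct and takes essentially the same route as the paper's: both rest on the exchange map $S_k\mapsto (S_k\setminus\{j'\})\cup\{j\}$, which lies below $S_k$ in Gale order and hence, by Lemma~\ref{lem:Fthreshold}, appears in the lineup at a strictly earlier position, producing a decreasing injection from $B$ into $A$ whose weight comparison (using that $\bw$ is non-increasing and non-negative) closes the argument. The only divergence is how you certify that the exchanged set occurs \emph{earlier}: the paper gets $j'<j$ directly from the prefix-ideal property (Lemma~\ref{lem:Fthreshold}(i) applied to each initial segment $\{S_1,\dots,S_j\}$), whereas you use only the full-collection ideal property and then force $\phi(k)<k$ via Gale-monotonicity of the fundamental functional $\by$ together with the strict decrease of lineup values---an equivalent and equally valid justification of the same step.
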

\begin{proof}
Let $(\bchi(S_1),\dots,\bchi(S_r))$ be a fundamental lineup.
The $k$-th coefficient of its occupation vector is $\sum_{S_j\ni k} w_j$.
Let $I_k:=\{j\in[r]~:~k\in S_j\}$. It is enough to show that $|I_k|\geq |I_{k+1}|$ and that the $i$-th element of $I_{k+1}$ is larger or equal than the $i$-th element of $I_k$. 
This is a consequence of Lemma~\ref{lem:Fthreshold}:
For every set $S_j$ involved in the lineup such that $k+1\in S_j$ and $k\notin S_j$, the order ideal property implies that there must exist an index $j'<j$ such that $S_{j'}=S_j\setminus\{k+1\}\cup\{k\}$.
The relation $j\to {j'}$ is a decreasing injection between $I_{k+1}\setminus I_{k}$ and $I_{k}\setminus I_{k+1}$.
\end{proof}

However, as the examples below show, the converse of Lemma~\ref{lem:Fthreshold} is not true: not every order ideal of $\FerPoset$ comes from a lineup, and not every saturated chain of order ideals arises from a lineup (even if all the involved ideals do).
So further analysis is required.

\begin{definition}[Threshold fermionic ideals and coherent saturated chains]
\label{def:ferm_threshold}
Let $r\geq 1$ and $S_i\in\binom{[d]}{N}$, for $i\in[r]$.
If $(\bchi(S_1),\dots,\bchi(S_r))$ is a fundamental lineup of length $r$ of $\FerPC$, then 
\begin{enumerate}[(i)]
\item for each $k\in[r]$, the set $\{S_1,\dots,S_k\}$ is called a \defn{threshold (fermionic) ideal}  of~$\FerPoset$, and
\item the saturated chain $\{S_1\}\subset \{S_1,S_2\}\subset \cdots \subset\{S_1,\dots,S_r\}$ is called \emph{coherent}.
\end{enumerate}
The collection of threshold ideals of $\FerPoset$ ordered by inclusion is a subposet $\Thres{\FerPoset}$ of $\J(\FerPoset)$.
\end{definition}

See Figure~\ref{fig:big_gale} for an illustration of threshold ideals of the Gale poset with $N=5$.

\begin{figure}[!ht]

\begin{tikzpicture}%
[scale=0.400000,
x={(0.8, 0)},
y={(0,4.0)},
edge/.style={thick},
vertex/.style={inner sep=1pt,circle,draw=black,fill=black,thick},
vertex2/.style={inner sep=2pt,circle,draw=black,fill=white,thick},
rotate=-90]

\coordinate (12345) at (0,0);
\coordinate (12346) at (0,1);
\coordinate (12347) at (-1,2);
\coordinate (12348) at (-2,3);
\coordinate (12349) at (-4,4);
\coordinate (123410) at (-6,5);
\coordinate (12356) at (1,2);
\coordinate (12357) at (0,3);
\coordinate (12358) at (-2,4);
\coordinate (12359) at (-4,5);
\coordinate (12367) at (0,4);
\coordinate (12368) at (-2,5);
\coordinate (12456) at (2,3);
\coordinate (12457) at (2,4);
\coordinate (12458) at (0,5);
\coordinate (12467) at (2,5);
\coordinate (13456) at (4,4);
\coordinate (13457) at (4,5);
\coordinate (23456) at (6,5);

\draw[edge] (12345) -- (12346);
\draw[edge] (12346) -- (12347);
\draw[edge] (12346) -- (12356);
\draw[edge] (12347) -- (12348);
\draw[edge] (12347) -- (12357);
\draw[edge] (12348) -- (12349);
\draw[edge] (12348) -- (12358);
\draw[edge] (12349) -- (123410);
\draw[edge] (12349) -- (12359);
\draw[edge] (12356) -- (12357);
\draw[edge] (12356) -- (12456);
\draw[edge] (12357) -- (12358);
\draw[edge] (12357) -- (12367);
\draw[edge] (12357) -- (12457);
\draw[edge] (12358) -- (12359);
\draw[edge] (12358) -- (12368);
\draw[edge] (12358) -- (12458);
\draw[edge] (12367) -- (12368);
\draw[edge] (12367) -- (12467);
\draw[edge] (12456) -- (12457);
\draw[edge] (12456) -- (13456);
\draw[edge] (12457) -- (12458);
\draw[edge] (12457) -- (12467);
\draw[edge] (12457) -- (13457);
\draw[edge] (13456) -- (13457);
\draw[edge] (13456) -- (23456);

\node[vertex2,label=below:$12345$] at (12345) {};
\node[vertex2,label=above:$12346$] at (12346) {};
\node[vertex2,label=above:$12347$] at (12347) {};
\node[vertex2,label=above:$12348$] at (12348) {};
\node[vertex2,label=above:$12349$] at (12349) {};
\node[vertex,label=right:${123410}$] at (123410) {};
\node[vertex2,label=above:$12356$] at (12356) {};
\node[vertex2,label=above:$12357$] at (12357) {};
\node[vertex,label=above:$12358$] at (12358) {};
\node[vertex,label=right:$12359$] at (12359) {};
\node[vertex,label=above:$12367$] at (12367) {};
\node[vertex,label=right:$12368$] at (12368) {};
\node[vertex2,label=above:$12456$] at (12456) {};
\node[vertex,label=above:$12457$] at (12457) {};
\node[vertex,label=right:$12458$] at (12458) {};
\node[vertex,label=right:$12467$] at (12467) {};
\node[vertex2,label=above:$13456$] at (13456) {};
\node[vertex,label=right:$13457$] at (13457) {};
\node[vertex,label=right:$23456$] at (23456) {};

\end{tikzpicture}

\caption{The Hasse diagram (depicted from left to right) of the Gale order for $N=5$ and elements of rank at most $6$. 
Elements in white are contained in some ideal of cardinality $5$, hence are potentially part of a threshold ideal with five $5$-subsets.}
\label{fig:big_gale}
\end{figure}
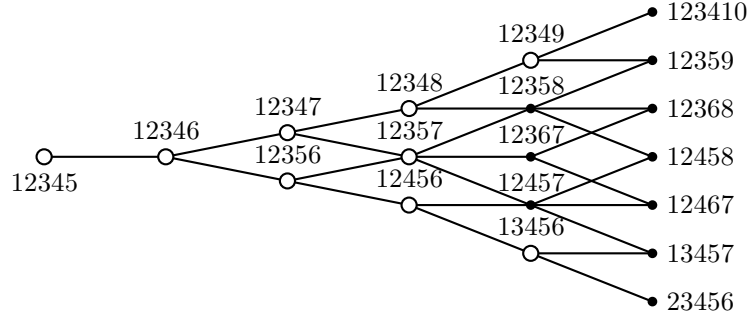
To illustrate the level of complexity involved in obtaining the occupation vectors, we exhibit  their intrinsic geometric nature in the next two examples.
First, we observe that there are order ideals that are not threshold (that is, that the associated points do not form a $k$-set of $\FerPC$).
Second, there are also saturated chains of threshold ideals in $\Thres{\FerPoset}$ that are not coherent.

\begin{example}[Non-threshold fermionic ideal]
\label{ex:notferthres}
Consider $d=9$ and $N=3$, the ideal $Q$ taken from \cite[Theorem~3.1]{klivans_shifted_2008} generated by the elements $178,239$, and $456$ has $36$ elements and is not threshold.
The convex hull ${\conv\{\bchi(S)~:~S\in Q\}}$ intersects the convex hull $\conv\left\{\bchi(S)~:~S\in\binom{[9]}{3}\setminus Q\right\}$ at the point $\frac{1}{3}\bef_9$.
This shows that no (strictly) separating hyperplane for $Q$ exists, hence it can not be a threshold fermionic complex.
\end{example}

\begin{example}[Saturated chain not giving a fermionic lineup]\label{ex:noncoherent_fermionic}
Consider the sequence of elements $(12,13,14,15,23,24,25,34,16,26)\in\FerPoset[2][6]$.
When $N=2$ all order ideals are threshold \cite[Theorem~3.1]{klivans_shifted_2008} so order ideals formed by the initial segments of the sequence are all threshold.
We claim that the sequence
\[
(\bchi(12),\bchi(13),\bchi(14),\bchi(15),\bchi(23),\bchi(24),\bchi(25),\bchi(34),\bchi(16),\bchi(26))
\]
is not a lineup.
For the sake of contradiction, assume that $\by$ is a linear functional inducing~$\ell$.
We must have
\[
\begin{array}{l@{\hspace{1cm}}c@{\hspace{1cm}}r}
\np{\by,\bchi(15)}>	\np{\by,\bchi(24)} & \Longleftrightarrow & y_1-y_2 >y_4-y_5,\\
\np{\by,\bchi(34)}>	\np{\by,\bchi(16)} & \Longleftrightarrow & y_4-y_6 >y_1-y_3,\\
\np{\by,\bchi(26)}> \np{\by,\bchi(35)} & \Longleftrightarrow & y_2-y_3 >y_5-y_6.
\end{array}
\]
Adding the three inequalities on the right-hand side, we arrive at $y_1+y_4-y_3-y_6>y_1+y_4-y_3-y_6$ which is plainly false.
According to Lemma \ref{lem:Fthreshold}, after the first nine elements, there are only two possibilities for the tenth one $(26)$ or $(34)$, but $(26)$ is not possible, leaving $(34)$ as the only possibility to extend it to a threshold ideal.
\end{example}

\begin{remark}
In addition to Example \ref{ex:noncoherent_fermionic} we point to \cite[Example 10]{heaton_dual_2020} for an example of a pseudo-lineup of $\FerPC[2][4]$ that is not a lineup.
The importance of pseudo-lineups is emphasized in \cite[Conjecture 2]{heaton_dual_2020} where Heaton and Samper conjecture that the ordering coming from a pseudo-lineup might help to prove Stanley's pure O-sequence conjecture \cite[Page 59]{stanley_cohen_1976}.
\end{remark}

Example~\ref{ex:noncoherent_fermionic} shows that lineups alone do not cover all saturated chains of threshold fermionic ideals.
However, pseudo-lineups do, see Remark~\ref{rem:pseudosweep}.

\begin{proposition}\label{prop:pseudo_lineups}
	Saturated chains of threshold fermionic ideals are equivalent to fundamental pseudo-lineups of~$\FerPC$.
\end{proposition}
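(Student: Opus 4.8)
The plan is to exhibit an explicit bijection between the saturated chains of threshold fermionic ideals in $\Thres{\FerPoset}$ and the fundamental pseudo-lineups of $\FerPC$, and then to check that the two natural maps between them are mutually inverse. The pivot of the whole argument is the observation that a subset $Q\subseteq\binom{[d]}{N}$ of cardinality $k$ is a threshold ideal if and only if $\{\bchi(S):S\in Q\}$ is a $k$-set of $\FerPC$ that can be separated from its complement by a fundamental functional $\by\in\Phi_d$. Indeed, by Definition~\ref{def:ferm_threshold} a threshold ideal is precisely the support of a fundamental lineup of length $k$, and such a lineup is induced by a generic $\by\in\Phi_d$ for which the elements of $Q$ attain strictly the $k$ largest values of $\np{\by,\cdot}$ on $\FerPC$ --- which is exactly the condition that $Q$ be a $k$-set separated by $\by$.

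First I would treat the passage from pseudo-lineups to chains. Let $(\bchi(S_1),\dots,\bchi(S_r))$ be a fundamental pseudo-lineup; by definition, for each $k\in[r]$ the prefix $\{\bchi(S_1),\dots,\bchi(S_k)\}$ is a $k$-set separated by some fundamental functional. By the pivot observation, $Q_k:=\{S_1,\dots,S_k\}$ is then a threshold ideal, and since $Q_{k-1}\subset Q_k$ with $|Q_k|=k$, the sequence $\varnothing=Q_0\subset Q_1\subset\cdots\subset Q_r$ is a saturated chain in $\Thres{\FerPoset}$. Conversely, given a saturated chain $\varnothing=Q_0\subset Q_1\subset\cdots\subset Q_r$ of threshold ideals, each step $Q_k\setminus Q_{k-1}$ consists of a single element $S_k$, producing an $r$-tuple $(\bchi(S_1),\dots,\bchi(S_r))$ whose length-$k$ prefix has support $Q_k$. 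Since each $Q_k$ is a threshold ideal, the pivot observation shows every prefix is a $k$-set separated by a fundamental functional, so this $r$-tuple is a fundamental pseudo-lineup. The two assignments are visibly inverse to one another, since recording the successively adjoined singletons of a chain and re-collecting the prefixes of a tuple undo each other. I emphasize that the separating functionals attached to distinct prefixes need not coincide, which is precisely why pseudo-lineups, rather than lineups, are the correct notion here (compare Example~\ref{ex:noncoherent_fermionic}).

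The one point requiring genuine care --- and the step I expect to be the main obstacle --- is the ``only if'' half of the pivot observation: producing a \emph{generic} fundamental separating functional, as required to exhibit a fundamental lineup, from a merely separating one. If a fundamental $\by$ separates $Q$ from its complement but happens to lie on a wall of $\Phi_d$ or on some hyperplane $\mathsf{H}_{ij}$, it does not directly induce a lineup. To resolve this, I would note that the functionals strictly separating $Q$ from its complement form an open cone $C$ cut out by the strict inequalities $\np{\by,\bchi(S)-\bchi(T)}>0$ for $S\in Q$ and $T\notin Q$; since $C$ is open and meets $\Phi_d$, which is full-dimensional, it meets the interior $\{y_1>\cdots>y_d\}$ of $\Phi_d$, on which the finitely many hyperplanes $\mathsf{H}_{ij}$ are nowhere dense. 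Hence $C$ contains a generic $\by'\in\Phi_d$, and ordering the elements of $Q$ by decreasing $\np{\by',\cdot}$ yields a genuine fundamental lineup of length $k$ with support $Q$, confirming that $Q$ is a threshold ideal and closing the argument.
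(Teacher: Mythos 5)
Your proof is correct, and it handles the crucial step by a genuinely different mechanism than the paper's. Both arguments rest on the same pivot --- identifying threshold ideals of cardinality $k$ with the $k$-element subsets of $\FerPC$ that a fundamental functional separates strictly from the rest --- and on the same chain/prefix bookkeeping, which is where the two directions of the equivalence become routine. The paper's two-sentence proof then proceeds geometrically: since consecutive ideals in a saturated chain differ by exactly one element, the separating hyperplanes of consecutive prefixes can be chosen distinct and connected by a rotation along their codimension-two intersection; this is the picture that motivates the term pseudo-lineup (cf.\ Remark~\ref{rem:pseudosweep}), but it leaves implicit the passage from a merely separating fundamental functional to a \emph{generic} one, which Definition~\ref{def:ferm_threshold} requires in order for a prefix to be the support of an honest fundamental lineup. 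You isolate exactly this passage as the crux and settle it by a perturbation argument: the strictly separating functionals form an open cone, so if that cone meets the full-dimensional chamber $\Phi_d$ it meets its interior, where the finitely many proper hyperplanes $\mathsf{H}_{ij}$ cannot cover any nonempty open set, and a generic fundamental separator therefore exists. Your route buys completeness --- it closes the genericity gap that the paper glosses over and makes the bijection fully explicit --- whereas the paper's rotation argument buys geometric intuition for how the separating directions of successive prefixes deform into one another, a feature that is illuminating but not actually needed for the equivalence as stated.
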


\begin{proof}
Each threshold ideal in the sequence is obtained from a linear functional that splits the ideal from the other elements.
Two consecutive splittings differ exactly by one element, making it possible to choose \emph{distinct} linear functionals at each step and connect them through a rotation along the codimension-$2$ intersection of the two corresponding hyperplanes.
\end{proof}
We summarize our observations in the following chain of strict inclusions
\[
\left\{\parbox[c][1cm]{3cm}{Coherent fermionic sequences.}\right\}\subsetneq
\left\{\parbox[c][1cm]{4cm}{Saturated chains of threshold fermionic ideals.}\right\}\subsetneq
\left\{\parbox[c][1cm]{4cm}{Saturated chains of fermionic order ideals.}\right\}.
\]

\begin{remark}
Ideals and threshold ideals are studied, for instance, in combinatorial commutative algebra and the topology of finite simplicial complexes.
In these contexts, order ideals of $\FerPoset$ are called \emph{pure shifted complexes} \cite{klivans_threshold_2007,klivans_shifted_2008}.
Here, a lineup of length $r$ gives rise to a pure shifted complex containing $r$ facets, that are totally ordered by a linear functional, therefore yielding a shelling order of the complex.
As such, they are called \emph{threshold complexes} \cite{klivans_threshold_2007}\cite{edelman_simplicial_2013}.
In Section~\ref{ssec:fer_vs_bos}, we show a fundamental difference between the classical threshold complexes and the threshold complexes arising from \emph{bosonic} threshold ideals, discussed in the next section.
\end{remark}

\begin{remark}
The poset of $r$-rankings is isomorphic to the face-lattice of the $r$-lineup polytope. Even if not all $r$-pseudo-rankings are rankings, a particular instance of the \emph{generalized Baues problem}~\cite{BS92,BilleraKapranovSturmfels1994}\cite{Reiner1999} states that the boundary of the $r$-lineup polytope is a strong deformation retract of the order complex of the poset of $r$-pseudo-rankings.
\end{remark}

\subsection{Gale order on multisubsets}
The Gale order $\BosPoset$ on multisubsets in $\multiset{[d]}{N}$ is defined using the same order relation of $\FerPoset$ for subsets.
The analogue version of Lemma~\ref{lem:Fthreshold} for $\BosPC$ is valid, i.e. the underlying sets of lineups of $\BosPC$ form \defn{threshold bosonic ideals}.

\begin{lemma}
\label{lem:Bthreshold}
Let $r\geq 1$ and $S_i\in\multiset{[N]}{d}$, for $i\in[r]$.
If $(\bchi(S_1),\dots,\bchi(S_r))$ is a fundamental lineup of length~$r$ of $\BosPC$, then 
\begin{enumerate}[(i)]
\item for each $k\in[r]$, the set $\{S_1,\dots,S_k\}$ is an order ideal of~$\BosPoset$, and
\item $\{S_1\}\subset \{S_1,S_2\}\subset \cdots \subset\{S_1,\dots,S_r\}$ is a saturated chain of ideals of $\J(\BosPoset)$.
\end{enumerate}
\end{lemma}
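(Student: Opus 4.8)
The plan is to observe that, exactly as in the fermionic Lemma~\ref{lem:Fthreshold}, statement (ii) is an immediate consequence of (i): in a lineup the points $\bchi(S_1),\dots,\bchi(S_r)$ are pairwise distinct, since the defining inequalities in Equation~\eqref{def:lineup} are strict, so the $S_i$ are distinct multisubsets and $|\{S_1,\dots,S_k\}|=k$. Once each $\{S_1,\dots,S_k\}$ is known to be an order ideal of $\BosPoset$, the nested chain $\{S_1\}\subset\cdots\subset\{S_1,\dots,S_r\}$ automatically grows by exactly one element at each step and is therefore a saturated chain in $\J(\BosPoset)$. Thus the entire content lies in proving (i), and the argument should be verbatim the one for subsets; the only point to verify is that the order-reversing comparison between the Gale order and a fundamental linear functional survives the passage from subsets to multisubsets.

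First I would isolate the key monotonicity fact: if $\by\in\Phi_d$ and $T\leq S$ in the Gale order on $\multiset{[d]}{N}$, then $\np{\by,\bchi(T)}\geq\np{\by,\bchi(S)}$. Writing $T=\{t_1\leq\cdots\leq t_N\}$ and $S=\{s_1\leq\cdots\leq s_N\}$ with repetitions, one has $\np{\by,\bchi(T)}=\sum_{k=1}^N y_{t_k}$ and $\np{\by,\bchi(S)}=\sum_{k=1}^N y_{s_k}$. The Gale relation gives $t_k\leq s_k$ for every $k$, and since $\by$ is fundamental its coordinates are non-increasing, so $y_{t_k}\geq y_{s_k}$ termwise; summing yields the claim. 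This step uses nothing specific to subsets, which is precisely why the multiset case goes through unchanged.

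Next I would fix a fundamental linear functional $\by\in\Phi_d$ realizing the lineup, fix $j\leq k$ and a multisubset $T\leq S_j$ with $T\neq S_j$, and show $T\in\{S_1,\dots,S_k\}$. By the monotonicity fact, $\np{\by,\bchi(T)}\geq\np{\by,\bchi(S_j)}$. The case of equality must be excluded: if $\np{\by,\bchi(T)}=\np{\by,\bchi(S_j)}$ and $T$ already occurs among $S_1,\dots,S_r$, this contradicts the strict ordering of the top $r$ values in Equation~\eqref{def:lineup}; and if $T$ is not among $S_1,\dots,S_r$, then Equation~\eqref{def:lineup} forces $\np{\by,\bchi(S_r)}>\np{\by,\bchi(T)}=\np{\by,\bchi(S_j)}\geq\np{\by,\bchi(S_r)}$, a contradiction. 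Hence $\np{\by,\bchi(T)}>\np{\by,\bchi(S_j)}$, so $\bchi(T)$ has strictly larger functional value than the $j$-th element of the lineup; as the top $r$ values are strictly decreasing, $\bchi(T)$ sits strictly above $\bchi(S_j)$ and therefore equals $\bchi(S_{j'})$ for some $j'<j\leq k$. In all cases $T\in\{S_1,\dots,S_k\}$, proving that this set is an order ideal.

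The argument is essentially routine, so there is no serious obstacle; the only delicate point is the systematic use of the \emph{strict} inequalities defining a lineup to eliminate ties, which is exactly what makes the Gale-smaller multisubset $T$ genuinely appear earlier rather than merely at the same level. I would also flag that, unlike the fermionic setting, multisubsets allow $t_k=t_{k+1}$, so the monotonicity computation with repeated coordinates is the one place to remain explicit; but this never interferes with the termwise inequality $y_{t_k}\geq y_{s_k}$, and the proof concludes exactly as in Lemma~\ref{lem:Fthreshold}.
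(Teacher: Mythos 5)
Your proof is correct: the termwise monotonicity of a fundamental functional against the Gale order on multisubsets, together with the strict inequalities defining a lineup to rule out ties and to force any Gale-smaller multiset to occupy an earlier position, is exactly the intended argument. The paper in fact states both Lemma~\ref{lem:Fthreshold} and its bosonic analogue without proof, treating them as routine consequences of the definitions, so your write-up simply supplies the omitted (and standard) reasoning.
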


The notion of coherent saturated chains of threshold fermionic ideals transfers naturally to obtain \emph{coherent saturated chains of threshold bosonic ideals}.
To obtain the lineups of $\BosPC$, we make use of a common trick to translate between subsets and multisubsets.

\begin{definition}[Natural map between $\FerPoset$ and $\BosPoset$]
Let $d\geq N\geq 1$.
The \emph{natural map}\footnote{It should not be confused with the natural inclusion of $\FerPoset$ into $\BosPoset$.} sends an element $S\in\NN^d$ of the Gale poset $\FerPoset$---considered as an increasing vector $(S_1,S_2,\dots,S_N)$---to the multisubset with $N$ elements in $[d-N+1]$ given by the vector $(S_1,S_2-1,\dots,S_N-N+1)\in\BosPoset[N][d-N+1]$.
\end{definition}

For example the element $1234$ is mapped to $1111$ and the element $1256$ is mapped to $1133$.
The natural map induces a poset isomorphism between $\FerPoset$ and $\BosPoset[N][d-N+1]$, hence a bijection between their order ideals.
However, under the natural map threshold fermionic ideals $\Thres{\FerPoset}$ \emph{do not} correspond to threshold bosonic ideals
$\Thres{\BosPoset[N][d-N+1]}$, see Example~\ref{ex:contrast} on page~\pageref{ex:contrast}.

\part{The Solution}
\label{part3}

We can finally face the challenge stated in Section \ref{ssec:challenge}: provide an $H$-representation of the polytopes $\FerPoly[r]$ and $\BosPoly[r]$.
We use the combinatorial methods of Part \ref{part2} to provide a general algorithm that computes this $H$-representation.
Combining the algorithm with stability results we provide a \emph{complete} solution for small values of $r$.
Recall that to simplify the notation, we omit the mention to $\bw$ in $\FerPoly[r][\bw,N,d]$ and denote it $\FerPoly[r]$, when $\bw$ is some fixed vector with strictly decreasing entries that is clear from the context and similarly for $\BosPoly[r]$.
In the case where $\bw$ has repeated entries, all the inequalities remain valid but some of them may become redundant (see Corollary \ref{cor:degenerate}).

\section{$H$-representation of fermionic and bosonic spectral polytopes}
\label{sec:Hrepr_spectral}

\subsection{Recursive generation}
\label{ssec:recursive}

We describe how to use Lemma~\ref{lem:Fthreshold} to recursively compute all possible fundamental lineups in Algorithm~\ref{algo:lineup} while simultaneously computing the corresponding fundamental cones.
We focus on the fermionic case. 
The bosonic case works analogously.
Let~$\ell$ be a fundamental lineup of $\FerPC$ and
\[
\mathdutchcal{R}(\ell):=\{S\in\FerPoset~:~{\bchi(S)\not\in\ell\text{ and }\bchi^{-1}(\ell)\cup\{S\}} \text{ is an ideal of }\FerPoset\}
\]
be the set of \defn{runner-ups} of $\ell$.
See Figure~\ref{fig:gale} for an example of a fundamental lineup with three runner-ups.

\begin{example}
If $N=4$, $d=6$ and $\by$ is a fundamental vector, then having $1246$ in a lineup implies that $1234$, $1235$, $1236$ and $1245$ should also be in the lineup, see Figure~\ref{fig:gale}.
Then, any lineup of length $6$ containing these five elements should contain exactly one element in $\{1237,1256,1345\}$.
\end{example}

\begin{figure}[H]
\centering
\begin{tikzpicture}%
[scale=1.000000,
	x={(0.4, 0)},
	y={(0,1.5)},
	edge/.style={thick},
	edge2/.style={dotted,thick,shorten >=1cm},
	vertex/.style={inner sep=1pt,circle,draw=black,fill=black,thick},
	vertex2/.style={inner sep=1.5pt,circle,draw=black,fill=white,thick},
	vertex3/.style={inner sep=2pt,rectangle,draw=black,fill=white,thick},
	rotate=-90]

	\coordinate (12345) at (0,0);
	\coordinate (12346) at (0,1);
	\coordinate (12347) at (-1,2);
	\coordinate (12348) at (-2,3);
	\coordinate (12349) at (-4,4);
	\coordinate (123410) at (-6,5);
	\coordinate (12356) at (1,2);
	\coordinate (12357) at (0,3);
	\coordinate (12358) at (-2,4);
	\coordinate (12359) at (-4,5);
	\coordinate (12367) at (0,4);
	\coordinate (12368) at (-2,5);
	\coordinate (12456) at (2,3);
	\coordinate (12457) at (2,4);
	\coordinate (12458) at (0,5);
	\coordinate (12467) at (2,5);
	\coordinate (13456) at (4,4);
	\coordinate (13457) at (4,5);
	\coordinate (23456) at (6,5);

	\draw[edge] (12345) -- (12346);
	\draw[edge] (12346) -- (12347);
	\draw[edge] (12346) -- (12356);
	\draw[edge] (12347) -- (12348);
	\draw[edge] (12347) -- (12357);
	\draw[edge] (12348) -- (12349);
	\draw[edge] (12348) -- (12358);
	\draw[edge2] (12349) -- (123410);
	\draw[edge2] (12349) -- (12359);
	\draw[edge] (12356) -- (12357);
	\draw[edge] (12356) -- (12456);
	\draw[edge] (12357) -- (12358);
	\draw[edge] (12357) -- (12367);
	\draw[edge] (12357) -- (12457);
	\draw[edge2] (12358) -- (12359);
	\draw[edge2] (12358) -- (12368);
	\draw[edge2] (12358) -- (12458);
	\draw[edge2] (12367) -- (12368);
	\draw[edge2] (12367) -- (12467);
	\draw[edge] (12456) -- (12457);
	\draw[edge] (12456) -- (13456);
	\draw[edge2] (12457) -- (12458);
	\draw[edge2] (12457) -- (12467);
	\draw[edge2] (12457) -- (13457);
	\draw[edge2] (13456) -- (13457);
	\draw[edge2] (13456) -- (23456);

	\node[vertex2,label=below:$1234$] at (12345) {};
\node[vertex2,label=above:$1235$] at (12346) {};
\node[vertex2,label=above:$1236$] at (12347) {};
\node[vertex3,label=above:$1237$] at (12348) {};
\node[vertex,label=above:$1238$] at (12349) {};
\node[vertex2,label=below:$1245$] at (12356) {};
\node[vertex2,label=above:$1246$] at (12357) {};
\node[vertex,label=above:$1247$] at (12358) {};
\node[vertex3,label=above:$1256$] at (12367) {};
\node[vertex3,label=below:$1345$] at (12456) {};
\node[vertex,label=below:$1346$] at (12457) {};
\node[vertex,label=below:$2345$] at (13456) {};

\end{tikzpicture}

\caption{The Hasse diagram of the Gale order for $N=4$ and $d=8$ for elements of rank at most $5$.
The elements in white circles form an ideal of cardinality $5$, hence are potentially part of a lineup of length $5$.
Only the three elements represented by white squares may be added to the ideal while remaining an ideal.}
\label{fig:gale}
\end{figure}
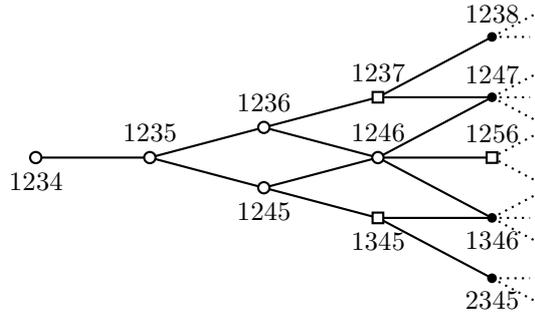

As shown in Example \ref{ex:noncoherent_fermionic}, we must verify, for each runner-up, whether the ideal obtained by it remains induced by a linear functional.
So every time we append an element to a lineup, we certify that it is coherent by computing all fundamental linear functions that induce it.
The following algorithm recursively computes a minimal $H$-representation of $\FerPoly[r]$, providing an effective solution to the convex $1$-body $N$-representability problem.
First, it recursively enumerates all fundamental lineups together with their fundamental cones.
Then, for each ray generating a fundamental cone, it uses Theorem~\ref{thm:dimension} to assert that it defines a facets of the polytope $\FerPoly[r]$.
If the ray defines a facet, the right-hand side of the corresponding inequality is determined by examination on the fundamental occupation vectors corresponding to the fundamental lineups.

\begin{algorithmplain}[{$H$-representation of $\FerPoly[r]$}]
Given $d\geq N\geq 1$ and $r\in[\binom{d}{N}]$, the following procedure gives a non-redundant $H$-representation of $\FerPoly[r]$.
\label{algo:lineup}
\begin{description}
\item[Step 1. Determine candidates for normal rays]
\item[If $r=1$] There is exactly one possible fundamental lineup, namely the subset $\{1,\dots,N\}$. 
The procedure returns $(\bchi(\{1,\dots,N\}))$ and the fundamental cone $\Phi_d$.
\item[Else ($r>1$)] Consider every possible lineup $\ell$ of length $r-1$ with corresponding fundamental cone $\pol[K](\ell)$.
For every such $\ell$, determine the set of runner-ups $\mathdutchcal{R}(\ell)$, and for each $S\in \mathdutchcal{R}(\ell)$, we construct
\begin{equation}\label{eq:keystep}
\pol[K](\ell) \cap\left( \bigcap_{T\in\mathdutchcal{R}(\ell)}\{\by\in\RR^d~:~\langle\by,\bchi(S)-\bchi(T)\rangle\geq 0\}\right).
\end{equation}
If the cone in Equation \eqref{eq:keystep} is $d$-dimensional, then $\ell$ appended with $\bchi(S)$ is a fundamental lineup of length $r$, and Equation \eqref{eq:keystep} gives an $H$-representation of its fundamental cone.
\end{description}
\begin{description}
\item[Step 2. $(V\to H)$-translation]
\item[Rays] Having obtained the fundamental fan with an $H$-representation of each maximal cone, obtain the fundamental rays generators.
\end{description}
\begin{description}
\item[Step 3. Assert normal rays]
\item[Facet check] Use Theorem~\ref{thm:dimension} to discard the fundamental rays that are not normal rays.
\item[Right-hand side] Evaluate each normal ray generator on the occupation vectors found to determine the right-hand side of the corresponding inequality. 
\end{description}
\end{algorithmplain}

This bottom-up dynamic programming algorithm computes (an $H$-representation of) the inclu\-sion-maximal cones of the fundamental fan of $\FerPoly[r]$ starting with $r=1$ and increasing $r$ by $1$ at each step.
Determining the runner-ups $\mathdutchcal{R}(\ell)$ can be done in $O(r)$.
Assume that $m$ fundamental lineups were obtained at step $1$ for some $r$.
Step 2 translates the $H$-representation of every fundamental cone to a $V$-representation.
Fortunately, the $H$-representation has a minimal number of inequalities, i.e. $m-1$, one for each comparison between a fixed fundamental cone and the others.
In total, there are only $\binom{m}{2}$ inequalities to keep track of.
Adding the $d-1$ inequalities stemming from the fundamental cone $\Phi_d$, there are always at most $d+m-2$ inequalities to handle for each cone.
The fact that there is at most $d+m-2$ inequalities is a great advantage.
This signifies that the number of inequalities increases only linearly with the size of the previous output.
The symmetry allows the complete description of the spectral polytopes using only fundamental occupation vectors and fundamental defining inequalities.
For instance, the $7$-dimensional $\FerPoly[5][8,4]$ has $154\ 560$ vertices and $7\ 366$ facets, but only 10 orbits of vertices and 9 orbits of facets.

\subsection{Case study}
\label{ssec:hrepr_spec}

\subsubsection{The hypersimplex $\pol[H](N,d)$}

Let $(N,d)=(3,6)$.
The polytope $\pol[H](3,6)$ has $20$ vertices and is centrally symmetric.
Thus, for $r>10$ each lineup is extended to exactly 1 new lineup, i.e. lineups of length $10$ determine all lineups of length $20$.
Furthermore, as the fan $\Rg[r+1]$ is a refinement of the fan $\Rg$, we have the following chain of refinements of the normal fans (see Proposition \ref{prop:hierarchy}), which induces a chain of inclusions on the corresponding sets of normal rays:
$\Rg[1]\succeq\Rg[2]\succeq\cdots\succeq\Rg[10]$, in particular for $\bV=\FerPC[3][6]$, the vertices of $\pol[H](3,6)$.
The number of lineups at each step are shown in Table~\ref{tab:lineups_h36} along with the number of \emph{new} normal rays in the fundamental cone.

\begin{table}[!htbp]
\begin{center}
\begin{tabular}{ c|c|c|c|c|c|c|c|c|c|c} 
r &1&2&3&4&5&6&7&8&9&10 \\\hline
\# lineups          & 1 & 1 & 2 & 4 & 8 & 18 & 40 & 90 & 168 & 324\\
\# new inequalities & 2 & 1 & 1 & 2 & 3 & 3  &  5 &  9 &  14 & 32
\end{tabular}
\end{center}
\caption{Number of lineups of length $r$ for the hypersimplex $\pol[H](3,6)$ and corresponding number of new inequalities}
\label{tab:lineups_h36}
\end{table}

Summing give $72$ normal rays for the lineup polytope of the hypersimplex $\pol[H](3,6)$ spanning the normal cones of $324$ vertices.
A minimal $H$-representation of the lineup polytope is presented in Appendix~\ref{app:hyper36}.
The normal fan of $\Rg[10]$ has $233\ 280$ full-dimensional cones and $29\ 582$ rays.
For $(N,d)=(3,7)$, the polytope $\pol[H](3,7)$ has $35$ vertices.
The lineup polytope $\FerPoly[18][3,7]$ has 95 941 440 vertices and 5 910 198 facets.
The fundamental chamber contains 19036 occupation vectors and 1501 rays.

\subsubsection{Generation of generalized exclusion inequalities for $r\leq 13$}

We implemented the above algorithm to obtain the $H$-representation of $\FerPoly[r]$ and $\BosPoly[r]$ for $r\leq 13$ and the minimal choices for $N$ and $d$. 
Table~\ref{tab:ferbos} gathers the results.
The numbers show the increasing hierarchy of \emph{new} inequalities appearing after increasing the value of $r$ by $1$.

\begin{table}[!htbp]
\begin{center}
\resizebox{\columnwidth}{!}{$
\begin{array}{l|r|rrrrrrrrrrrrr}
& r                                        & 1 & 2 & 3 & 4 &  5 &  6 &  7 &   8 &    9 &   10 &    11 &    12 &   13\\\hline\hline
\multirow{2}{*}{Fermions} & \text{\#OVs}  & 1 & 1 & 2 & 4 & 10 & 28 & 90 & 312 & 1160 & 4518 & 18008 & 73224 & 300692\\
                          & \text{\#ineqs} & 2 & 1 & 1 & 2 &  3 &  5 & 10 &  19 &   46 &  115 &   283 &   771 & 2132\\\hline\hline
\multirow{2}{*}{Bosons}   & \text{\#OVs}  & 1 & 1 & 2 & 4 &  8 & 17 & 37 &  82 &  184 &  418 &   967 &  2278 & 5456\\
                          & \text{\#ineqs} & 1 & 1 & 1 & 2 &  3 &  5 &  9 &  14 &   23 &   40 &    72 &   128 &  241\\
\end{array}$}
\end{center}
\caption{Number of fundamental occupation vectors (\#OVs) for small $r$ and the number of new fundamental inequalities (\#ineqs) for fermions and bosons.}
\label{tab:ferbos}
\end{table}

The ratio between the total number of vertices and facets versus the fundamental ones are significant in these two cases.
The number of vertices of the fermionic spectral polytope increases by a factor of $\approx100$ for each $r$ while the number of fundamental occupation vectors increases by roughly $\approx2.5$.
The number of facets of the fermionic spectral polytope increases by a factor of $\approx36$ for each $r$ while the number of fundamental rays increases by roughly $\approx1.76$.
The number of vertices of the bosonic spectral polytope increases by a factor of $\approx10$ for each $r$ while the number of fundamental occupation vectors increases by roughly $\approx2.1$.
The number of facets of the fermionic spectral polytope increases by a factor of $\approx5.8$ for each $r$ while the number of fundamental rays increases by roughly $\approx1.66$.

The computation for $r=13$ for fermions used a parallel depth-first search algorithm distributed on $16$ cores to obtain all occupation vectors along with their fundamental cones as in Algorithm~\ref{algo:lineup} and required about $58$Gb of RAM.
As a post-processing, the union of the rays of the fundamental cones that consisted of $3\ 410$ rays were filtered using Theorem~\ref{thm:dimension} to preserve only the normal rays.
The generation of the occupation vectors and their cones took 5h30min and the post-precessing 1h15min on a AMD FX6274(@2.2GHz) processor using 16 cores.
The spectral polytope $\FerPoly[13][12,24]$ of dimension $23$ has $25\ 762\ 023\ 560\ 117\ 406\ 481\ 920$ vertices and $1\ 766\ 398\ 153\ 945\ 819\ 988$ facets.
The implementation was done in using \texttt{SageMath} \cite{sagemath} (using in particular combinations of \texttt{ppl} \cite{bagnara_parma_2008} and \texttt{normaliz} \cite{bruns_power_2016}) and exploiting multiprocessing tools available in \texttt{python3}.
The implementation used many high-level objects and stored them on runtime for verification purposes, making it much less memory efficient, leaving room for much improvement.
The bottleneck of the algorithm did not seem to have been reached at $r=13$.
We found it unnecessary to consume the energy required to obtain the next step.
Once deemed relevant for use, it is well within the realm of possibility to obtain the next $H$-representations.

\subsection{Stabilization for unbounded $N$ and $d$}
\label{ssec:stability}
\subsubsection{For $\FerPC$}
In this section, we examine the relationship between lineups of the point configurations $\FerPC$ and those obtained for larger values of $N$ and $d$.
The following lemma shows that fundamental lineups of length $r$ are intrinsically independent of the values of $N$ and $d$ when these are large enough. 

\begin{lemma}[Stabilization of lineups]
\label{lem:gale}
Let $r\geq2$ and $\ell=(\bchi(S_1),\bchi(S_2),\dots,\bchi(S_r))$ be a fundamental lineup of $\FerPC$, where $N\geq r-1$.
\begin{enumerate}[label=\roman{enumi}),ref=\roman{enumi})]
\item The largest element of $S_k$ is at most $N+r-1$, for all $1\leq k\leq r$.
\label{lem:gale_i}
\item If $N>r-1$, then $\{1,\dots,N-r+1\}\subset S_k$, for all $1\leq k\leq r$.
\label{lem:gale_ii}
\end{enumerate}
\end{lemma}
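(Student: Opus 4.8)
The plan is to transport everything to the Gale poset $\FerPoset$ and argue purely by counting ranks. By Lemma~\ref{lem:Fthreshold}, the underlying set $\{S_1,\dots,S_r\}$ of a fundamental lineup is an order ideal of $\FerPoset$, and in fact so is each initial segment $\{S_1,\dots,S_k\}$. The observation I would exploit is that $\FerPoset$ is graded: writing $S=\{s_1<s_2<\cdots<s_N\}$, the function $\operatorname{rank}(S):=\sum_{j=1}^N(s_j-j)$ is its rank function, with minimum $\{1,\dots,N\}$ of rank $0$ and covers given by increasing a single entry by one (concretely, $\FerPoset$ is isomorphic to the lattice of partitions inside an $N\times(d-N)$ box). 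Thus both parts reduce to a single uniform rank bound on the elements appearing in $\ell$.

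First I would bound $\operatorname{rank}(S_k)$ for every $k\in[r]$. Since $\{S_1,\dots,S_k\}$ is an order ideal containing $S_k$, it is downward closed, hence contains the entire principal ideal $\{T\in\FerPoset:T\le S_k\}$. A saturated chain from the minimum $\{1,\dots,N\}$ up to $S_k$ consists of $\operatorname{rank}(S_k)+1$ elements, all of which lie below $S_k$ and therefore belong to $\{S_1,\dots,S_k\}$; this yields $\operatorname{rank}(S_k)+1\le\lvert\{S_1,\dots,S_k\}\rvert=k\le r$, so $\operatorname{rank}(S_k)\le r-1$ for all $k$.

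With this inequality in hand, part~\ref{lem:gale_i} is immediate: each summand $s_j-j$ is non-negative because $s_j\ge j$, so the largest entry satisfies $s_N-N\le\sum_{j=1}^N(s_j-j)=\operatorname{rank}(S_k)\le r-1$, giving $s_N\le N+r-1$. For part~\ref{lem:gale_ii} I would additionally use that the sequence $(s_j-j)_{j=1}^N$ is weakly increasing, which follows from $s_{j+1}\ge s_j+1$, i.e.\ $s_{j+1}-(j+1)\ge s_j-j$. If $s_{j_0}>j_0$ for some index $j_0$, then $s_j-j\ge1$ for all $j\ge j_0$, whence $\operatorname{rank}(S_k)\ge N-j_0+1$; combined with $\operatorname{rank}(S_k)\le r-1$ this forces $j_0\ge N-r+2$. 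Contrapositively, every $j\le N-r+1$ has $s_j=j$, that is $\{1,\dots,N-r+1\}\subseteq S_k$, and the hypothesis $N>r-1$ guarantees $N-r+1\ge1$ so the statement is non-vacuous.

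The argument is essentially mechanical once the grading of $\FerPoset$ is set up, so the only delicate point is the passage from the purely combinatorial ideal structure furnished by Lemma~\ref{lem:Fthreshold} to the quantitative estimate $\operatorname{rank}(S_k)\le r-1$. I expect this step to be where care is needed: one must invoke the full downward-closedness of the ideals $\{S_1,\dots,S_k\}$ (rather than merely the saturated-chain property) together with gradedness, in order to convert "an element of a size-$\le r$ order ideal" into "rank at most $r-1$".
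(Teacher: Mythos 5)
Your proof is correct and takes essentially the same route as the paper's: both rest on the grading of the Gale poset and the key bound that any $S_k$ occurring in a lineup of length $r$ has $\operatorname{rank}(S_k)\le r-1$ (the paper invokes this rank bound for part (ii) and the equivalent saturated-chain-length count for part (i)). The only differences are presentational: you justify the rank bound explicitly from downward-closedness of the initial-segment ideals, and you replace the paper's inductive step in part (ii) (``repeating the argument by subtracting one from every element of $T$ and decreasing $N$ by $1$'') with the direct observation that $s_j-j$ is weakly increasing, a mild streamlining of the same idea.
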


\begin{proof}
% i)
\ref{lem:gale_i}
Consider a $N$-subset $T$ with largest entry $N+r$.
Any chain saturated chain in $\FerPoset$ going from $[N]$ to $T$ involves at least $(r+1)$ subsets (when including $T$).
Thus $T$ can not be involved in any lineup $\ell$ of length~$r$.

% ii)
\ref{lem:gale_ii}
The Gale poset is ranked with rank function given by $\operatorname{rank}( T)=\sum_{i\in T} i - \binom{N+1}{2}+1$.
Every $N$-subset $S_k \in \ell$ has rank at most $r$.
Consider a $N$-subset $ T$ such that $1\not\in T$.
The rank of~$ T$ is at least $N+1>r$, hence $ T\not\in\ell$.
Repeating the argument by subtracting one from every element of~$T$ and decreasing $N$ by~$1$ leads to the result.
\end{proof}

Properties~\ref{lem:gale_i} and~\ref{lem:gale_ii} show that increasing $N$ beyond $r-1$ and $d$ beyond $N+r-1$ barely has an effect on the structure and complexity of possible fundamental lineups. 

\begin{example}\label{ex:different_gales}
We illustrate Lemma \ref{lem:gale} in Figure~\ref{fig:differente_gale}, where we depict the elements of rank 4 of the Gale orders with parameters $(N,d)=(2,5),(3,6)$, and $(4,7)$ respectively.
The reader is invited to verify the conclusions of the lemma for various lineups.
\begin{figure}[H]
\centering
\begin{tikzpicture}%
[scale=1.000000,
x={(0.4, 0)},
y={(0,1.5)},
edge/.style={thick},
edge2/.style={dotted,thick,shorten >=1cm},
vertex/.style={inner sep=1pt,circle,draw=black,fill=black,thick},
vertex2/.style={inner sep=1.5pt,circle,draw=black,fill=white,thick},
vertex3/.style={inner sep=2pt,rectangle,draw=black,fill=white,thick},
rotate=-90]

\begin{scope}[yshift=-5cm,scale=0.8]
\coordinate (12345) at (0,0);
\coordinate (12346) at (0,1);
\coordinate (12347) at (-1,2);
\coordinate (12348) at (-2,3);
\coordinate (12349) at (-4,4);
\coordinate (123410) at (-6,5);
\coordinate (12356) at (1,2);
\coordinate (12357) at (0,3);
\coordinate (12358) at (-2,4);
\coordinate (12359) at (-4,5);
\coordinate (12367) at (0,4);
\coordinate (12368) at (-2,5);
\coordinate (12456) at (2,3);
\coordinate (12457) at (2,4);
\coordinate (12458) at (0,5);
\coordinate (12467) at (2,5);
\coordinate (13456) at (4,4);
\coordinate (13457) at (4,5);
\coordinate (23456) at (6,5);

\draw[edge] (12345) -- (12346);
\draw[edge] (12346) -- (12347);
\draw[edge] (12346) -- (12356);
\draw[edge] (12347) -- (12348);
\draw[edge] (12347) -- (12357);
\draw[edge] (12356) -- (12357);

\node[vertex2,label=below:$12$] at (12345) {};
\node[vertex2,label=above:$13$] at (12346) {};
\node[vertex2,label=above:$14$] at (12347) {};
\node[vertex2,label=above:$15$] at (12348) {};
\node[vertex2,label=below:$23$] at (12356) {};
\node[vertex2,label=above:$24$] at (12357) {};
\end{scope}

\begin{scope}[yshift=0cm,scale=0.8]
\coordinate (12345) at (0,0);
\coordinate (12346) at (0,1);
\coordinate (12347) at (-1,2);
\coordinate (12348) at (-2,3);
\coordinate (12349) at (-4,4);
\coordinate (123410) at (-6,5);
\coordinate (12356) at (1,2);
\coordinate (12357) at (0,3);
\coordinate (12358) at (-2,4);
\coordinate (12359) at (-4,5);
\coordinate (12367) at (0,4);
\coordinate (12368) at (-2,5);
\coordinate (12456) at (2,3);
\coordinate (12457) at (2,4);
\coordinate (12458) at (0,5);
\coordinate (12467) at (2,5);
\coordinate (13456) at (4,4);
\coordinate (13457) at (4,5);
\coordinate (23456) at (6,5);

\draw[edge] (12345) -- (12346);
\draw[edge] (12346) -- (12347);
\draw[edge] (12346) -- (12356);
\draw[edge] (12347) -- (12348);
\draw[edge] (12347) -- (12357);
\draw[edge] (12356) -- (12357);
\draw[edge] (12356) -- (12456);

\node[vertex2,label=below:$123$] at (12345) {};
\node[vertex2,label=above:$124$] at (12346) {};
\node[vertex2,label=above:$125$] at (12347) {};
\node[vertex2,label=above:$126$] at (12348) {};
\node[vertex2,label=below:$134$] at (12356) {};
\node[vertex2,label=above:$135$] at (12357) {};
\node[vertex2,label=below:$234$] at (12456) {};
\end{scope}

\begin{scope}[yshift=5cm,scale=0.8]
\coordinate (12345) at (0,0);
\coordinate (12346) at (0,1);
\coordinate (12347) at (-1,2);
\coordinate (12348) at (-2,3);
\coordinate (12349) at (-4,4);
\coordinate (123410) at (-6,5);
\coordinate (12356) at (1,2);
\coordinate (12357) at (0,3);
\coordinate (12358) at (-2,4);
\coordinate (12359) at (-4,5);
\coordinate (12367) at (0,4);
\coordinate (12368) at (-2,5);
\coordinate (12456) at (2,3);
\coordinate (12457) at (2,4);
\coordinate (12458) at (0,5);
\coordinate (12467) at (2,5);
\coordinate (13456) at (4,4);
\coordinate (13457) at (4,5);
\coordinate (23456) at (6,5);

\draw[edge] (12345) -- (12346);
\draw[edge] (12346) -- (12347);
\draw[edge] (12346) -- (12356);
\draw[edge] (12347) -- (12348);
\draw[edge] (12347) -- (12357);
\draw[edge] (12356) -- (12357);
\draw[edge] (12356) -- (12456);

\node[vertex2,label=below:$1234$] at (12345) {};
\node[vertex2,label=above:$1235$] at (12346) {};
\node[vertex2,label=above:$1236$] at (12347) {};
\node[vertex2,label=above:$1237$] at (12348) {};
\node[vertex2,label=below:$1245$] at (12356) {};
\node[vertex2,label=above:$1246$] at (12357) {};
\node[vertex2,label=below:$1345$] at (12456) {};
\end{scope}
\end{tikzpicture}
\caption{The Hasse diagram of the Gale order for different parameters}
\label{fig:differente_gale}
\end{figure}
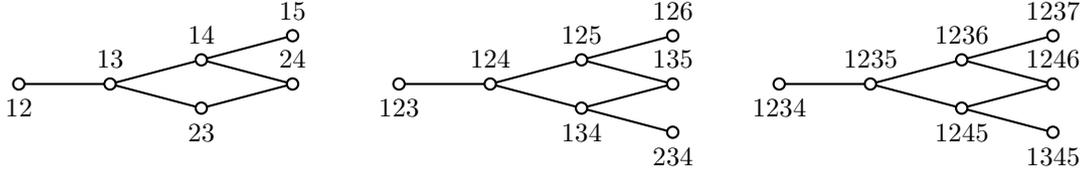
\end{example}

As a consequence of Lemma \ref{lem:gale}, the parameters $N=r-1$ and $d=2r-2$ represent a base case.
For the rest of this section, we consider pairs $(N,d)$ such that $N\geq r-1$ and $d\geq N+r-1$ for a fixed value of $r\geq 2$.
In this case, two lineups play a special role.

\begin{definition}[Breakaway and peloton lineups]
\label{def:special}
Let $r\geq 2$, $N=r-1$ and $d=2r-2$.
The \emph{breakaway} lineup is $(\bchi([N-1]\cup\{N+i-1\}))_{i=1}^{r}$ and the \emph{peloton} lineup is $(\bchi([N+1]\setminus\{N+2-i\}))_{i=1}^{r}$. 
Their occupation vectors are called breakaway and peloton occupation vectors, respectively.
\end{definition}

\begin{lemma}\label{lem:breakawaypeloton}
If $r\geq 2$, $N=r-1$ and $d=2r-2$, then the breakaway occupation vector is the unique occupation vector in $\non[f]$ whose last coordinate is non-zero and the peloton occupation vector is the unique occupation vector in $\non[f]$ whose first coordinate is strictly less than one.
\end{lemma}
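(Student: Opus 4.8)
The plan is to translate both claims into statements about which $N$-subsets occur in a fundamental lineup, and then read off the answer from the Gale poset $\FerPoset$. For a fundamental lineup $\ell=(\bchi(S_1),\dots,\bchi(S_r))$, the $k$-th coordinate of $\OccVec_{\bw}(\ell)$ equals $\sum_{i\,:\,k\in S_i} w_i$. Since $\bw$ has strictly positive entries summing to $1$, the last coordinate is nonzero if and only if $d\in S_i$ for some $i$, and the first coordinate is strictly less than $1$ if and only if $1\notin S_i$ for some $i$. So I would reduce the lemma to: the breakaway lineup is the unique fundamental lineup in which the element $d$ appears in some set, and the peloton lineup is the unique fundamental lineup in which the element $1$ is missing from some set.

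Next I would bring in the rank function $\operatorname{rank}(T)=\sum_{i\in T} i - \binom{N+1}{2}+1$ on $\FerPoset$, which is strictly monotone for the Gale order (so the minimum $[N]$ has rank $1$ and covers raise the rank by one). The key numerical input, specific to the base case $N=r-1$, $d=2r-2$, is that the Gale-minimal $N$-subset containing $d$ is $[N-1]\cup\{d\}$ and the Gale-minimal $N$-subset omitting $1$ is $[N+1]\setminus\{1\}=\{2,\dots,N+1\}$, and that both have rank exactly $r$ (a short computation). On the other hand, by Lemma~\ref{lem:Fthreshold} each prefix $\{S_1,\dots,S_k\}$ of a fundamental lineup is an order ideal of size $k$, and an order ideal of size $k$ can contain only elements of rank at most $k$ (an element of rank $m$ forces a saturated chain of $m$ elements below it into the ideal); hence every set occurring in a lineup of length $r$ has rank at most $r$. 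Combining these: if $d\in S_i$, then $S_i$ is Gale-above $[N-1]\cup\{d\}$, so $\operatorname{rank}(S_i)\ge r$, forcing $\operatorname{rank}(S_i)=r$ and $S_i=[N-1]\cup\{d\}$; being the unique maximal-rank set it must be $S_r$. The symmetric argument with $\{2,\dots,N+1\}$ handles the peloton case.

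It then remains to show that fixing the top set pins down the whole lineup. The order ideal $\{S_1,\dots,S_r\}$ contains $S_r$, hence contains the entire principal order ideal $\downarrow S_r$; as it has exactly $r$ elements, it equals $\downarrow S_r$ once $|\!\downarrow\! S_r|=r$. For the breakaway set this is immediate, since $\downarrow([N-1]\cup\{d\})=\{[N-1]\cup\{t\}:N\le t\le d\}$ is a chain of $r$ elements. For the peloton set I would compute $\downarrow(\{2,\dots,N+1\})$ by writing each element $T$ as an ``upgrade'' of $[N]$, where upgrading position $j$ means $t_j=j+1$; the increasing and distinctness constraints force the set of upgraded positions to be a suffix of $[N]$, so there is exactly one set of each rank $1,\dots,r$ and the ideal is again a chain of $r$ elements. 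In both cases, since $\downarrow S_r$ is a chain and each prefix of a lineup must be an order ideal, the lineup is forced to list its elements in increasing Gale order, which is precisely the breakaway (resp.\ peloton) lineup. I expect this down-set computation, and in particular the suffix argument identifying $\downarrow(\{2,\dots,N+1\})$ as a chain, to be the main technical obstacle.

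Finally, to secure the existence half of ``unique'' and the stated coordinate values, I would note that the breakaway and peloton sequences are genuine fundamental lineups (they already appear as such in the proof of Proposition~\ref{prop:basic_cleanup_2} and in Definition~\ref{def:special}), and compute directly that the breakaway occupation vector has last coordinate $w_r>0$ while the peloton occupation vector has first coordinate $1-w_r<1$. This confirms that each of these two distinguished occupation vectors does have the asserted property, completing the characterization.
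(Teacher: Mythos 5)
Your argument is correct, and its skeleton is the same as the paper's: reduce the coordinate conditions to membership conditions (some $S_i$ contains $d$, resp.\ omits $1$), show that the only cardinality-$r$ ideal of $\FerPoset$ with such an element is the principal down-set of $[N-1]\cup\{d\}$ (resp.\ of $\{2,\dots,N+1\}$), observe that this down-set is a chain of exactly $r$ elements so the lineup order is forced, and finish with the explicit occupation vectors. The paper asserts the uniqueness-of-ideal and chain facts in a single sentence; your rank-function bound (elements of a size-$k$ ideal have rank at most $k$) together with the computation of the two principal down-sets is precisely the justification it omits, so on the uniqueness side your write-up is more detailed than the original.

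The soft spot is existence. For the lemma to hold, the breakaway and peloton sequences must actually be fundamental lineups---otherwise their occupation vectors do not lie in $\non$ and the ``unique'' claim is about an empty set of candidates. Definition~\ref{def:special} only names the sequences, and the proof of Proposition~\ref{prop:basic_cleanup_2} calls them lineups without proof, so your citations pass the buck to an unproved assertion. The paper closes this gap by exhibiting the inducing functionals explicitly: they are the occupation vectors themselves. For the breakaway this is a one-line check: with $\by=(1,\dots,1,w_1,\dots,w_r)$, the $i$-th chain element $[N-1]\cup\{N+i-1\}$ gets value $(N-1)+w_i$, strictly decreasing in $i$ since $\bw$ is strictly decreasing, while any other $N$-subset misses an element of $[N-1]$ and gets at most $(N-2)+w_1+w_2\leq N-1<(N-1)+w_r$; the peloton case is verified the same way. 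Adding this verification (or any explicit inducing functional) would make your proof self-contained.
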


\begin{proof}
The ideal of $\FerPoset$ given by the breakaway lineup $\ell_b$ is the unique ideal of cardinality~$r$ with an element that contains $N+r-1$.
Similarly, the ideal of $\FerPoset$ given by the peloton lineup $\ell_p$ is the unique ideal of cardinality $r$ with an element that does not contain $1$.
These two ideals are chains, hence they lead to exactly one saturated chain of ideals.
These two ideals are threshold: the appropriate linear functionals are directly given by the corresponding occupation vectors:
\[
\resizebox{0.9\textwidth}{!}{$
\OccVec_{\bw}(\ell_b)_j=
\begin{cases}
1 & \text{ for } j\in [N-1],\\
w_{j+1-N} & \text{ for } j \in \{N,\dots,N+r-1\},\\
0 & \text{ for } j \in \{N+r,\dots,d\},
\end{cases}\quad
\OccVec_{\bw}(\ell_p)_j=
\begin{cases}
1 & \text{ for } j\in [N-r+1],\\
1-w_{N+2-j} & \text{ for } j\in \{N-r+2,\dots,N+1\},\\
0 & \text{ for } j \in \{N+2,\dots,d\}.
\end{cases}$}\qedhere
\]
\end{proof}

We now define three maps allowing us to increase parameters. 
The first map is an order-preserving map.
\[
\begin{array}{rlcl}
\upsilon: & \FerPoset[r-1][2r-2]& \to & \FerPoset[N][d]\\
& \{s_1,\dots,s_{r-1}\} & \mapsto & [N-r+1]\cup \{s_i+N-r+1\}_{i=1}^{r-1}.
\end{array}
\]
The map $\upsilon:\FerPoset[r-1][2r-2]\to\FerPoset[N][d]$ is injective and its image consists of the $N$-subsets $S$ of $[d]$ such that $[N-r+1]\subset S \subset [N+r-1]$.
The poset morphism $\upsilon$ induces an inclusion of the point configurations $\FerPC[r-1][2r-2]$ into $\FerPC$, and furthermore on their respective sets of lineups.
By abusing notation, we refer to these two inclusions also using the letter $\upsilon$.
Namely, we write $\upsilon\bchi(S)=\bchi(\upsilon S)$ to lessen the amount of parentheses.
Finally, we define the linear maps $\Upsilon$ and $\Psi$.
They allow the comparison of the sets of fundamental $r$-lineups of $\FerPC[r-1][2r-2]$ and $\FerPC$ as formulated in Lemma~\ref{lem:UpDown}.
\[
\begin{array}{rcl}
\Upsilon: \RR^{2r-2} & \longrightarrow & \RR^{d},\\
(y_1,\dots,y_{2r-2})& \mapsto & (\underbrace{y_1,\dots,y_1}_{N-r+2},y_2,\dots,y_{2r-3},\underbrace{y_{2r-2},\dots,y_{2r-2}}_{d-N-r+2}).
\\\\
\Psi: \RR^{d} & \longrightarrow & \RR^{2r-2},\\
(z_1,\dots,z_{d})& \mapsto & (z_{N-r+2},\dots,z_{N+r-1}).
\end{array}
\]
They verify
\[
\resizebox{\textwidth}{!}{$
\Upsilon( \bef_i)= \begin{cases}
\bef_{i+N-r+1}&\text{ for }i\in[2r-3],\\
\bef_{d} &\text{ if } i=2r-2,
\end{cases}
\text{ and }
\Psi(\bef_j)=\begin{cases}
(0,\dots,0) &\text{ if } j\in [1,\, N-r+1],\\
\bef_{j-N+r-1} &\text{ if } j\in [N-r+2,\, N+r-1],\\
\bef_{2r-2}& \text{ if } j \in [N+r,\, d].
\end{cases}$}
\]
For any vectors $\by\in \RR^{2r-2}$ and $\bz\in\RR^d$, the inner product and the maps $\Upsilon$ and $\Psi$ behave according to the two equalities
\begin{align}
\np{\Upsilon(\by),\upsilon\bchi(S)}&=\np{\by,\bchi(S)}+ y_{1}(N-r+1) + y_{2r-2}(d-N-r+1),\label{eq:rising}\\
\np{\Psi(\bz),\upsilon^{-1}\bchi(T)}&=\np{\bz,\bchi(T)}-\left(\sum_{i=1}^{N-r+1}z_i\right),\label{eq:lowering}
\end{align}
where $S\subset[2r-2]$ and $[N-r+1]\subset T\subset [N+r-1]$ are of cardinality $r-1$ and $N$ respectively.

\begin{lemma}\label{lem:UpDown}
Let $1 \leq r-1\leq N \leq d-r+1$.
\begin{compactenum}[\roman{enumi})]
\item If $\by\in\RR^{2r-2}$ is a fundamental vector and $\ell(\by)=(\bchi(S_1),\dots,\bchi(S_r))$ is the corresponding $r$-lineup on $\FerPC[r-1][2r-2]$, then the fundamental vector $\Upsilon(\by)$ leads to the $r$-lineup $\upsilon\ell(\by)=(\upsilon\bchi(S_1),\dots,\upsilon\bchi(S_r))$ on $\FerPC$.
\item Conversely, if $\bz\in\RR^{d}$ is a fundamental vector and $\ell(\bz)=(\bchi(T_1),\dots,\bchi(T_r))$ is the corresponding $r$-lineup on $\FerPC$, then the fundamental vector $\Psi(\bz)$ leads to the $r$-lineup $\upsilon^{-1}\ell(\bz) =(\upsilon^{-1}\bchi(T_1),\dots,\upsilon^{-1}\bchi(T_r))$ on $\FerPC[r-1][2r-2]$.
\end{compactenum}
\end{lemma}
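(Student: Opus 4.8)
<br>

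The plan is to prove both statements by exploiting Equations~\eqref{eq:rising} and~\eqref{eq:lowering}, which encode precisely how the inner products transform under $\Upsilon$ and $\Psi$. The key observation is that an $r$-lineup is entirely determined by the \emph{ordering} it induces on the points of the configuration via the linear functional (see Equation~\eqref{def:lineup}), and both transformation formulas differ from the original inner product only by an additive constant that is independent of the chosen subset. Since adding a constant to every value in a ranking preserves the ranking, the orderings are transported faithfully.

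For part~i), I would start from a fundamental vector $\by\in\RR^{2r-2}$ inducing the lineup $\ell(\by)=(\bchi(S_1),\dots,\bchi(S_r))$ on $\FerPC[r-1][2r-2]$. First I would check that $\Upsilon(\by)$ is indeed fundamental: the explicit action $\Upsilon(\bef_i)=\bef_{i+N-r+1}$ for $i\in[2r-3]$ and $\Upsilon(\bef_{2r-2})=\bef_d$ shows that $\Upsilon$ sends the fundamental chamber $\Phi_{2r-2}$ into $\Phi_d$, since it merely repeats the first and last coordinates while preserving the weakly decreasing order. Next, by Lemma~\ref{lem:gale}\ref{lem:gale_i} and~\ref{lem:gale_ii}, every subset appearing in a fundamental $r$-lineup of $\FerPC$ satisfies $[N-r+1]\subset T\subset[N+r-1]$, so all relevant subsets lie in the image of $\upsilon$. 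Applying Equation~\eqref{eq:rising} to each pair $S_j$, the strict inequalities $\np{\by,\bchi(S_1)}>\cdots>\np{\by,\bchi(S_r)}>\np{\by,\bchi(\cdot)}$ translate into the corresponding strict inequalities for $\np{\Upsilon(\by),\upsilon\bchi(S_j)}$, because the additive term $y_1(N-r+1)+y_{2r-2}(d-N-r+1)$ is the same for every subset. This establishes that $\Upsilon(\by)$ induces exactly $\upsilon\ell(\by)$ among the points in the image of $\upsilon$; the remaining subsets of $\FerPC$ not in the image must be shown to receive strictly smaller values, which again follows from Lemma~\ref{lem:gale} guaranteeing they cannot compete for the top $r$ positions.

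For part~ii), the argument runs symmetrically using Equation~\eqref{eq:lowering}: given a fundamental $\bz\in\RR^d$ with lineup $\ell(\bz)=(\bchi(T_1),\dots,\bchi(T_r))$, I would verify that $\Psi(\bz)$ is fundamental (the formula $\Psi(\bef_j)=\bef_{j-N+r-1}$ on the middle range confirms $\Psi(\Phi_d)\subseteq\Phi_{2r-2}$), then use that all $T_j$ satisfy $[N-r+1]\subset T_j\subset[N+r-1]$ so that $\upsilon^{-1}T_j$ is well-defined. The additive constant $\sum_{i=1}^{N-r+1}z_i$ in Equation~\eqref{eq:lowering} is again independent of the subset, so the strict ordering is preserved under $\Psi$ and $\upsilon^{-1}$.

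The main obstacle I anticipate is not the transport of the strict inequalities among subsets already in the image of $\upsilon$ — that is a direct consequence of the two displayed equations — but rather the \emph{completeness} check: I must rule out that some subset \emph{outside} the image of $\upsilon$ (i.e., a subset $T$ with either $[N-r+1]\not\subset T$ or $T\not\subset[N+r-1]$) sneaks into the top $r$ positions under $\Upsilon(\by)$, which would corrupt the claimed lineup. The resolution is precisely Lemma~\ref{lem:gale}: part~\ref{lem:gale_i} bounds the largest element and part~\ref{lem:gale_ii} forces the common initial segment, so any such $T$ has rank exceeding $r$ in the Gale order and hence, by the order-ideal structure of fundamental lineups (Lemma~\ref{lem:Fthreshold}), cannot belong to a length-$r$ fundamental lineup. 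I would make this rank-bound argument explicit to close the gap, taking care that the hypothesis $r-1\leq N\leq d-r+1$ is exactly what guarantees both $\upsilon$ and the two special lineups are well-defined in these ranges.
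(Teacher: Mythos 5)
Your ingredients are exactly the paper's: its entire proof of this lemma is the single sentence that it follows from Lemma~\ref{lem:gale} together with Equations~\eqref{eq:rising} and~\eqref{eq:lowering}, and your treatment of part~ii) and of the ordering among subsets inside the image of $\upsilon$ in part~i) is a correct expansion of that. The genuine gap is in the step you yourself single out as the crux of part~i). What you need there is the strict inequality $\np{\Upsilon(\by),\bchi(T)} < \np{\Upsilon(\by),\upsilon\bchi(S_r)}$ for every $T\in\binom{[d]}{N}$ outside the image of $\upsilon$; what your rank-bound argument gives is that such a $T$ cannot occur in \emph{any} fundamental $r$-lineup. These are different statements: non-membership in lineups says nothing about the value of $T$ under the particular functional $\Upsilon(\by)$, which is never generic when $N>r-1$ or $d>N+r-1$ (it has repeated coordinates by construction) and can tie $T$ with $\upsilon\bchi(S_r)$.

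Such ties really occur, so the strict inequality you are trying to prove is false. Take $r=3$, $N=2$, $d=5$ (the hypothesis $r-1\leq N\leq d-r+1$ holds) and $\by=(1,0.5,0.3,0.2)$: this fundamental vector induces the breakaway lineup $(\bchi(12),\bchi(13),\bchi(14))$ on $\FerPC[2][4]$, but $\Upsilon(\by)=(1,0.5,0.3,0.2,0.2)$ gives $\bchi(14)$ and $\bchi(15)$ the same value $1.2$, so $\Upsilon(\by)$ induces a $3$-ranking with a tied last block, not the $3$-lineup $\upsilon\ell(\by)$; the peloton lineups fail in the same way whenever $N>r-1$. These boundary lineups are precisely the ones the paper highlights in Definition~\ref{def:special}, so the case cannot be dismissed. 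The repair is either to read the conclusion in closed-cone form ($\Upsilon(\by)$ satisfies all the defining inequalities of $\pol[K](\upsilon\ell(\by))$ weakly, which is all that Proposition~\ref{prop:lift_lineup} actually uses for the cone statement), or to perturb: any strictly decreasing $\bz$ with $\Psi(\bz)\in\pol[K]^{\circ}(\ell(\by))$, e.g.\ $\Upsilon(\by)+\epsilon(d,d-1,\dots,1)$ for small $\epsilon>0$, does induce $\upsilon\ell(\by)$, and proving \emph{that} is where your rank bound genuinely enters --- a violating outside subset of minimal Gale rank would have to cover an element of the lineup of rank exactly $r$, which by the ideal property of Lemma~\ref{lem:Fthreshold} can only be $\upsilon\bchi(S_r)$, forcing its value strictly below $\np{\bz,\upsilon\bchi(S_r)}$, a contradiction. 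The paper's one-line proof glosses over the same point, but as written your appeal to Lemma~\ref{lem:gale} does not close it.
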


\begin{proof}
This is a consequence of Lemma~\ref{lem:gale} together with Equations \eqref{eq:rising} and \eqref{eq:lowering}.
\end{proof}

\begin{proposition}\label{prop:lift_lineup}
Let $1\leq r-1\leq N \leq d-r+1$.
The map $\upsilon$ induces a bijection between the sets of fundamental occupation vectors $\non[f][r][r-1][2r-2]$ and $\non[f][r][N][d]$.
Furthermore, if $\bo\in\non[f][r][r-1][2r-2]$ and $\pol[C]$ is its fundamental cone in $\FerPoly[r][r-1,2r-2]$, then the fundamental cone of $\upsilon(\bo)$ in $\FerPoly[r]$ is
\[
\cone\{\bef_1,\dots,\bef_{N+1-r}\}+\Upsilon(\pol[C])+\cone\{\bef_{N+r},\dots,\bef_{d-1}, \bef_d,-\bef_d\}.
\]
\end{proposition}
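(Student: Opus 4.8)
The plan is to establish the two assertions separately: first the bijection of fundamental occupation vectors, and then the description of the fundamental cone, which I would prove by mutual inclusion. Throughout, the three maps $\upsilon$, $\Upsilon$ and $\Psi$ together with Lemma~\ref{lem:UpDown} do all the transport work, while Lemma~\ref{lem:gale} is what makes the base parameters $(r-1,2r-2)$ suffice. A preliminary step is to record the effect of $\upsilon$ on occupation vectors directly: if $\ell=(\bchi(S_1),\dots,\bchi(S_r))$ is a fundamental lineup of $\FerPC[r-1][2r-2]$ with occupation vector $\bo=\sum_k w_k\bchi(S_k)$, then since $[N-r+1]\subseteq\upsilon S_k$ and the shifted elements of $S_k$ land in coordinates $N-r+2,\dots,N+r-1$, the vector $\sum_k w_k\bchi(\upsilon S_k)$ is obtained from $\bo$ by inserting $N-r+1$ leading $1$'s (using $\sum_k w_k=1$) and $d-N-r+1$ trailing $0$'s; I denote it $\upsilon(\bo)$, and this padding map is manifestly injective.

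For the bijection, that $\upsilon$ sends fundamental lineups to fundamental lineups is Lemma~\ref{lem:UpDown}(i), applied to a generic $\by$ inducing $\ell$ and noting $\Upsilon(\Phi_{2r-2})\subseteq\Phi_d$ (from $\Upsilon(\bef_i)=\bef_{i+N-r+1}$ and the decreasing order being preserved). Surjectivity onto fundamental lineups is the key point: by Lemma~\ref{lem:gale} every subset $T$ occurring in a fundamental lineup of $\FerPC$ satisfies $[N-r+1]\subseteq T\subseteq[N+r-1]$, hence lies in the image of $\upsilon$, so such a lineup equals $\upsilon(\upsilon^{-1}\ell')$, and Lemma~\ref{lem:UpDown}(ii) certifies that $\upsilon^{-1}\ell'$ is itself a fundamental lineup of $\FerPC[r-1][2r-2]$. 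Since distinct fundamental lineups give distinct vertices by Theorem~\ref{thm:vrepr_lineup}, combining these with the injectivity of the padding map yields the claimed bijection on fundamental occupation vectors.

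For the cone identity, write $\pol[D]$ for the right-hand side. I would prove $\pol[D]\subseteq\pol[K]_{\FerPC}(\upsilon\ell)\cap\Phi_d$ summand by summand, using that a normal cone is closed under addition: each $\bef_j$ with $j\le N+1-r$ and each $\bef_i$ with $N+r\le i\le d$ (as well as $\pm\bef_d$) evaluates equally on all $\bchi(\upsilon S_k)$ — namely $|[j]\cap\upsilon S_k|=j$ in the first case and $|[i]\cap\upsilon S_k|=N$ in the second, via $[N-r+1]\subseteq\upsilon S_k\subseteq[N+r-1]$ — and takes no larger value on any other $N$-subset, so these rays lie in the closed ranking cone and in $\Phi_d$; for $\Upsilon(\pol[C])$ I would take $\by$ in the relative interior of $\pol[C]$, use Lemma~\ref{lem:UpDown}(i) to place $\Upsilon(\by)$ in $\pol[K]^\circ_{\FerPC}(\upsilon\ell)$, and pass to the closure. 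For the reverse inclusion, given $\bz\in\pol[K]_{\FerPC}(\upsilon\ell)\cap\Phi_d$, I would set $\by:=\Psi(\bz)$, which lies in $\Phi_{2r-2}$ because $\bz$ is decreasing and in $\pol[C]$ by Lemma~\ref{lem:UpDown}(ii); then, using Equations~\eqref{eq:rising}--\eqref{eq:lowering} and the explicit form of $\Upsilon$, the residual $\bz-\Upsilon(\Psi(\bz))$ is supported on the head coordinates $[1,N-r+1]$ and the tail coordinates, and the monotonicity $z_1\ge\cdots\ge z_d$ lets me read off non-negative coefficients expressing the head part in $\cone\{\bef_1,\dots,\bef_{N+1-r}\}$ and the tail part in $\cone\{\bef_{N+r},\dots,\bef_{d-1},\bef_d,-\bef_d\}$, with the extra freedom along $\bef_d$ supplied by Convention~\ref{convention}.

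The main obstacle is the reverse inclusion, and specifically the bookkeeping at the two interfaces near coordinates $N-r+1$ and $N+r-1$: one must check that the head/middle/tail splitting of $\bz-\Upsilon(\Psi(\bz))$ is internally consistent (the middle block matching $\Upsilon(\Psi(\bz))$ exactly) and that every coefficient produced is non-negative, which is precisely where the decreasing order of $\bz$ and the exact index ranges of $\Upsilon$, of $\cone\{\bef_1,\dots,\bef_{N+1-r}\}$, and of $\cone\{\bef_{N+r},\dots,\bef_{d-1},\bef_d,-\bef_d\}$ must be reconciled modulo the lineality $\spa\{\bef_d\}$. I would handle the generic case (strictly decreasing, strictly sorted $\bz$) first, where Lemma~\ref{lem:UpDown} applies verbatim, and then recover the general case by a closure argument, since both $\pol[D]$ and $\pol[K]_{\FerPC}(\upsilon\ell)\cap\Phi_d$ are closed cones.
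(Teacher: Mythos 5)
Your overall strategy is the same as the paper's: Lemma~\ref{lem:UpDown} reduces everything to the characterization of the fundamental cone of $\upsilon(\bo)$ as $\Psi^{-1}(\pol[C])\cap\Phi_d$, and your forward inclusion (each listed generator lies in the closed ranking cone, which is closed under addition) is correct. However, the reverse inclusion --- the step you dismiss as interface ``bookkeeping'' --- cannot be completed, because the cone identity as printed is false whenever $d\geq N+r$. Push your own residual computation to the end: for $\bz\in\Psi^{-1}(\pol[C])\cap\Phi_d$, writing $\tilde z_i:=z_i-z_{i+1}\geq 0$, one finds
\[
\bz\;=\;\sum_{i=1}^{N-r+1}\tilde z_i\,\bef_i\;+\;\Upsilon(\Psi(\bz))\;+\;\tilde z_{N+r-1}\,\bef_{N+r-1}\;+\;\sum_{i=N+r}^{d-1}\tilde z_i\,\bef_i\;+\;(z_d-z_{N+r-1})\,\bef_d,
\]
so the generator $\bef_{N+r-1}$ appears with coefficient $\tilde z_{N+r-1}$, which is positive in general. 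This generator lies in \emph{none} of the three stated cones: it is not in the head cone, not in $\cone\{\bef_{N+r},\dots,\bef_{d-1},\bef_d,-\bef_d\}$, and not in the image of $\Upsilon$, which is spanned by $\bef_{N-r+2},\dots,\bef_{N+r-2}$ and $\bef_d$. In fact, every element of the stated three-fold sum has equal coordinates in positions $N+r-1$ and $N+r$, so that sum contains no strictly decreasing vector, whereas every fundamental cone does contain one (perturb a fundamental vector inducing the lineup by $\epsilon(d,d-1,\dots,1)$). Concretely, for $(r,N,d)=(2,2,4)$ the fundamental cone of the unique fundamental lineup $(\bchi(12),\bchi(13))$ is all of $\Phi_4$, but the stated right-hand side equals $\cone\{\bef_1,\bef_2,\bef_4,-\bef_4\}$, which forces $z_3=z_4$.

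So the obstacle is not a reconciliation of index ranges but an off-by-one in the statement itself: the tail cone must be $\cone\{\bef_{N+r-1},\dots,\bef_{d-1},\bef_d,-\bef_d\}$. With that correction your plan closes: the display above gives the reverse inclusion, and your summand-by-summand forward inclusion still works, since by Lemma~\ref{lem:gale} every lineup element $T_k$ satisfies $T_k\subseteq[N+r-1]$, so $\np{\bef_{N+r-1},\bchi(T_k)}=N$ is maximal, i.e.\ $\bef_{N+r-1}$ also lies in the closed ranking cone; moreover $\Psi(\bef_i)=\bef_{2r-2}$ for all $i\geq N+r-1$, so adding tail elements moves $\Psi(\bz)$ only along the lineality direction $\bef_{2r-2}$, which $\pol[C]$ absorbs. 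For comparison, the paper's own proof states the identity ``fundamental cone $=\Psi^{-1}(\pol[C])\cap\Phi_d$'' and then asserts the equivalence with the three-cone sum without detail, glossing over exactly the point where your argument (necessarily) breaks; the slip is harmless for the paper's later use only because $\bef_{N+r-1}$ is discarded as a facet normal by Proposition~\ref{prop:basic_cleanup_2}, but it is fatal for a verbatim proof of the proposition as stated.
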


\begin{proof}	
The bijection follows from Lemma~\ref{lem:UpDown}.
Let $\pol[C]'$ be the fundamental cone of $\upsilon(\bo)$.
Again by Lemma~\ref{lem:UpDown}, we have that $\pol[C]'=\Psi^{-1}(\pol[C])\cap \Phi_d$, that is, a vector $\bz=(z_1,\dots, z_d)\in \RR^d$ belongs to the fundamental cone of $\upsilon(\bo)$ if and only if $\bz\in \Phi_d$ and $(z_{N-r+2},\dots,z_{N+r-1})\in \pol[C]$. 
This is equivalent to $\bz\in\cone\{\bef_1,\dots,\bef_{N+1-r}\}+\Upsilon(\pol[C])+\cone\{\bef_{N+r},\dots,\bef_{d-1}, \bef_d,-\bef_d\}$.
\end{proof}

\begin{example}[Example~\ref{ex:4-10} continued]
\label{ex:Upsilon}
The two fundamental occupation vectors in
\[
\non[f][3][4][10]=\{(1,1,1,w_1,w_2,w_3,0,0,0,0),(1,1,w_1+w_2,w_1+w_3,w_2+w_3,0,0,0,0,0)\}
\]
are obtained by applying the map $\upsilon$ on the two fundamental occupation vectors in $\non[f][3][2][4]$:
\[
\non[f][3][2][4]=\{(1,w_1,w_2,w_3),(w_1+w_2,w_1+w_3,w_2+w_3,0)\}.
\]
They are therefore obtained by adding two 1's at the beginning and four 0's at the end.
\end{example}

The following theorem provides the stability result for fermions.

\begin{mainthm}[Stability Theorem]\label{thm:stability}
Let $1\leq r-1\leq N \leq d-r+1$.
The set of uncoarsenable fundamental $r$-rankings of $\FerPC[r-1][2r-2]$ and that of $\FerPC$ are in bijection.
Phrased using Convention~\ref{convention}, the bijection is as follows:
\begin{compactenum}
\item The two fundamental normal ray generators $\bef_1$ and $\bef_{2r-3}$ of $\FerFan[r][r-1][2r-2]$ correspond to the fundamental normal ray generators $\bef_1$ and $\bef_{d-1}$ of $\FerFan$.
\item Let $\by\not\in(\RR\bef_1\cup\RR\bef_{2r-3})$ be a fundamental ray generator of $\FerFan[r][r-1][2r-2]\cap\Phi_{2r-2}$. The vector $\by$ is a normal ray generator of $\FerFan[r][r-1][2r-2]$ if and only if the vector $\Upsilon(\by)$ is a fundamental normal ray generator of $\FerFan$.
\end{compactenum}
\end{mainthm}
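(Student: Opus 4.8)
The plan is to reduce the Stability Theorem to the bijection of fundamental occupation vectors together with their fundamental cones already established in Proposition~\ref{prop:lift_lineup}, and then to track how the facet-defining condition of Theorem~\ref{thm:dimension} transports along the maps $\Upsilon$ and $\Psi$. First I would recall that the uncoarsenable fundamental $r$-rankings are precisely the fundamental rays of $\FerFan$, equivalently the coarsest non-trivial $r$-rankings, which by Theorem~\ref{thm:vrepr_lineup} index the facets of $\FerPoly[r]$ visible in the fundamental chamber. By Proposition~\ref{prop:lift_lineup}, the fundamental occupation vectors of $\FerPoly[r][r-1,2r-2]$ biject with those of $\FerPoly[r]$ via $\upsilon$, and the fundamental cone of $\upsilon(\bo)$ is
\[
\cone\{\bef_1,\dots,\bef_{N+1-r}\}+\Upsilon(\pol[C])+\cone\{\bef_{N+r},\dots,\bef_{d-1}, \bef_d,-\bef_d\}.
\]
This displayed formula is the engine of the whole argument: it says the fundamental fan of the larger polytope is obtained from that of the base case by inserting a block of ``frozen'' rays $\bef_1,\dots,\bef_{N+1-r}$ at the front and $\bef_{N+r},\dots,\bef_{d-1}$ at the back, while the ``active'' middle is an $\Upsilon$-image of the base fan.

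The key steps, in order, would be as follows. First I would establish that $\Upsilon$ and $\Psi$ set up a one-to-one correspondence of fundamental rays of the two fundamental fans \emph{outside} the coordinate rays $\bef_1$ and $\bef_{2r-3}$, which is immediate from Lemma~\ref{lem:UpDown} and the explicit images $\Upsilon(\bef_i)=\bef_{i+N-r+1}$ for $i\in[2r-3]$ and $\Upsilon(\bef_{2r-2})=\bef_d$. Second, I would prove part~(1) directly using Proposition~\ref{prop:basic_cleanup_2}: in the base case $(N,d)=(r-1,2r-2)$ that proposition tells us the only fundamental-basis vectors spanning normal rays are $\bef_1,\bef_N=\bef_{r-1},\bef_{d-1}=\bef_{2r-3}$, and after inflation the analogous statement gives $\bef_1,\bef_N,\bef_{d-1}$; I must check that $\bef_1\mapsto\bef_1$ and $\bef_{2r-3}\mapsto\bef_{d-1}$ are genuinely matched, which requires comparing the face $\FerPoly[r][N,i]$-type restrictions appearing in that proof. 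The middle ray $\bef_N$ is the peloton/breakaway witness and is handled separately since it is not of the form $\Upsilon(\by)$ for a ray of the base fan — here I would invoke Lemma~\ref{lem:breakawaypeloton}, which pins down the breakaway and peloton occupation vectors as the unique ones touching the extreme coordinates, guaranteeing that $\bef_N$ persists as a facet normal across the inflation.

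For part~(2), the heart of the matter, I would show that a fundamental ray $\by\notin(\RR\bef_1\cup\RR\bef_{2r-3})$ is a normal ray of $\FerFan[r][r-1][2r-2]$ if and only if $\Upsilon(\by)$ is a normal ray of $\FerFan$, by comparing the face dimensions computed via Theorem~\ref{thm:dimension} on the two sides. Concretely, I would verify that the composition label $\bc_{\by}$ of $\by$ in $\Phi_{2r-2}$ and the label $\bc_{\Upsilon(\by)}$ in $\Phi_d$ differ only by enlarging the first part from $c_1$ to $c_1+(N-r+1)$ and the last part from $c_k$ to $c_k+(d-N-r+1)$, i.e. exactly in the two ``frozen'' blocks; crucially these enlarged blocks lie in $\operatorname{Fix}(\Upsilon(\by))$, because by Lemma~\ref{lem:breakawaypeloton} no occupation vector other than breakaway/peloton varies in those coordinates, so they contribute $0$ to the sum $\sum_{i\in[k]\setminus\operatorname{Fix}}(c_i-1)$ in Equation~\eqref{eq:dimension}. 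Using Equations~\eqref{eq:rising} and~\eqref{eq:lowering} to identify $\bV_{\by}$ with $\bV_{\Upsilon(\by)}$ under $\upsilon$, the projection-image term $\dim\conv\{\Proj{\by}(\bv):\bv\in\bV_{\by}\}$ is also preserved, so $\dim\FerPoly[r][r-1,2r-2]^{\by}$ and $\dim\FerPoly[r]^{\Upsilon(\by)}$ differ by precisely the codimension gap between the two ambient polytopes, making one a facet exactly when the other is. The main obstacle I anticipate is the bookkeeping around $\operatorname{Fix}$: I must argue rigorously that the inserted coordinates are fixed pointwise on the relevant faces, which is where Lemma~\ref{lem:breakawaypeloton} and the ``barely has an effect'' content of Lemma~\ref{lem:gale} must be combined carefully — a ray $\by$ whose support genuinely reaches into the frozen blocks could in principle create a larger essential cone, and ruling this out (equivalently, showing such $\by$ cannot be uncoarsenable outside $\bef_1,\bef_{2r-3}$) is the delicate point that ties the whole proof together.
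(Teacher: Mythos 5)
Your overall skeleton coincides with the paper's: reduce to Proposition~\ref{prop:lift_lineup}, compare face dimensions on both sides via Theorem~\ref{thm:dimension}, and control the two new blocks of the composition using Lemma~\ref{lem:breakawaypeloton}. However, the pivotal claim in your part~(2) is backwards, and it makes your dimension count self-contradictory. You assert that the enlarged first and last blocks lie in $\operatorname{Fix}(\Upsilon(\by))$ and therefore contribute $0$ to the sum $\sum_{i\in[k]\setminus\operatorname{Fix}}(c_i-1)$ in Equation~\eqref{eq:dimension}, while the projection term is preserved; but if both terms of \eqref{eq:dimension} are unchanged, then $\dim (\FerPoly[r][N,d])^{\Upsilon(\by)}=\dim(\FerPoly[r][r-1,2r-2])^{\by}=2r-4$, which is strictly smaller than $d-2$ whenever $d>2r-2$, so $\Upsilon(\by)$ would \emph{never} define a facet. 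Your stated conclusion that the dimensions ``differ by precisely the codimension gap'' contradicts your own premise. The truth is the opposite: $\Upsilon(\by)$ defines a facet exactly when the two enlarged blocks are \emph{not} in $\operatorname{Fix}(\Upsilon(\by))$, so that they contribute the full $(N-r+1)+(d-N-r+1)=d-2r+2$ to the dimension. Whether this happens is governed by $\bV_{\by}$: the first enlarged block fails to be fixed precisely when either the first block of $\by$ was already non-fixed, or $c_1=1$ and the peloton occupation vector belongs to $\bV_{\by}$ (prepending $1$'s to the peloton, whose first coordinate is $<1$, makes it non-inert on the enlarged block); symmetrically for the last block and the breakaway vector. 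Lemma~\ref{lem:breakawaypeloton} enters exactly here --- not to show the blocks are fixed, but to show they are not.

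Consequently, the hard direction of the equivalence is missing from your proposal: you must show that if $\by$ is a normal ray generator of $\FerFan[r][r-1][2r-2]$ with $\by\notin\RR\bef_1\cup\RR\bef_{2r-3}$, then neither enlarged block of $\Upsilon(\by)$ is fixed. The paper argues by contradiction. If the first enlarged block were fixed, then either $c_1>1$, in which case both terms of \eqref{eq:dimension} already drop for $\by$ itself and $\by$ was not a facet normal to begin with; or $c_1=1$ and every occupation vector in $\bV_{\by}$ has first coordinate $1$, i.e.\ the element $1$ lies in every set of the corresponding threshold ideal. In the latter case one deletes $1$ from all sets, reducing the parameters to $(r-2,2r-3)$, and finds that the face induced by the truncated vector would have the full dimension $2r-4$ of $\FerPoly[r][r-2,2r-3]$, which forces that vector to be inert and hence $\by\in\RR\bef_1$ --- the excluded case. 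Without this parameter-reduction argument (or a substitute for it), the ``only if'' direction of part~(2) is unproven; this is precisely the ``delicate point'' you flagged, but the resolution goes in the direction opposite to the one you propose. A further small slip: $\bef_N=\Upsilon(\bef_{r-1})$ \emph{is} of the form $\Upsilon(\by)$ for a fundamental normal ray $\by$ of the base fan, so it is covered by part~(2) and needs no separate treatment in part~(1).
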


\begin{proof}
(1).
By Proposition~\ref{prop:lift_lineup}, the fundamental ray generators of $\FerFan\cap\Phi_d$ are either of the form $\Upsilon(\by)$ for a fundamental ray generator $\by$ of $\FerFan[r][r-1][2r-2]$, or of the form $\bef_i$, for some $i\in[N+1-r]\cup\{N+r,\dots,d,-d\}$. 
Furthermore, by Proposition~\ref{prop:basic_cleanup_2}, among the set $\{\bef_i\}_{i\in[d]}\setminus\{\bef_N=\Upsilon(\bef_{r-1})\}$, only $\bef_1$ and~$\bef_{d-1}$ give fundamental normal ray generators. 

\noindent (2). By the proof of part (1), we may assume that $\by$ is not a scalar multiple of $\bef_i$, for any ${i\in[2r-2]}$.
Let $\by'=\Upsilon(\by)$.
We use Theorem \ref{thm:dimension} as follows to compute the dimension of the face $\FerPoly[r][N,d]^{\by'}$ from the dimension of the face ${\FerPoly[r][r-1,2r-2]^{\by}}$.
As illustrated in Example~\ref{ex:Upsilon}, the occupation vectors $\non$ are obtained by adding $1$'s at the beginning and $0$'s at the end to the vectors in $\non[f][r][r-1][2r-2]$.
This induces a bijection between $\bV_{\by}$ and~$\bV_{\by'}$ (as defined in Theorem~\ref{thm:dimension}).
If $\bc_{\by}=(c_1,c_2,\dots,c_{k-1},c_k)$ denotes the composition associated to $\by$ (see Section~\ref{ssec:faces_sym}), then the one associated to $\by'$ is $(c_1+(N-r+1),c_2,\dots,c_{k-1},c_k+(d-N-r+1))$.
The projections $\Proj{\by}(\bV_{\by})$ and $\Proj{\by'}(\bV_{\by'})$ only differ by an affine transformation, and hence the first term in Equation~\eqref{eq:dimension} remains the same. 
In fact, the only difference in determining the dimension happens in the first and last blocks of the composition involved in the sum.
We start by analyzing the difference in the first block.

\textbf{Case 1) $1\not\in\operatorname{Fix}(\by)$.}
The vectors in $\bV_{\by'}$ are obtained from those of $\bV_{\by}$ by appending $1$'s at the beginning.
Therefore, we must have $1\notin \operatorname{Fix}(\by')$ and the contributed increase of dimension of the first block is $N-r+1$. 

\textbf{Case 2) $1\in\operatorname{Fix}(\by)$.}
If $c_1>1$, then the first $c_1$ coordinates of the points in $\bV_{\by}$ are equal to~$1$.
This follows from Lemma~\ref{lem:breakawaypeloton}, i.e. there is exactly one occupation vector in $\non[f][r][r-1][2r-2]$ whose first coordinate is different from~$1$ and its second coordinate is strictly smaller because $\bw$ has distinct entries.
Hence, $1\in\operatorname{Fix}(\by')$ and there is no contribution to the increase in dimension in this case.
Else if $c_1=1$, there are two situations to verify.
If the peloton occupation vector is in~$\bV_{\by}$, then adding 1's at the beginning makes the corresponding vector not inert and so $1\not\in\operatorname{Fix}(\by')$.
The contribution to the increase in dimension in this case is $N-r+1$.
Finally, if the peloton occupation vector is not in $\bV_{\by}$, then all vectors in $\bV_{\by}$ start with a $1$ by Lemma~\ref{lem:breakawaypeloton} and $1\in\operatorname{Fix}(\by')$.
In this case, there is no contribution to the increase in dimension.

The analysis for the last block is obtained similarly by using the breakaway occupation vector in place of the peloton while using Lemma~\ref{lem:breakawaypeloton}.
So we have the following increase in dimension by applying $\Upsilon$:
\begin{equation}\label{eq:dimension_jump}
\resizebox{0.93\textwidth}{!}{$
\dim (\FerPoly[r][N,d])^{\by'}- \dim (\FerPoly[r][r-1,2r-2])^{\by} 
=
{\begin{cases}
    d-2r+2 &\text{ if }1\notin \operatorname{Fix}(\by')\text{ and }k\notin \operatorname{Fix}(\by'),\\
    N-r+1 &\text{ if }1\notin \operatorname{Fix}(\by')\text{ and }k\in \operatorname{Fix}(\by'),\\
    d-N-r+1 &\text{ if } 1\in \operatorname{Fix}(\by')\text{ and }k\notin \operatorname{Fix}(\by'),\\
    0 &\text{ if } 1\in \operatorname{Fix}(\by')\text{ and }k\in \operatorname{Fix}(\by').
\end{cases}}$}
\end{equation}

\noindent
$\Leftarrow$)
If $\by$ does not induce a facet, then $\dim (\FerPoly[r][r-1,2r-2])^{\by} <  2r-4$ and Equation~\eqref{eq:dimension_jump} gives
\[\dim (\FerPoly[r][N,d])^{\by'} \leq \dim (\FerPoly[r][r-1,2r-2])^{\by} +d-2r+2 < d-2;\]
showing that $\by'$ does not induce a facet either.

\noindent
$\Rightarrow$)
If $\by$ is a normal ray generator, then $\dim (\FerPoly[r][r-1,2r-2])^{\by}=2r-4$.
We need to show that the last three cases in Equation~\eqref{eq:dimension_jump} do not occur, i.e. $1\in\operatorname{Fix}(\by')$ or $k\in\operatorname{Fix}(\by')$.
If $1\in\operatorname{Fix}(\by')$, then by the above case analysis, we must have $1\in\operatorname{Fix}(\by)$.
There are two possibilities: $c_1=1$ or $c_1>1$.
If $c_1>1$, the two summands of Theorem~\ref{thm:dimension} decrease by at least $1$: the dimension of the image of the projection decreases by at least $1$ since it satisfies one more equality $x_1=c_1$ (because the first coordinate of points in $\bV_{\by}$ is $1$), and the summand $(c_1-1)\geq1$ does not appear in the sum.
So,~$\by$ does not induce a facet contrary to the assumption.
Since the maximal value of the formula for the dimension is $d-1$, a decrease of at least $2$ can not lead to a facet.
If $c_1=1$, by the above case analysis, the first coordinate of the points in $\bV_{\by}$ is equal to $1$, which means that~$1$ belongs to every element in the order ideal given by the $r$-lineup induced by~$\by$.
This allows to reduce both parameters $r-1$ and $2r-2$ by $1$ as follows.
Consider the associated saturated chain of ideals given by $\by$, remove~$1$ from all the $(r-1)$-subsets in the ideals and substract $1$ to all the other elements, to end up with the set $\bV_{\tilde\by}$, where $\tilde\by\in\RR^{2r-3}$ is equal to last $2r-3$ entries of $\by$.
This does not affect the count in Equation \eqref{eq:dimension}: the first coordinate of $\Proj{\by}(\bv)$ is~$1$ for every $\bv\in \bV_{\by}$, so the dimension does not drop when it is omitted, and in the second term we have $c_1-1=0$.
Hence, the symmetrization of $\bV_{\tilde\by}$ by the Young subgroup gives rise to a face of dimension $2r-4$ in the polytope $\FerPoly[r][r-2,2r-3]$, which has itself dimension~$2r-4$.
The only way this can happen is if $\tilde\by$ is a multiple of $(1,\dots,1)\in\RR^{2r-3}$, but then this implies that $\by\in \RR\bef_1$ by Convention \ref{convention}.
This case was excluded at the beginning.
Excluding the case $k\in\operatorname{Fix}(\by')$ is obtained similarly by reducing only the dimension from $2r-2$ to $2r-3$ and keeing $r-1$ fixed.

Finally, if $1\notin \operatorname{Fix}(\by')$ when $N>r-1$ and $k\notin \operatorname{Fix}(\by')$ when $d>N+r-1$, then by Equation~\eqref{eq:dimension_jump}
\[
\dim (\FerPoly[r][N,d])^{\by'}= \dim (\FerPoly[r][r-1,2r-2])^{\by} +d-2r+2=d-2,
\]
which is the dimension of a facet.
\end{proof}

Theorem \ref{thm:stability} implies that the knowledge of the fan $\FerFan[r][r-1][2r-2]$ for some $r$, suffices to construct every fan $\FerFan[r][N][d]$, for $N>r-1$ and $d>N+r-1$.

\subsubsection{For $\BosPC$}

Similar results hold in the case of bosons. 
We state the translated definitions, lemmas and propositions without proofs as they are obtained similarly as for fermions.

\begin{lemma}
\label{lem:gale_bosonic}
Let $r\geq2$ and $\ell=(\bchi(S_1),\bchi(S_2),\dots,\bchi(S_r))$ be a fundamental lineup of $\BosPC$, where $N\geq r-1$.
\begin{enumerate}[label=\roman{enumi}),ref=\roman{enumi})]
\item The largest element of $S_k$ with non-zero multiplicity is at most $r$, for all $1\leq k\leq r$.
\item $1$ has multiplicity greater than $N-r+1$ in $S_k$, for all $1\leq k\leq r$.
\end{enumerate}
\end{lemma}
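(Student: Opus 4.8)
The plan is to mirror the proof of the fermionic Lemma~\ref{lem:gale}, replacing the rank function of $\FerPoset$ by the natural rank function of the bosonic Gale poset $\BosPoset$ and invoking the bosonic threshold structure of Lemma~\ref{lem:Bthreshold}. The poset $\BosPoset$ is ranked: its minimum is the multiset $\{1^N\}$, a cover increases exactly one entry by one (hence the sum by one), so $\operatorname{rank}(S)=\sum_{i=1}^N s_i - N$ for $S=\{s_1\le\cdots\le s_N\}$. The first step, common to both parts, is the bound that every $S_k$ appearing in a fundamental lineup of length $r$ satisfies $\operatorname{rank}(S_k)\le r-1$: by Lemma~\ref{lem:Bthreshold} the set $\{S_1,\dots,S_k\}$ is an order ideal of cardinality $k\le r$, it therefore contains the principal ideal $\{T : T\le S_k\}$, and the latter contains a saturated chain from $\{1^N\}$ to $S_k$ with $\operatorname{rank}(S_k)+1$ elements; thus $\operatorname{rank}(S_k)+1\le r$.

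For part~(i), I would observe that a multiset $S$ whose largest element of non-zero multiplicity is at least $r+1$ satisfies $\sum_i s_i \ge (N-1)+(r+1)=N+r$, so $\operatorname{rank}(S)\ge r > r-1$; by the rank bound such an $S$ cannot occur in the lineup, giving the desired ceiling of $r$. This is precisely the bosonic analogue of the argument bounding the largest entry by $N+r-1$ in Lemma~\ref{lem:gale}\,\ref{lem:gale_i}.

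For part~(ii), I would first force a single copy of $1$: assuming $N>r-1$, any $S$ with $1\notin S$ satisfies $\sum_i s_i\ge 2N$, so $\operatorname{rank}(S)\ge N>r-1$ and $S$ is excluded; hence each $S_k$ contains at least one $1$. The crux is then an iterative reduction. Fixing a fundamental $\by\in\Phi_d$ inducing $\ell$, deleting one copy of $1$ from every $S_k$ decreases each $\langle\by,\bchi(S_k)\rangle$ by the same constant $y_1$, hence preserves their relative order; since $T\mapsto T\cup\{1\}$ is a bijection between $\multiset{[d]}{N-1}$ and the $N$-multisets of $[d]$ containing a $1$ that shifts $\langle\by,\cdot\rangle$ by $y_1$, and since (when $N>r-1$) the top $r$ multisets all contain a $1$, the reduced tuple $(\bchi(S_1'),\dots,\bchi(S_r'))$ is again the top $r$, i.e.\ a fundamental lineup of $\BosPC[N-1]$. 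Iterating this reduction while the current multiset size exceeds $r-1$ forces the multiplicity of $1$ to grow by one at each stage, and counting the admissible stages yields the stated lower bound on the multiplicity of $1$ in every $S_k$.

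The main obstacle is controlling precisely how many reduction steps are admissible, i.e.\ pinning down the exact multiplicity bound at the boundary $N'=r-1$: the exclusion of ``no copy of $1$'' only works while the current size strictly exceeds $r-1$, so one must verify that the reduced tuple remains a genuine \emph{lineup} (and not merely a saturated chain of threshold ideals, cf.\ the fermionic Example~\ref{ex:noncoherent_fermionic}) at each step, and handle the terminal step carefully, where the extremal chain $\{1^N\}\subset\{1^{N-1},2\}\subset\cdots$ already saturates the bound. This boundary bookkeeping, together with the verification that the constant shift by $y_1$ preserves the top-$r$ property, is the delicate point; the remainder is a direct transcription of the fermionic argument, and can alternatively be read off through the natural map between $\FerPoset$ and $\BosPoset[N][d-N+1]$.
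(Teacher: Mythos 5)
Your proof is correct and is exactly what the paper intends: Lemma~\ref{lem:gale_bosonic} is stated in the paper without proof (``obtained similarly as for fermions''), and your argument is the faithful bosonic translation of the proof of Lemma~\ref{lem:gale}, via the rank function $\operatorname{rank}(S)=\sum_i s_i - N$ on $\BosPoset$ together with the delete-one-copy-of-$1$ reduction.

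Two brief remarks. First, the ``delicate point'' you flag dissolves: the rank bound only uses that initial segments of the lineup are order ideals (Lemma~\ref{lem:Bthreshold}), and if every member of an order ideal contains a $1$, then deleting one copy of $1$ from each member again yields an order ideal of $\BosPoset[N-1][d]$ of the same cardinality, so coherence of the reduced tuple never has to be re-verified (your shift-by-$y_1$ bijection argument does establish coherence, and is also fine; by contrast, your suggested shortcut through the natural map needs exactly this observation, since that map does not preserve thresholdness, cf.\ Example~\ref{ex:contrast}). Second, your count gives multiplicity of $1$ at least $N-r+1$; this is the correct and intended bound (it is what the map $\upsilon$ in Theorem~\ref{thm:stability_bosonic} requires), and it is tight --- e.g.\ the fundamental lineup $\{1^N\},\{1^{N-1}2\},\dots,\{1^{N-r+1}2^{r-1}\}$ attains it --- so the statement's ``multiplicity greater than $N-r+1$'' must be read as ``greater than $N-r$''.
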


For the rest of this section, we consider pairs $(N,d)$ such that $N\geq r-1$ and $d\geq r$ for a fixed value of $r\geq 2$.
The order-preserving map for bosons is
\[
\begin{array}{rlcl}
\upsilon: & \BosPoset[r-1][r]& \to & \BosPoset[N][d]\\
& (S(1), S(2), \ldots, S(r)) & \mapsto & (S(1)+N-r+1, S(2), \ldots, S(r)).
\end{array}
\]
\begin{lemma}
The map $\upsilon:\BosPoset[r-1][r]\to\BosPoset[N][d]$ is injective and its image consists of the $N$-multisubsets $S$ of $[d]$ that have $S(1)\geq N-r+1$ and $S(t)=0$ for all $t\in [r+1, d]$.
\end{lemma}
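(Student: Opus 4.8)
The plan is to establish the two assertions---injectivity and the image description---directly from the definition of $\upsilon$, following the same template as the fermionic statement proved earlier in this section. First I would fix the conventions: an element of $\BosPoset[N][d]$ is a multiplicity function $S\colon[d]\to\NN$ with $\sum_{t=1}^{d}S(t)=N$, and I would note that the standing hypothesis $N\geq r-1$ makes the shift $N-r+1$ a non-negative integer, so that the first entry $S(1)+N-r+1$ of $\upsilon(S)$ is again a non-negative multiplicity.

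Next I would check that $\upsilon$ is well-defined and that its image lands in the claimed set. The image $\upsilon(S)=(S(1)+N-r+1,S(2),\dots,S(r),0,\dots,0)$ has total multiplicity $(N-r+1)+\sum_{t=1}^{r}S(t)=(N-r+1)+(r-1)=N$, so it belongs to $\BosPoset[N][d]$; by construction its first multiplicity is at least $N-r+1$ and its multiplicities on $[r+1,d]$ vanish, which are exactly the two membership conditions in the statement. Injectivity is then immediate, since comparing $\upsilon(S)$ and $\upsilon(T)$ coordinatewise forces $S(1)=T(1)$ and $S(t)=T(t)$ for $2\leq t\leq r$.

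It remains to prove surjectivity onto the described set, which is the only step carrying any content. Given an $N$-multisubset $T$ of $[d]$ with $T(1)\geq N-r+1$ and $T(t)=0$ for $t\in[r+1,d]$, I would set $S(1):=T(1)-(N-r+1)$ and $S(t):=T(t)$ for $2\leq t\leq r$, and verify that this is a valid preimage: non-negativity of $S(1)$ uses $T(1)\geq N-r+1$, and the cardinality $\sum_{t=1}^{r}S(t)=r-1$ follows from $\sum_{t=1}^{r}T(t)=N$, where the support hypothesis $T(t)=0$ on $[r+1,d]$ is precisely what lets me replace the sum over $[d]$ by the sum over $[r]$. Then $\upsilon(S)=T$ by inspection. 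The argument is entirely elementary bookkeeping, so there is no genuine obstacle; the only points demanding care are keeping every multiplicity non-negative via $N\geq r-1$ and invoking the support condition in the cardinality count.
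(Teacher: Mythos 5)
Your proof is correct and is exactly the elementary verification the paper intends: the paper states this lemma without proof (explicitly noting that the bosonic statements are ``obtained similarly as for fermions''), and your argument---well-definedness via the cardinality count $(N-r+1)+(r-1)=N$, coordinatewise injectivity, and the explicit inverse $S(1)=T(1)-(N-r+1)$, $S(t)=T(t)$ on the described set---supplies precisely that omitted bookkeeping, with the hypotheses $N\geq r-1$ and the support condition invoked at the right places.
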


The linear maps $\Upsilon$ and $\Psi$ become
\[
\begin{array}{rcl}
\Upsilon: \RR^{r} & \longrightarrow & \RR^{d},\\
(y_1,\dots,y_{r})& \mapsto & (y_1,\dots,y_{r-1},\underbrace{y_{r},\dots,y_{r}}_{d-r+1}).
\\\\
\Psi: \RR^{d} & \longrightarrow & \RR^{r},\\
(z_1,\dots,z_{d})& \mapsto & (z_1,\dots,z_r).
\end{array}
\]
Finally, the equalities satisfied by inner products are
\[
\begin{split}
\np{\Upsilon(\by),\upsilon\bchi(S)}&=\np{\by,\bchi(S)}+ y_{1}(N-r+1),\\
\np{\Psi(\bz),\upsilon^{-1}\bchi(T)}&=\np{\bz,\bchi(T)}-z_1(N-r+1),
\end{split}
\]
where $S\in \multiset{[r]}{r-1}$ and $T\in \multiset{[d]}{N}$ is such that $T(1)\geq N-r+1$ and $T(t)=0$ for $t\in [r+1,d]$.

\begin{proposition}
Let $r\geq2$, $N\geq r-1$ and $d\geq r$.
The map $\upsilon$ induces a bijection between the sets of fundamental occupation vectors $\non[b][r][r-1][r]$ and $\non[b][r][N][d]$.
Furthermore, if $\bo\in\non[f][r][r-1][r]$ and $\pol[C]$ is its fundamental cone in $\BosPoly[r][r-1,r]$,
then the fundamental cone of $\upsilon(\bo)$ in $\BosPoly[r]$ is
\[
\Upsilon(\pol[C])+\cone\{\bef_{r+1},\dots,\bef_{d-1}, \bef_d,-\bef_d\}.
\]
\end{proposition}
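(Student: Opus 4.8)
The plan is to transcribe the fermionic argument of Proposition~\ref{prop:lift_lineup} into the bosonic setting, the only genuine content being a bosonic counterpart of Lemma~\ref{lem:UpDown} matching fundamental $r$-lineups of $\BosPC[r-1][r]$ with those of $\BosPC$. First I would record that the two substitution maps are order-compatible on the relevant configurations: for a fundamental $\by\in\RR^{r}$ the vector $\Upsilon(\by)$ is again fundamental (its coordinates read $y_1\geq\cdots\geq y_{r-1}\geq y_r=\cdots=y_r$), and likewise $\Psi(\bz)\in\Phi_r$ whenever $\bz\in\Phi_d$. The two displayed inner-product identities then give, for any $S,S'\in\multiset{[r]}{r-1}$,
\[
\np{\Upsilon(\by),\upsilon\bchi(S)}-\np{\Upsilon(\by),\upsilon\bchi(S')}=\np{\by,\bchi(S)}-\np{\by,\bchi(S')},
\]
so that $\Upsilon(\by)$ orders the multisubsets in the image of $\upsilon$ exactly as $\by$ orders their preimages, and symmetrically for $\Psi$ (the additive constants $\pm z_1(N-r+1)$ cancel in any difference).

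The crux is then to invoke Lemma~\ref{lem:gale_bosonic}: for $N\geq r-1$ and $d\geq r$, every element occurring in a fundamental $r$-lineup of $\BosPC$ has support contained in $[r]$ and multiplicity of $1$ at least $N-r+1$, hence lies in the image of $\upsilon$. Consequently the top $r$ elements selected by a fundamental $\Upsilon(\by)$ are forced to be $\upsilon$-images, and by the order-preservation above they are precisely $\upsilon\bchi(S_1),\dots,\upsilon\bchi(S_r)$ where $(\bchi(S_1),\dots,\bchi(S_r))=\ell(\by)$. This gives the bosonic analogue of Lemma~\ref{lem:UpDown}, namely $\ell(\Upsilon(\by))=\upsilon\,\ell(\by)$ and conversely $\ell(\Psi(\bz))=\upsilon^{-1}\ell(\bz)$. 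The bijection of fundamental occupation vectors is then immediate: fundamental lineups biject with fundamental occupation vectors by Theorem~\ref{thm:vrepr_lineup}, $\upsilon$ bijects the fundamental lineups, and a direct computation using $\sum_k w_k=1$ shows that $\OccVec_{\bw}(\upsilon\ell)$ equals $\OccVec_{\bw}(\ell)$ with its first coordinate raised by $N-r+1$ and $d-r$ zeros appended — the bosonic version of the coordinate recipe used in the fermionic proof.

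For the cone description I would argue exactly as in Proposition~\ref{prop:lift_lineup}. The lineup correspondence shows that a fundamental $\bz\in\Phi_d$ maximizes $\upsilon(\bo)$ if and only if its truncation $\Psi(\bz)=(z_1,\dots,z_r)$ maximizes $\bo$ on $\BosPoly[r][r-1,r]$, so the fundamental cone of $\upsilon(\bo)$ equals $\Psi^{-1}(\pol[C])\cap\Phi_d$. It then remains to rewrite this intersection as a Minkowski sum. Writing $\bz=\Upsilon(\Psi(\bz))+\big(\bz-\Upsilon(\Psi(\bz))\big)$, the correction term is supported on the trailing coordinates $r,\dots,d$ and, because $\bz\in\Phi_d$, forms a non-increasing non-positive tail; expressing it in the fundamental basis $\bef_{\bullet}$ modulo the lineality $\bef_d$ of Convention~\ref{convention}, and combining with $\Upsilon(\Phi_r)=\cone\{\bef_1,\dots,\bef_{r-1},\pm\bef_d\}$ (from $\Upsilon(\bef_i)=\bef_i$ for $i\leq r-1$ and $\Upsilon(\bef_r)=\bef_d$), recovers the stated Minkowski-sum form $\Upsilon(\pol[C])+\cone\{\bef_{r+1},\dots,\bef_{d-1},\bef_d,-\bef_d\}$.

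I expect the main obstacle to be exactly the bosonic Lemma~\ref{lem:UpDown} step: ruling out that lifting $\by$ to $\Upsilon(\by)$ produces a lineup element lying outside the image of $\upsilon$. This is precisely what Lemma~\ref{lem:gale_bosonic} prevents, and it is the only place where the hypotheses $N\geq r-1$ and $d\geq r$ are used. The final Minkowski-sum translation is routine bookkeeping with $\Upsilon$, $\Psi$ and the fundamental basis, though it requires care with the index ranges of the trailing rays and with the lineality identifications; I would verify it by tracking $\Upsilon(\bef_i)$ and the images of the fundamental rays of $\pol[C]$, just as in the fermionic computation.
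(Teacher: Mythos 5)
Your overall route is the one the paper intends (the paper omits this proof, deferring to the fermionic Proposition~\ref{prop:lift_lineup}), and your $\Psi$-direction is sound. But the $\Upsilon$-direction, as you argue it, is circular and its conclusion is false as stated: when $d>r$ the vector $\Upsilon(\by)$ is never generic for $\BosPC$, since its last $d-r+1$ coordinates coincide. Concretely, if the $k$-th element $S_k$ of $\ell(\by)$ has $S_k(r)>0$ (this happens already for $r=2$, where $S_2=\{2\}$), then $\upsilon\bchi(S_k)$ ties under $\Upsilon(\by)$ with the multiset obtained by trading one occurrence of $r$ for an $r+1$, which lies outside the image of $\upsilon$. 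So $\Upsilon(\by)$ induces only an $r$-ranking, Lemma~\ref{lem:gale_bosonic}, whose hypothesis is that one already has a lineup, cannot be applied to it, and the identity $\ell(\Upsilon(\by))=\upsilon\,\ell(\by)$ you rely on fails literally. The bijection itself is true and can be rescued: perturb to $\bz_\epsilon=\Upsilon(\by)+\epsilon\bu$ with $\bu$ strictly decreasing and generic; then $\bz_\epsilon$ is fundamental and generic, $\Psi(\bz_\epsilon)=\by+\epsilon\Psi(\bu)$ still induces $\ell(\by)$ for small $\epsilon$, and your (correct) $\Psi$-direction applied to $\bz_\epsilon$ gives $\ell(\bz_\epsilon)=\upsilon\,\ell(\by)$, so $\upsilon\,\ell(\by)$ is a fundamental lineup. (The paper's own Lemma~\ref{lem:UpDown} is loose in exactly the same way, so this gap is inherited rather than introduced.)

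The more serious failure is the step you call routine bookkeeping: carried out honestly, it does not produce the stated cone. For $\bz\in\Psi^{-1}(\pol[C])\cap\Phi_d$ the tail $\bz-\Upsilon(\Psi(\bz))$ is supported on coordinates $r+1,\dots,d$ and expands in the fundamental basis as
\[
(z_r-z_{r+1})\,\bef_r+\sum_{i=r+1}^{d-1}(z_i-z_{i+1})\,\bef_i+(z_d-z_r)\,\bef_d,
\]
with a generically positive coefficient on $\bef_r$. That generator cannot be absorbed: $\Upsilon(\pol[C])$ is constant on coordinates $r,\dots,d$, and each of $\bef_{r+1},\dots,\bef_{d-1},\pm\bef_d$ is constant on coordinates $1,\dots,r+1$, so every vector of $\Upsilon(\pol[C])+\cone\{\bef_{r+1},\dots,\bef_{d-1},\bef_d,-\bef_d\}$ satisfies $z_r=z_{r+1}$. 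Hence the displayed cone is a \emph{proper} subcone of the fundamental cone $\Psi^{-1}(\pol[C])\cap\Phi_d$ whenever $d>r$: for $r=2$ the unique fundamental occupation vector of $\BosPoly[2][N,d]$ has fundamental cone all of $\Phi_d$, while the formula yields only $\{\bz\in\Phi_d:z_2=z_3\}$. In other words, the statement you are asked to prove is itself off by one --- $\bef_r$ must be adjoined to the generators, exactly as $\bef_{N+r-1}$ is missing from the fermionic Proposition~\ref{prop:lift_lineup} --- and your sentence asserting that the computation ``recovers the stated Minkowski-sum form'' is precisely where the proof breaks. Tracking the fundamental-basis expansion, as you yourself propose to do, surfaces this immediately.
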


We use the following lemma which distinguishes bosonic from fermionic occupation vectors.

\begin{lemma}\label{lem:last_lemma_omg_finally}
Let $\bw\in\Pauli_{N}^\circ$ and $\ell$ be a bosonic $(N+1)$-lineup.
For any $d\geq 2$, the first coordinate of the occupation vector of $\ell$ is larger than its second.
\end{lemma}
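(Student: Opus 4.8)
The plan is to compare the first two coordinates of $\OccVec_{\bw}(\ell)=\sum_{k=1}^{N+1}w_k\bchi(S_k)$ directly. Writing $d_k:=S_k(1)-S_k(2)$ for the difference of the multiplicities of $1$ and $2$ in $S_k$, the claim becomes $\sum_{k=1}^{N+1}w_k d_k>0$. I would fix a generic fundamental functional $\by\in\Phi_d$ inducing $\ell$, so that $\np{\by,\bchi(S_1)}>\cdots>\np{\by,\bchi(S_{N+1})}$ are the $N+1$ largest values of $\np{\by,\cdot}$ on $\BosPC$. The first preparatory step is to observe that genericity forces $y_1>y_2$: the multisets $1^N$ and $1^{N-1}2$ both lie in $\BosPC$, and equality of their $\by$-values would read $Ny_1=(N-1)y_1+y_2$, i.e.\ $y_1=y_2$; since $\np{\by,\cdot}$ is injective on $\BosPC$ and $y_1\geq y_2$, this gives $y_1>y_2$. (Taking $\ell$ fundamental is essential: for a non-fundamental inducing functional the statement about coordinates $1$ and $2$ can fail, as the example $N=1$, $\by=(1,2,0,\dots)$ shows.)

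The key device is the involution $\tau$ on $\multiset{[d]}{N}$, well defined because $d\geq 2$, that swaps the multiplicities of $1$ and $2$ (the action of the transposition $(1\,2)$ on $\bchi$). A one-line computation gives $\np{\by,\bchi(\tau S)}-\np{\by,\bchi(S)}=(y_1-y_2)\bigl(S(2)-S(1)\bigr)$. I would partition the indices into $P=\{k:d_k>0\}$, $Z=\{k:d_k=0\}$, $M=\{k:d_k<0\}$. For $k\in M$ one has $S_k(2)>S_k(1)$, hence $\tau(S_k)\neq S_k$ and $\np{\by,\bchi(\tau S_k)}>\np{\by,\bchi(S_k)}$; since $S_k$ carries the $k$-th largest value, $\tau(S_k)$ must be one of $S_1,\dots,S_{k-1}$, say $\tau(S_k)=S_{\phi(k)}$ with $\phi(k)<k$. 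As $\tau$ is an involution negating $d$, the map $\phi:M\to P$ is injective with $d_{\phi(k)}=-d_k>0$, so indeed $\phi(k)\in P$.

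With this matching I would split the target sum along $\phi$:
\[
\sum_{k=1}^{N+1} w_k d_k
= \sum_{k\in M}\bigl(w_k d_k + w_{\phi(k)} d_{\phi(k)}\bigr) + \sum_{k\in P\setminus\phi(M)} w_k d_k
= \sum_{k\in M} (w_{\phi(k)}-w_k)\,d_{\phi(k)} + \sum_{k\in P\setminus\phi(M)} w_k d_k.
\]
Every summand is strictly positive: in the first sum $\phi(k)<k$ gives $w_{\phi(k)}>w_k$ while $d_{\phi(k)}>0$, and in the second $w_k>0$ and $d_k>0$. It remains to guarantee that at least one term survives. By Lemma~\ref{lem:Bthreshold} the set $\{S_1\}$ is an order ideal of $\BosPoset$, so $S_1$ is the Gale minimum $1^N$ and $d_1=N>0$, whence $1\in P$. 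Consequently either $P\setminus\phi(M)\neq\varnothing$, or $P=\phi(M)$ (as $\phi(M)\subseteq P$ always) and then $M\neq\varnothing$; in both cases a strictly positive term is present, giving $\sum_k w_k d_k>0$.

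The main obstacle, and the precise point that distinguishes the bosonic from the fermionic case, is that one cannot mimic the fermionic reasoning behind Lemma~\ref{lem:breakawaypeloton} by designating a fixed extremal occupation vector that is never matched. Here the multiset $2^N$ can genuinely occur as $S_{N+1}$ — for instance in the chain lineup $(1^N,1^{N-1}2,\dots,2^N)$ induced by $y_1>y_2\gg y_3\geq\cdots$ — so the index $1$ may well lie in $\phi(M)$. The matching argument is engineered to sidestep any such rigidity: it relies only on $P\neq\varnothing$ and on the positivity of each matched and unmatched contribution, which remains valid even when $2^N$ appears in $\ell$. I expect verifying that no extra term can cancel the positivity (i.e.\ that the split along $\phi$ is exhaustive and sign-controlled) to be the only delicate bookkeeping, and the identity $d_{\phi(k)}=-d_k$ together with $\phi(k)<k$ is exactly what makes it go through.
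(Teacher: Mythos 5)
Your proof is correct, but it takes a genuinely different route from the paper's. The paper argues purely order-theoretically: by Lemma~\ref{lem:Bthreshold} the underlying multisets of a fundamental lineup form a saturated chain of order ideals in $\BosPoset$, so $S_1=1^N$, the $k$-th multiset satisfies $S_k(1)\geq N-k+1$ and $S_k(2)\leq k-1$, and replacing $2$'s by $1$'s in $S_k$ produces multisets strictly below $S_k$ in the Gale order, hence appearing earlier in the lineup; this yields an injection from each occurrence of a $2$ in some $S_k$ to an occurrence of a $1$ in an earlier multiset, and since earlier indices carry strictly larger weights the inequality follows. You instead work at the level of whole multisets: you use the inducing functional (a generic fundamental $\by$ has $y_1>y_2$) and the $1\leftrightarrow 2$ swap involution $\tau$, matching each index $k$ with $S_k(2)>S_k(1)$ to the earlier index carrying $\tau S_k$, after which the weighted sum $\sum_k w_k\bigl(S_k(1)-S_k(2)\bigr)$ decomposes into visibly positive terms. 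Your version buys fully explicit bookkeeping and transparent strictness, and it uses only the defining property of a lineup together with $S_1=1^N$ rather than the whole chain-of-ideals structure; the paper's version buys independence from the functional (it never needs genericity or $y_1>y_2$), at the cost of a terser occurrence-level injection. Two remarks: first, as you correctly flag, the lemma implicitly concerns \emph{fundamental} lineups (otherwise it is false), which is exactly how it is invoked in the proof of Theorem~\ref{thm:stability_bosonic}; second, your matching map could also be justified without any functional, since $S(2)>S(1)$ implies $\tau S<S$ in the Gale order, so $\tau S_k$ lies in the ideal $\{S_1,\dots,S_{k-1}\}$ by Lemma~\ref{lem:Bthreshold} --- with that substitution your argument becomes purely combinatorial and the two proofs converge.
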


\begin{proof}
The first multiset is $1^N$ and the number of $1$'s in every subsequent multiset in a lineup decreases by at most one.
Moreover, the first multiset does not contain any $2$'s and the number of $2$'s in every subsequent multiset increases by at most one.
Every occurrence of a $2$ in a multiset is therefore preceded by an occurrence of a $1$ in a previous multiset.
\end{proof}

\begin{mainthm}[Stability Theorem]\label{thm:stability_bosonic}
Let $r\geq2$, $N\geq r-1$ and $d\geq r$.
The set of uncoarsenable fundamental $r$-rankings of $\BosPC[r-1][r]$ and that of $\BosPC$ are in bijection.
Phrased using Convention~\ref{convention}, the bijection is as follows:
\begin{enumerate}
\item The fundamental normal ray generator $\bef_{r-1}$ corresponds to the fundamental normal ray generator $\bef_{d-1}$.
\item Let $\by\not\in\RR\bef_{r-1}$ be a fundamental ray generator of $\BosFan[r][r-1][r]\cap\Phi_r$.
The vector $\by$ is a normal ray generator of $\BosFan[r][r-1][r]$ if and only if the vector $\Upsilon(\by)$ is a fundamental normal ray generator of $\BosFan$.
\end{enumerate}
\end{mainthm}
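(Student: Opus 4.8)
The plan is to establish Theorem~\ref{thm:stability_bosonic} by closely mirroring the structure of the proof of Theorem~\ref{thm:stability}, exploiting the fact that the bosonic setting is genuinely simpler than the fermionic one. The key structural simplification is that the bosonic order-preserving map $\upsilon:\BosPoset[r-1][r]\to\BosPoset[N][d]$ modifies only the \emph{first} coordinate (increasing the multiplicity of $1$ by $N-r+1$) and pads with zeros at the end, whereas the fermionic map touched both ends. Consequently, the map $\Upsilon$ repeats only the last coordinate $d-r+1$ times and leaves a single first coordinate untouched, so the case analysis that previously split into a ``first block'' and a ``last block'' collapses almost entirely into a single block at the tail.

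First I would prove part (1). By the preceding Proposition, the fundamental ray generators of $\BosFan\cap\Phi_d$ are either of the form $\Upsilon(\by)$ for a fundamental ray generator $\by$ of $\BosFan[r][r-1][r]$, or of the form $\bef_i$ for $i\in\{r+1,\dots,d-1\}$ coming from the padding cone $\cone\{\bef_{r+1},\dots,\bef_{d-1},\bef_d,-\bef_d\}$. By Proposition~\ref{prop:basic_cleanup_2}, in the bosonic case only $\bef_{d-1}$ among all $\bef_i$ gives a fundamental normal ray, and $\bef_{d-1}=\Upsilon(\bef_{r-1})$ corresponds precisely to the generator $\bef_{r-1}$ downstairs. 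This establishes the claimed correspondence in part (1), and lets me assume in part (2) that $\by$ is not a scalar multiple of any $\bef_i$.

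For part (2), the strategy is to compute $\dim(\BosPoly[r][N,d])^{\by'}$ from $\dim(\BosPoly[r][r-1,r])^{\by}$ using Theorem~\ref{thm:dimension}, where $\by'=\Upsilon(\by)$. Writing $\bc_{\by}=(c_1,\dots,c_k)$, the composition associated to $\by'$ is $(c_1,c_2,\dots,c_{k-1},c_k+(d-r+1))$: only the last block grows, since $\upsilon$ appends zeros. The projections $\Proj{\by}(\bV_{\by})$ and $\Proj{\by'}(\bV_{\by'})$ differ only by an affine transformation, so the first summand of Equation~\eqref{eq:dimension} is unchanged, and I only need to track whether $k\in\operatorname{Fix}(\by')$. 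The dimension increase is $d-r+1$ when $k\notin\operatorname{Fix}(\by')$ and $0$ when $k\in\operatorname{Fix}(\by')$. The crucial new ingredient replacing the breakaway/peloton analysis is Lemma~\ref{lem:last_lemma_omg_finally}: it guarantees that the first coordinate of every bosonic occupation vector of an $(N+1)$-lineup strictly exceeds the second, which is the bosonic analogue used to decide when inertness can or cannot be broken in the relevant block. The ($\Leftarrow$) direction is immediate: if $\by$ fails to define a facet then $\dim(\BosPoly[r][r-1,r])^{\by}<r-2$, and adding at most $d-r+1$ still leaves the face below dimension $d-2$.

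The ($\Rightarrow$) direction will be the main obstacle, exactly as in the fermionic case. Assuming $\dim(\BosPoly[r][r-1,r])^{\by}=r-2$, I must rule out the case $k\in\operatorname{Fix}(\by')$, for otherwise no jump in dimension occurs. If $c_k>1$, then both summands in Theorem~\ref{thm:dimension} drop by at least one (one extra equality for the projection, plus the missing $(c_k-1)\geq 1$ term), contradicting that $\by$ is facet-defining. If $c_k=1$, the argument is to reduce the parameter $d$ from $r$ to $r-1$ while keeping $r-1$ fixed: if $k\in\operatorname{Fix}(\by')$ with $c_k=1$ then the last coordinate of every point in $\bV_{\by}$ is forced (this is where Lemma~\ref{lem:last_lemma_omg_finally} and the distinctness of the entries of $\bw$ enter, to pin down the unique occupation vector that could violate the pattern), so I can strip the last coordinate, obtaining a face of dimension $r-2$ inside a polytope $\BosPoly[r][r-1,r-1]$ of dimension $r-2$. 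This forces the restricted functional to be a multiple of $(1,\dots,1)$, hence $\by\in\RR\bef_{r-1}$ by Convention~\ref{convention}, a case already excluded. I expect this reduction step to require the most care, since one must verify that stripping the padded block does not spuriously decrease the dimension count and that the resulting degenerate functional is genuinely ruled out; the rest follows the fermionic template with the single-block simplification making the bookkeeping lighter.
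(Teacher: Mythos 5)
Your proposal is, in substance, the paper's own proof: part (1) via Proposition~\ref{prop:basic_cleanup_2}, and part (2) by comparing $\dim(\BosPoly[r][N,d])^{\Upsilon(\by)}$ with $\dim(\BosPoly[r][r-1,r])^{\by}$ through Theorem~\ref{thm:dimension}, splitting on $c_k>1$ versus $c_k=1$ and, in the latter case, stripping the last coordinate to land in $\BosPoly[r][r-1,r-1]$ and force $\by\in\RR\bef_{r-1}$. But one step fails as written, due to an off-by-one. The map $\Upsilon$ replaces the single coordinate $y_r$ by $d-r+1$ copies of it, so the last block of the composition associated to $\by'=\Upsilon(\by)$ has $c_k+(d-r)$ entries --- one copy of $y_r$ is already counted inside $c_k$ --- not $c_k+(d-r+1)$. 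The maximal dimension jump in Theorem~\ref{thm:dimension} is therefore $d-r$, not $d-r+1$. This matters precisely in your ($\Leftarrow$) direction: starting from a non-facet, so $\dim(\BosPoly[r][r-1,r])^{\by}\leq r-3$, your figure only yields $\dim(\BosPoly[r][N,d])^{\by'}\leq (r-3)+(d-r+1)=d-2$, which does \emph{not} exclude a facet upstairs; with the correct jump one gets $\leq d-3<d-2$, as required. Indeed your own sentence ``adding at most $d-r+1$ still leaves the face below dimension $d-2$'' is arithmetically false as stated.

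A second, smaller point: you assign Lemma~\ref{lem:last_lemma_omg_finally} the wrong job. That lemma (first coordinate strictly exceeds the second) is what controls the \emph{first} block: since $\upsilon$ adds $N-r+1$ to the first coordinate, inertness of $\bV_{\by}$ and of $\bV_{\by'}$ on a first block with $c_1>1$ could a priori differ, and the lemma shows neither set is ever inert there, so the first block contributes identically upstairs and downstairs. That is exactly what licenses your claim that only the tail block needs tracking --- the claim that the analysis ``collapses almost entirely into a single block at the tail'' holds only because of this lemma, not independently of it. Conversely, in your $c_k=1$ case the fact you actually need is the bosonic breakaway-type observation that exactly one fundamental occupation vector has non-zero last coordinate, with its second-to-last coordinate strictly larger than its last (an inline observation in the paper's proof), not Lemma~\ref{lem:last_lemma_omg_finally}. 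Both facts are true and available, so this is a misallocation rather than a missing ingredient, but as written the logical support for the two blocks is crossed.
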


\begin{proof}
Let $\by\in\Phi_r$ and $\by'=\Upsilon(\by)\in\Phi_d$ and denote their associated compositions $\bc$ and $\bc'$ respectively.
They are the same except for $c'_k=c_k+d-r$. Therefore, to study the dimensions of the faces $\dim(\BosPoly)^{\by'}$ and $\dim(\BosPoly[r][r-1,r])^{\by}$ via Theorem~\ref{thm:dimension}, we can focus on the behavior of the first and last blocks.

In this case, whenever $c_1>1$ we have $1\notin \operatorname{Fix}(\by)$ by Lemma \ref{lem:last_lemma_omg_finally}.
For $c_k>1$, $k\notin \operatorname{Fix}(\by)$ implies $k\in \operatorname{Fix}(\by')$, as there is exactly one occupation vector whose last coordinate is non-zero, and its second to last coordinate is strictly larger than the last.
We have the following increase in dimension by applying $\Upsilon$:
\begin{equation}\label{eq:dimension_count_bosonic}
	\dim(\BosPoly)^{\by'}-\dim(\BosPoly[r][r-1,r])^{\by}= 
	{\begin{cases}
			d-r &\text{ if } k\notin \operatorname{Fix}(\by'),\\
			0 &\text{ if } k\in \operatorname{Fix}(\by').
	\end{cases}}
\end{equation}

\noindent
$\Rightarrow$) The same argument as in the proof of Theorem~\ref{thm:stability} is valid.

\noindent
$\Rightarrow$) We must prove that if $\by\not\in\RR\bef_{r-1}$ induces a facet, then the first case of Equation~\eqref{eq:dimension_count_bosonic} occurs.
It is enough to prove that there is a point in $\bV_{\by}$ with a coordinate different form zero in the last $c_k$ entries.
As in the proof of Theorem~\ref{thm:stability}, for the sake of contradiction assume that they all satisfy that the last $c_k$ coordinates are~$0$. 
If $c_k>1$, then the dimension count of \eqref{eq:dimension} shows that $\by$ does not induce a facet. And if $c_k=1$, then this shows that $\by=\bef_{r-1}$.	
\end{proof}

\subsection{A worked-out example: $r=4$}\label{ssec:ex4}

\subsubsection{For fermions}
We illustrate the technique by giving a non-redundant $H$-representation of the polytope $\FerPoly[4][3,6]$, which can be done without resorting to computational tools.
We begin by finding the $V$-representation.
There are four fundamental lineups in $\FerPC[3][6]$, giving rise to four occupation vectors:
\begin{equation}\label{eq:NON_r4}
\resizebox{0.93\textwidth}{!}{$
\begin{array}{rlcl}
\ell_1:&\bchi\left(123\right)\rightarrow\bchi\left(124\right)\rightarrow\bchi\left(125\right)\rightarrow\bchi\left(126\right)\leadsto\\
\ell_2:&\bchi\left(123\right)\rightarrow\bchi\left(124\right)\rightarrow\bchi\left(125\right)\rightarrow\bchi\left(134\right)\leadsto\\
\ell_3:&\bchi\left(123\right)\rightarrow\bchi\left(124\right)\rightarrow\bchi\left(134\right)\rightarrow\bchi\left(125\right)\leadsto\\
\ell_4:&\bchi\left(123\right)\rightarrow\bchi\left(124\right)\rightarrow\bchi\left(134\right)\rightarrow\bchi\left(234\right)\leadsto
\end{array}
\begin{array}{r@{}rrrrrr}
\OccVec_{\bw}(\ell_1) = (&1,&1,&w_{1},&w_{2},&w_{3},&w_{4}),\\
\OccVec_{\bw}(\ell_2) = (&1,&w_1+w_2+w_3,&w_{1} + w_{4},&w_{2} + w_{4},&w_{3},&0),\\
\OccVec_{\bw}(\ell_3) = (&1,&w_1+w_2+w_4,&w_{1} + w_{3},&w_{2} + w_{3},&w_{4},&0),\\
\OccVec_{\bw}(\ell_4) = (& w_1+w_2+w_3,&w_1+w_2+w_4,&w_1+w_3+w_4,&w_2+w_3+w_4,&0,&0).
\end{array}$}
\end{equation}

For each lineup we determine its fundamental cone.
By Theorem \ref{thm:main}, this is the set of fundamental vectors that induce the given lineup.

\noindent
\textbf{Lineup $\ell_1$}.
Let $\by$ be a fundamental vector inducing $\ell_1$.
Any fundamental lineup begins with $\bchi(123)$ and $\bchi(124)$ as the first two elements.
For the third position there are two options, it is either $\bchi(125)$ or $\bchi(134)$, so in order for $\by$ to induce the first lineup, we must have
\[
\np{\by,\bchi(125)}>\np{\by,\bchi(134)},\quad\text{ equivalently }\quad y_2-y_3 > y_4-y_5.
\]
For the fourth element there are two options, either $\bchi(126)$ or $\bchi(134)$, so in addition to the previous inequality, the vector $\by$ must satisfy
\[
\np{\by,\bchi(126)}>\np{\by,\bchi(134)},\quad\text{ equivalently }\quad y_2-y_3 > (y_4-y_5)+(y_5-y_6).
\]
Strict inequalities are necessary to induce the lineup uniquely.
The closed fundamental cone is given by relaxing the inequalities to allow equality.
In conclusion, the fundamental cone of the vertex $\OccVec_{\bw}(\ell_1)$ in $\FerPoly[4][3,6]$ is given by all $\by\in\RR^6$ such that
\[
y_1\geq y_2\geq y_3\geq y_4\geq y_5\geq y_6=0,\quad y_2-y_3 \geq y_4-y_5, \quad y_2-y_3 \geq (y_4-y_5)+(y_5-y_6).
\]
Notice that the inequality $ y_2-y_3 \geq y_4-y_5$ is redundant.

\noindent
\textbf{Lineup $\ell_2$}.
This is analogous to the previous case, we need only to reverse the second inequality since now we have $\bchi(134)$ before $\bchi(126)$ in the lineup. The fundamental cone is described by
\[
y_1\geq y_2\geq y_3\geq y_4\geq y_5\geq y_6=0,\quad y_4-y_5\geq y_2-y_3 \geq (y_4-y_5)+(y_5-y_6).
\]

\noindent
\textbf{Lineup $\ell_3$}.
Similar considerations lead to the cone described by
\[
y_1\geq y_2\geq y_3\geq y_4\geq y_5\geq y_6=0,\quad (y_1-y_2)+(y_2-y_3) \geq y_4-y_5 \geq y_1-y_2.
\]

\noindent
\textbf{Lineup $\ell_4$}.
Similar considerations lead to the cone described by
\[
	y_1\geq y_2\geq y_3\geq y_4\geq y_5\geq y_6=0, \quad y_4-y_5\geq (y_1-y_2)+(y_2-y_3) .
\]
The defining inequalities become simpler in the fundamental basis. 
We write $(\tilde{y})_i$ for the coordinates of a vector $\by$ in the fundamental basis, that is, $\by=\sum_{i=1}^{5} \tilde{y}_i\bef_i$.
Equivalently, we have $\tilde{y}_i=y_i-y_{i+1}$ for all $i\in[5]$.
With this notation in mind, the inequalities $y_1\geq y_2\geq y_3\geq y_4\geq y_5\geq y_6=0$ read as $\tilde{y}_i\geq 0$ for $i=1,\dots,5$.
So $\Phi_6=\{\by\in\RR^6~:~(\tilde{y}_i)_{i=1}^5\in\RR^5_{\geq 0},\}$.
We can rewrite the four fundamental cones as
\[
\begin{array}{cccccccllc}
\pol[K]_1 &=&\{(\tilde{y}_i)_{i=1}^5\in\RR^5_{\geq 0}~:~&                       &    &\tilde{y}_2&\geq&\tilde{y}_4+\tilde{y}_5&\},\\[0.25em]
\pol[K]_2 &=&\{(\tilde{y}_i)_{i=1}^5\in\RR^5_{\geq 0}~:~&\tilde{y}_4+\tilde{y}_5&\geq&\tilde{y}_2&\geq&\tilde{y}_4            &\},\\[0.25em]
\pol[K]_3 &=&\{(\tilde{y}_i)_{i=1}^5\in\RR^5_{\geq 0}~:~&\tilde{y}_1+\tilde{y}_2&\geq&\tilde{y}_4&\geq&\tilde{y}_2            &\},\\[0.25em]
\pol[K]_4 &=&\{(\tilde{y}_i)_{i=1}^5\in\RR^5_{\geq 0}~:~&                       &    &\tilde{y}_4&\geq&\tilde{y}_1+\tilde{y}_2&\}.
\end{array}
\]

The coordinate corresponding to $\bef_3$ is never used, so these four cones can be obtained by restricting to  $\pol[Q]=\cone\{\bef_1,\bef_2,\bef_4,\bef_5\}$ and then taking the pyramid over the ray $\bef_3$. 
By taking an affine slice of $\pol[Q]$ (say normalizing the sum to 1), we can represent the polyhedral subdivision induced by the cones $\pol[K]_1,\pol[K]_2,\pol[K]_3,$ and $\pol[K]_4$ in 3-D, see Figure~\ref{fig:fan_r4}.

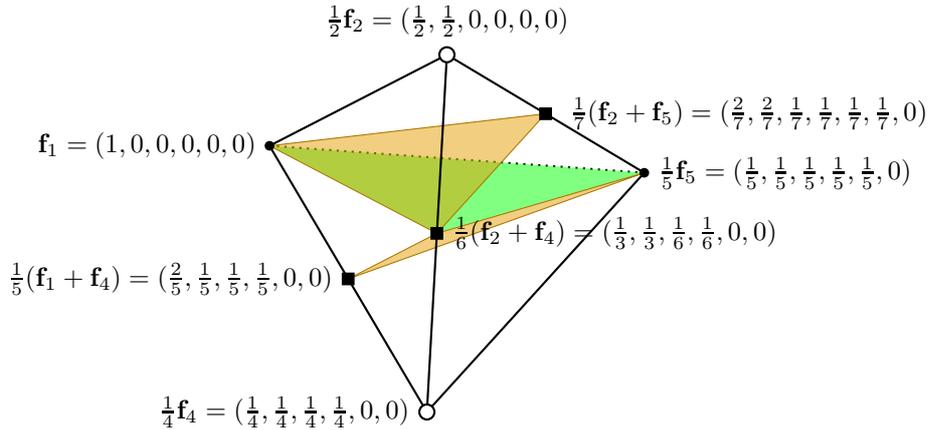
\begin{figure}[H]
\begin{center}
\begin{tikzpicture}%
	[x={(-0.995755cm, 0.072284cm)},
	y={(-0.092046cm, -0.781701cm)},
	z={(0.000039cm, 0.619451cm)},
	scale=3.500000,
	back/.style={dotted},
	edge/.style={thick},
	facet/.style={fill=green,fill opacity=0.500000},
	facet2/.style={fill=orange,draw=orange!75!black,fill opacity=0.500000},
	vertex/.style={inner sep=1pt,circle,draw=black,fill=black,thick},
	vertex2/.style={inner sep=2pt,circle,draw=black,fill=white,thick},
	vertex3/.style={inner sep=2pt,rectangle,draw=black,fill=black,thick}]
%
%
%% This TikZ-picture was produce with Sagemath version 9.3.beta1
%% with the command: ._tikz_3d_in_3d and parameters:
%% view = [0.0844000000000000, -0.447000000000000, -0.890500000000000]
%% angle = 199.090000000000
%% scale = 1
%% edge_color = blue!95!black
%% facet_color = blue!95!black
%% opacity = 0.8
%% vertex_color = green
%% axis = False

%% Coordinate of the vertices:
%%
\coordinate (0.70711, 0.40825, 1.15470) at (0.70711, 0.40825, 1.15470); %2
\coordinate (0.70711, 1.22474, 0.00000) at (0.70711, 1.22474, 0.00000); %4
\coordinate (1.41421, 0.00000, 0.00000) at (1.41421, 0.00000, 0.00000); %1
\coordinate (0.00000, 0.00000, 0.00000) at (0.00000, 0.00000, 0.00000); %5
\coordinate (24) at ($(0.70711, 0.40825, 1.15470)!0.5!(0.70711, 1.22474, 0.00000)$); %24
\coordinate (25) at ($(0.70711, 0.40825, 1.15470)!0.5!(0.00000, 0.00000, 0.00000)$); %25
\coordinate (14) at ($(1.41421, 0.00000, 0.00000)!0.5!(0.70711, 1.22474, 0.00000)$); %14
%%
%%
%% Drawing edges in the back
%%
\draw[edge,back] (1.41421, 0.00000, 0.00000) -- (0.00000, 0.00000, 0.00000);
%%
%%
%% Drawing vertices in the back
%%
%%
%%
%% Drawing the facets
%%
\fill[facet] (1.41421, 0.00000, 0.00000) -- (0.00000, 0.00000, 0.00000) -- (24) -- cycle {};
\fill[facet2] (1.41421, 0.00000, 0.00000) -- (24) -- (25) -- cycle {};
\fill[facet2] (0.00000, 0.00000, 0.00000) -- (24) -- (14) -- cycle {};
% \fill[facet] (1.41421, 0.00000, 0.00000) -- (0.70711, 0.40825, 1.15470) -- (0.70711, 1.22474, 0.00000) -- cycle {};
% \fill[facet] (0.00000, 0.00000, 0.00000) -- (0.70711, 0.40825, 1.15470) -- (0.70711, 1.22474, 0.00000) -- cycle {};
%%
%%
%% Drawing edges in the front
%%
\draw[edge] (0.70711, 0.40825, 1.15470) -- (0.70711, 1.22474, 0.00000);
\draw[edge] (0.70711, 0.40825, 1.15470) -- (1.41421, 0.00000, 0.00000);
\draw[edge] (0.70711, 0.40825, 1.15470) -- (0.00000, 0.00000, 0.00000);
\draw[edge] (0.70711, 1.22474, 0.00000) -- (1.41421, 0.00000, 0.00000);
\draw[edge] (0.70711, 1.22474, 0.00000) -- (0.00000, 0.00000, 0.00000);
%%
%%
%% Drawing the vertices in the front
%%
\node[vertex2,label=above:{$\frac{1}{2}\mathbf{f}_2=(\frac{1}{2},\frac{1}{2},0,0,0,0)$}] at (0.70711, 0.40825, 1.15470)     {};
\node[vertex2,label=left:{$\frac{1}{4}\mathbf{f}_4=(\frac{1}{4},\frac{1}{4},\frac{1}{4},\frac{1}{4},0,0)$}] at (0.70711, 1.22474, 0.00000)     {};
\node[vertex,label=left:{$\mathbf{f}_1=(1,0,0,0,0,0)$}] at (1.41421, 0.00000, 0.00000)     {};
\node[vertex,label=right:{$\frac{1}{5}\mathbf{f}_5=(\frac{1}{5},\frac{1}{5},\frac{1}{5},\frac{1}{5},\frac{1}{5},0)$}] at (0.00000, 0.00000, 0.00000)     {};
\node[vertex3,label=right:{$\frac{1}{6}(\mathbf{f}_2+\mathbf{f}_4)=(\frac{1}{3},\frac{1}{3},\frac{1}{6},\frac{1}{6},0,0)$}] at (24)     {};
\node[vertex3,label=right:{$\ \frac{1}{7}(\mathbf{f}_2+\mathbf{f}_5)=(\frac{2}{7},\frac{2}{7},\frac{1}{7},\frac{1}{7},\frac{1}{7},\frac{1}{7},0)$}] at (25)     {};
\node[vertex3,label=left:{$\frac{1}{5}(\mathbf{f}_1+\mathbf{f}_4)=(\frac{2}{5},\frac{1}{5},\frac{1}{5},\frac{1}{5},0,0)$}] at (14)     {};
\end{tikzpicture}
\end{center}
\caption{The subdivision of the fundamental simplex for $r=4$ with $N=3$ and $d=6$ ignoring~$\bef_3$ and normalizing the sum of coordinates to 1.}
\label{fig:fan_r4}
\end{figure}
There are four tetrahedra, corresponding to the four lineups.
They are $\pol[K]_1,\pol[K]_2,\pol[K]_3,$ and $\pol[K]_4$ from top to bottom in Figure \ref{fig:fan_r4}.
The triangular region in the middle, the one between $\bef_1$, $\frac{1}{5}\bef_5$ and $\frac{1}{6}(\bef_2+\bef_4)$, represents the points $\by$ such that $\tilde{y}_2=\tilde{y}_4$. 
The two other triangular regions represent the other two hyperplanes where $\tilde{y}_2=\tilde{y}_4+\tilde{y}_5$ and $\tilde{y}_4=\tilde{y}_1+\tilde{y}_2$.
There are three vertices in the subdivision which are not vertices of the tetrahedron; they are depicted using squares in Figure~\ref{fig:fan_r4}. 
The fundamental fan has eight rays respectively spanned by the following vectors:
\[
\{\bef_1,\bef_2,\bef_3,\bef_4,\bef_5,\bef_2+\bef_4,\bef_2+\bef_5, \bef_1+\bef_4\}.
\]
However, by Proposition~\ref{prop:basic_cleanup_2}, the rays spanned by $\bef_1,\bef_3,$ and $\bef_5$ are normal rays but the rays spanned by $\bef_2$ and $\bef_4$ are \emph{not}; their appearance is a consequence of intersecting with the fundamental chamber.
The second and third occupation vectors have all distinct coordinates, hence their normal cone is equal to the fundamental cone, so all of their rays are normal.
Since the rays spanned by $\bef_2+\bef_4,\bef_2+\bef_5, \bef_1+\bef_4$ are fundamental rays for them, we conclude that they are normal.
In conclusion, there are six fundamental normal rays, each spanned by one of the following vectors:
\[
\{\bef_1,\bef_3,\bef_5,\bef_2+\bef_4,\bef_2+\bef_5, \bef_1+\bef_4\}.
\]

Finally, we evaluate each of this generators on the four vectors obtained in Equation~\eqref{eq:NON_r4}, to obtain the right-hand side of the following non-redundant $H$-representation of $\FerPoly[4][\bw,3,6]$. 
Alternatively, the right-hand side can be determined using Proposition~\ref{prop:rhs}.
\begin{equation}\label{eq:rhs_example}
\FerPoly[4][\bw,3,6]=\left\{\bx\in\RR^6:\quad
\begin{blockarray}{rrrrrrrcrrrrl}
\begin{block}{(rrrrrr)lcrrrrl}
1 & 1 & 1 & 1 & 1 & 1 & \bx^{\downarrow} & = & 3 \\
\end{block}
\begin{block}{(rrrrrr)rc(rrrr)l}
1 & 0 & 0 & 0 & 0 & 0 & \multirow{6}{*}{$\bx^{\downarrow}$} & \multirow{6}{*}{$\leq$} & 1 & 1 & 1 & 1 & \multirow{6}{*}{$\bw$}\\
1 & 1 & 1 & 1 & 1 & 0 & & & 3 & 3 & 3 & 3 \\
1 & 1 & 1 & 0 & 0 & 0 & & & 3 & 2 & 2 & 2 \\
2 & 2 & 1 & 1 & 0 & 0 & & & 5 & 5 & 4 & 4 \\
2 & 1 & 1 & 1 & 0 & 0 & & & 4 & 4 & 4 & 3 \\
2 & 2 & 1 & 1 & 1 & 0 & & & 5 & 5 & 5 & 4 \\
\end{block}
\end{blockarray}\right\}.
\end{equation}

By Theorem~\ref{thm:stability}, we obtain the following characterization of the $H$-representation of $\FerPoly[4][\bw,N,d]$ for $N\geq3$, $d\geq 6$ and $d-N\geq3$.

\begin{theorem}
\label{thm:r4}
Let $N\geq3$, $d\geq 6$ and $d-N\geq3$.
The polytope $\FerPoly[4][\bw,N,d]$ is the subset of $\RR^d$ defined by the equality $\sum_{i=1}^d x_i = N$ and the inequalities
\[
\begin{array}{rcl}
x_1^{\downarrow} & \leq & 1, \\
\sum_{i=1}^{d-1} x_i^{\downarrow} & \leq & N,\\
\sum_{i=1}^{N} x_i^{\downarrow}   & \leq & N-1+w_1,\\
2\sum_{i=1}^{N-1}x_i^{\downarrow} + (x_N^{\downarrow} + x_{N+1}^{\downarrow}) & \leq & 2N-2 + w_1 + w_2,\\
2\sum_{i=1}^{N-2}x_i^{\downarrow} +(x_{N-1}^{\downarrow}+x_{N}^{\downarrow}+x_{N+1}^{\downarrow}) & \leq & 2N-3+w_1+w_2+w_3,\\
2\sum_{i=1}^{N-1}x_i^{\downarrow} +(x_{N}^{\downarrow}+x_{N+1}^{\downarrow}+x_{N+2}^{\downarrow}) & \leq & 2N-2 +w_1+w_2 + w_3.\\
\end{array}
\]
Furthermore, this $H$-representation is non-redundant.
\end{theorem}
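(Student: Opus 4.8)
The plan is to deduce Theorem~\ref{thm:r4} from the already-established $H$-representation of $\FerPoly[4][\bw,3,6]$ in Equation~\eqref{eq:rhs_example} together with the Stability Theorem~\ref{thm:stability}. The base case $(N,d)=(3,6)$ corresponds exactly to the minimal parameters $(r-1,2r-2)=(3,6)$ for $r=4$, so it is the starting point from which all larger $(N,d)$ are obtained. The conditions $N\geq 3=r-1$ and $d\geq N+3=N+r-1$ are precisely the hypotheses $N\geq r-1$ and $d\geq N+r-1$ under which Theorem~\ref{thm:stability} applies.

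First I would record the six normal ray generators found in Section~\ref{ssec:ex4} for the base case, namely $\bef_1,\bef_3,\bef_5,\bef_2+\bef_4,\bef_2+\bef_5,\bef_1+\bef_4$ in $\RR^6$ (written in the fundamental basis). Then I would apply the map $\Upsilon:\RR^{2r-2}\to\RR^d$ with $r=4$, i.e. $\Upsilon:\RR^6\to\RR^d$, and use the computed action $\Upsilon(\bef_i)=\bef_{i+N-3}$ for $i\in[2r-3]=[5]$ (and $\Upsilon(\bef_6)=\bef_d$) to transport each generator. Concretely, $\bef_1\mapsto\bef_{N-2}$ is discarded in favor of $\bef_1$, $\bef_5\mapsto\bef_{N+2}$ is replaced by $\bef_{d-1}$, while the four interior generators $\bef_3\mapsto\bef_N$, $\bef_2+\bef_4\mapsto\bef_{N-1}+\bef_{N+1}$, $\bef_2+\bef_5\mapsto\bef_{N-1}+\bef_{N+2}$, $\bef_1+\bef_4\mapsto\bef_{N-2}+\bef_{N+1}$ give the genuinely new facets. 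By Theorem~\ref{thm:stability}, part~(1) guarantees that $\bef_1$ and $\bef_{d-1}$ are the correct images of $\bef_1$ and $\bef_{2r-3}=\bef_5$, and part~(2) asserts that each remaining $\by\notin(\RR\bef_1\cup\RR\bef_5)$ is a normal ray generator for the base case if and only if $\Upsilon(\by)$ is one for $\FerPoly[4][\bw,N,d]$. Since the base case generators are known to be facet-defining and non-redundant, this transfers non-redundancy of the full list to arbitrary $(N,d)$.

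The second task is translating each transported ray generator back into an explicit linear inequality on $\bx^{\downarrow}$. Here I would invoke Proposition~\ref{prop:rhs}: for a ray generator $\by$, the facet inequality is $\np{\by,\bx}\leq\np{s_N(\by),\bw}$, where $s_N(\by)$ lists the $r=4$ largest $N$-sums of entries of $\by$. For example, the generator $\bef_{N-1}+\bef_{N+1}$ corresponds in standard coordinates to the covector placing weight $2$ on the first $N-1$ coordinates and weight $1$ on coordinates $N,N+1$, yielding $2\sum_{i=1}^{N-1}x_i^{\downarrow}+(x_N^{\downarrow}+x_{N+1}^{\downarrow})$; computing its four largest $N$-sums gives the right-hand side $2N-2+w_1+w_2$. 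I would carry out the analogous short computation for each of the remaining five generators, matching them one by one to the six displayed inequalities (with $\bef_1\mapsto x_1^{\downarrow}\leq 1$, $\bef_{d-1}\mapsto\sum_{i=1}^{d-1}x_i^{\downarrow}\leq N$, and $\bef_N\mapsto\sum_{i=1}^N x_i^{\downarrow}\leq N-1+w_1$).

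The main obstacle I expect is the bookkeeping in step two: verifying that the coefficient pattern of each $\Upsilon(\by)$ in \emph{standard} coordinates matches the claimed left-hand side, and that the four largest $N$-sums yield exactly the stated right-hand side. This requires care because $\Upsilon$ is defined in the fundamental basis while the inequalities are stated in standard coordinates, so I must repeatedly convert $\bef_k=\sum_{i\leq k}\be_i$ and confirm the $N$-dependence of both sides is uniform across the admissible range $N\geq 3$, $d\geq N+3$. The conceptual content—that no new orbits of facets appear and none of the six becomes redundant—is entirely supplied by Theorem~\ref{thm:stability}; what remains is the explicit, but routine, identification of covectors with inequalities, which I would present compactly rather than generator-by-generator.
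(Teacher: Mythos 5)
Your proposal is correct and follows exactly the paper's own (very terse) proof: take the base case $\FerPoly[4][\bw,3,6]$ from Section~\ref{ssec:ex4} (Equation~\eqref{eq:rhs_example}) and lift it to general $(N,d)$ via Theorem~\ref{thm:stability}, with Proposition~\ref{prop:rhs} supplying the right-hand sides. Your explicit transport of the six ray generators and the $N$-sum computations are accurate (minor notational slip: the paper writes $s_r(\by)$, not $s_N(\by)$), and they merely fill in details the paper leaves implicit.
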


\begin{proof}
We take the particular case $\FerPoly[4][\bw,3,6]$ and raise the parameters $N,d$ using Theorem~\ref{thm:stability}.
\end{proof}

Since we have computed the base case for $r\leq 13$, we have a minimal description for small $r$ and \emph{arbitrarily large} $N$ and $d$.
The next theorem gives the case $r=5$, and the cases $r\in\{6,7,8\}$ are given in Appendix~\ref{app:ferm}.

\begin{theorem}
\label{thm:r5}
Let $N\geq4$, $d\geq 8$ and $d-N\geq4$.
The polytope $\FerPoly[5][\bw,N,d]$ is the subset of $\RR^d$ defined by the equality $\sum_{i=1}^d x_i = N$, the inequalities in Theorem~\ref{thm:r4} and the three inequalities
\[
\begin{array}{rcl}
2\sum_{i=1}^{N-3}x_i^{\downarrow} +(x_{N-2}^{\downarrow}+x_{N-1}^{\downarrow}x_{N}^{\downarrow}+x_{N+1}^{\downarrow}) & \leq & 2N-4+w_1+w_2+w_3+w_4,\\
2\sum_{i=1}^{N-1}x_i^{\downarrow} +(x_{N}^{\downarrow}+x_{N+1}^{\downarrow}+x_{N+2}^{\downarrow}+x_{N+3}^{\downarrow}) & \leq & 2N-2 +w_1+w_2 + w_3 + w_4,\\
4\sum_{i=1}^{N-2}x_i^{\downarrow} +3x_{N-1}^{\downarrow}+2(x_{N}^{\downarrow}+x_{N+1}^{\downarrow})+x_{N+2}^{\downarrow} & \leq & 4N-5 +2w_1+2w_2 + w_3 + w_4.\\
\end{array}
\]
Furthermore, this $H$-representation is non-redundant.
\end{theorem}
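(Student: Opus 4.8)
The plan is to follow the exact template used for Theorem~\ref{thm:r4}: first produce a non-redundant $H$-representation of the minimal instance $\FerPoly[5][\bw,4,8]$ (the base case $N=r-1=4$, $d=2r-2=8$), and then invoke the Stability Theorem~\ref{thm:stability} to transport this description to every pair $(N,d)$ with $N\geq 4$, $d\geq 8$ and $d-N\geq 4$. Since all such pairs also satisfy $N\geq 3$ and $d-N\geq 3$, Theorem~\ref{thm:r4} already applies, and Proposition~\ref{prop:hierarchy} (together with the persistence of old facet normals as $r$ increases) guarantees that the six inequalities of Theorem~\ref{thm:r4} remain facet-defining for $\FerPoly[5][\bw,N,d]$. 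A short verification through Proposition~\ref{prop:rhs} shows that their right-hand sides are unchanged when the extra weight $w_5$ is introduced: for each old ray $\by$ the relevant $N$-sums saturate, so appending $w_5$ merely extends a run of equal coefficients in $\np{s_r(\by),\bw}$ while $\sum_i w_i=1$ keeps the value constant. It therefore remains to produce the three genuinely new facet normals appearing at $r=5$.

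For the base case I would run Algorithm~\ref{algo:lineup} by hand, exactly as in Section~\ref{ssec:ex4}. Concretely, I would enumerate the fundamental lineups of length $5$ in $\FerPC[4][8]$ by growing lineups one element at a time through their runner-ups, using Lemma~\ref{lem:Fthreshold} to restrict to order ideals of the Gale poset $\FerPoset[4][8]$ and discarding non-coherent extensions via the $d$-dimensionality test of Equation~\eqref{eq:keystep}. This yields the ten fundamental occupation vectors of $\non[f][5][4][8]$ together with an $H$-representation of each fundamental cone. Translating to $V$-representations gives the fundamental rays; applying Theorem~\ref{thm:dimension} and Proposition~\ref{prop:basic_cleanup_2} retains only those spanning genuine normal rays. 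Among the fundamental basis this keeps $\bef_1$, $\bef_4=\bef_N$ and $\bef_7=\bef_{d-1}$, while $\bef_2,\bef_3,\bef_5,\bef_6$ are artifacts of intersecting with $\Phi_8$; the three genuinely new normal rays are $\bef_1+\bef_5$, $\bef_3+\bef_7$ and $\bef_2+\bef_3+\bef_5+\bef_6$, whose right-hand sides follow from Proposition~\ref{prop:rhs}. This gives nine facet orbits for $\FerPoly[5][\bw,4,8]$, matching the count in Table~\ref{tab:ferbos}.

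To raise the parameters I would apply Theorem~\ref{thm:stability} verbatim as in the proof of Theorem~\ref{thm:r4}: it provides a bijection between the uncoarsenable fundamental $5$-rankings of $\FerPC[4][8]$ and those of $\FerPC$, sends $\bef_1$ and $\bef_7$ to $\bef_1$ and $\bef_{d-1}$, and sends every other base normal ray $\by$ to the normal ray $\Upsilon(\by)$ while preserving both the facet-defining property and non-redundancy. Since $\Upsilon(\bef_i)=\bef_{i+N-4}$ on $[7]$, tracking $\Upsilon$ on the three new rays gives $\bef_1+\bef_5\mapsto\bef_{N-3}+\bef_{N+1}$, $\bef_3+\bef_7\mapsto\bef_{N-1}+\bef_{N+3}$ and $\bef_2+\bef_3+\bef_5+\bef_6\mapsto\bef_{N-2}+\bef_{N-1}+\bef_{N+1}+\bef_{N+2}$. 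Expanding each fundamental-basis expression into the standard basis reproduces exactly the three inequalities in the statement, with right-hand sides given again by Proposition~\ref{prop:rhs}. Together with the six inequalities inherited from Theorem~\ref{thm:r4} and the trace equality $\sum_{i=1}^d x_i=N$, this is the claimed description, and its non-redundancy is precisely the content of the Stability bijection on facet orbits.

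The main obstacle is the base-case computation. Unlike the $r=4$ case, where four lineups subdivide a $3$-dimensional slice that can literally be drawn (Figure~\ref{fig:fan_r4}), here ten fundamental lineups sit in a $7$-dimensional fundamental fan, so the bookkeeping of which fundamental rays survive the facet test of Theorem~\ref{thm:dimension}, and of which combinations $\bef_i+\bef_j$ actually arise as coarsest rankings, is substantially heavier and is where an error is most likely to enter. The cleanest safeguard is to let Theorem~\ref{thm:stability} carry out all of the lifting and to cross-check the base case independently against the output of Algorithm~\ref{algo:lineup} recorded for $r=5$, which predicts exactly three new facet orbits; the consistency of the six lifted old inequalities with Theorem~\ref{thm:r4}, forced by Proposition~\ref{prop:hierarchy}, provides a further internal check.
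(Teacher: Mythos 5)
Your proposal is correct and follows essentially the same route as the paper: compute the minimal base case $\FerPoly[5][\bw,4,8]$ with Algorithm~\ref{algo:lineup} and lift to all $(N,d)$ with $N\geq 4$, $d-N\geq 4$ via the Stability Theorem~\ref{thm:stability}, and indeed your three new base-case normal rays $\bef_1+\bef_5$, $\bef_3+\bef_7$, $\bef_2+\bef_3+\bef_5+\bef_6$, their images under $\Upsilon$, and the facet-orbit counts all agree with the paper's data (Table~\ref{tab:ferbos} and Appendix~\ref{app:ferm}). The only difference is cosmetic: the paper takes the base case from its computer implementation of the algorithm rather than a hand computation.
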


\subsubsection{For bosons}

We now determine a non-redundant $H$-representation of the polytope $\BosPoly[4][\bw,3,4]$.
We begin by finding the $V$-representation.
There are four fundamental lineups in $\BosPC[3][4]$, hence four occupation vectors:
\begin{equation}\label{eq:NON_r4_bos}
\resizebox{0.9\textwidth}{!}{$
\begin{array}{rlcl}
\ell_1:&\bchi\left(111\right)\rightarrow\bchi\left(112\right)\rightarrow\bchi\left(113\right)\rightarrow\bchi\left(114\right)\leadsto\\
\ell_2:&\bchi\left(111\right)\rightarrow\bchi\left(112\right)\rightarrow\bchi\left(113\right)\rightarrow\bchi\left(122\right)\leadsto\\
\ell_3:&\bchi\left(111\right)\rightarrow\bchi\left(112\right)\rightarrow\bchi\left(122\right)\rightarrow\bchi\left(113\right)\leadsto\\
\ell_4:&\bchi\left(111\right)\rightarrow\bchi\left(112\right)\rightarrow\bchi\left(122\right)\rightarrow\bchi\left(222\right)\leadsto
\end{array}
\begin{array}{r@{}rrrr}
\OccVec_{\bw}(\ell_1) = (&2+w_1,&w_2,&w_3,&w_4), \\
\OccVec_{\bw}(\ell_2) = (&1+2w_1+w_2+w_3,&w_2+w_4,&w_3,&0),\\
\OccVec_{\bw}(\ell_3) = (&1+2w_1+w_2+w_4,&w_2+w_3,&w_4,&0),\\
\OccVec_{\bw}(\ell_4) = ( & 3w_1+2w_2+w_3,&w_2+2w_3+3w_4,&0,&0).
\end{array}$}
\end{equation}
For each lineup we determine its fundamental cone as in the last example.
Also, as before we use coordinates on the fundamental basis to write the four cones and by taking an affine slice, it is possible to represent them, see~Figure~\ref{fig:fan_r4_bos}.

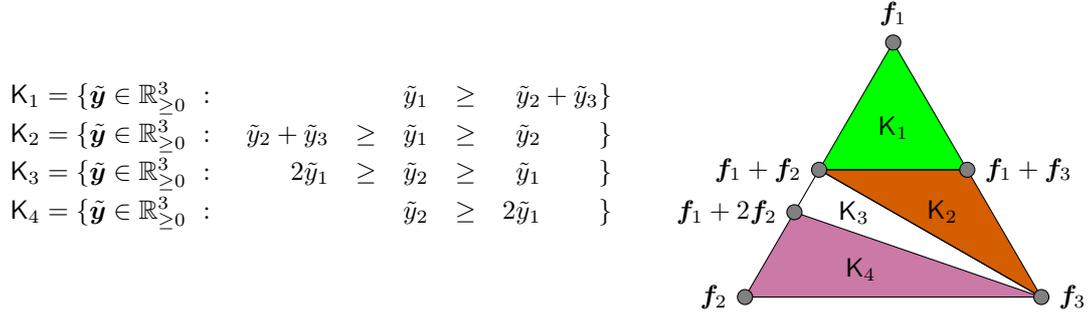
\begin{figure}[!ht]
\begin{center}
\begin{tikzpicture}

\node at (-3.75,0) {$\begin{array}{crcccl@{}l}
\pol[K]_1 = \{\tilde{\by}\in\RR^3_{\geq0}~:~ & & & \tilde{y}_1 & \geq & \phantom{1}\tilde{y}_2+\tilde{y}_3 &\} \\[0.25em]
\pol[K]_2 = \{\tilde{\by}\in\RR^3_{\geq0}~:~ & \tilde{y}_2+\tilde{y}_3 & \geq & \tilde{y}_1 & \geq & \phantom{1}\tilde{y}_2 & \}\\[0.25em]
\pol[K]_3 = \{\tilde{\by}\in\RR^3_{\geq0}~:~ & 2\tilde{y}_1 & \geq & \tilde{y}_2 & \geq & \phantom{1}\tilde{y}_1 & \}\\[0.25em]
\pol[K]_4 = \{\tilde{\by}\in\RR^3_{\geq0}~:~  & & & \tilde{y}_2 & \geq & 2\tilde{y}_1 &\}
\end{array}$};
\node at  (3.75,0) {	\begin{tikzpicture}%
	[x={(-30:1)},
	y={(0,1)},
	z={(-1,-1)},
	scale=0.75,
	point/.style={inner sep=2pt,circle,draw=black,fill=white!50!black},
	rotate=120]
	
	\coordinate (003) at (0, 0, 3) {};
	\coordinate (012) at (0, 1, 2) {};
	\coordinate (021) at (0, 2, 1) {};
	\coordinate (030) at (0, 3, 0) {};
	\coordinate (102) at (1, 0, 2) {};
	\coordinate (111) at (1, 1, 1) {};
	\coordinate (120) at (1, 2, 0) {};
	\coordinate (201) at (2, 0, 1) {};
	\coordinate (210) at (2, 1, 0) {};
	\coordinate (300) at (3, 0, 0) {};
	\coordinate (110) at (1.5,1.5,0) {};
	\coordinate (101) at (1.5,0,1.5) {};
	
	\fill[purple] (030) -- (003) -- (120) -- cycle;
	\fill[red] (003) -- (110) -- (101) -- cycle;
	\fill[green] (300) -- (110) -- (101) -- cycle;
	
	\draw (300) -- (110)--(101)--(300);
	\draw (003) -- (110)--(101)--(003);
	\draw (003) -- (110)--(120);
	\draw (003) -- (030)--(120)--(003);
	
	\node[point,label=above:{$\bef_1$}] (n300) at (300) {};	
	\node[point,label=right:{$\bef_3$}] (n003) at (003) {};
	\node[point,label=left:{$\bef_2$}] (n030) at (030) {};	
	\node[point,label=left:{$\bef_1+\bef_2$}] (n110) at (110) {};
	\node[point,label=right:{$\bef_1+\bef_3$}] (n101) at (101) {};
	\node[point,label=left:{$\bef_1+2\bef_2$}] (n120) at (120) {};
	
	\node[label=center:{$\mathsf{K}_1$}] (nk1) at ($(110)!0.5!(101)!0.33!(300)$) {};
	\node[label=center:{$\mathsf{K}_2$}] (nk2) at ($(110)!0.5!(101)!0.33!(003)$) {};	
	\node[label=center:{$\mathsf{K}_3$}] (nk3) at ($(110)!0.5!(120)!0.2!(003)$){};
	\node[label=center:{$\mathsf{K}_4$}] (nk4) at ($(030)!0.5!(003)!0.33!(120)$) {};

\end{tikzpicture}};

\end{tikzpicture}

\end{center}
\caption{Affine slice of the subdivision of $\Phi_3$ given by the four fundamental cones.}
\label{fig:fan_r4_bos}
\end{figure}

There are four triangles, corresponding to the four lineups. 
The fundamental fan has six rays, each spanned by one of the following vectors:
\begin{equation}\label{eq:rays_boson_4}
\{\bef_1,\bef_2,\bef_3,\bef_1+\bef_2,\bef_1+\bef_3, \bef_1+2\bef_2\}.
\end{equation}
In this case, the first three occupation vectors have distinct entries so that their normal cones are contained in the fundamental chamber.
This verifies that all vectors in Equation \eqref{eq:rays_boson_4} except $\bef_2$ generate normal rays.
A simple calculation shows that the ray spanned by $\bef_2$ is not a normal ray.

Finally, we evaluate each of this generators on the four vectors obtained in Equation \eqref{eq:NON_r4_bos}, to obtain the right-hand side of following non-redundant $H$-representation.
\[
\
\BosPoly[4][\bw,3,4]=\left\{\bx\in\RR^4:
\begin{array}{lcl}
	x^\downarrow_1+x^\downarrow_2+x^\downarrow_3&\leq& 3\\
	x^\downarrow_1&\leq& 2+w_1 \\ 
	2x^\downarrow_1+x^\downarrow_2&\leq& 4+2w_1+w_2\\
	2x^\downarrow_1+x^\downarrow_2+x^\downarrow_3&\leq& 4+2w_1+w_2+w_3\\
	3x^\downarrow_1+2x^\downarrow_2&\leq& 6+3w_1+2w_2+w_3\\
\end{array}
\right\}.
\]

Finally we use Theorem \ref{thm:stability_bosonic} to produce the following general result for the case $r=4$.
The cases $r\in\{5,6,7,8\}$ are given in Appendix~\ref{app:bos}.
\begin{theorem}
Let $N\geq3$ and $d\geq 4$.
The polytope $\BosPoly[4][\bw,N,d]$ is the subset of $\RR^d$ defined by the equality $\sum_{i=1}^d x_i = N$ and the inequalities
\[
\begin{array}{lcl}
\sum_{i=1}^{d-1}x^\downarrow_i&\leq& N,\\
x^\downarrow_1&\leq& N-1+w_1,\\ 
2x^\downarrow_1+x^\downarrow_2&\leq& 2N-2+2w_1+w_2,\\
2x^\downarrow_1+x^\downarrow_2+x^\downarrow_3&\leq& 2N-2+2w_1+w_2+w_3,\\
3x^\downarrow_1+2x^\downarrow_2&\leq& 3N-3+3w_1+2w_2+w_3.
\end{array}
\]
Furthermore, this $H$-representation is non-redundant.
\end{theorem}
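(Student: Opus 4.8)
The plan is to mirror the proof of Theorem~\ref{thm:r4} in the bosonic setting: establish the statement for the minimal parameters and then propagate it to all admissible $(N,d)$ using the bosonic Stability Theorem~\ref{thm:stability_bosonic}. Here $r=4$, so the base case is $N=r-1=3$, $d=r=4$, and the hypotheses $N\geq 3$, $d\geq 4$ are exactly $N\geq r-1$, $d\geq r$; thus Theorem~\ref{thm:stability_bosonic} applies directly.

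First I would recall the explicit computation of $\BosPoly[4][\bw,3,4]$ carried out in Section~\ref{ssec:ex4}. Its fundamental fan has the five normal ray generators $\bef_1,\bef_3,\bef_1+\bef_2,\bef_1+\bef_3,\bef_1+2\bef_2$, the vector $\bef_2$ being fundamental but not normal. By Theorem~\ref{thm:stability_bosonic}, the uncoarsenable fundamental $4$-rankings of $\BosPC[3][4]$ are in bijection with those of $\BosPC[N][d]$: the boundary generator $\bef_{r-1}=\bef_3$ corresponds to $\bef_{d-1}$, while each of the four interior generators $\by\in\{\bef_1,\bef_1+\bef_2,\bef_1+\bef_3,\bef_1+2\bef_2\}$ is carried to $\Upsilon(\by)$. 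Since each of these four vectors has vanishing last coordinate, the map $\Upsilon$ merely appends zero entries, so in the fundamental basis the generators are unchanged; this reproduces the left-hand sides $x_1^{\downarrow}$, $2x_1^{\downarrow}+x_2^{\downarrow}$, $2x_1^{\downarrow}+x_2^{\downarrow}+x_3^{\downarrow}$, $3x_1^{\downarrow}+2x_2^{\downarrow}$, together with $\sum_{i=1}^{d-1}x_i^{\downarrow}$ coming from $\bef_{d-1}$.

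Next I would compute the right-hand sides. For an interior generator $\by$, the facet value is $\mathrm{supp}_{\BosPoly[4]}(\Upsilon(\by))=\max_{\ell}\langle\Upsilon(\by),\OccVec_{\bw}(\upsilon\ell)\rangle$, and the identity $\langle\Upsilon(\by),\upsilon\bchi(S)\rangle=\langle\by,\bchi(S)\rangle+y_1(N-r+1)$ from Section~\ref{ssec:stability}, summed against $\bw$ (which has $\sum_k w_k=1$), gives $\mathrm{supp}_{\BosPoly[4]}(\Upsilon(\by))=\mathrm{supp}_{\BosPoly[4][\bw,3,4]}(\by)+y_1(N-3)$. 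Substituting the first coordinates $y_1\in\{1,2,2,3\}$ of the four generators turns the base right-hand sides $2+w_1$, $4+2w_1+w_2$, $4+2w_1+w_2+w_3$, $6+3w_1+2w_2+w_3$ into $N-1+w_1$, $2N-2+2w_1+w_2$, $2N-2+2w_1+w_2+w_3$, $3N-3+3w_1+2w_2+w_3$, as claimed. The $\bef_{d-1}$-inequality reads $\sum_{i=1}^{d-1}x_i^{\downarrow}\leq N$, which is simply $x_d^{\downarrow}\geq 0$ combined with the trace equality $\sum_{i=1}^d x_i=N$. Non-redundancy is automatic: the Stability Theorem pairs normal rays with normal rays (uncoarsenable rankings with uncoarsenable rankings), so no facet of the base polytope degenerates.

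Because Theorem~\ref{thm:stability_bosonic} is already available, there is essentially no remaining obstacle: the argument is pure substitution. The genuine work sits inside the proof of Theorem~\ref{thm:stability_bosonic}, namely the last-block bookkeeping in the dimension formula of Theorem~\ref{thm:dimension}, where Lemma~\ref{lem:last_lemma_omg_finally} (the first coordinate of any sufficiently long bosonic occupation vector strictly exceeds the second) and the uniqueness of the occupation vector with nonzero last coordinate guarantee that increasing $d$ adds no spurious facets. Should one prefer a self-contained derivation, re-running that block analysis for the specific base polytope $\BosPoly[4][\bw,3,4]$ would be the only point requiring care.
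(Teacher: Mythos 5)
Your proposal is correct and takes essentially the same route as the paper: the paper establishes the base case $\BosPoly[4][\bw,3,4]$ by the explicit lineup/fundamental-cone computation of Section~\ref{ssec:ex4} and then simply invokes Theorem~\ref{thm:stability_bosonic} to raise $N$ and $d$. Your extra bookkeeping—noting that $\Upsilon$ merely appends zeros to the four interior ray generators (so the left-hand sides are unchanged) and that the right-hand sides shift by $y_1(N-3)$ via the identity $\np{\Upsilon(\by),\upsilon\bchi(S)}=\np{\by,\bchi(S)}+y_1(N-r+1)$ together with $\sum_k w_k=1$—correctly fills in the substitution the paper leaves implicit.
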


\section{Outro}
\label{sec:outro}

In this section we conclude with discussions that emerged from the effective solution to convex $1$-body $N$-representability.

\subsection{Comparison with Klyachko's polytope}
\label{ssec:klyachko}

We review Klyachko's $H$-representation of $\klypo=\spec^{\downarrow}(\Density^1_N(\bw))$ using our approach.
There are certain cohomological computations that play a key role in the formulation of \cite[Theorem 4.2.1]{klyachko_2006}.
Following \cite[Remark 4.2.2]{klyachko_2006} they can be described combinatorially as follows.
For $\by\in\RR^d$, $\pi\in\Sym{d}$ and $\sigma\in\Sym{D}$ we define
\[
c^\pi_\sigma(\by) = \partial_\pi S_\sigma(z)|_{z_k=s(\by)_k},
\]
where $s(\by) := \left(\sum_{i\in S}y_i~:~S\in\binom{[d]}{N}\right)^\downarrow\in\RR^D$,
$S_\sigma\in\mathbb{Z}[z_1,\dots,z_D]$ is the \emph{Schubert polynomial} of $\sigma$, and $\partial_\pi$ is a differential operator acting on $\mathbb{Z}[y_1,\dots,y_d]$ see \cite[Section~10.3]{fulton_young_1997} for further details.
The most relevant case for us is when $\sigma$ is the identity.
In this case, the Schubert polynomial is equal to 1, so we have $c^\pi_e(\by)=1$ when $\pi$ is equal to the identity and zero in all other cases.

Now we state the linear inequalities described in \cite[Theorem 4.2.1]{klyachko_2006}.
For each fundamental normal ray generator $\by$ of the fan $\FerFan[D]$ (in \cite[Page 10]{klyachko_pauli_2009}, these are referred to as the extremal edges of the principal cubicle), and permutations $\pi\in\Sym{d}$ and $\sigma\in\Sym{D}$, there is an inequality
\begin{equation}\label{eq:kly_inequality}
\np{\pi\cdot \by,\bx}\leq \np{\sigma\cdot s(\by),\bw},\text{ whenever }c^\pi_\sigma(\by)\neq 0.
\end{equation} 
We refer to the last condition as the \emph{topological condition}.
By contrast, our inequalities for the convex symmetrization of $\klypo$ are of the form (see Proposition \ref{prop:rhs})
\begin{equation}\label{eq:our_inequality}
\np{\pi\cdot \by,\bx}\leq \np{s(\by),\bw}.
\end{equation} 
We take $\by$ to be any normal ray, but we write it as $\pi$ acting on a fundamental normal ray to make our formulation closer to that of Equation \eqref{eq:kly_inequality}.
Comparing the inequalities in \eqref{eq:our_inequality} to those in \eqref{eq:kly_inequality}, we see that in \eqref{eq:our_inequality} we are ignoring the topological condition always setting $\sigma$ to be the identity.
However, when both permutations are the identity, we have the topological condition satisfied and we recover his inequalities (these are called basic inequalities in \cite[Section 3.2.1]{klyachko_pauli_2009}).
Another difference is that the inequalities in \eqref{eq:kly_inequality} have redundancy, whereas every inequality in~\eqref{eq:our_inequality} is facet defining.
Interestingly, the left-hand sides are the same.
This means that the normals of the facet defining inequalities for $\klypo$ are a subset of that of $\FerPoly[][\bw,N,d]$.
In fact, we conjecture:

\begin{conjecture}
If $\bw$ has distinct entries, then the polytope $\klypo$ is a Minkowski summand of $\FerPoly[][\bw,N,d]$.
\end{conjecture}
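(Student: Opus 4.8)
The plan is to work entirely with support functions and to reduce the conjecture to a single convexity statement, whose verification is the genuine difficulty. Write $h_{\FerPoly}$ and $h_{\klypo}$ for the support functions of $\FerPoly[][\bw,N,d]$ and $\klypo$. By Proposition~\ref{prop:rotation} we have $\klypo=\spec^\downarrow(\Density^1_N(\bw))\subseteq\spec(\Density^1_N(\bw))=\Lambda(\bw,N,d)\subseteq\FerPoly[][\bw,N,d]$, so $\klypo\subseteq\FerPoly[][\bw,N,d]$ and $h_{\klypo}\leq h_{\FerPoly}$. The first step I would carry out is the coincidence
\[
h_{\FerPoly}(\by)=h_{\klypo}(\by)\qquad\text{for every }\by\in\Phi_d.
\]
Indeed, for $\by\in\Phi_d$ the rearrangement inequality gives $\langle\by,\pi\cdot\bx\rangle\leq\langle\by,\bx\rangle$ for all $\pi\in\Sym{d}$ and all $\bx\in\Phi_d$; since $\klypo\subseteq\Phi_d$ and $\FerPoly[][\bw,N,d]=\conv(\Sym{d}\cdot\klypo)$, the maximum of $\langle\by,\cdot\rangle$ over $\FerPoly[][\bw,N,d]$ is attained on $\klypo$. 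Equivalently, the fundamental occupation vectors maximising $\langle\by,\cdot\rangle$ over $\FerPoly[][\bw,N,d]$ already lie in $\klypo$, being ordered spectra of the maximisers of Theorem~\ref{thm:relaxed_gok}.

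Second, I would record the reformulation: $\klypo$ is a Minkowski summand of $\FerPoly[][\bw,N,d]$ if and only if $h_R:=h_{\FerPoly}-h_{\klypo}$ is the support function of a polytope $Q$, and since $h_R$ is positively homogeneous of degree one, this amounts to $h_R$ being convex. By the previous step, $h_R\geq 0$ and $h_R|_{\Phi_d}\equiv 0$.

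Third, I would establish that $h_R$ is piecewise linear with respect to the sweep arrangement fan $\fan[N](\FerPoly[][\bw,N,d])$. Because $\bw$ has distinct entries, Corollary~\ref{cor:dep_w} together with the sweep-polytope description following Theorem~\ref{thm:vrepr_lineup} shows that $\fan[N](\FerPoly[][\bw,N,d])=\FerFan[D]$ is exactly the fan of the hyperplane arrangement $\{\mathsf{H}_{S,S'}\}_{S,S'\in\FerPoset}$, the finest fan these hyperplanes produce. On the other hand, by Klyachko's $H$-representation recalled in Section~\ref{ssec:klyachko} every facet normal of $\klypo$ has the form $\pi\cdot\by$ for a fundamental normal ray $\by$ of $\FerFan[D]$, so the walls of $\fan[N](\klypo)$ lie among these hyperplanes and $\FerFan[D]$ refines $\fan[N](\klypo)$. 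Hence $h_{\FerPoly}$, $h_{\klypo}$, and $h_R$ are all linear on each maximal cone of $\FerFan[D]$; this already yields the weaker statement that $\klypo$ is a \emph{weak} Minkowski summand (a summand after rescaling).

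Finally, the crux is to upgrade this to scale one by checking the wall-crossing convexity of $h_R$: across each wall $W$ of $\FerFan[D]$ with conormal $\bm{n}$, separating cones $C_1,C_2$ on which $h_{\FerPoly}$ restricts to $\langle\bv_1,\cdot\rangle,\langle\bv_2,\cdot\rangle$ and $h_{\klypo}$ to $\langle\bu_1,\cdot\rangle,\langle\bu_2,\cdot\rangle$, one must prove $\langle\bv_2-\bv_1,\bm{n}\rangle\geq\langle\bu_2-\bu_1,\bm{n}\rangle$, i.e. the bend of $h_{\FerPoly}$ dominates that of $h_{\klypo}$. Using the $\Sym{d}$-invariance of $h_{\FerPoly}$ together with $h_R|_{\Phi_d}\equiv 0$, the two bends coincide at every wall interior to $\Phi_d$, so the inequality is automatic there; moreover the $\FerPoly$-bend at any wall $W'=\pi W$ equals the $\FerPoly$-bend at its fundamental representative $W$, which in turn equals the $\klypo$-bend at $W$. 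The problem thus reduces to the monotonicity claim that the bend of $h_{\klypo}$ can only decrease as one moves the wall out of the fundamental chamber. In Klyachko's description this decrease is measured exactly by passing from the untwisted right-hand side $\langle s(\by),\bw\rangle$ (the value shared with $\FerPoly[][\bw,N,d]$, attained for $\pi=e$) to the twisted right-hand sides $\langle\sigma\cdot s(\by),\bw\rangle$ attached to the non-basic facets with $\sigma\neq e$ and selected by the topological condition $c^\pi_\sigma(\by)\neq 0$; the rearrangement inequality $\langle\sigma\cdot s(\by),\bw\rangle\leq\langle s(\by),\bw\rangle$, strict because $\bw$ is distinct, is precisely the comparison needed at each such wall. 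The hard part will be turning this heuristic into a uniform argument: controlling \emph{all} non-basic Klyachko facets through the Schubert-calculus data $c^\pi_\sigma$ and proving the bend domination simultaneously at every wall is, I expect, the main obstacle, and presumably the reason the statement remains a conjecture.
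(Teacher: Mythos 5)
You are attempting to prove a statement that the paper itself does not prove: it is stated as an open conjecture, and the authors record only the equivalent formulation (existence of a polytope $Q$ with $\klypo+Q=\FerPoly[][\bw,N,d]$) together with the remark that the conjecture \emph{would imply} that $\fan[N](\klypo)$ is a coarsening of $\fan[N](\FerPoly[][\bw,N,d])$. So there is no proof in the paper to compare against, and your proposal, which by your own admission leaves the crux unresolved, is not a proof either. Your first two steps are sound: the equality of the two support functions on $\Phi_d$ does follow from the rearrangement inequality applied to $\FerPoly[][\bw,N,d]=\conv(\Sym{d}\cdot\klypo)$ with $\klypo\subset\Phi_d$, and the reduction of the summand property to convexity of the difference of the two support functions is the standard criterion.

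The proposal, however, breaks down already at your third step, before the wall-crossing analysis you correctly flag as the hard part. You argue that since every facet normal of $\klypo$ has the form $\pi\cdot\by$ for a fundamental normal ray $\by$ of $\FerFan[D]$, ``the walls of $\fan[N](\klypo)$ lie among these hyperplanes and $\FerFan[D]$ refines $\fan[N](\klypo)$.'' This is a non sequitur. The facet normals are the \emph{rays} of $\fan[N](\klypo)$; its walls are the normal cones of \emph{edges} of $\klypo$, and a wall lies in the hyperplane orthogonal to the corresponding edge direction. Nothing in Section~\ref{ssec:klyachko} forces the edge directions of $\klypo$ to be parallel to differences $\bchi(S)-\bchi(S')$, and in rank at least $3$ a complete fan whose rays are a subset of the rays of an arrangement fan need not be coarsened by it: a wall spanned by two such rays can cut through the interior of an arrangement cone. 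In fact the refinement you claim to obtain ``already'' is exactly the weak-summand form of the conjecture, which the paper lists as a \emph{consequence} of the conjecture, not as something known. A further unaddressed point in your last step is that walls of $\FerFan[D]$ lying in the reflection hyperplanes $y_i=y_{i+1}$ bound $\Phi_d$ rather than being interior to it or lying in a free $\Sym{d}$-orbit of interior walls, so the bend-transport argument via symmetry does not apply to them as stated. In short: steps 1 and 2 are a legitimate reformulation, step 3 contains a genuine logical gap, and step 4 is, as you concede, a heuristic rather than an argument.
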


In \cite[Section~4.3.4]{klyachko_2006} and \cite[Section~5.1]{altunbulak_pauli_2008}, the authors propose a numerical procedure to produce all \emph{pure} $N$-representability constraints (i.e. when $\bw=(1,0,\dots)$).
The utilized technique mixes $V$-representation (obtaining certain points that should belong to $\Density^1_N((1,0,\dots))$ through representation theory procedures) and $H$-representation (by recognizing inequalities for pure $N$-representability through Schubert calculus), and repeating with increasingly larger parameters until the process stabilizes once all inequalities have been recognized as instances of Theorem~2 of \cite{altunbulak_pauli_2008}.
The authors also use a criterion for unspecified weight vector $\bw$, see \cite[Theorem~3]{altunbulak_pauli_2008}.
Algorithm~\ref{algo:lineup} provides an enhanced version of this result in which a fixed weight $\bw$ is fixed.

\begin{example}
	Let $N=2$, $d=4$, $\bw_1=(\frac{1}{2},\frac{1}{3},\frac{1}{6},0,0,0)$, and $\bw_2=\frac{1}{21}(6,5,4,3,2,1)$.
	Figure~\ref{fig:ex_42_3weights} illustrates $\FerPoly[][\bw_1,2,4]$.
	The symmetric polytope $\FerPoly[][\bw_1,2,4]$ contains $\spec^{\downarrow}\left(\Density^1_2(\bw_1)\right)$, obtained from \cite[Section~4.2.3]{klyachko_2006}.
	Figure~\ref{fig:hierarchy} illustrates the difference between the polytopes $\pol[\Sigma]^{\mathrm{f}\downarrow}_3(\bw,2,4)$ and $\spec^{\downarrow}\left(\Density^1_2(\bw)\right)$ for $\bw\in\{\bw_1,\bw_2\}$.
\end{example}

\begin{figure}[!h]
	\begin{center}
		\begin{tikzpicture}[auto]
		\node (step1) at (-3.5,0) {\input{ordered_spec423}};
		\node (step2) at  (3.5,0) {\begin{tikzpicture}%
	[x={(0.420026cm, -0.294365cm)},
	y={(0.907512cm, 0.136282cm)},
	z={(-0.000039cm, 0.945926cm)},
	scale=10.000000,
	back/.style={densely dotted},
	edge/.style={color=red,cap=round,thick},
	edgekly/.style={color=forestgreen,thick,cap=round},
	facet/.style={fill=forestgreen,fill opacity=0.600000},
	baseline=(a)]
%
%
%% This TikZ-picture was produce with Sagemath version 9.3.beta7
%% with the command: ._tikz_3d_in_3d and parameters:
%% view = [-0.672800000000000, -0.430000000000000, -0.602000000000000]
%% angle = 93.4200000000000
%% scale = 1
%% edge_color = blue
%% facet_color = blue!95!black
%% opacity = 0.8
%% vertex_color = green
%% axis = False

\coordinate (a) at (-0.11785, 0.00000, 0.08333);
%% Coordinate of the vertices:
%%
\coordinate (-0.11785, 0.00000, 0.08333) at (-0.11785, 0.00000, 0.08333);
\coordinate (0.05893, 0.17678, 0.08333) at (0.05893, 0.17678, 0.08333);
\coordinate (0.11785, 0.00000, -0.08333) at (0.11785, 0.00000, -0.08333);
\coordinate (0.11785, 0.00000, 0.25000) at (0.11785, 0.00000, 0.25000);
\coordinate (0.14731, 0.08839, -0.04167) at (0.14731, 0.08839, -0.04167);
\coordinate (0.14731, 0.08839, 0.20833) at (0.14731, 0.08839, 0.20833);
\coordinate (0.17678, 0.00000, 0.00000) at (0.17678, 0.00000, 0.00000);
\coordinate (0.17678, 0.00000, 0.16667) at (0.17678, 0.00000, 0.16667);
\coordinate (0.17678, 0.05893, 0.00000) at (0.17678, 0.05893, 0.00000);
\coordinate (0.17678, 0.05893, 0.16667) at (0.17678, 0.05893, 0.16667);
%%

%% Coordinate of the vertices of Kly:
%%
\coordinate (0.17678, 0.05893, 0.00000) at (0.17678, 0.05893, 0.00000);
\coordinate (0.17678, 0.05893, 0.16667) at (0.17678, 0.05893, 0.16667);
\coordinate (0.11785, 0.00000, 0.16667) at (0.11785, 0.00000, 0.16667);
\coordinate (0.11785, 0.11785, 0.00000) at (0.11785, 0.11785, 0.00000);
\coordinate (0.00000, 0.00000, 0.00000) at (0.00000, 0.00000, 0.00000);
\coordinate (0.11785, 0.00000, 0.00000) at (0.11785, 0.00000, 0.00000);
\coordinate (-0.11785, 0.00000, 0.08333) at (-0.11785, 0.00000, 0.08333);
\coordinate (0.05893, 0.17678, 0.08333) at (0.05893, 0.17678, 0.08333);
\coordinate (0.11785, 0.11785, 0.16667) at (0.11785, 0.11785, 0.16667);
\coordinate (0.00000, 0.00000, 0.16667) at (0.00000, 0.00000, 0.16667);
%%
%% Drawing edges in the back
%%
\draw[edge,back] (-0.11785, 0.00000, 0.08333) -- (0.05893, 0.17678, 0.08333);
%%
%%
%% Drawing edges of Kly in the back
%%
\draw[edgekly] (0.11785, 0.11785, 0.00000) -- (0.00000, 0.00000, 0.00000);
\draw[edgekly] (-0.11785, 0.00000, 0.08333) -- (0.05893, 0.17678, 0.08333);
%%
%% Drawing the facets of Kly
%%
\fill[facet] (0.11785, 0.11785, 0.16667) -- (0.17678, 0.05893, 0.16667) -- (0.17678, 0.05893, 0.00000) -- (0.11785, 0.11785, 0.00000) -- (0.05893, 0.17678, 0.08333) -- cycle {};
\fill[facet] (0.11785, 0.00000, 0.00000) -- (0.17678, 0.05893, 0.00000) -- (0.17678, 0.05893, 0.16667) -- (0.11785, 0.00000, 0.16667) -- cycle {};
\fill[facet] (0.00000, 0.00000, 0.16667) -- (0.11785, 0.00000, 0.16667) -- (0.17678, 0.05893, 0.16667) -- (0.11785, 0.11785, 0.16667) -- cycle {};
\fill[facet] (0.00000, 0.00000, 0.16667) -- (0.11785, 0.00000, 0.16667) -- (0.11785, 0.00000, 0.00000) -- (0.00000, 0.00000, 0.00000) -- (-0.11785, 0.00000, 0.08333) -- cycle {};
%%
%% Drawing edges in the front
%%
\draw[edge] (-0.11785, 0.00000, 0.08333) -- (0.11785, 0.00000, -0.08333);
\draw[edge] (-0.11785, 0.00000, 0.08333) -- (0.11785, 0.00000, 0.25000);
\draw[edge] (0.05893, 0.17678, 0.08333) -- (0.14731, 0.08839, -0.04167);
\draw[edge] (0.05893, 0.17678, 0.08333) -- (0.14731, 0.08839, 0.20833);
\draw[edge] (0.11785, 0.00000, -0.08333) -- (0.14731, 0.08839, -0.04167);
\draw[edge] (0.11785, 0.00000, -0.08333) -- (0.17678, 0.00000, 0.00000);
\draw[edge] (0.11785, 0.00000, 0.25000) -- (0.14731, 0.08839, 0.20833);
\draw[edge] (0.11785, 0.00000, 0.25000) -- (0.17678, 0.00000, 0.16667);
\draw[edge] (0.14731, 0.08839, -0.04167) -- (0.17678, 0.05893, 0.00000);
\draw[edge] (0.14731, 0.08839, 0.20833) -- (0.17678, 0.05893, 0.16667);
\draw[edge] (0.17678, 0.00000, 0.00000) -- (0.17678, 0.00000, 0.16667);
\draw[edge] (0.17678, 0.00000, 0.00000) -- (0.17678, 0.05893, 0.00000);
\draw[edge] (0.17678, 0.00000, 0.16667) -- (0.17678, 0.05893, 0.16667);
\draw[edge] (0.17678, 0.05893, 0.00000) -- (0.17678, 0.05893, 0.16667);
%%
%% Drawing edges of Kly in the front
%%
\draw[edgekly] (0.17678, 0.05893, 0.00000) -- (0.17678, 0.05893, 0.16667);
\draw[edgekly] (0.17678, 0.05893, 0.00000) -- (0.11785, 0.11785, 0.00000);
\draw[edgekly] (0.17678, 0.05893, 0.00000) -- (0.11785, 0.00000, 0.00000);
\draw[edgekly] (0.17678, 0.05893, 0.16667) -- (0.11785, 0.00000, 0.16667);
\draw[edgekly] (0.17678, 0.05893, 0.16667) -- (0.11785, 0.11785, 0.16667);
\draw[edgekly] (0.11785, 0.00000, 0.16667) -- (0.11785, 0.00000, 0.00000);
\draw[edgekly] (0.11785, 0.00000, 0.16667) -- (0.00000, 0.00000, 0.16667);
\draw[edgekly] (0.11785, 0.11785, 0.00000) -- (0.05893, 0.17678, 0.08333);
\draw[edgekly] (0.00000, 0.00000, 0.00000) -- (0.11785, 0.00000, 0.00000);
\draw[edgekly] (0.00000, 0.00000, 0.00000) -- (-0.11785, 0.00000, 0.08333);
\draw[edgekly] (-0.11785, 0.00000, 0.08333) -- (0.00000, 0.00000, 0.16667);
\draw[edgekly] (0.05893, 0.17678, 0.08333) -- (0.11785, 0.11785, 0.16667);
\draw[edgekly] (0.11785, 0.11785, 0.16667) -- (0.00000, 0.00000, 0.16667);
%%
%% Drawing the vertices in the front
%%
% \node[vertex] at (-0.11785, 0.00000, 0.08333)     {};
% \node[vertex] at (0.05893, 0.17678, 0.08333)     {};
% \node[vertex] at (0.11785, 0.00000, -0.08333)     {};
% \node[vertex] at (0.11785, 0.00000, 0.25000)     {};
% \node[vertex] at (0.14731, 0.08839, -0.04167)     {};
% \node[vertex] at (0.14731, 0.08839, 0.20833)     {};
% \node[vertex] at (0.17678, 0.00000, 0.00000)     {};
% \node[vertex] at (0.17678, 0.00000, 0.16667)     {};
% \node[vertex] at (0.17678, 0.05893, 0.00000)     {};
% \node[vertex] at (0.17678, 0.05893, 0.16667)     {};
%%
%%
\end{tikzpicture}};
		\end{tikzpicture}
	\end{center}
	\caption{On the left, the polytope $\FerPoly[][\bw_1,2,4]$ that includes $\spec^{\downarrow}\left(\Density^1_2(\bw_1)\right)$.
		On the right, the polytope $\pol[\Sigma]^{\mathrm{f}\downarrow}_3(\bw_1,2,4)$ includes $\spec^{\downarrow}\left(\Density^1_2(\bw_1)\right)$.}
	\label{fig:ex_42_3weights}
\end{figure}

\begin{figure}[!h]
	\begin{center}
		\begin{tikzpicture}[auto]
		\node (step1) at (-5,0) {\begin{tikzpicture}%
	[x={(-0.431539cm, -0.182926cm)},
	y={(0.902094cm, -0.087614cm)},
	z={(0.000099cm, 0.979215cm)},
	scale=10.000000,
	edge/.style={color=red,cap=round,thick},
	edgekly/.style={color=black,very thick,cap=round},
	edgekly/.style={color=forestgreen,thick,cap=round},
	facet/.style={fill=forestgreen,fill opacity=0.600000},
	baseline=(a)]
%
%
%% This TikZ-picture was produce with Sagemath version 9.3.beta7
%% with the command: ._tikz_3d_in_3d and parameters:
%% view = [-0.369700000000000, -0.586600000000000, -0.720600000000000]
%% angle = 131.160000000000
%% scale = 1
%% edge_color = blue
%% facet_color = blue!95!black
%% opacity = 0.8
%% vertex_color = green
%% axis = False

\coordinate (a) at (-0.11785, 0.00000, 0.08333);
%% Coordinate of the Fund vertices:
%%
\coordinate (-0.11785, 0.00000, 0.08333) at (-0.11785, 0.00000, 0.08333);
\coordinate (0.05893, 0.17678, 0.08333) at (0.05893, 0.17678, 0.08333);
\coordinate (0.11785, 0.00000, -0.08333) at (0.11785, 0.00000, -0.08333);
\coordinate (0.11785, 0.00000, 0.25000) at (0.11785, 0.00000, 0.25000);
\coordinate (0.14731, 0.08839, -0.04167) at (0.14731, 0.08839, -0.04167);
\coordinate (0.14731, 0.08839, 0.20833) at (0.14731, 0.08839, 0.20833);
\coordinate (0.17678, 0.00000, 0.00000) at (0.17678, 0.00000, 0.00000);
\coordinate (0.17678, 0.00000, 0.16667) at (0.17678, 0.00000, 0.16667);
\coordinate (0.17678, 0.05893, 0.00000) at (0.17678, 0.05893, 0.00000);
\coordinate (0.17678, 0.05893, 0.16667) at (0.17678, 0.05893, 0.16667);
%%
%% Coordinate of the 321 vertices:
%%
\coordinate (0.17678, 0.05893, 0.00000) at (0.17678, 0.05893, 0.00000);
\coordinate (0.17678, 0.05893, 0.16667) at (0.17678, 0.05893, 0.16667);
\coordinate (0.11785, 0.00000, 0.16667) at (0.11785, 0.00000, 0.16667);
\coordinate (0.11785, 0.11785, 0.00000) at (0.11785, 0.11785, 0.00000);
\coordinate (0.00000, 0.00000, 0.00000) at (0.00000, 0.00000, 0.00000);
\coordinate (0.11785, 0.00000, 0.00000) at (0.11785, 0.00000, 0.00000);
\coordinate (-0.11785, 0.00000, 0.08333) at (-0.11785, 0.00000, 0.08333);
\coordinate (0.05893, 0.17678, 0.08333) at (0.05893, 0.17678, 0.08333);
\coordinate (0.11785, 0.11785, 0.16667) at (0.11785, 0.11785, 0.16667);
\coordinate (0.00000, 0.00000, 0.16667) at (0.00000, 0.00000, 0.16667);
%%
%% Drawing edges in the back
%%
\draw[edge] (-0.11785, 0.00000, 0.08333) -- (0.05893, 0.17678, 0.08333);
\draw[edge] (-0.11785, 0.00000, 0.08333) -- (0.11785, 0.00000, -0.08333);
\draw[edge] (-0.11785, 0.00000, 0.08333) -- (0.11785, 0.00000, 0.25000);
%%
%% Drawing edges in the back
%%
\draw[edgekly] (0.11785, 0.11785, 0.00000) -- (0.00000, 0.00000, 0.00000);
\draw[edgekly] (0.00000, 0.00000, 0.00000) -- (0.11785, 0.00000, 0.00000);
\draw[edgekly] (0.00000, 0.00000, 0.00000) -- (-0.11785, 0.00000, 0.08333);
\draw[edgekly] (-0.11785, 0.00000, 0.08333) -- (0.05893, 0.17678, 0.08333);
\draw[edgekly] (-0.11785, 0.00000, 0.08333) -- (0.00000, 0.00000, 0.16667);
%%
%% Drawing vertices in the back
%%
% \node[vertex] at (-0.11785, 0.00000, 0.08333)     {};
%%
%%
%% Drawing the facets
%%
% \fill[facet] (0.17678, 0.05893, 0.16667) -- (0.17678, 0.00000, 0.16667) -- (0.17678, 0.00000, 0.00000) -- (0.17678, 0.05893, 0.00000) -- cycle {};
% \fill[facet] (0.17678, 0.05893, 0.00000) -- (0.14731, 0.08839, -0.04167) -- (0.11785, 0.00000, -0.08333) -- (0.17678, 0.00000, 0.00000) -- cycle {};
% \fill[facet] (0.17678, 0.05893, 0.16667) -- (0.14731, 0.08839, 0.20833) -- (0.05893, 0.17678, 0.08333) -- (0.14731, 0.08839, -0.04167) -- (0.17678, 0.05893, 0.00000) -- cycle {};
% \fill[facet] (0.17678, 0.05893, 0.16667) -- (0.14731, 0.08839, 0.20833) -- (0.11785, 0.00000, 0.25000) -- (0.17678, 0.00000, 0.16667) -- cycle {};
%%
%% Drawing the 321 facets
%%
\fill[facet] (0.11785, 0.11785, 0.16667) -- (0.17678, 0.05893, 0.16667) -- (0.17678, 0.05893, 0.00000) -- (0.11785, 0.11785, 0.00000) -- (0.05893, 0.17678, 0.08333) -- cycle {};
\fill[facet] (0.11785, 0.00000, 0.00000) -- (0.17678, 0.05893, 0.00000) -- (0.17678, 0.05893, 0.16667) -- (0.11785, 0.00000, 0.16667) -- cycle {};
\fill[facet] (0.00000, 0.00000, 0.16667) -- (0.11785, 0.00000, 0.16667) -- (0.17678, 0.05893, 0.16667) -- (0.11785, 0.11785, 0.16667) -- cycle {};
%%
%% Drawing edges in the front
%%
\draw[edge] (0.05893, 0.17678, 0.08333) -- (0.14731, 0.08839, -0.04167);
\draw[edge] (0.05893, 0.17678, 0.08333) -- (0.14731, 0.08839, 0.20833);
\draw[edge] (0.11785, 0.00000, -0.08333) -- (0.14731, 0.08839, -0.04167);
\draw[edge] (0.11785, 0.00000, -0.08333) -- (0.17678, 0.00000, 0.00000);
\draw[edge] (0.11785, 0.00000, 0.25000) -- (0.14731, 0.08839, 0.20833);
\draw[edge] (0.11785, 0.00000, 0.25000) -- (0.17678, 0.00000, 0.16667);
\draw[edge] (0.14731, 0.08839, -0.04167) -- (0.17678, 0.05893, 0.00000);
\draw[edge] (0.14731, 0.08839, 0.20833) -- (0.17678, 0.05893, 0.16667);
\draw[edge] (0.17678, 0.00000, 0.00000) -- (0.17678, 0.00000, 0.16667);
\draw[edge] (0.17678, 0.00000, 0.00000) -- (0.17678, 0.05893, 0.00000);
\draw[edge] (0.17678, 0.00000, 0.16667) -- (0.17678, 0.05893, 0.16667);
\draw[edge] (0.17678, 0.05893, 0.00000) -- (0.17678, 0.05893, 0.16667);
%%
%% Drawing edges of 321 in the front
%%
\draw[edgekly] (0.17678, 0.05893, 0.00000) -- (0.17678, 0.05893, 0.16667);
\draw[edgekly] (0.17678, 0.05893, 0.00000) -- (0.11785, 0.11785, 0.00000);
\draw[edgekly] (0.17678, 0.05893, 0.00000) -- (0.11785, 0.00000, 0.00000);
\draw[edgekly] (0.17678, 0.05893, 0.16667) -- (0.11785, 0.00000, 0.16667);
\draw[edgekly] (0.17678, 0.05893, 0.16667) -- (0.11785, 0.11785, 0.16667);
\draw[edgekly] (0.11785, 0.00000, 0.16667) -- (0.11785, 0.00000, 0.00000);
\draw[edgekly] (0.11785, 0.00000, 0.16667) -- (0.00000, 0.00000, 0.16667);
\draw[edgekly] (0.11785, 0.11785, 0.00000) -- (0.05893, 0.17678, 0.08333);
\draw[edgekly] (0.05893, 0.17678, 0.08333) -- (0.11785, 0.11785, 0.16667);
\draw[edgekly] (0.11785, 0.11785, 0.16667) -- (0.00000, 0.00000, 0.16667);
%%
%% Drawing the vertices in the front
%%
% \node[vertex] at (0.05893, 0.17678, 0.08333)     {};
% \node[vertex] at (0.11785, 0.00000, -0.08333)     {};
% \node[vertex] at (0.11785, 0.00000, 0.25000)     {};
% \node[vertex] at (0.14731, 0.08839, -0.04167)     {};
% \node[vertex] at (0.14731, 0.08839, 0.20833)     {};
% \node[vertex] at (0.17678, 0.00000, 0.00000)     {};
% \node[vertex] at (0.17678, 0.00000, 0.16667)     {};
% \node[vertex] at (0.17678, 0.05893, 0.00000)     {};
% \node[vertex] at (0.17678, 0.05893, 0.16667)     {};
%%
%%
\end{tikzpicture}};
		\node (step2) at  (2,0) {\begin{tikzpicture}%
	[x={(-0.431539cm, -0.182926cm)},
	y={(0.902094cm, -0.087614cm)},
	z={(0.000099cm, 0.979215cm)},
	scale=10.000000,
	edge/.style={color=purple,cap=round,thick},
	edgekly/.style={color=forestgreen,thick,cap=round},
	facet/.style={fill=forestgreen,fill opacity=0.600000},
	baseline=(a)]
%
%
%% This TikZ-picture was produce with Sagemath version 9.3.beta7
%% with the command: ._tikz_3d_in_3d and parameters:
%% view = [-0.369700000000000, -0.586600000000000, -0.720600000000000]
%% angle = 131.160000000000
%% scale = 1
%% edge_color = blue
%% facet_color = blue!95!black
%% opacity = 0.8
%% vertex_color = green
%% axis = False

\coordinate (a) at (-0.11785, 0.00000, 0.08333);
%% Coordinate of the Fund vertices:
%%
\coordinate (-0.11785, 0.00000, 0.08333) at (-0.11785, 0.00000, 0.08333);
\coordinate (0.05893, 0.17678, 0.08333) at (0.05893, 0.17678, 0.08333);
\coordinate (0.11785, 0.00000, -0.08333) at (0.11785, 0.00000, -0.08333);
\coordinate (0.11785, 0.00000, 0.25000) at (0.11785, 0.00000, 0.25000);
\coordinate (0.14731, 0.08839, -0.04167) at (0.14731, 0.08839, -0.04167);
\coordinate (0.14731, 0.08839, 0.20833) at (0.14731, 0.08839, 0.20833);
\coordinate (0.17678, 0.00000, 0.00000) at (0.17678, 0.00000, 0.00000);
\coordinate (0.17678, 0.00000, 0.16667) at (0.17678, 0.00000, 0.16667);
\coordinate (0.17678, 0.05893, 0.00000) at (0.17678, 0.05893, 0.00000);
\coordinate (0.17678, 0.05893, 0.16667) at (0.17678, 0.05893, 0.16667);
%%
%% Coordinate of the 654321 vertices:
%%
\coordinate (-0.01684, 0.00000, 0.15476) at (-0.01684, 0.00000, 0.15476);
\coordinate (0.00000, 0.05051, 0.13095) at (0.00000, 0.05051, 0.13095);
\coordinate (-0.01684, 0.00000, 0.01190) at (-0.01684, 0.00000, 0.01190);
\coordinate (0.01684, 0.00000, 0.05952) at (0.01684, 0.00000, 0.05952);
\coordinate (0.01684, 0.03367, 0.10714) at (0.01684, 0.03367, 0.10714);
\coordinate (0.01684, 0.00000, 0.10714) at (0.01684, 0.00000, 0.10714);
\coordinate (0.00000, 0.05051, 0.03571) at (0.00000, 0.05051, 0.03571);
\coordinate (0.01684, 0.03367, 0.05952) at (0.01684, 0.03367, 0.05952);
\coordinate (-0.03367, 0.08418, 0.08333) at (-0.03367, 0.08418, 0.08333);
\coordinate (-0.11785, 0.00000, 0.08333) at (-0.11785, 0.00000, 0.08333);
%%
%% Drawing edges in the back
%%
% \draw[edge] (-0.11785, 0.00000, 0.08333) -- (0.05893, 0.17678, 0.08333);
% \draw[edge] (-0.11785, 0.00000, 0.08333) -- (0.11785, 0.00000, -0.08333);
% \draw[edge] (-0.11785, 0.00000, 0.08333) -- (0.11785, 0.00000, 0.25000);
%%
%% Drawing edges of 654321 in the back
%%
\draw[edgekly] (-0.01684, 0.00000, 0.15476) -- (-0.11785, 0.00000, 0.08333);
\draw[edgekly] (-0.01684, 0.00000, 0.01190) -- (-0.11785, 0.00000, 0.08333);
\draw[edgekly] (-0.03367, 0.08418, 0.08333) -- (-0.11785, 0.00000, 0.08333);
%%
%%
%% Drawing vertices in the back
%%
% \node[vertex] at (-0.11785, 0.00000, 0.08333)     {};
%%
%%
%% Drawing the facets
%%
% \fill[facet] (0.17678, 0.05893, 0.16667) -- (0.17678, 0.00000, 0.16667) -- (0.17678, 0.00000, 0.00000) -- (0.17678, 0.05893, 0.00000) -- cycle {};
% \fill[facet] (0.17678, 0.05893, 0.00000) -- (0.14731, 0.08839, -0.04167) -- (0.11785, 0.00000, -0.08333) -- (0.17678, 0.00000, 0.00000) -- cycle {};
% \fill[facet] (0.17678, 0.05893, 0.16667) -- (0.14731, 0.08839, 0.20833) -- (0.05893, 0.17678, 0.08333) -- (0.14731, 0.08839, -0.04167) -- (0.17678, 0.05893, 0.00000) -- cycle {};
% \fill[facet] (0.17678, 0.05893, 0.16667) -- (0.14731, 0.08839, 0.20833) -- (0.11785, 0.00000, 0.25000) -- (0.17678, 0.00000, 0.16667) -- cycle {};
%%
%% Drawing the facets of 654321
%%
\fill[facet] (0.01684, 0.03367, 0.10714) -- (0.01684, 0.03367, 0.05952) -- (0.01684, 0.00000, 0.05952) -- (0.01684, 0.00000, 0.10714) -- cycle {};
\fill[facet] (0.00000, 0.05051, 0.03571) -- (-0.03367, 0.08418, 0.08333) -- (0.00000, 0.05051, 0.13095) -- (0.01684, 0.03367, 0.10714) -- (0.01684, 0.03367, 0.05952) -- cycle {};
\fill[facet] (0.01684, 0.03367, 0.05952) -- (0.01684, 0.00000, 0.05952) -- (-0.01684, 0.00000, 0.01190) -- (0.00000, 0.05051, 0.03571) -- cycle {};
\fill[facet] (0.01684, 0.00000, 0.10714) -- (-0.01684, 0.00000, 0.15476) -- (0.00000, 0.05051, 0.13095) -- (0.01684, 0.03367, 0.10714) -- cycle {};
%%
%% Drawing edges in the front
%%
% \draw[edge] (0.05893, 0.17678, 0.08333) -- (0.14731, 0.08839, -0.04167);
% \draw[edge] (0.05893, 0.17678, 0.08333) -- (0.14731, 0.08839, 0.20833);
% \draw[edge] (0.11785, 0.00000, -0.08333) -- (0.14731, 0.08839, -0.04167);
% \draw[edge] (0.11785, 0.00000, -0.08333) -- (0.17678, 0.00000, 0.00000);
% \draw[edge] (0.11785, 0.00000, 0.25000) -- (0.14731, 0.08839, 0.20833);
% \draw[edge] (0.11785, 0.00000, 0.25000) -- (0.17678, 0.00000, 0.16667);
% \draw[edge] (0.14731, 0.08839, -0.04167) -- (0.17678, 0.05893, 0.00000);
% \draw[edge] (0.14731, 0.08839, 0.20833) -- (0.17678, 0.05893, 0.16667);
% \draw[edge] (0.17678, 0.00000, 0.00000) -- (0.17678, 0.00000, 0.16667);
% \draw[edge] (0.17678, 0.00000, 0.00000) -- (0.17678, 0.05893, 0.00000);
% \draw[edge] (0.17678, 0.00000, 0.16667) -- (0.17678, 0.05893, 0.16667);
% \draw[edge] (0.17678, 0.05893, 0.00000) -- (0.17678, 0.05893, 0.16667);
%%
%% Drawing edges in the front
%%
\draw[edgekly] (-0.01684, 0.00000, 0.15476) -- (0.00000, 0.05051, 0.13095);
\draw[edgekly] (-0.01684, 0.00000, 0.15476) -- (0.01684, 0.00000, 0.10714);
\draw[edgekly] (0.00000, 0.05051, 0.13095) -- (0.01684, 0.03367, 0.10714);
\draw[edgekly] (0.00000, 0.05051, 0.13095) -- (-0.03367, 0.08418, 0.08333);
\draw[edgekly] (-0.01684, 0.00000, 0.01190) -- (0.01684, 0.00000, 0.05952);
\draw[edgekly] (-0.01684, 0.00000, 0.01190) -- (0.00000, 0.05051, 0.03571);
\draw[edgekly] (0.01684, 0.00000, 0.05952) -- (0.01684, 0.00000, 0.10714);
\draw[edgekly] (0.01684, 0.00000, 0.05952) -- (0.01684, 0.03367, 0.05952);
\draw[edgekly] (0.01684, 0.03367, 0.10714) -- (0.01684, 0.00000, 0.10714);
\draw[edgekly] (0.01684, 0.03367, 0.10714) -- (0.01684, 0.03367, 0.05952);
\draw[edgekly] (0.00000, 0.05051, 0.03571) -- (0.01684, 0.03367, 0.05952);
\draw[edgekly] (0.00000, 0.05051, 0.03571) -- (-0.03367, 0.08418, 0.08333);
%%
%% Drawing the vertices in the front
%%
% \node[vertex] at (0.05893, 0.17678, 0.08333)     {};
% \node[vertex] at (0.11785, 0.00000, -0.08333)     {};
% \node[vertex] at (0.11785, 0.00000, 0.25000)     {};
% \node[vertex] at (0.14731, 0.08839, -0.04167)     {};
% \node[vertex] at (0.14731, 0.08839, 0.20833)     {};
% \node[vertex] at (0.17678, 0.00000, 0.00000)     {};
% \node[vertex] at (0.17678, 0.00000, 0.16667)     {};
% \node[vertex] at (0.17678, 0.05893, 0.00000)     {};
% \node[vertex] at (0.17678, 0.05893, 0.16667)     {};
%%
%%
\end{tikzpicture}};
		\node (step2) at  (4,0) {\begin{tikzpicture}%
	[x={(-0.431539cm, -0.182926cm)},
	y={(0.902094cm, -0.087614cm)},
	z={(0.000099cm, 0.979215cm)},
	scale=10.000000,
	edgekly/.style={color=red,thick,cap=round},
	facet/.style={fill=forestgreen,fill opacity=0.600000},
	baseline=(a)]
%
%
%% This TikZ-picture was produce with Sagemath version 9.3.beta7
%% with the command: ._tikz_3d_in_3d and parameters:
%% view = [-0.369700000000000, -0.586600000000000, -0.720600000000000]
%% angle = 131.160000000000
%% scale = 1
%% edge_color = blue
%% facet_color = blue!95!black
%% opacity = 0.8
%% vertex_color = green
%% axis = False

\coordinate (a) at (-0.11785, 0.00000, 0.08333);
%% Coordinate of the Fund vertices:
%%
\coordinate (-0.11785, 0.00000, 0.08333) at (-0.11785, 0.00000, 0.08333);
\coordinate (0.05893, 0.17678, 0.08333) at (0.05893, 0.17678, 0.08333);
\coordinate (0.11785, 0.00000, -0.08333) at (0.11785, 0.00000, -0.08333);
\coordinate (0.11785, 0.00000, 0.25000) at (0.11785, 0.00000, 0.25000);
\coordinate (0.14731, 0.08839, -0.04167) at (0.14731, 0.08839, -0.04167);
\coordinate (0.14731, 0.08839, 0.20833) at (0.14731, 0.08839, 0.20833);
\coordinate (0.17678, 0.00000, 0.00000) at (0.17678, 0.00000, 0.00000);
\coordinate (0.17678, 0.00000, 0.16667) at (0.17678, 0.00000, 0.16667);
\coordinate (0.17678, 0.05893, 0.00000) at (0.17678, 0.05893, 0.00000);
\coordinate (0.17678, 0.05893, 0.16667) at (0.17678, 0.05893, 0.16667);
%%
%% Coordinate of the 654321 vertices:
%%
\coordinate (-0.01684, 0.00000, 0.15476) at (-0.01684, 0.00000, 0.15476);
\coordinate (0.00000, 0.05051, 0.13095) at (0.00000, 0.05051, 0.13095);
\coordinate (-0.01684, 0.00000, 0.01190) at (-0.01684, 0.00000, 0.01190);
\coordinate (0.01684, 0.00000, 0.05952) at (0.01684, 0.00000, 0.05952);
\coordinate (0.01684, 0.03367, 0.10714) at (0.01684, 0.03367, 0.10714);
\coordinate (0.01684, 0.00000, 0.10714) at (0.01684, 0.00000, 0.10714);
\coordinate (0.00000, 0.05051, 0.03571) at (0.00000, 0.05051, 0.03571);
\coordinate (0.01684, 0.03367, 0.05952) at (0.01684, 0.03367, 0.05952);
\coordinate (-0.03367, 0.08418, 0.08333) at (-0.03367, 0.08418, 0.08333);
\coordinate (-0.11785, 0.00000, 0.08333) at (-0.11785, 0.00000, 0.08333);
%%
%% Drawing edges in the back
%%
% \draw[edge] (-0.11785, 0.00000, 0.08333) -- (0.05893, 0.17678, 0.08333);
% \draw[edge] (-0.11785, 0.00000, 0.08333) -- (0.11785, 0.00000, -0.08333);
% \draw[edge] (-0.11785, 0.00000, 0.08333) -- (0.11785, 0.00000, 0.25000);
%%
%% Drawing edges of 654321 in the back
%%
\draw[edgekly] (-0.01684, 0.00000, 0.15476) -- (-0.11785, 0.00000, 0.08333);
\draw[edgekly] (-0.01684, 0.00000, 0.01190) -- (-0.11785, 0.00000, 0.08333);
\draw[edgekly] (-0.03367, 0.08418, 0.08333) -- (-0.11785, 0.00000, 0.08333);
\draw[edgekly] (-0.01684, 0.00000, 0.15476) -- (0.00000, 0.05051, 0.13095);
\draw[edgekly] (-0.01684, 0.00000, 0.15476) -- (0.01684, 0.00000, 0.10714);
\draw[edgekly] (0.00000, 0.05051, 0.13095) -- (0.01684, 0.03367, 0.10714);
\draw[edgekly] (0.00000, 0.05051, 0.13095) -- (-0.03367, 0.08418, 0.08333);
\draw[edgekly] (-0.01684, 0.00000, 0.01190) -- (0.01684, 0.00000, 0.05952);
\draw[edgekly] (-0.01684, 0.00000, 0.01190) -- (0.00000, 0.05051, 0.03571);
\draw[edgekly] (0.01684, 0.00000, 0.05952) -- (0.01684, 0.00000, 0.10714);
\draw[edgekly] (0.01684, 0.00000, 0.05952) -- (0.01684, 0.03367, 0.05952);
\draw[edgekly] (0.01684, 0.03367, 0.10714) -- (0.01684, 0.00000, 0.10714);
\draw[edgekly] (0.01684, 0.03367, 0.10714) -- (0.01684, 0.03367, 0.05952);
\draw[edgekly] (0.00000, 0.05051, 0.03571) -- (0.01684, 0.03367, 0.05952);
\draw[edgekly] (0.00000, 0.05051, 0.03571) -- (-0.03367, 0.08418, 0.08333);
%%
%% Drawing the vertices in the front
%%
% \node[vertex] at (0.05893, 0.17678, 0.08333)     {};
% \node[vertex] at (0.11785, 0.00000, -0.08333)     {};
% \node[vertex] at (0.11785, 0.00000, 0.25000)     {};
% \node[vertex] at (0.14731, 0.08839, -0.04167)     {};
% \node[vertex] at (0.14731, 0.08839, 0.20833)     {};
% \node[vertex] at (0.17678, 0.00000, 0.00000)     {};
% \node[vertex] at (0.17678, 0.00000, 0.16667)     {};
% \node[vertex] at (0.17678, 0.05893, 0.00000)     {};
% \node[vertex] at (0.17678, 0.05893, 0.16667)     {};
%%
%%
\end{tikzpicture}};
		\end{tikzpicture}
	\end{center}
	\caption{On the left, the polytope $\pol[\Sigma]^{\mathrm{f}\downarrow}_3(\bw_1,2,4)$ includes $\spec^{\downarrow}\left(\Density^1_2(\bw_1)\right)$.
		On the right, the polytopes $\pol[\Sigma]^{\mathrm{f}\downarrow}_3(\bw_2,2,4)$ and $\spec^{\downarrow}\left(\Density^1_2(\bw_2)\right)$ are equal.
		The two polytopes on the right are included in $\pol[\Sigma]^{\mathrm{f}\downarrow}_3(\bw_1,2,4)$ due to the majorization $\bw_2\prec\bw_1$.}
	\label{fig:hierarchy}
\end{figure}

\noindent
This example raises the question: 

\begin{question}
	When are $\pol[\Sigma]^{\mathrm{f}\downarrow}_r(\bw,N,d)$ and $\spec^{\downarrow}\left(\Density^1_N(\bw)\right)$ equal?
\end{question}

\subsection{Beyond the permutation action}
After reading Section~\ref{sec:perm_inv_poly}, an informed reader will certainly have had a propensity to attempt to generalize the discussion in a number of ways:

\begin{enumerate}[label=\Roman*)]
\item Change the action of $\Sym{d}$
\item Consider a proper subgroup $G\subseteq\Sym{d}$, see \cite{mirsky_results_1962,onn_geometry_1993,cruickshank_rearrangement_2006}
\item Consider another group entirely (say, a finite Coxeter group), see \cite{ardila_coxeter_2020}
\item Consider a combination of I), II), and III)
\item Consider an infinite group, see \cite{sanyal_orbitopes_2011}.
\end{enumerate}

The above extensions were considered for orbit polytopes as the non-exhaustive references indicate.
The consideration of these possibilities for $\Sym{d}$-invariant polytopes certainly seems to be fertile ground for further experimentations and investigation.
However, these investigations go beyond the scope of the present work.
Nevertheless, let us perhaps emphasize two key aspects of the $\Sym{d}$-invariant polytopes which are critical in the present circumstance and should maintain a certain importance more generally.

\begin{itemize}
\item The \emph{existence} and \emph{simple description} of the fundamental chamber for the action of the symmetric group
\item The \emph{simple description} of fundamental representatives in Definition~\ref{def:orb_repr} (which are exactly the vertices maximized by generic linear functional in the fundamental chamber).
\end{itemize}
Both of these facts were crucial and extend to the case of finite Coxeter groups.
In Section~\ref{ssec:hrepr_spec}, we describe a method to translate a $V$-representation into an $H$-representation for certain $\Sym{d}$-invariant polytopes which relies heavily on the above two facts.
The method presented can be extended to a more general setting, as long as one has a fundamental cone and a description of the fundamental representatives maximized by fundamental linear functionals.

The survey \cite{schuermann_exploiting_2013} offers a fairly recent overview of the methods developed to exploit symmetry in polyhedral computations.
We refer the reader to this document and the references therein for further details on some existing techniques.
There are, however, two observations made by the author that deserve to be highlighted here.
First he observes that ``all methods known so far do not use geometric insights and still rely on subproblem conversions that do not exploit available symmetry'' (p. 269).
Furthermore, he concludes that ``for polyhedral representation conversions we see potential in enhancing decomposition methods through the use of geometric information like fundamental domains, classical invariant theory and symmetric polyhedral decompositions'' (p.~275).
Theorem~\ref{thm:dimension} and Algorithm~\ref{algo:lineup} confirm these anticipated algorithmic improvements by exploiting fundamental domains (from Definition~\ref{def:fund_cham}) and normal cones (from Equation~\eqref{eq:ncone_vertex}) to obtain fundamental cones (in Equation~\eqref{eq:keystep}) which are then used within a variation of the so-called ``incidence  decomposition  method''.

\subsection{Fermionic vs bosonic threshold ideals}
\label{ssec:fer_vs_bos}

Order ideals of $\FerPoset$ are called \emph{shifted complexes} \cite{klivans_threshold_2007,klivans_shifted_2008}.
Lineups of the hypersimplex lead to a slightly more specialized type of simplicial complex.
That is, a lineup of length $r$ gives rise to a shifted complex containing $r$ facets \emph{with} a total order provided by a fundamental linear functional $\by$.
Such shifted complexes are called \emph{threshold complexes} \cite{klivans_threshold_2007,edelman_simplicial_2013}.
We determined in Theorem~\ref{thm:vrepr_lineup} that the generating vertices of $\FerPoly[r]$ correspond to (shellings of) pure threshold complexes of dimension $N-1$ with exactly~$r$ facets and at most $d$ vertices.
The following example shows that the vertices of the two types of spectral polytopes are structurally different.

\begin{example}\label{ex:contrast}
	\hfill\\
\underline{\smash{Fermionic threshold ideal which is not bosonic threshold:}}\hfill\\
Let $d=5$ and $N=3$.
The bosonic ideal $Q^\mathrm{b}=\langle135,234\rangle$ is not a bosonic threshold ideal, i.e. $Q^\mathrm{b}\not\in\Thres{\BosPoset[3][5]}$.
There are thirty-five $3$-multisets of $[5]$ of which $17$ are in the ideal $Q^\mathrm{b}$.
The convex hull ${\conv\{\bchi(S)~:~S\in Q^\mathrm{b}\}}$ intersect the convex hull $\conv\left\{\bchi(S)~:~S\in\multiset{[5]}{3}\setminus Q^\mathrm{b}\right\}$ at the point $\frac{1}{3}(1, 2, 3, 2, 1)$.
This shows that no (strictly) separating hyperplane for $Q^\mathrm{b}$ exists, hence it can not be a bosonic threshold complex.
However, doing the same computation for the fermionic ideal $Q^\mathrm{f}=\langle147,246\rangle$ corresponding to $Q^\mathrm{b}$, the two convex hulls do not intersect, certifying the existence of a separating hyperplane making it a fermionic threshold ideal, i.e. $Q^\mathrm{f}\in\Thres{\FerPoset[3][7]}$.

\noindent
\underline{\smash{Bosonic threshold ideal which is not fermionic threshold:}}\hfill\\
Let $d=9$ and $N=3$.
The fermionic ideal $R^\mathrm{f}=\langle159,178,239,456\rangle$ is not a fermionic threshold ideal, i.e. $R^\mathrm{f}\not\in\Thres{\FerPoset[3][9]}$.
There are eighty-four $3$-subsets of which thirty-eight are in $R^\mathrm{f}$ (the smallest non-fermionic threshold has thirty-six $3$-subsets, see Example~\ref{ex:notferthres}).
Repeating the convex hull computation shows that $\frac{1}{3}\bef_9$ is in the intersection and therefore $R^\mathrm{f}$ is not a fermionic threshold ideal.
However, doing the same computation for the bosonic ideal $R^\mathrm{b}=\langle147,166,227,444\rangle$ corresponding to $R^\mathrm{f}$, the two convex hulls do not intersect certifying the existence of a separating hyperplane making it a bosonic threshold ideal, i.e. $R^\mathrm{b}\in\Thres{\BosPoset[3][7]}$.  \end{example}

\begin{example}[Saturated chains not giving a bosonic lineup]\label{ex:pseudosweep}
Consider the sequence of elements $(111,112,122,113,123)\in\BosPoset[3][3]$.
In this case it is easy to see that all the order ideals formed by the initial segments of the sequence are all threshold.
We claim that the sequence
\[
\ell= ((3,0,0),(2,1,0),(1,2,0),(2,0,1),(1,1,1)),
\]
is not a lineup.
For the sake of contradiction assume that there exists a vector $\by$ inducing $\ell$.
We must have
\[
\resizebox{\textwidth}{!}{$
\begin{array}{l@{\hspace{0.5cm}}c@{\hspace{0.5cm}}r}
	\np{\by,(1,2,0)}>	\np{\by,(2,0,1)},\text{ and } \np{\by,(1,1,1)}>	\np{\by,(0,3,0)} & \Longleftrightarrow & 2y_2 >y_1-y_2,\text{ and } y_1+y_3>2y_2,
\end{array}$}
\]
which is nonsensical. 
The line segments $[(1,2,0),(2,0,1)]$ and $[(0,3,0),(1,1,1)]$ are parallel to each other, see Figure \ref{fig:bosonic_non_coherent}. 
So any linear functional with a larger value on $(1,2,0)$ than on $(2,0,1)$ must have a larger value on $(0,3,0)$ than on $(1,1,1)$. 
However, the sequence $\ell$ is a pseudo-lineup as it is possible to rotate the sweeping line and obtain $\ell$.
\end{example}

\begin{figure}[H]
\begin{tikzpicture}%
[x={(-30:1)},
y={(0,1)},
z={(-1,-1)},
scale=0.85,
point/.style={inner sep=2pt,circle,draw=black,fill=white!50!red},
rotate=120]

\coordinate (003) at (0, 0, 3) {};
\coordinate (012) at (0, 1, 2) {};
\coordinate (021) at (0, 2, 1) {};
\coordinate (030) at (0, 3, 0) {};
\coordinate (102) at (1, 0, 2) {};
\coordinate (111) at (1, 1, 1) {};
\coordinate (120) at (1, 2, 0) {};
\coordinate (201) at (2, 0, 1) {};
\coordinate (210) at (2, 1, 0) {};
\coordinate (300) at (3, 0, 0) {};

\node[point,label=above:{$(0, 0, 3)$}] (n003) at (003) {};
\node[point,label=above:{$(0, 1, 2)$}] (n012) at (012) {};
\node[point,label=above:{$(0, 2, 1)$}] (n021) at (021) {};
\node[point,label=above:{$(0, 3, 0)$}] (n030) at (030) {};
\node[point,label=above:{$(1, 0, 2)$}] (n102) at (102) {};
\node[point,label=above:{$(1, 1, 1)$}] (n111) at (111) {};
\node[point,label=above:{$(1, 2, 0)$}] (n120) at (120) {};
\node[point,label=above:{$(2, 0, 1)$}] (n201) at (201) {};
\node[point,label=above:{$(2, 1, 0)$}] (n210) at (210) {};
\node[point,label=above:{$(3, 0, 0)$}] (n300) at (300) {};

\draw[dotted] (n201) -- (n120);
\draw[dotted] (n111) -- (n030);

\end{tikzpicture}
\caption{An example of an obstruction to coherence of bosonic lineup.}
\label{fig:bosonic_non_coherent}
\end{figure}
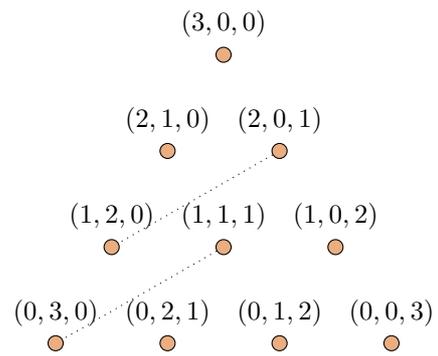

The notion of bosonic threshold ideal leads to the notion of multiset threshold complex.
What properties do these complexes have? 
Are they related to other combinatorial structures?
We leave these questions open.

\newpage

\addtocontents{toc}{\vspace{1em}}
\section*{Table of notations}
\label{notations}
\begin{longtable}{l@{\hspace{1cm}}l}% Include a list of Symbols (a three column table)
%$\NN,\RR$ & the nonnegative integers, the real numbers\\
$[d]$ & the set $\{1,2,\dots,d\}$, for $d\in \NN\setminus\!\{0\}$\\
%$\binom{[d]}{k}$ & the collection of $k$-elements subsets of $[d]$\\
$|S|$ & the cardinality of a set $S$ \\
$\Sym{d}$ & the symmetric group on~$d$ objects\\
$\cH,\cH_1,\cH_N$  & finite-dimensional Hilbert spaces (see p.\pageref{ssec:hilbert}, p.\pageref{ssec:manyparticle}, and p.\pageref{ssec:manyparticle})\\
$\Herm(\cH)$ & the space of Hermitian operators on $\cH$ (see p.\pageref{ssec:hilbert})\\
$\spec^{\downarrow}(H)$ & the vector of decreasing eigenvalues of $H\in\Herm(\cH)$ (see p.\pageref{ssec:hilbert})\\
$\spec(H)$ & the set of vectors $\Sym{d}\cdot\spec^{\downarrow}(H)$, for an operator $H\in\Herm(\cH)$ (see p.\pageref{ssec:hilbert}) \\ 
$\Density(\cH)$ & the set of density operators on $\cH$ (see p.\pageref{def:density})\\
$\FerPoset$ & $\{(i_1,\cdots,i_N)\in [d]^N~:~1\leq i_1<\dots< i_N\leq d\}$ (see p.\pageref{ssec:manyparticle})\\
$\bi$ & an element of $\FerPoset$ or $\BosPoset$\\
$D$ & the dimension of $\cH_N$, fermionic case: $D=\binom{d}{N}$, bosonic case: $D=\binom{d+N-1}{N}$\\
$\Density^1$, $\Density^N$ & the set of density operators on $\cH_1$ and $\cH_N$ (see p.\pageref{ssec:manyparticle})\\
$\Unitary(\cH)$ & the unitary group of $\cH$ (see p.\pageref{eq:pauli_simplex})\\
$\Pauli_{D-1}$ & $\left\{\bw\in\RR^D:1\geq w_1\geq w_2\geq \cdots\geq w_D \geq 0,\quad \sum_{i=1}^D w_i=1\right\}$ (see p.\pageref{eq:pauli_simplex})\\
$\Density(\bw)$ & $\{\rho\in\Density(\cH)~:~\spec^\downarrow(\rho)=\bw\}$ (see p.\pageref{eq:pauli_simplex})\\
%$\conv(\pol[K])$ & the convex hull of a set $\pol[K]$ (see p.\pageref{sec:convex_repr})\\
$L^N_M$ & the partial trace operator from $\Density^N$ to $\Density^M$ (see p.\pageref{def:partial_trace})\\
$\Density^M_N$ & $L^N_M(\Density^N)$, the $M$-reduced density operators on $\cH_M$ (see p.\pageref{rem:partial trace})\\
$\pol[H](N,d)$ & $\left\{\bx\in\RR^d~:~\sum_{i=1}^d x_i = N,\quad 0\leq x_i\leq 1\text{ for all }i=1,\dots,d\right\}$ (see p.\pageref{eq:def_hypersimplex})\\
$\klypo$ & the polytope $\spec^{\downarrow}(\Density^1_N(\bw))$ (see p.\pageref{eq:rosette_klypo})\\
$\Lambda(\bw,N,d)$, & $\spec(\Density^1_N(\bw))$, the symmetrization of $\klypo$ (see p.\pageref{eq:rosette_klypo})\\
$\overline{\Density}{}^1_N(d,\bw)$ & $\conv\left\{L^N_1(\tau)~:~\tau\in\Density^N(\bw)\right\}$ (see p.\pageref{pb:relaxedNrepr})\\
$\bw'\prec\bw$ & $\sum_{i=1}^kw'_i \leq \sum_{i=1}^kw_i$ (see p.\pageref{fig:relaxed_ensemble})\\
$\Gamma^1_N(h)$ & expansion of the operator $h\in \Herm(\cH_1)$ to $\Herm(\cH_N)$ (see p.\pageref{eq:lift})\\
$\OccVec_{\bw}(\ell(\bm{\lambda}))$ & occupation vector associated to $\bm{\lambda} \in \RR^d$ (see p.\pageref{eq:NON})\\
$\Lineups(\bw)$ & $\{\OccVec_{\bw}(\ell(\bm{\lambda}))~:~\bm{\lambda}\in\RR^d\}$, the set of occupation vectors (see p.\pageref{eq:NON})\\
$\FerPoly[][\bw,N,d]$ & fermionic spectral polytope (see p.\pageref{def:ferm_specpo})\\
$\BosPoset$ & $\{(i_1,\dots,i_N)\in [d]^N~:~1\leq i_1\leq\dots\leq i_N\leq d\}$ (see p.\pageref{ssec:repr_boson})\\
$\SymP^N\cH_1$ & the $N$-boson Hilbert space(see p.\pageref{ssec:repr_boson})\\
$\BosPoly[][\bw,N,d]$ & bosonic spectral polytope (see p.\pageref{def:bos_specpo})\\
$\binom{J}{k},\multiset{J}{k}$ & the collection of $k$-elements subsets and multisubsets of the set $J$\\
$\bchi$ & multiplicity function $S\in \NN^d \mapsto \sum_{j\in [d]}S(j)\be_j \in \RR^d$ (see p.\pageref{def:FBPC})\\
$\FerPC$ & $\left\{\bchi(S)~:~S\in \binom{[d]}{N} \right\}$ (see p.\pageref{def:FBPC})\\
$\BosPC$ & $\left\{\bchi(S)~:~S\in \multiset{[d]}{N} \right\}$ (see p.\pageref{def:FBPC})\\
$\mathrm{supp}_{\pol}$ & support function of polytope $\pol$ (see p.\pageref{sec:nfans})\\
$\pol^{\by}$ & face of $\pol$ maximized by the functional $\np{\by, \cdot}$ (see p.\pageref{sec:nfans})\\
$\ncone_{\pol}(\bv)$ & normal cone of $\pol$ at vertex $\bv$ (see p.\pageref{sec:nfans})\\
$\fan[N](\pol)$ & normal fan of $\pol$ (see p.\pageref{sec:nfans})\\
$\bef_k$ & $\sum_{1\leq i\leq k} \be_i$ (see p.\pageref{sec:perm_inv_poly})\\
$\Perm(\bV)$ & $\Sym{d}$-invariant polytope of the point configuration $\bV$ (see p.\pageref{def:invariant_pol})\\
$\Phi_d$ & $\{\by\in\RR^d~:~y_{1}\geq y_{2}\geq\cdots\geq y_{d}\}$, the fundamental chamber (see p.\pageref{def:fund_cham})\\
$\bx^\downarrow$ & fundamental representative of a vector $\bx$ (see p.\pageref{def:orb_repr})\\
$\ell_{\bV,r}(\by)$ & $r$-lineup or $r$-ranking of the point configuration $\bV$ induced by $\by$ (see p.\pageref{def:lineup})\\
$\lu$ & the set of $r$-lineups of point configuration $\bV$ (see p.\pageref{def:lineup})\\
$\pol[K]_\bV^\circ(\ell)$ & $\{\by\in\RR^n: \ell(\by)=\ell\}$, for $\ell$ an $r$-ranking of $\bV$ (see p.\pageref{eq:complement})\\
$\Rg$ & $\{\pol[K]_\bV(\ell)\ : \ \ell \text{ is an $r$-ranking of }\bV\}$ (see p.\pageref{eq:complement})\\
$\pol[L]_{r,\bw}(\bV)$ & the $r$-lineup polytope of~$\bV$ (see p.\pageref{def:lineup_pol})\\
% $\simplexconfig$ & $\{\be_1, \ldots , \be_m\}$, the elementary basis of $\RR^m$ (see p.\pageref{ssec:proj_permutohedra})\\
% $\mathrm{Fiber}(\pol[P], \phi)$ & the monotone path polytope of $\pol[P]$ with respect to linear functional $\phi$ (see p.\pageref{eq:monotonepathpolytope})\\
$\non$ & $\{\OccVec_{\bw}(\ell)~:~\ell\in\lu[\FerPC]\text{ and } \ell=\ell(\by)\text{ for some } \by\in\Phi^d\}$ (see p.\pageref{prop:rhs})\\
$\non[b]$ & $\{\OccVec_{\bw}(\ell)~:~\ell\in\lu[\BosPC]\text{ and } \ell=\ell(\by)\text{ for some } \by\in\Phi^d\}$ (see p.\pageref{prop:rhs})\\
$\Thres{P}$  & poset of threshold ideals of $P\in\{\FerPoset,\BosPoset\}$ (see p.\pageref{def:ferm_threshold} and p.\pageref{lem:Bthreshold})\\
% $\mathdutchcal{R}(\ell)$ & $\{S\in\FerPoset~:~{\bchi(S)\not\in\ell\text{ and }\bchi^{-1}(\ell)\cup\{S\}} \text{ is an ideal of }\FerPoset\}$, the runner-ups of fundamental lineup $\ell$ (see p.\pageref{ssec:recursive})\\
\end{longtable}

\newpage
\newcommand{\etalchar}[1]{$^{#1}$}
\providecommand{\bysame}{\leavevmode\hbox to3em{\hrulefill}\thinspace}
\providecommand{\MR}{\relax\ifhmode\unskip\space\fi MR }
% \MRhref is called by the amsart/book/proc definition of \MR.
\providecommand{\MRhref}[2]{%
  \href{http://www.ams.org/mathscinet-getitem?mr=#1}{#2}
}
\providecommand{\href}[2]{#2}

\newpage
\appendix

\section{$H$-representation of the lineup polytope of the hypersimplex $\pol[H](3,6)$}
\label{app:hyper36}

The $H$-representation of the $20$-lineup polytope of the $5$-dimensional hypersimplex $\pol[H](3,6)$ has $72$ inequalities represented below.
The inequalities arise in a hierarchy while increasing $r$ from $1$ to $10$.
The weights are taken in the Pauli simplex $\Pauli_{9}$, so that $1\geq w_1 \geq w_2 \geq \cdots \geq w_{10} \geq 0$ and $\sum_{i=1}^{10}w_i=1$.
Since $\sum_{i=1}^6x_i=3$ and $\sum_{i=1}^{10}w_i=1$, there are many ways to represent the matrices if one does not use the orthogonal complement of these hyperplanes.
We chose the representation where the last coefficient of the rays is zero, as a consequence, all coefficients are non-negative.

\[
\resizebox{\textwidth}{!}{$
\begin{blockarray}{rrrrrrrcrrrrrrrrrrl}
\begin{block}{rrrrrrrcrrrrrrrrrrl}
& \BAmulticolumn{6}{c}{\text{Ray coefficients}} & & \BAmulticolumn{10}{c}{\text{Right-hand side}} \\
\end{block}
\begin{block}{r(rrrrrr)c(rrrrrrrrrr)l}
\multirow{2}{*}{$r=1$}   & 1 & 0 & 0 & 0 & 0 & 0 & \multirow{44}{*}{$\times\left(\begin{array}{c} x_{1}^{\downarrow} \\ x_{2}^{\downarrow}  \\ x_{3}^{\downarrow}  \\ x_{4}^{\downarrow}  \\ x_{5}^{\downarrow}  \\ x_{6}^{\downarrow} \end{array}\right)\leq$} & 1 & 1 & 1 & 1 & 1 & 1 & 1 & 1 & 1 & 1 & \multirow{44}{*}{$\times\left(\begin{array}{c} w_{1} \\ w_{2} \\ w_{3} \\ w_{4} \\ w_{5} \\ w_{6} \\ w_{7} \\ w_{8} \\ w_{9} \\ w_{10} \end{array}\right)$}\\
                         & 1 & 1 & 1 & 1 & 1 & 0 &  & 3 & 3 & 3 & 3 & 3 & 3 & 3 & 3 & 3 & 3 & \\\cline{2-7}\cline{9-18}
                   r=2   & 1 & 1 & 1 & 0 & 0 & 0 & & 3 & 2 & 2 & 2 & 2 & 2 & 2 & 2 & 2 & 2 \\\cline{2-7}\cline{9-18}
                   r=3   & 2 & 2 & 1 & 1 & 0 & 0 & & 5 & 5 & 4 & 4 & 4 & 4 & 3 & 3 & 3 & 3 \\\cline{2-7}\cline{9-18}
\multirow{2}{*}{$r=4$}   & 2 & 1 & 1 & 1 & 0 & 0 & & 4 & 4 & 4 & 3 & 3 & 3 & 3 & 3 & 3 & 3 \\
                         & 2 & 2 & 1 & 1 & 1 & 0 & & 5 & 5 & 5 & 4 & 4 & 4 & 4 & 4 & 4 & 4 \\\cline{2-7}\cline{9-18}
\multirow{3}{*}{$r=5$}   & 1 & 1 & 0 & 0 & 0 & 0 & & 2 & 2 & 2 & 2 & 1 & 1 & 1 & 1 & 1 & 1 \\
                         & 1 & 1 & 1 & 1 & 0 & 0 & & 3 & 3 & 3 & 3 & 2 & 2 & 2 & 2 & 2 & 2 \\
                         & 4 & 3 & 2 & 2 & 1 & 0 & & 9 & 9 & 8 & 8 & 7 & 7 & 7 & 7 & 6 & 6 \\\cline{2-7}\cline{9-18}
\multirow{3}{*}{$r=6$}   & 3 & 2 & 1 & 1 & 0 & 0 & & 6 & 6 & 5 & 5 & 5 & 4 & 4 & 4 & 4 & 4 \\
                         & 3 & 2 & 2 & 1 & 1 & 0 & & 7 & 6 & 6 & 6 & 6 & 5 & 5 & 5 & 5 & 5 \\
                         & 3 & 3 & 2 & 2 & 1 & 0 & & 8 & 8 & 7 & 7 & 7 & 6 & 6 & 6 & 6 & 6 \\\cline{2-7}\cline{9-18}
\multirow{5}{*}{$r=7$}   & 2 & 1 & 1 & 1 & 1 & 0 & & 4 & 4 & 4 & 4 & 4 & 4 & 3 & 3 & 3 & 3 \\
                         & 4 & 3 & 2 & 1 & 1 & 0 & & 9 & 8 & 8 & 7 & 7 & 7 & 6 & 6 & 6 & 6 \\
                         & 4 & 3 & 3 & 2 & 1 & 0 & & 10 & 9 & 9 & 8 & 8 & 8 & 7 & 7 & 7 & 7 \\
                         & 6 & 4 & 3 & 2 & 1 & 0 & & 13 & 12 & 11 & 11 & 10 & 10 & 9 & 9 & 9 & 8 \\
                         & 6 & 5 & 4 & 3 & 2 & 0 & & 15 & 14 & 13 & 13 & 12 & 12 & 11 & 11 & 11 & 10 \\\cline{2-7}\cline{9-18}
\multirow{9}{*}{$r=8$}   & 2 & 1 & 1 & 0 & 0 & 0 & & 4 & 3 & 3 & 3 & 3 & 3 & 3 & 2 & 2 & 2 \\
                         & 2 & 2 & 2 & 1 & 1 & 0 & & 6 & 5 & 5 & 5 & 5 & 5 & 5 & 4 & 4 & 4 \\
                         & 3 & 2 & 1 & 1 & 1 & 0 & & 6 & 6 & 6 & 5 & 5 & 5 & 5 & 4 & 4 & 4 \\
                         & 3 & 2 & 2 & 2 & 1 & 0 & & 7 & 7 & 7 & 6 & 6 & 6 & 6 & 5 & 5 & 5 \\
                         & 5 & 3 & 2 & 2 & 1 & 0 & & 10 & 10 & 9 & 9 & 8 & 8 & 8 & 7 & 7 & 7 \\
                         & 5 & 4 & 3 & 2 & 1 & 0 & & 12 & 11 & 10 & 10 & 9 & 9 & 9 & 8 & 8 & 8 \\
                         & 5 & 4 & 3 & 3 & 2 & 0 & & 12 & 12 & 11 & 11 & 10 & 10 & 10 & 9 & 9 & 9 \\
                         & 6 & 5 & 3 & 2 & 1 & 0 & & 14 & 13 & 12 & 11 & 11 & 10 & 10 & 9 & 9 & 9 \\
                         & 6 & 5 & 4 & 3 & 1 & 0 & & 15 & 14 & 13 & 12 & 12 & 11 & 11 & 10 & 10 & 10 \\\cline{2-7}\cline{9-18}
\multirow{14}{*}{$r=9$}  & 3 & 2 & 2 & 1 & 0 & 0 & & 7 & 6 & 6 & 5 & 5 & 5 & 5 & 5 & 4 & 4 \\
                         & 3 & 3 & 2 & 1 & 1 & 0 & & 8 & 7 & 7 & 6 & 6 & 6 & 6 & 6 & 5 & 5 \\
                         & 4 & 2 & 2 & 1 & 1 & 0 & & 8 & 7 & 7 & 7 & 7 & 6 & 6 & 6 & 5 & 5 \\
                         & 4 & 3 & 2 & 1 & 0 & 0 & & 9 & 8 & 7 & 7 & 7 & 6 & 6 & 6 & 5 & 5 \\
                         & 4 & 3 & 3 & 2 & 2 & 0 & & 10 & 9 & 9 & 9 & 9 & 8 & 8 & 8 & 7 & 7 \\
                         & 4 & 4 & 3 & 2 & 1 & 0 & & 11 & 10 & 9 & 9 & 9 & 8 & 8 & 8 & 7 & 7 \\
                         & 5 & 3 & 2 & 1 & 1 & 0 & & 10 & 9 & 9 & 8 & 8 & 8 & 7 & 7 & 6 & 6 \\
                         & 5 & 4 & 2 & 2 & 1 & 0 & & 11 & 11 & 10 & 9 & 9 & 8 & 8 & 8 & 7 & 7 \\
                         & 5 & 4 & 3 & 3 & 1 & 0 & & 12 & 12 & 11 & 10 & 10 & 9 & 9 & 9 & 8 & 8 \\
                         & 5 & 4 & 4 & 3 & 2 & 0 & & 13 & 12 & 12 & 11 & 11 & 11 & 10 & 10 & 9 & 9 \\
                         & 7 & 4 & 3 & 2 & 1 & 0 & & 14 & 13 & 12 & 12 & 11 & 11 & 10 & 10 & 9 & 9 \\
                         & 7 & 5 & 4 & 2 & 1 & 0 & & 16 & 14 & 13 & 13 & 12 & 12 & 11 & 11 & 10 & 10 \\
                         & 7 & 6 & 5 & 3 & 2 & 0 & & 18 & 16 & 15 & 15 & 14 & 14 & 13 & 13 & 12 & 12 \\
                         & 7 & 6 & 5 & 4 & 3 & 0 & & 18 & 17 & 16 & 16 & 15 & 15 & 14 & 14 & 13 & 13 \\\cline{2-7}\cline{9-18}
\multirow{4}{*}{$r=10$} & 3 & 1 & 1 & 1 & 0 & 0 & & 5 & 5 & 5 & 4 & 4 & 4 & 4 & 4 & 4 & 3 & \\
                         & 3 & 3 & 2 & 2 & 2 & 0 & & 8 & 8 & 8 & 7 & 7 & 7 & 7 & 7 & 7 & 6 \\
                         & 4 & 2 & 1 & 1 & 0 & 0 & & 7 & 7 & 6 & 6 & 6 & 5 & 5 & 5 & 5 & 4 \\
                         & 4 & 3 & 3 & 1 & 1 & 0 & & 10 & 8 & 8 & 8 & 8 & 7 & 7 & 7 & 7 & 6 \\
\end{block}
\begin{block}{rrrrrrrcrrrrrrrrrrl}
& \BAmulticolumn{6}{c}{\vdots} & & \BAmulticolumn{10}{c}{\vdots} \\
& \BAmulticolumn{17}{c}{\text{continued on page~\pageref{H_h36}}} \\
\end{block}
\end{blockarray}$}\]
\[
\resizebox{\textwidth}{!}{$
\label{H_h36}
\begin{blockarray}{rrrrrrrcrrrrrrrrrrl}
\begin{block}{rrrrrrrcrrrrrrrrrrl}
& \BAmulticolumn{6}{c}{\vdots} & & \BAmulticolumn{10}{c}{\vdots} \\
\end{block}
\begin{block}{r(rrrrrr)c(rrrrrrrrrr)l}
\multirow{28}{*}{$r=10$}  & 4 & 4 & 3 & 3 & 2 & 0 & \multirow{28}{*}{$\times\left(\begin{array}{c} x_{1}^{\downarrow} \\ x_{2}^{\downarrow} \\ x_{3}^{\downarrow} \\ x_{4}^{\downarrow} \\ x_{5}^{\downarrow} \\ x_{6}^{\downarrow} \end{array}\right)\leq$} & 11 & 11 & 10 & 10 & 10 & 9 & 9 & 9 & 9 & 8 & \multirow{28}{*}{$\times\left(\begin{array}{c} w_{1} \\ w_{2} \\ w_{3} \\ w_{4} \\ w_{5} \\ w_{6} \\ w_{7} \\ w_{8} \\ w_{9} \\ w_{10} \end{array}\right)$}\\
                         & 5 & 3 & 3 & 2 & 1 & 0 & & 11 & 10 & 10 & 9 & 9 & 8 & 8 & 8 & 8 & 7 \\
                         & 5 & 4 & 3 & 1 & 1 & 0 & & 12 & 10 & 10 & 9 & 9 & 9 & 8 & 8 & 8 & 7 \\
                         & 5 & 4 & 3 & 2 & 2 & 0 & & 12 & 11 & 11 & 10 & 10 & 9 & 9 & 9 & 9 & 8 \\
                         & 5 & 4 & 4 & 2 & 1 & 0 & & 13 & 11 & 11 & 10 & 10 & 10 & 9 & 9 & 9 & 8 \\
                         & 6 & 3 & 2 & 2 & 1 & 0 & & 11 & 11 & 10 & 10 & 9 & 9 & 9 & 8 & 8 & 7 \\
                         & 6 & 4 & 4 & 3 & 1 & 0 & & 14 & 13 & 13 & 11 & 11 & 11 & 10 & 10 & 10 & 9 \\
                         & 6 & 5 & 3 & 2 & 2 & 0 & & 14 & 13 & 13 & 11 & 11 & 11 & 10 & 10 & 10 & 9 \\
                         & 6 & 5 & 4 & 2 & 1 & 0 & & 15 & 13 & 12 & 12 & 11 & 11 & 11 & 10 & 10 & 9 \\
                         & 6 & 5 & 4 & 4 & 3 & 0 & & 15 & 15 & 14 & 14 & 13 & 13 & 13 & 12 & 12 & 11 \\
                         & 7 & 5 & 3 & 2 & 1 & 0 & & 15 & 14 & 13 & 12 & 12 & 11 & 10 & 10 & 10 & 9 \\
                         & 7 & 5 & 4 & 3 & 1 & 0 & & 16 & 15 & 14 & 13 & 12 & 12 & 12 & 11 & 11 & 10 \\
                         & 7 & 6 & 4 & 2 & 1 & 0 & & 17 & 15 & 14 & 13 & 13 & 12 & 12 & 11 & 11 & 10 \\
                         & 7 & 6 & 4 & 3 & 2 & 0 & & 17 & 16 & 15 & 14 & 13 & 13 & 13 & 12 & 12 & 11 \\
                         & 7 & 6 & 5 & 3 & 1 & 0 & & 18 & 16 & 15 & 14 & 14 & 13 & 13 & 12 & 12 & 11 \\
                         & 7 & 6 & 5 & 4 & 2 & 0 & & 18 & 17 & 16 & 15 & 15 & 14 & 13 & 13 & 13 & 12 \\
                         & 8 & 4 & 3 & 2 & 1 & 0 & & 15 & 14 & 13 & 13 & 12 & 12 & 11 & 11 & 10 & 9 \\
                         & 8 & 5 & 4 & 3 & 2 & 0 & & 17 & 16 & 15 & 15 & 14 & 13 & 13 & 12 & 12 & 11 \\
                         & 8 & 6 & 4 & 3 & 1 & 0 & & 18 & 17 & 15 & 15 & 14 & 13 & 13 & 12 & 12 & 11 \\
                         & 8 & 6 & 5 & 2 & 1 & 0 & & 19 & 16 & 15 & 15 & 14 & 14 & 13 & 13 & 12 & 11 \\
                         & 8 & 6 & 5 & 4 & 1 & 0 & & 19 & 18 & 17 & 15 & 15 & 14 & 14 & 13 & 13 & 12 \\
                         & 8 & 6 & 5 & 4 & 3 & 0 & & 19 & 18 & 17 & 17 & 16 & 15 & 15 & 14 & 14 & 13 \\
                         & 8 & 7 & 4 & 3 & 2 & 0 & & 19 & 18 & 17 & 15 & 15 & 14 & 14 & 13 & 13 & 12 \\
                         & 8 & 7 & 5 & 4 & 2 & 0 & & 20 & 19 & 17 & 17 & 16 & 15 & 15 & 14 & 14 & 13 \\
                         & 8 & 7 & 6 & 3 & 2 & 0 & & 21 & 18 & 17 & 17 & 16 & 16 & 15 & 15 & 14 & 13 \\
                         & 8 & 7 & 6 & 5 & 4 & 0 & & 21 & 20 & 19 & 19 & 18 & 18 & 17 & 17 & 16 & 15 \\
                         & 9 & 7 & 4 & 3 & 1 & 0 & & 20 & 19 & 17 & 16 & 16 & 14 & 14 & 13 & 13 & 12 \\
                         & 9 & 8 & 6 & 5 & 2 & 0 & & 23 & 22 & 20 & 19 & 19 & 17 & 17 & 16 & 16 & 15 \\
\end{block}
\end{blockarray}$}
\]

\section{Generalized exclusion inequalities for fermions}
\label{app:ferm}

The Pauli exclusion principle is equivalent to the first of the following two inequalities which describe the case $r=1$:
\[
x_1^{\downarrow} \leq 1,\hspace{3cm} \sum_{i=1}^{d-1} x_i^{\downarrow} \leq N.
\]
The second equation is equivalent to $x_d^{\downarrow}\geq 0$, which implies that all coordinates should be indeed non-negative, which is inherently true in the physical context.
To illustrate larger values, it is more compact to express them in a matrix.
Following Theorem~\ref{thm:stability}, after solving the case $(r,N,d)$, when increasing the value of $r$ by $1$, the minimal case $(r+1,N',d')$ to consider is such that $N'=N+1$ and $d'=d+2$.
Below, we represent this minimal case by the coefficients located between the two vertical bars.
The matrix gives the result for $r=8$, so in dimension $d=14$.
The right-hand side term involving $N$ is determined using Proposition~\ref{prop:rhs}.

\[
\begin{blockarray}{lrrrrrrrrrrrrrr@{\hspace{0.25cm}}|@{\hspace{0.25cm}}ll}
\begin{block}{l@{\hspace{0.2cm}}rrrrrrrrrrrrrr@{\hspace{0.25cm}}|@{\hspace{0.25cm}}ll}
& \BAmulticolumn{14}{c}{\text{Ray coefficients}} & \BAmulticolumn{2}{c}{\text{Right-hand side}} \\
\end{block}
\begin{block}{l@{\hspace{0.2cm}}(@{\hspace{0.2cm}}rrrrrrrrrrrrrr@{\hspace{0.25cm}}|@{\hspace{0.25cm}}ll)}
\multirow{2}{*}{$r=1$} & 1 & 0 & 0 & 0 & 0 & 0 & 0 & 0 & 0 & 0 & 0 & 0 & 0 & 0 & 1    & \\
 & 1 & 1 & 1 & 1 & 1 & 1 & 1 & 1 & 1 & 1 & 1 & 1 & 1 & 0 & N    & \\\cline{2-17}
\multirow{1}{*}{$r=2$} & 1 & 1 & 1 & 1 & 1 & \BAmulticolumn{1}{r|}{1} & 1 & \BAmulticolumn{1}{r|}{0} & 0 & 0 & 0 & 0 & 0 & 0 & N-1  & +w_1 \\\cline{2-17}
\multirow{1}{*}{$r=3$} & 2 & 2 & 2 & 2 & \BAmulticolumn{1}{r|}{2} & 2 & 1 & 1 & \BAmulticolumn{1}{r|}{0} & 0 & 0 & 0 & 0 & 0 & 2N-2 & + w_{1} + w_{2}\\\cline{2-17}
\multirow{2}{*}{$r=4$} & 2 & 2 & 2 & \BAmulticolumn{1}{r|}{2} & 2 & 1 & 1 & 1 & 0 & \BAmulticolumn{1}{r|}{0} & 0 & 0 & 0 & 0 & 2N-3 & + w_{1} + w_{2} + w_{3}\\
 & 2 & 2 & 2 & \BAmulticolumn{1}{r|}{2} & 2 & 2 & 1 & 1 & 1 & \BAmulticolumn{1}{r|}{0} & 0 & 0 & 0 & 0 & 2N-2 & + w_{1} + w_{2} + w_{3}\\\cline{2-17}
\multirow{3}{*}{$r=5$} & 2 & 2 & \BAmulticolumn{1}{r|}{2} & 2 & 1 & 1 & 1 & 1 & 0 & 0 & \BAmulticolumn{1}{r|}{0} & 0 & 0 & 0 & 2N-4 & + w_{1} + w_{2} + w_{3} + w_{4}\\
 & 2 & 2 & \BAmulticolumn{1}{r|}{2} & 2 & 2 & 2 & 1 & 1 & 1 & 1 & \BAmulticolumn{1}{r|}{0} & 0 & 0 & 0 & 2N-2 & + w_{1} + w_{2} + w_{3} + w_{4}\\
 & 4 & 4 & \BAmulticolumn{1}{r|}{4} & 4 & 4 & 3 & 2 & 2 & 1 & 0 & \BAmulticolumn{1}{r|}{0} & 0 & 0 & 0 & 4N-5 & + 2w_{1} + 2w_{2} + w_{3} + w_{4}\\
\end{block}
\end{blockarray}
\]

\[
\resizebox{\textwidth}{!}{$
\begin{blockarray}{lrrrrrrrrrrrrrr@{\hspace{0.25cm}}|@{\hspace{0.25cm}}ll}
\begin{block}{l@{\hspace{0.2cm}}rrrrrrrrrrrrrr@{\hspace{0.25cm}}|@{\hspace{0.25cm}}ll}
& \BAmulticolumn{14}{c}{\text{Ray coefficients}} & \BAmulticolumn{2}{c}{\text{Right-hand side}} \\
\end{block}
\begin{block}{l@{\hspace{0.2cm}}(@{\hspace{0.2cm}}rrrrrrrrrrrrrr@{\hspace{0.25cm}}|@{\hspace{0.25cm}}ll)}
\multirow{5}{*}{$r=6$} & 2 & \BAmulticolumn{1}{r|}{2} & 2 & 1 & 1 & 1 & 1 & 1 & 0 & 0 & 0 & \BAmulticolumn{1}{r|}{0} & 0 & 0 & 2N-5 & + w_{1} + w_{2} + w_{3} + w_{4} + w_{5}     \\
 & 2 & \BAmulticolumn{1}{r|}{2} & 2 & 2 & 2 & 2 & 1 & 1 & 1 & 1 & 1 & \BAmulticolumn{1}{r|}{0} & 0 & 0 & 2N-2 & + w_{1} + w_{2} + w_{3} + w_{4} + w_{5}    \\
 & 3 & \BAmulticolumn{1}{r|}{3} & 3 & 3 & 3 & 2 & 2 & 1 & 1 & 0 & 0 & \BAmulticolumn{1}{r|}{0} & 0 & 0 & 3N-4 & + 2 w_{1} + w_{2} + w_{3} + w_{4} + w_{5}  \\
 & 4 & \BAmulticolumn{1}{r|}{4} & 4 & 4 & 3 & 3 & 2 & 2 & 1 & 0 & 0 & \BAmulticolumn{1}{r|}{0} & 0 & 0 & 4N-6 & + 2 w_{1} + 2 w_{2} + w_{3} + w_{4} + w_{5}\\
 & 4 & \BAmulticolumn{1}{r|}{4} & 4 & 4 & 4 & 3 & 2 & 2 & 1 & 1 & 0 & \BAmulticolumn{1}{r|}{0} & 0 & 0 & 4N-5 & + 2 w_{1} + 2 w_{2} + w_{3} + w_{4} + w_{5}\\\cline{2-17}
\multirow{10}{*}{$r=7$}  & \BAmulticolumn{1}{r|}{2} & 2 & 1 & 1 & 1 & 1 & 1 & 1 & 0 & 0 & 0 & 0 & \BAmulticolumn{1}{r|}{0} & 0 & 2N-6 & + w_{1} + w_{2} + w_{3} + w_{4} + w_{5} + w_{6}         \\
 &\BAmulticolumn{1}{r|}{2} & 2 & 2 & 2 & 2 & 1 & 1 & 1 & 1 & 0 & 0 & 0 & \BAmulticolumn{1}{r|}{0} & 0 & 2N-3 & + w_{1} + w_{2} + w_{3} + w_{4} + w_{5} + w_{6}        \\
 & \BAmulticolumn{1}{r|}{2} & 2 & 2 & 2 & 2 & 2 & 1 & 1 & 1 & 1 & 1 & 1 & \BAmulticolumn{1}{r|}{0} & 0 & 2N-2 & + w_{1} + w_{2} + w_{3} + w_{4} + w_{5} + w_{6}        \\
 & \BAmulticolumn{1}{r|}{4} & 4 & 4 & 3 & 3 & 3 & 2 & 2 & 1 & 0 & 0 & 0 & \BAmulticolumn{1}{r|}{0} & 0 & 4N-7 & + 2 w_{1} + 2 w_{2} + w_{3} + w_{4} + w_{5} + w_{6}    \\
 & \BAmulticolumn{1}{r|}{4} & 4 & 4 & 4 & 3 & 3 & 2 & 2 & 1 & 1 & 0 & 0 & \BAmulticolumn{1}{r|}{0} & 0 & 4N-6 & + 2 w_{1} + 2 w_{2} + w_{3} + w_{4} + w_{5} + w_{6}    \\
 & \BAmulticolumn{1}{r|}{4} & 4 & 4 & 4 & 4 & 3 & 2 & 2 & 1 & 1 & 1 & 0 & \BAmulticolumn{1}{r|}{0} & 0 & 4N-5 & + 2 w_{1} + 2 w_{2} + w_{3} + w_{4} + w_{5} + w_{6}    \\
 & \BAmulticolumn{1}{r|}{5} & 5 & 5 & 5 & 4 & 3 & 3 & 2 & 1 & 0 & 0 & 0 & \BAmulticolumn{1}{r|}{0} & 0 & 5N-8 & + 3 w_{1} + 2 w_{2} + 2 w_{3} + w_{4} + w_{5} + w_{6}  \\
 & \BAmulticolumn{1}{r|}{5} & 5 & 5 & 5 & 5 & 4 & 3 & 2 & 2 & 1 & 0 & 0 & \BAmulticolumn{1}{r|}{0} & 0 & 5N-6 & + 3 w_{1} + 2 w_{2} + 2 w_{3} + w_{4} + w_{5} + w_{6}  \\
 & \BAmulticolumn{1}{r|}{7} & 7 & 7 & 7 & 6 & 5 & 4 & 3 & 2 & 0 & 0 & 0 & \BAmulticolumn{1}{r|}{0} & 0 & 7N-10 & + 4 w_{1} + 3 w_{2} + 2 w_{3} + 2 w_{4} + w_{5} + w_{6}\\
 & \BAmulticolumn{1}{r|}{7} & 7 & 7 & 7 & 7 & 5 & 4 & 3 & 2 & 1 & 0 & 0 & \BAmulticolumn{1}{r|}{0} & 0 & 7N-9 & + 4 w_{1} + 3 w_{2} + 2 w_{3} + 2 w_{4} + w_{5} + w_{6}\\\cline{2-17}
\multirow{19}{*}{$r=8$} & 2 & 1 & 1 & 1 & 1 & 1 & 1 & 1 & 0 & 0 & 0 & 0 & 0 & 0 & 2N-7    & + w_{1} + w_{2} + w_{3} + w_{4} + w_{5} + w_{6} + w_{7}            \\
 & 2 & 2 & 2 & 2 & 2 & 2 & 1 & 1 & 1 & 1 & 1 & 1 & 1 & 0 & 2N-2 & + w_{1} + w_{2} + w_{3} + w_{4} + w_{5} + w_{6} + w_{7}           \\
 & 3 & 3 & 3 & 3 & 2 & 2 & 2 & 1 & 1 & 0 & 0 & 0 & 0 & 0 & 3N-5 & + 2 w_{1} + w_{2} + w_{3} + w_{4} + w_{5} + w_{6} + w_{7}         \\
 & 3 & 3 & 3 & 3 & 3 & 2 & 2 & 1 & 1 & 1 & 0 & 0 & 0 & 0 & 3N-4 & + 2 w_{1} + w_{2} + w_{3} + w_{4} + w_{5} + w_{6} + w_{7}         \\
 & 4 & 4 & 3 & 3 & 3 & 3 & 2 & 2 & 1 & 0 & 0 & 0 & 0 & 0 & 4N-8 & + 2 w_{1} + 2 w_{2} + w_{3} + w_{4} + w_{5} + w_{6} + w_{7}       \\
 & 4 & 4 & 4 & 3 & 3 & 3 & 2 & 2 & 1 & 1 & 0 & 0 & 0 & 0 & 4N-7 & + 2 w_{1} + 2 w_{2} + w_{3} + w_{4} + w_{5} + w_{6} + w_{7}       \\
 & 4 & 4 & 4 & 4 & 3 & 2 & 2 & 2 & 1 & 0 & 0 & 0 & 0 & 0 & 4N-7 & + 2 w_{1} + 2 w_{2} + 2 w_{3} + w_{4} + w_{5} + w_{6} + w_{7}     \\
 & 4 & 4 & 4 & 4 & 3 & 3 & 2 & 2 & 1 & 1 & 1 & 0 & 0 & 0 & 4N-6 & + 2 w_{1} + 2 w_{2} + w_{3} + w_{4} + w_{5} + w_{6} + w_{7}       \\
 & 4 & 4 & 4 & 4 & 4 & 3 & 2 & 2 & 1 & 1 & 1 & 1 & 0 & 0 & 4N-5 & + 2 w_{1} + 2 w_{2} + w_{3} + w_{4} + w_{5} + w_{6} + w_{7}       \\
 & 4 & 4 & 4 & 4 & 4 & 3 & 2 & 2 & 2 & 1 & 0 & 0 & 0 & 0 & 4N-5 & + 2 w_{1} + 2 w_{2} + 2 w_{3} + w_{4} + w_{5} + w_{6} + w_{7}     \\
 & 5 & 5 & 5 & 4 & 4 & 3 & 3 & 2 & 1 & 0 & 0 & 0 & 0 & 0 & 5N-9 & + 3 w_{1} + 2 w_{2} + 2 w_{3} + w_{4} + w_{5} + w_{6} + w_{7}     \\
 & 5 & 5 & 5 & 5 & 5 & 4 & 3 & 2 & 2 & 1 & 1 & 0 & 0 & 0 & 5N-6 & + 3 w_{1} + 2 w_{2} + 2 w_{3} + w_{4} + w_{5} + w_{6} + w_{7}     \\
 & 6 & 6 & 6 & 6 & 5 & 4 & 3 & 3 & 2 & 0 & 0 & 0 & 0 & 0 & 6N-9 & + 3 w_{1} + 3 w_{2} + 2 w_{3} + 2 w_{4} + w_{5} + w_{6} + w_{7}   \\
 & 6 & 6 & 6 & 6 & 6 & 4 & 3 & 3 & 2 & 1 & 0 & 0 & 0 & 0 & 6N-8 & + 3 w_{1} + 3 w_{2} + 2 w_{3} + 2 w_{4} + w_{5} + w_{6} + w_{7}   \\
 & 7 & 7 & 7 & 6 & 6 & 5 & 4 & 3 & 2 & 0 & 0 & 0 & 0 & 0 & 7N-11 & + 4 w_{1} + 3 w_{2} + 2 w_{3} + 2 w_{4} + w_{5} + w_{6} + w_{7}   \\
 & 7 & 7 & 7 & 7 & 6 & 5 & 4 & 3 & 2 & 1 & 0 & 0 & 0 & 0 & 7N-10 & + 4 w_{1} + 3 w_{2} + 2 w_{3} + 2 w_{4} + w_{5} + w_{6} + w_{7}   \\
 & 7 & 7 & 7 & 7 & 7 & 5 & 4 & 3 & 2 & 1 & 1 & 0 & 0 & 0 & 7N-9 & + 4 w_{1} + 3 w_{2} + 2 w_{3} + 2 w_{4} + w_{5} + w_{6} + w_{7}   \\
 & 9 & 9 & 9 & 9 & 7 & 6 & 5 & 4 & 2 & 1 & 0 & 0 & 0 & 0 & 9N-14& + 5 w_{1} + 4 w_{2} + 3 w_{3} + 2 w_{4} + 2 w_{5} + w_{6} + w_{7} \\
 & 9 & 9 & 9 & 9 & 8 & 7 & 5 & 4 & 3 & 2 & 0 & 0 & 0 & 0 & 9N-12 & + 5 w_{1} + 4 w_{2} + 3 w_{3} + 2 w_{4} + 2 w_{5} + w_{6} + w_{7} \\
\end{block}
\end{blockarray}$}
\]

\section{Generalized exclusion inequalities for bosons}
\label{app:bos}

The case $r=1$ has one inequality
\[
\sum_{i=1}^{d-1} x_i^{\downarrow} \leq N.
\]
It is equivalent to $x_d^{\downarrow}\geq 0$, which implies that all coordinates should be indeed non-negative, which is inherently true in the physical context.
To illustrate larger values, we again express the new inequalities in a matrix.
Using Theorem~\ref{thm:stability_bosonic}, after solving the case $(r,N,d)$, when increasing the value of $r$ by $1$, the minimal case $(r+1,N',d')$ to consider is such that $N'=N+1$ and $d'=d+1$.
Below, we represent this minimal case by the coefficients located on the left of the vertical bar.
The matrix gives the result for $r=8$, so in dimension $d=8$.
The right-hand side term involving $N$ is determined by adapting Proposition~\ref{prop:rhs}.

\[
\begin{blockarray}{lrrrrrrrr@{\hspace{0.25cm}}|@{\hspace{0.25cm}}ll}
\begin{block}{l@{\hspace{0.2cm}}rrrrrrrr@{\hspace{0.25cm}}|@{\hspace{0.25cm}}ll}
& \BAmulticolumn{8}{c}{\text{Ray coefficients}} & \BAmulticolumn{2}{c}{\text{Right-hand side}} \\
\end{block}
\begin{block}{l@{\hspace{0.2cm}}(@{\hspace{0.2cm}}rrrrrrrr@{\hspace{0.25cm}}|@{\hspace{0.25cm}}l@{\hspace{0cm}}l)}
r=1 & 1 & 1 & 1 & 1 & 1 & 1 & 1 & 0 & N    & \\\cline{2-11}
r=2 & 1 & 0 & 0 & 0 & 0 & 0 & 0 & 0 & N-1  & {}+w_{1} \\\cline{2-11}
r=3 & 2 & 1 & \BAmulticolumn{1}{r|}{0} & 0 & 0 & 0 & 0 & 0 & 2(N-1) & {}+2 w_{1} + w_{2}\\\cline{2-11}
\multirow{2}{*}{$r=4$} & 2 & 1 & 1 & \BAmulticolumn{1}{r|}{0} & 0 & 0 & 0 & 0 & 2(N-1) & {}+2 w_{1} + w_{2} + w_{3} \\
& 3 & 2 & 0 & \BAmulticolumn{1}{r|}{0} & 0 & 0 & 0 & 0 & 3(N-1) & {}+3 w_{1} + 2 w_{2} + w_{3}\\
\end{block}
\end{blockarray}
\]

\[
\begin{blockarray}{lrrrrrrrr@{\hspace{0.25cm}}|@{\hspace{0.25cm}}ll}
\begin{block}{l@{\hspace{0.2cm}}rrrrrrrr@{\hspace{0.25cm}}|@{\hspace{0.25cm}}ll}
& \BAmulticolumn{8}{c}{\text{Ray coefficients}} & \BAmulticolumn{2}{c}{\text{Right-hand side}} \\
\end{block}
\begin{block}{l@{\hspace{0.2cm}}(@{\hspace{0.2cm}}rrrrrrrr@{\hspace{0.25cm}}|@{\hspace{0.25cm}}l@{\hspace{0cm}}l)}
\multirow{3}{*}{$r=5$} & 2 & 1 & 1 & 1 & \BAmulticolumn{1}{r|}{0} & 0 & 0 & 0 & 2(N-1) & {}+2 w_{1} + w_{2} + w_{3} + w_{4}\\
& 3 & 2 & 1 & 0 & \BAmulticolumn{1}{r|}{0} & 0 & 0 & 0 & 3(N-1) & {}+3 w_{1} + 2 w_{2} + w_{3} + w_{4}\\
& 4 & 3 & 0 & 0 & \BAmulticolumn{1}{r|}{0} & 0 & 0 & 0 & 4(N-1) & {}+4 w_{1} + 3 w_{2} + 2 w_{3} + w_{4}\\\cline{2-11}
\multirow{5}{*}{$r=6$} & 2 & 1 & 1 & 1 & 1 & \BAmulticolumn{1}{r|}{0} & 0 & 0 & 2(N-1) & {}+2 w_{1} + w_{2} + w_{3} + w_{4} + w_{5}\\
& 3 & 2 & 1 & 1 & 0 & \BAmulticolumn{1}{r|}{0} & 0 & 0 & 3(N-1) & {}+3 w_{1} + 2 w_{2} + w_{3} + w_{4} + w_{5}\\
& 4 & 3 & 1 & 0 & 0 & \BAmulticolumn{1}{r|}{0} & 0 & 0 & 4(N-1) & {}+4 w_{1} + 3 w_{2} + 2 w_{3} + w_{4} + w_{5}\\
& 5 & 4 & 0 & 0 & 0 & \BAmulticolumn{1}{r|}{0} & 0 & 0 & 5(N-1) & {}+5 w_{1} + 4 w_{2} + 3 w_{3} + 2 w_{4} + w_{5}\\
& 6 & 4 & 3 & 0 & 0 & \BAmulticolumn{1}{r|}{0} & 0 & 0 & 6(N-1) & {}+6 w_{1} + 4 w_{2} + 3 w_{3} + 2 w_{4} + w_{5}\\\cline{2-11}
\multirow{9}{*}{$r=7$} & 2 & 1 & 1 & 1 & 1 & 1 & \BAmulticolumn{1}{r|}{0} & 0 & 2(N-1) & {}+2 w_{1} + w_{2} + w_{3} + w_{4} + w_{5} + w_{6}\\
& 3 & 2 & 1 & 1 & 1 & 0 & \BAmulticolumn{1}{r|}{0} & 0 & 3(N-1) & {}+3 w_{1} + 2 w_{2} + w_{3} + w_{4} + w_{5} + w_{6}\\
& 3 & 2 & 2 & 0 & 0 & 0 & \BAmulticolumn{1}{r|}{0} & 0 & 3(N-1) & {}+3 w_{1} + 2 w_{2} + 2 w_{3} + w_{4} + w_{5} + w_{6}\\
& 4 & 3 & 1 & 1 & 0 & 0 & \BAmulticolumn{1}{r|}{0} & 0 & 4(N-1) & {}+4 w_{1} + 3 w_{2} + 2 w_{3} + w_{4} + w_{5} + w_{6}\\
& 4 & 3 & 2 & 0 & 0 & 0 & \BAmulticolumn{1}{r|}{0} & 0 & 4(N-1) & {}+4 w_{1} + 3 w_{2} + 2 w_{3} + 2 w_{4} + w_{5} + w_{6}\\
& 5 & 4 & 1 & 0 & 0 & 0 & \BAmulticolumn{1}{r|}{0} & 0 & 5(N-1) & {}+5 w_{1} + 4 w_{2} + 3 w_{3} + 2 w_{4} + w_{5} + w_{6}\\
& 6 & 4 & 3 & 1 & 0 & 0 & \BAmulticolumn{1}{r|}{0} & 0 & 6(N-1) & {}+6 w_{1} + 4 w_{2} + 3 w_{3} + 2 w_{4} + w_{5} + w_{6}\\
& 6 & 4 & 3 & 2 & 0 & 0 & \BAmulticolumn{1}{r|}{0} & 0 & 6(N-1) & {}+6 w_{1} + 4 w_{2} + 3 w_{3} + 2 w_{4} + 2 w_{5} + w_{6}\\
& 6 & 5 & 0 & 0 & 0 & 0 & \BAmulticolumn{1}{r|}{0} & 0 & 6(N-1) & {}+6 w_{1} + 5 w_{2} + 4 w_{3} + 3 w_{4} + 2 w_{5} + w_{6}\\\cline{2-11}
\multirow{14}{*}{$r=8$} & 2 & 1 & 1 & 1 & 1 & 1 & 1 & 0 & 2(N-1) & {}+2 w_{1} + w_{2} + w_{3} + w_{4} + w_{5} + w_{6} + w_{7}\\
& 3 & 2 & 1 & 1 & 1 & 1 & 0 & 0 & 3(N-1) & {}+3 w_{1} + 2 w_{2} + w_{3} + w_{4} + w_{5} + w_{6} + w_{7}\\
& 3 & 2 & 2 & 1 & 0 & 0 & 0 & 0 & 3(N-1) & {}+3 w_{1} + 2 w_{2} + 2 w_{3} + w_{4} + w_{5} + w_{6} + w_{7}\\
& 4 & 3 & 1 & 1 & 1 & 0 & 0 & 0 & 4(N-1) & {}+4 w_{1} + 3 w_{2} + 2 w_{3} + w_{4} + w_{5} + w_{6} + w_{7}\\
& 4 & 3 & 2 & 1 & 0 & 0 & 0 & 0 & 4(N-1) & {}+4 w_{1} + 3 w_{2} + 2 w_{3} + 2 w_{4} + w_{5} + w_{6} + w_{7}\\
& 5 & 4 & 1 & 1 & 0 & 0 & 0 & 0 & 5(N-1) & {}+5 w_{1} + 4 w_{2} + 3 w_{3} + 2 w_{4} + w_{5} + w_{6} + w_{7}\\
& 5 & 4 & 2 & 0 & 0 & 0 & 0 & 0 & 5(N-1) & {}+5 w_{1} + 4 w_{2} + 3 w_{3} + 2 w_{4} + 2 w_{5} + w_{6} + w_{7}\\
& 6 & 4 & 3 & 1 & 1 & 0 & 0 & 0 & 6(N-1) & {}+6 w_{1} + 4 w_{2} + 3 w_{3} + 2 w_{4} + w_{5} + w_{6} + w_{7}\\
& 6 & 4 & 3 & 2 & 1 & 0 & 0 & 0 & 6(N-1) & {}+6 w_{1} + 4 w_{2} + 3 w_{3} + 2 w_{4} + 2 w_{5} + w_{6} + w_{7}\\
& 6 & 4 & 3 & 2 & 2 & 0 & 0 & 0 & 6(N-1) & {}+6 w_{1} + 4 w_{2} + 3 w_{3} + 2 w_{4} + 2 w_{5} + 2 w_{6} + w_{7}\\
& 6 & 4 & 3 & 3 & 0 & 0 & 0 & 0 & 6(N-1) & {}+6 w_{1} + 4 w_{2} + 3 w_{3} + 3 w_{4} + 2 w_{5} + w_{6} + w_{7}\\
& 6 & 5 & 1 & 0 & 0 & 0 & 0 & 0 & 6(N-1) & {}+6 w_{1} + 5 w_{2} + 4 w_{3} + 3 w_{4} + 2 w_{5} + w_{6} + w_{7}\\
& 7 & 5 & 4 & 0 & 0 & 0 & 0 & 0 & 7(N-1) & {}+7 w_{1} + 5 w_{2} + 4 w_{3} + 3 w_{4} + 2 w_{5} + w_{6} + w_{7}\\
& 7 & 6 & 0 & 0 & 0 & 0 & 0 & 0 & 7(N-1) & {}+7 w_{1} + 6 w_{2} + 5 w_{3} + 4 w_{4} + 3 w_{5} + 2 w_{6} + w_{7}\\
\end{block}
\end{blockarray}
\]

\end{document}